\newenvironment{slremark}
    {\begin{remark}\rm}
    {\hfill $\triangleleft$ \end{remark}}
\newenvironment{slexample}
    {\begin{example}\rm}
    {\hfill $\triangleleft$ \end{example}}
\newcommand{\aasol}{almost-all-solution\xspace}
\newtheorem{theorem}{Theorem}    
\newtheorem{claim}[theorem]{Claim}%[section]
\newtheorem{remark}[theorem]{Remark}%[section]
\newtheorem{example}[theorem]{Example}%[section]
\newtheorem{corollary}[theorem]{Corollary}
\newtheorem{lemma}[theorem]{Lemma}
\theoremstyle{definition}
\newtheorem{question}[]{Question}%[section
\newtheorem{definition}[theorem]{Definition}
\newtheorem{notation}[theorem]{Notation}
\newcommand{\polyring}[2]{#1[#2]}
\newcommand{\prettyexists}[2]{\exists #1 : #2}
\newcommand{\prettyforall}[2]{\forall #1 : #2}
\newcommand{\A}{\mathbb{A}}
\newcommand{\F}{\mathbb{F}}
\newcommand{\Rat}{\mathbb{Q}}
\newcommand{\R}{\mathbb{R}}
\newcommand{\Nat}{\mathbb{N}}
\newcommand{\Int}{\mathbb{Z}}
\newcommand{\Z}{\mathbb{Z}}
\newcommand{\conj}{\cup}
\newcommand{\supp}[1]{\textnormal{supp}(#1)}
\newcommand{\dom}[1]{\textnormal{dom}(#1)}
\newcommand{\prob}[3]{
%\begin{quote}
\vspace{-4mm}
\emph{
\begin{description}
\item[{\sc #1:}]
\item[\bf Input: ] \ \ \ \ \ \  #2.
\item[\bf Question: ]  #3?
\end{description}
}
%\end{quote}
\vspace{1mm}
}
\newcommand{\compprobext}[3]{
\vspace{-2mm}
\emph{
\begin{description}
\item[{\sc #1:}]
\item[\bf Input: ] \ \ \ #2.
%\item[] \ \ \ #3
%\item[] \ \ \ #4.
\item[\bf Output: ] #3.
\end{description}
}
\vspace{1mm}
}
\newcommand{\Lin}[1]{\text{\sc Fin-Lin}(#1)}
\newcommand{\GLin}[1]{\text{\sc Lin}(#1)}
\newcommand{\Span}[2]{\text{\sc Fin-Span}_{#1}\left(#2\right)}
\newcommand{\GSpan}[2]{\text{\sc Span}_{#1}\left(#2\right)}
\newcommand{\setof}[2]{\left\{\, #1 \; \middle| \; #2 \, \right\}}
\newcommand{\set}[1]{\left\{ #1 \right\}}
\newcommand{\para}[1]{\subsubsection*{\bf #1}}
\newcommand{\orbits}[1]{\text{\sc orbits}\left(#1\right)}
\newcommand{\Aut}[1]{\text{\sc Aut}_{#1}}
\newcommand{\size}[1]{|#1|}
\newcommand{\spread}[1]{\widetilde{#1}}
\newcommand{\innerprod}[2]{#1 \cdot #2 }
\newcommand{\subseteqfin}{\subseteq_\text{fin}}
\newcommand{\vr}[1]{\mathbf{#1}}
\newcommand{\otu}[2]{#1^{(#2)}}
\newcommand{\otun}[2]{\otu {#1} {#2}}
\newcommand{\utu}[2]{{#1 \choose #2}}
\newcommand{\constvr}[2]{{\vr {#1}}_{#2}}
\newcommand{\polyineqsolvname}{{\sc Poly-Ineq-Solv}\xspace}
\newcommand{\allpolyineqsolvname}{{\sc Almost-all-Poly-Ineq-Solv}\xspace}
\newcommand{\polyineqmaxname}{{\sc Poly-Ineq-Max}\xspace}
\newcommand{\allpolyineqmaxname}{{\sc Almost-all-Poly-Ineq-Max}\xspace}
\newcommand{\finname}{{\sc Fin-}}
\newcommand{\solvname}{{\sc Eq-Solv}}
\newcommand{\finsolvname}{{\sc Fin-Eq-Solv}}
\newcommand{\nonnegsolvname}{{\sc Nonneg-Eq-Solv}}
\newcommand{\nonnegmaxname}{{\sc Nonneg-Eq-Max}}
\newcommand{\finnonnegsolvname}{{\sc Fin-Nonneg-Eq-Solv}}
\newcommand{\ineqsolvname}{{\sc Ineq-Solv}}
\newcommand{\ineqmaxname}{{\sc Ineq-Max}}
\newcommand{\finineqsolvname}{{\sc Fin-Ineq-Solv}}
\newcommand{\finineqmaxname}{{\sc Fin-Ineq-Max}}
\newcommand{\mult}[2]{#1 \cdot #2}
\newcommand{\tofs}{\to_\text{fs}}
\newcommand{\tofin}{\to_\text{fin}}
\newcommand{\pair}[2]{(#1,#2)}
\newcommand{\tuple}[1]{\langle #1 \rangle}
\newcommand{\pow}[2]{\mathcal{P}_{#1}(#2)}
\newcommand{\trans}[1]{\stackrel{#1}{\longrightarrow}}
\newcommand{\setfromto}[2]{\set{#1, \ldots, #2}}
\newcommand{\setto}[1]{\setfromto 1 {#1}}
\newcommand{\proj}[2]{\Pi_{#1,#2}}
\newcommand{\orbval}[1]{\dot{#1}}
\newcommand{\orbsum}[1]{#1^{\Sigma}}
\renewcommand{\O}{U}
\newcommand{\x}{\chi}
\renewcommand{\a}{\alpha}
\renewcommand{\b}{\beta}
\newcommand{\g}{\gamma}
\newcommand{\s}{\sigma}
\renewcommand{\d}{\delta}
\renewcommand{\c}{\g}
\newcommand{\G}{\Gamma}
\newcommand{\ineqal}{\mathcal{E}}
\newcommand{\lipar}[2]{\textnormal{\textsc{hd}}_{#2}(#1)}
\newcommand{\li}[1]{\lipar {#1} {}}
\newcommand{\strictli}[1]{\lipar {#1} >}
\newcommand{\tl}[1]{\textsc{tl}(#1)}
\newcommand{\PTIME}{\textsc{PTime}\xspace}
\newcommand{\EXPTIME}{\textsc{ExpTime}\xspace}
\newcommand{\twoEXPTIME}{2\textsc{-ExpTime}\xspace}
\newcommand{\ackermann}{\textsc{Ackermann}\xspace}
\begin{document}

%%
%% The "title" command has an optional parameter,
%% allowing the author to define a "short title" to be used in page headers.
\title{Orbit-finite linear programming}

%%
%% The "author" command and its associated commands are used to define
%% the authors and their affiliations.
%% Of note is the shared affiliation of the first two authors, and the
%% "authornote" and "authornotemark" commands
%% used to denote shared contribution to the research.

\author{Arka Ghosh}
%\authornote{Partially supported by NCN grant 2019/35/B/ST6/02322.}
\orcid{0000-0003-3839-8459}
\author{Piotr Hofman}
%\authornote{Partially supported by  ERC Starting grant INFSYS, agreement no.~950398, and NCN grant 2019/35/B/ST6/02322.}
\orcid{0000-0001-9866-3723}
\author{S{\l}awomir Lasota}
%\authornotemark[2]
\orcid{0000-0001-8674-4470}
\affiliation{%
  \institution{University of Warsaw}
  \country{Poland}
}

%%
%% By default, the full list of authors will be used in the page
%% headers. Often, this list is too long, and will overlap
%% other information printed in the page headers. This command allows
%% the author to define a more concise list
%% of authors' names for this purpose.
\renewcommand{\shortauthors}{Ghosh, Hofman, and Lasota}

%%
%% The abstract is a short summary of the work to be presented in the
%% article.
\begin{abstract}
An infinite set is orbit-finite if, up to permutations of atoms, 
it has only finitely many elements. We study a generalisation of linear programming 
where constraints are expressed by an orbit-finite system of linear inequalities.
As our principal contribution we provide a decision procedure for checking if such a 
system has a real solution, and for computing the minimal/maximal value of a linear 
objective function over the solution set. We also show undecidability of these problems 
in case when only integer solutions are considered. Therefore orbit-finite linear 
programming is decidable, while orbit-finite integer linear programming is not.
\end{abstract}

%%
%% The code below is generated by the tool at http://dl.acm.org/ccs.cfm.
%% Please copy and paste the code instead of the example below.
%%
\begin{CCSXML}
<ccs2012>
<concept>
<concept_id>10003752.10003809.10003716.10011138.10010041</concept_id>
<concept_desc>Theory of computation~Linear programming</concept_desc>
<concept_significance>500</concept_significance>
</concept>
<concept>
<concept_id>10003752.10003809.10003716.10011141.10010045</concept_id>
<concept_desc>Theory of computation~Integer programming</concept_desc>
<concept_significance>500</concept_significance>
</concept>
</ccs2012>
\end{CCSXML}

\ccsdesc[500]{Theory of computation~Integer programming}
\ccsdesc[500]{Theory of computation~Linear programming}

%%
%% Keywords. The author(s) should pick words that accurately describe
%% the work being presented. Separate the keywords with commas.
\keywords{Orbit-finite linear programming, linear programming, integer linear programming, sets with atoms, orbit-finite sets.}

%\received{20 February 2007}
%\received[revised]{12 March 2009}
%\received[accepted]{5 June 2009}

%  \begin{teaserfigure}
%    \includegraphics[width=\textwidth]{sampleteaser}
%    \caption{figure caption}
%    \Description{figure description}
%  \end{teaserfigure}

%%
%% This command processes the author and affiliation and title
%% information and builds the first part of the formatted document.
\maketitle

% !TEX root = lin-prog-main.tex

\section{Introduction}
\label{sec:intro}
Applications of (integer) linear programming, and linear algebra in general, 
are ubiquitous in computer science (see e.g.~\cite{SilvaTC96,Cormenbook,Colcombet15}),
including recent and potential future applications to analysis of data-enriched 
models~\cite{HLT17,HL18,GSAH19,BKM21}.
%\slawek{any other good references?}
%\arka{Maybe we don't need to give a reference here}
%\PH{linear programing, computational geometry, 
%computer graphics, I think we can add Cormen Introduction to algorithms, on }
%
Whenever finite (integer) linear programs arise in analysis of finite models of computation, 
   orbit-finite (integer) linear programs arise naturally in data-enriched versions of these models.
   For example, in decision problems for data Petri nets, such as reachability \cite{Nets-with-Tokens-which-Carry-Data} or
   continuous reachability \cite{GSAH19}; or in process mining \cite{WDHS08}.
Similar approach seems applicable also to
   structural properties or termination time of data Petri nets; and to  
   learning of probabilistic automata with registers.
%    \slawek{we need more citations here}. 
   
This paper is a continuation of the study of \emph{orbit-finite} systems of linear equations~\cite{GHL22},
i.e., systems which are infinite but finite up to permutations.
In this setting one fixes a countably infinite set $\A$, %=\set{\at 1, \at 2, \at 3, \ldots}$, 
whose elements are called \emph{atoms} (or data values)~\cite{atombook,Pitts:book}, 
assuming that atoms can only be accessed in a very limited way, namely can only be tested for equality.
Starting from atoms one builds a hierarchy of sets which are \emph{orbit-finite}: they are infinite, but finite up to permutations of atoms.
Along these lines, we study orbit-finite sets of linear inequalities, over an orbit-finite set of unknowns.
%, assuming that the only operations on atoms are (dis)equality tests.

%\para{Contribution}

%\piotrek{There is some logical gap here. How are orbit finite sets and their internal structures connected to orbit finite system of equations?}

The main result of~\cite{GHL22} is a decision procedure to check if a given orbit-finite system
of equations is solvable.
This result is general and applies to solvability over a wide range of commutative rings, 
in particular to real and integer solvability.
%
%\arka{I think we should say it works for commutative rings}
%
%
In this paper we do a next step and extend the setting from equations to inequalities.
Our goal is algorithmic solvability of orbit-finite systems
of inequalities, but also optimisation 
%(minimisation/maximisation) \arka{\st{(minimisation/maximisation)}}
of linear objective functions
over solution sets of such systems.
%(the former one reduces to the latter one, in a standard way).
%
%\arka{I am not sure if we should call it a reduction.
%The latter one simply has the former one as a subproblem.}
%
We call this problem \emph{orbit-finite (integer) linear programming}
(depending on whether the considered solutions are real or integer).

\begin{slexample} \label{ex:lp} %\rm
For illustration, consider the set $\A$ as unknowns,
and the infinite system of constraints given by 
an infinite matrix whose rows and columns are indexed by $\A$:
\begin{align} \label{eq:matr}
%& \qquad \begin{matrix} \ \cdots &  \a  & \cdots\ \end{matrix} \\
& 
%\begin{matrix}
%\vdots \\
%\b \\
%\vdots \\
%\end{matrix}
%\
\begin{bmatrix}
\ 0 & 1 & 1 & \cdots \ \\
\ 1 & 0 & 1 & \cdots \ \\
\ 1 & 1 & 0 & \cdots \ \\
 \vdots     & \vdots & \vdots & \ddots \   
\end{bmatrix}
\cdot
%\begin{bmatrix}
%\vdots \\
%\a \\
%\vdots \\
%\\
\ \vr x %\\
%\\
%\end{bmatrix}
\quad
\geq
\quad
\begin{bmatrix}
\, 1 \, \\
\, 1 \, \\
\, 1 \, \\
 \vdots 
\end{bmatrix}
\end{align}
Alternatively, one can write the infinite set of non-strict inequalities over unknowns $\a\in\A$, indexed by atoms $\b\in\A$:
\begin{align} \label{eq:constr}
\sum_{\a \in \A\setminus\set{\b}} \a \ \geq \ 1 \quad (\b \in \A).
\end{align}
Any permutation $\pi: \A \to \A$ induces a permutation of the inequalities by sending 
\[
\sum_{\a \in \A\setminus\set{\b}} \a \ \geq \ 1\qquad \stackrel{\pi}{\longmapsto}
%\text{ to}
\quad 
\sum_{\a \in \A\setminus\set{\pi(\b)}} \a  \ \geq \ 1,
\]
but the whole system~\eqref{eq:constr} is invariant under permutations of atoms.
Furthermore, up to permutations of atoms the system consists of just one equation -- it is one \emph{orbit};
in the sequel we consider orbit-finite systems (finite unions of orbits).
Likewise, the matrix $\A\times\A\to\R$ in \eqref{eq:matr} consists, up to permutation of atoms, of just two entries.
Indeed, its domain $\A\times\A$ is a union of two orbits: 
$\setof{\pair \a\b}{\a=\b}$ and $\setof{\pair \a \b}{\a\neq \b}$, and the matrix is constant
inside each orbit.
It is therefore invariant under permutations of atoms.

%For the sake of example, suppose that we seek for solutions $\vr x : \A\to\R$ assigning non-zero values to finitely many unknowns only
%(otherwise the sums on left-hand sides of inequalities could be undefined).
%
%\arka{I think we can skip the above sentence}

The system~\eqref{eq:constr} is solvable. 
%\red{By this we mane that there is a valuation of the set of unknowns, such that in each inequality the infinite sum 
%is well defined and the inequality holds.} 
For example, given $n>1$  atoms $S = \set{\a_1, \ldots, \a_n} \subseteq \A$, 
the vector $\vr{x}_n : \A \to \R$ defined by:
\[
\vr x_n(\a) = \frac 1 {n-1} \text{\ \ if \ }\a\in S, \qquad %and
\vr x_n(\a) = 0 \text{\ \ if \ }\a\notin S,
%\vr{x}(\g) =
%\begin{cases}
%1 & \text{ if }\g \in \{\a,\b\}, \\
%0 & \text{ otherwise.} 
%\end{cases}
\]
is a solution since the left-hand side of~\eqref{eq:constr} sums up to
$1$ if $\b\in S$, and to $\frac n {n-1}$ if $\b\notin S$.
%\[
%\sum_{\a \neq \b} \vr{x}(\a)  =
%\begin{cases}
%n-1 & \text{ if }\b \in S, \\
%n & \text{ otherwise.}
%\end{cases}
%\]
\end{slexample}

%Orbit-finite system of equations are solvable by a reduction to solvability of a finite number of classical finite systems~\cite{GHL22}.
%This strategy does not seem to extend to inequalities, and to linear programming, i.e., 
%the  optimisation of a linear objective function subject to 
%%over the solution set of 
%an orbit-finite system of inequality constraints.
%\arka{Maybe the previous statement is not true with the recent changes}
%
Orbit-finite linear programming, i.e., 
optimisation of a linear objective function subject to an orbit-finite system of inequality constraints,
faces phenomena not present
in the classical setting.
For instance, as illustrated in the next example, 
the objective function may not achieve its optimum over solutions of non-strict inequalities.

\begin{slexample} \label{ex:lpmin} %\rm
Suppose that we aim at \emph{minimization}, with respect to the constraints~\eqref{eq:constr},
of the value of the objective function:
\begin{align} \label{eq:objf}
S(\vr x) \ = \ 2\cdot \sum_{\a\in\A} \vr x(\a).
\end{align}
%(note that the infinite sum is well-defined assuming $\vr x$ assigns zero to almost all unknowns).
%
%\arka{The above statement is not crucial for this example.
%Moreover it does not make sense unless we say that we look for finitely supported solutions.
%But mentioning that will probably decrease the readability of the introduction.}
%
The function is invariant under permutations of atoms, and
its value is always greater than $2$.
Indeed, for every solution $\vr{x} : \A \to \R$ there is necessarily some $\b \in \A$ such that $\vr{x}(\b) > 0$,
and hence
\begin{align} \label{eq:gre1}
S(\vr x)  \ 
%\sum_{\a \in \A} \vr{x}(\a) =
%= \left(\sum_{\a \neq \b} \vr{x}(a)\right) + \vr{x}(b) 
> \ 2 \cdot \!\! \sum_{\a \in \A\setminus\set{\b}} \vr{x}(\a) \ \geq \ 2.
\end{align}
What is the minimal value of the objective function? 
For solutions $\vr x_ n$ defined in Example~\ref{ex:lp}, the value $S(\vr x_n) = \frac {2n} {n-1}$
may be arbitrarily close to 2 but, according to \eqref{eq:gre1}, $S$ never achieves 2. 
Surprisingly, this is in contrast with classical linear programming where, whenever constraints are specified by non-strict inequalities and are solvable,
%the solution set is closed and hence 
a linear objective function always achieves its minimum (or is unbounded from below).
%
%\arka{Also $+\infty$ if it is not solvable?}
%\piotrek{If it is not solvable then it is not solvable, you can say that in this case it is $+\infty$ but it is only a convention. I think.}
%\piotrek{I think the theorems says that the 
%image of a compact set is compact. If the set is closed and bounded then it is compact so the continuous function achieves its minimum. But the 
%compactness is crucial so only form the fact that it is closed we can not get that it achieves minimum.}
\end{slexample}

Inequalities and unknowns can be indexed by more than one atom, as illustrated in the next example.
\begin{slexample}
\label{ex:Kirchoff}
Let $\otu \A 2 = \setof{\a\b\in\A^2}{a\neq\b}$ be the set of pairs of distinct atoms.
Consider a system whose inequalities and unknowns are indexed by
%
%\begin{align*} %\label{eq:BCex}
$\A \uplus \set{*}$ and
%\qquad\text{and}\qquad
$\A  \uplus \otu \A 2$,
%\end{align*}
%
respectively.
Intuitively, unknowns correspond to vertices $\a$ and edges $\a\b$ of an infinite directed clique.
Let the system contain an inequality
\begin{align}  \label{eq:K1}
\sum_{\a \in \A} \a \ \geq \  1
\end{align}
enforcing the sum of values assigned to all vertices to be at least 1,
and the following inequalities
\begin{align}  \label{eq:K2}
\sum_{\b \in \A\setminus \set{\a}} \a\b \ - \ \a \ - \ \sum_{\b \in \A\setminus \set{\a}} \b\a  \ \geq \ 0 \qquad\quad (\a \in\A)
\end{align}
enforcing, for each vertex $\a\in\A$, the sum of values assigned to all outgoing edges
to be larger or equal to the sum of values assigned to all incoming edges, plus the value assigned
to the vertex $\a$.
In matrix form (0 entries of the matrix are omitted):

\vspace{-3mm}
\begin{align}
\label{eq:Kirchoff-matr}
\begin{aligned}
%\vr A \ = \ 
&\qquad\qquad\quad \begin{matrix} \A & \qquad\qquad \otu \A 2 \end{matrix}\\
&\begin{matrix}
\\
\A \\
\phantom{\ddots} \\
%\hline
\set{*}
\end{matrix}
\left[
\begin{matrix}
  -1 & &  \\  
  & -1 &     \\     
  &  & {\ddots} \ \\ 
  \hline
   \ \;1 & \ \; 1 & {\cdots}  \ 
\end{matrix}
\right\rvert
\hspace{-1.25mm}
\left\rvert
\begin{matrix}
\ \\
\qquad \vr A  \qquad \\
\phantom{\ddots} \\
\hline    
\ 
\end{matrix}
\right]
\cdot
\ \vr x 
\quad
\geq
\quad
\begin{bmatrix}
\ 0 \ \\ \vdots \\ 0 \\ \hline 1
\end{bmatrix}
\end{aligned}
\end{align}
\vspace{1mm}

\noindent
where $\vr A$ is the oriented incidence matrix, namely for every distinct atoms $\a,\b\in\A$, 
\[
\vr A(\a, \a \b) = 1 \qquad
\vr A(\a, \b \a) = -1,
\]
and all other entries of $\vr A$ are 0.
As in previous examples, the system is invariant under permutations of atoms.
Solutions of the system correspond to directed graphs, 
whose vertices and edges are labeled
in accordance with constraints \eqref{eq:K1} and \eqref{eq:K2}.
We return to this system in Example \ref{ex:Kirchoff-cont} below, and 
in Sections \ref{sec:weak} and \ref{sec:max}.
%\piotrek{We should write a sentence that later we will answer if it has a solution. If it is like this then the reader may be confused, if this is the end of the example.}
\end{slexample}

The systems appearing in the above examples are invariant under all permutations of atoms.
As usual when working in \emph{sets with atoms} \cite{atombook} (also known as \emph{nominal sets} \cite{Pitts:book}),
we allow systems which, for some finite subset $S\subseteqfin \A$ (called a \emph{support}), 
are only invariant under permutations that fix $S$.
In the standard terminology of orbit-finite sets \cite{atombook}, we allow for \emph{finitely supported}
systems.
Likewise, 
we allow for finitely supported objective functions, and seek for finitely supported solutions. 
Equivalently, using terminology of sets with atoms,
we allow for \emph{orbit-finite} systems and objective functions, and seek for
orbit-finite solutions.
Note that solutions appearing in Examples \ref{ex:lp} and \ref{ex:lpmin} are \emph{finitary}, 
i.e. assign non-zero to only finitely
many unknowns, and are therefore finitely supported.
Each finite system is finitely supported, and thus orbit-finite linear programming is a generalisation
of the classical one.

\para{Contribution}

As our main contribution, we provide decision procedures for orbit-finite linear programming, both for
the decision problem of solvability of systems of inequalities, and for the optimisation problem.

The core ingredient of our approach is to reduce solvability (resp.~optimisation) of an orbit-finite system of inequalities to the analogous question
on a finite system which is \emph{polynomially parametrised}, i.e., where coefficients are 
univariate polynomials in an integer variable $n$.
The parameter $n$ corresponds, intuitively, to the number of atoms appearing in (a support of) a solution.
In this parametrised setting we ask for solvability for \emph{some} $n\in\Nat$, or for optimisation
when ranging over \emph{all} $n\in\Nat$.
We can compute an answer by encoding the problem into first-order
real arithmetic~\cite{Tar51,ra-book}, %A Canny88,Ben-OrKR86,ra-book}, 
which yields decidability.
We provide also an efficient \PTIME algorithm for a relevant subclass of instances, 
resorting to polynomially many calls of 
classical linear programming.
%The algorithm is however completely different than the approach of \cite{GHL22}.

\begin{slexample} %\rm
\label{ex:poly}
%\piotrek{I think this example is pointless}
%, as the reader does not know what he/she should appreciate. It is like look here it is some magic, you can not really understand what is going on.} 
%\slawek{I would keep it, it illustrates polynomially-parametrised stuff}
For instance, the system \eqref{eq:matr} % in Example~\ref{ex:lp} 
is transformed to the following two inequalities with one unknown $x$,
which are polynomially (actually, linearly) parametrised in a parameter $n$ (the details are exposed in Example \ref{ex:cont} in Section \ref{sec:red2poly}): 
\begin{align} \label{eq:ppar}
\begin{bmatrix}
n-1 \\
n
\end{bmatrix}
\cdot
x
\ \geq \ 
\begin{bmatrix}
n \\
n
\end{bmatrix}
\end{align}
The objective function $S$~\eqref{eq:objf}, on the other hand, is transformed to a 
non-parametrised linear map $x\mapsto 2\cdot x$.
For every $n > 1$, the system \eqref{eq:ppar} is solvable,
the minimal solution is $x=\frac{n}{n-1}$, and
the minimum of the objective function is $\frac{2n}{n-1}$.
Ranging over all $n\in\Nat$, the minimum can get arbitrarily close to 2, but never reaches 2.
\end{slexample}

Our reduction to a polynomially-parametrised linear program involves a size blowup
which is exponential in \emph{atom dimension}
(i.e., the maximal number of atoms appearing in an index of inequality or unknown of a system)
but polynomial in the number of its orbits.
In consequence, orbit-finite linear programming is solvable in \EXPTIME, and in \PTIME when
atom dimension is fixed.
In the setting of orbit-finite sets this means that the problem is feasible \cite{BT18}.
Therefore, in the latter case the complexity is not worse than in case of
classical linear programming.

%\arka{Also cite Tarski?}

\smallskip

One of cornerstones of the reduction is an observation that every system of inequalities that admits
a finitary solution, 
admits also a solution which is invariant under all permutations of its support, i.e., of atoms it uses.
%\piotrek{Not clear what is uses means?}
%In short words, solvable invariant systems admit invariant solutions.

\begin{slexample}
\label{ex:Kirchoff-cont}
As an illustration, let's prove that the system in Example \ref{ex:Kirchoff} has no finitary solution.
Indeed, if there existed a finite labeled  directed graph 
satisfying constraints \eqref{eq:K1} and \eqref{eq:K2}, 
by the above observation there would also exist a finite labeled directed
clique, where labels of all vertices are pairwise equal, and labels of all edges are pairwise equal as well.
In particular each vertex would carry the same value, 
necessarily positive due to constraint \eqref{eq:K1},
and
all edges incoming to a vertex would carry the same value as all outgoing edges.
These requirements are clearly contradictory with constraint \eqref{eq:K2}.
In conclusion, the system has no solutions.
%\piotrek{I think it is not clear what kind of solutions we are looking for, so it can be not clear what we mean by saying that there are no solutions.}
\end{slexample}

As our second main result we prove undecidability of orbit-finite \emph{integer} linear programming,
already for the decision problem of solvability.
While the classical linear programming and integer linear programming are on the opposite sides of the feasibility border,
%Therefore the \PTIME vs \NP gap between classical linear and integer linear programming extends,
in case of orbit-finite systems the two problems are on the opposite sides of the decidability border.
One of key reasons behind undecidability of integer linear programming is that it can
express existence of a finite path. Specifically, if integer solutions are sought,
the system of inequalities of the form
\[
0 \ \leq \ \sum_{\a \in \A\setminus\set{\b}} \b \a \ \leq \ 1 \quad (\b \in \A)
\]
   allow us to say that every node $\beta$ has either no successors, or just one successor. 
   This clearly fails if real solutions are sought.
   
This article is an improved and extended version of the conference submission \cite{GHL23}.
The major improvement is lowering the complexity of orbit-finite linear programming from
\twoEXPTIME to \EXPTIME, and from \EXPTIME to \PTIME for fixed atom dimension.
This is achieved by identifying a suitable subclass of polynomially-parametrised 
linear programs which we show to be
solvable in \PTIME, and a reduction from orbit-finite linear programming to this subclass.

\para{Related research} 

This paper belongs to a wider research program that aims at lifting different aspects of 
theory of computation from finite to orbit-finite sets (essentially equivalent to first-order definable sets)
\cite{lics11,datamonoids,program-atoms,BKLT13,lmcs14,KLOT14,CL15,KKOT15,KLOT16,BT18,KKLO19}.

Our findings generalise, or are closely related to, some earlier % (in terms of decidability but not complexity) 
results %~\cite{HLT17,HR21,KKOT15,GHL22} 
on systems of linear equations.
%\piotrek{No these results ware not about inequalities but solving equations over $\Nat$ which in case of orbit finite systems is weaker.} 
%\arka{Should we add reference to paper by BKM, since they proved $\GLin{B}$ has orbit-finite spanning set} 
%
Systems studied in~\cite{HLT17} have row indexes of atom dimension 1.
In a more general but still restricted case studied in~\cite{HR21},
all row indexes are assumed to have the same atom dimension. 
%
%\piotrek{We ware able to consider only columns indexed by sets of atoms but not by a more complicated structures like sets of sets with atoms, which are used to encode tuples. Maybe this is a good way to classify these results, by the level of the hierarchy of sets with atoms.}
%\slawek{This is not a good classification criterion. Is my text ok or should I modify it?}
%\piotrek{Your text is ok. In my first comment I have written wrong. In~\cite{HLT17} rows were indexed with atoms so level 0. In~\cite{HR21} rows were indexed 
%with sets of size $n$, so level $1$ not full level $1$ because sizes of sets were equal. Now rows may be on arbitrary level of the hierarchy. Maybe there is 
%no point in emphasising this but it is what actually happened.}  
%
Furthermore, columns of a matrix are assumed to be finitary in~\cite{HLT17,HR21}.
Both the papers investigate (nonnegative) integer solvability and are subsumed by
\cite{GHL22} (in terms of decidability, but not in terms of complexity), 
a starting point for our investigations.
Orbit-finite systems of equations are a special case of our present setting, as long as
   the solution domain is reals or integers. However, the setting of \cite{GHL22} is not 
   restricted to reals or integers, and allows for an arbitrary commutative ring as a solution domain
   (under some mild effectiveness assumption). This larger generality makes the results of 
   \cite{GHL22} and our results incomparable. Consequently, our methods are different than those
   of \cite{GHL22}.
   
The work~\cite{HL18} goes beyond \cite{HLT17} and investigates linear equations, in atom dimension 1,
over ordered atoms. 
Nonnegative integer solvability is decidable and equivalent to VAS reachability  
(and hence \ackermann-complete~\cite{CO,L,Lasota22}).

%\TODO{Piotrek: check if the above short info is correct and sufficient?}
%\piotrek{Done}

Systems in another related work~\cite{KKOT15} are over a finite field, contain only 
finite equations, and are studied as a special case of orbit-finite constraint satisfaction problems.
Furthermore, solutions sought are not restricted to be finitely-supported.

Orbit-finitely generated vector spaces were recently investigated in \cite{BKM21} and \cite{GHL22}.
The former paper shows that every chain of vector subspaces which are invariant under permutations of atoms eventually stabilises,
and apply this observation to prove decidability of zero-ness for orbit-finite \emph{weighted} automata.
The two papers jointly show that dual of an orbit-finitely generated vector space has an orbit-finite base.
In \cite{MinkowskiWeyl2021} the authors study cones in such spaces which are invariant under permutations of atoms, 
and extend accordingly theorems of Carath\'eodory and Minkowski-Weyl.

Finally, our technique discussed in Example \ref{ex:Kirchoff-cont}, and
developed formally in Section \ref{sec:weak},
seems to be reminiscent of (but independent from) 
the techniques in the recent work \cite{Asia}.

%\arka{In (https://arxiv.org/pdf/2110.10657.pdf) the authors study equivariant cones inside $\flin{\A}$,
%where they extend theorems of Carath\'eodory and Minkowski-Weyl to this setting (theorem 3.1 and 4.4).}

\para{Outline} 
After preliminaries on orbit-finite sets in Section \ref{sec:atoms},
in Section \ref{sec: linear eq} we introduce the setting of orbit-finite linear inequalities and
in Section \ref{sec:results} we state our results.
The rest of the paper is devoted to proofs.
In Sections \ref{sec:weak} and \ref{sec:poly}  we develop tools that are later used in 
decision procedures for linear programming in Sections \ref{sec:lp-decid-proof} and \ref{sec:max}.
Finally, Section \ref{sec:ilp-undecid-proof} contains the proof of undecidability of integer linear programming.
We conclude in Section \ref{sec:conc}.
%\piotrek{Do we plan some conclusions/future work.}
Some routine or lengthy arguments are moved to Appendix.

% !TEX root = lin-prog-main.tex

\section{Preliminaries on orbit-finite sets} \label{sec:atoms}

%\para{Sets with atoms} 
%
%\arka{I think there are several places where the mentioned fact is true for specific
%class of atoms including equality atoms, but not true for atoms in general.
%But I don't see anywhere where it's mentioned that we are working with equality atoms}
Our definitions rely on basic notions and results of the theory of \emph{sets with atoms}~\cite{atombook}, 
also known as nominal sets~\cite{Pitts:book,lmcs14}. 
%This paper is a 
%%In this section we recall what is necessary to follow our arguments; this is 
%part of the theory
%of computing in the realm of orbit-finite sets, developed e.g.~in~\cite{lics11,lmcs14,atombook}.
We only work with \emph{equality atoms} which have no additional structure except for
equality.

\para{Sets with atoms}

We fix a countably infinite set $\A$ %=\set{\at 1, \at 2, \at 3, \ldots}$, 
whose elements we call \emph{atoms}.
Greek letters $\a,\b,\g,  \ldots$ are reserved to range over atoms.
%
%Informally speaking, a set with atoms is a set that can have atoms, or other sets with atoms, as elements.
The universe of sets with atoms is defined formally by a suitably adapted cumulative hierarchy of sets,
by transfinite induction: % on ordinals $\gamma$:
the only set of \emph{rank} 0 is the empty set; and for a cardinal $i$, a set of rank $i$ may contain, as elements, 
sets of rank smaller than $i$ as well as atoms.
%\arka{Should we use some other letters since we reserved greek letters for atoms?}
In particular, nonempty subsets $X\subseteq\A$ have rank 1.

The group $\Aut{}$ of all permutations of $\A$, called in this paper \emph{atom automorphisms},
acts on sets with atoms by consistently renaming all atoms in a given set.
%\piotrek{in the set $\A$. I think, if we write in a given set then this may suggest some parameter like $\Aut{}(\A\setminus\{\alpha,\beta\})$}. 
Formally, by another transfinite induction, for $\pi\in\Aut{}$ we define $\pi(X) = \setof{\pi(x)}{x\in X}$.
Via standard set-theoretic encodings of pairs or finite sequences we obtain, in particular, 
the pointwise action on pairs $\pi (xy)=\pi(x)\pi(y)$,
and likewise on finite sequences.  %: $\pi\cdot (x_1, \ldots, x_n)=(\pi\cdot x_1, \ldots, \pi\cdot x_n)$ .
Relations and functions from $X$ to $Y$ are considered as subsets of $X\times Y$.
%for instance, in case of $f:\A\to\A$, we have $\pi(f)(\a)=\pi(f(\pi^{-1}(\a)))$, or
%$\pi(f)(\pi(\a))=\pi(f(\a))$.
%If a set $X$ is \emph{pure}, i.e., no atom belongs to $X$, its elements, elements of its elements, etc., 
%then $\pi(X) = X$ for every $\pi\in \Gr$; for instance $\pi(n) = n$ for $n\in\Nat$.

We restrict to sets with atoms $X$
% \piotrek{add:``$x\in X$'' otherwise one thinks about $S$ as about the set, ant it is confusing.} 
that only depend on finitely many atoms, in the following sense. 
For $T\subseteq \A$, let $\Aut T = \setof{\pi\in\Aut{}}{\pi(\a) = \a \text{ for every } \a \in T}$ be the set of atom automorphisms
that \emph{fix} $T$%
\footnote{
$\Aut T$ is often called the \emph{pointwise stabilizer} of $T$.
};
they are called $T$-automorphisms.
%, called \emph{\red{the pointwise stabilizer of $T$}}.
A finite set $T\subseteqfin\A$ (we use the symbol $\subseteqfin$ for finite subsets) is a \emph{support} of $X$ if for all
$\pi\in\Aut T$ it holds $\pi(X)=X$.
We also say: $T$ \emph{supports} $X$, or $X$ is \emph{$T$-supported}.
%\end{quote}
Thus a set is $T$-supported if and only if it is invariant under all $\pi \in \Aut T$.
% \red{automorphisms in the pointwise stabilizer of $T$}.
As an important special case, a function $f:X\to Y$, 
understood as its diagram $\setof{(x,f(x))}{x\in X}$, 
is $T$-supported if  $f(\pi(x)) = \pi(f(x))$ for every argument $x$ and $\pi\in\Aut{T}$.
In particular, whenever $f$ is $T$-supported, its domain $X$ is necessarily $T$-supported too.
A $T$-supported set is also $T'$-supported, assuming $T\subseteq T'$.

A set $x$ is \emph{finitely supported} if it has some finite support; in this case
%Finite supports of a fixed element are closed under intersections, and hence every finitely supported 
$x$ always has the least (inclusion-wise) support, denoted $\supp x$, called \emph{the support} of $x$ 
(cf.~\cite[Sect.~6]{atombook}).
%(c.f.~\cite[Prop.~2.3]{Pitts:book}, \cite[Cor.~9.4]{lmcs14}).
Thus $x$ is $T$-supported if and only if $\supp x \subseteq T$.
Sets supported by $\emptyset$ (i.e., invariant under all atom automorphisms) we call \emph{equivariant}.

\begin{slexample}
Given $\a,\b\in\A$, the support of the set $\A\setminus\set{\a,\b}$ is $\set{\a,\b}$.
The set $\A^2$ and the projection function $\pi_1 : \A^2 \to \A : (\a, \b) \mapsto \a$
are both equivariant; 
% (indeed, every $\pi \in \Gr$ preserves the graph $\setof{\pair{\pair a b} a}{a,b\in\A}$ of $\pi_1$);
and the support of a tuple $\tuple{\a_1, \ldots, \a_n}\in\A^n$, encoded as a set in a standard way, is
the set of atoms $\set{\a_1, \ldots, \a_n}$ appearing in it.
\end{slexample}

From now on, we shall only consider sets  that are hereditarily finitely
supported, %(called briefly \emph{legal}), 
i.e., ones that have a finite support, whose every element has some finite support,
and so on.
%A set $X$ in the cumulative hierarchy is called \emph{legal} (or \emph{set with atoms})
%if it is hereditarily finitely-supported: it is finitely-supported itself, all its elements are finitely supported, etc.
%From now on all sets are silently assumed to be legal.

%\vspace{-0.8mm}

\para{Orbit-finite sets} 

Let $T\subseteqfin\A$. Two atoms or sets $x, y$ % of an $S$-supported set  
are \emph{in the same $T$-orbit} if $\pi(x) = y$ for some $\pi\in\Aut T$.
This equivalence relation splits atoms and sets 
%every $S$-supported set $X$ 
into equivalence classes, which we call \emph{$T$-orbits};
$\emptyset$-orbits we call equivariant orbits, or simply \emph{orbits}.
%\slawek{these 2 sentences added:}
By the very definition, every $T$-orbit $\O$ is $T$-supported: $\supp \O \subseteq T$.%
\footnote{The inclusion may be strict, for singleton $T$-orbits $O$.
For instance, the singleton $\set{\a}\subseteq\A$ is a $\set{\a}$-orbit, but also a $\set{\a,\b}$-orbit for $\b\neq \a$.
}

%and $O$ is also a $\supp O$-orbit.
%When the set $S$ is irrelevant, we simply speak of an \emph{orbit}, 
%\slawek{this is not a good idea, 'orbit' should mean 'equivariant orbit'} having in mind an $S$-orbit for some $S\subseteqfin \A$.
%
$T$-supported sets are exactly unions of (necessarily disjoint) $T$-orbits.
Finite unions of $T$-orbits, for any $T\subseteqfin\A$, are called  \emph{orbit-finite} sets.
Orbit-finiteness is stable under orbit-refinement: if $T\subseteq T' \subseteqfin \A$, a finite union of $T$-orbits is also a finite union
of $T'$-orbits (but the number of orbits may increase, cf.~\cite[Theorem~3.16]{atombook}). 

\begin{slexample}
Examples of orbit-finite sets are: 
\begin{itemize}
\item the set of atoms $\A$ (1 orbit); 
\item $\A \setminus \set{\a}$ for some $\a\in\A$ (1 $\set{\a}$-orbit);
\item pairs of atoms $\A^2$ (2 orbits: diagonal $\setof{\a\a}{\a\in\A}$ and off-diagonal $\otu \A 2 = \setof{\a\b\in\A^2}{\a\neq \b}$);
\item $n$-tuples of atoms $\A^n$ for $n\in \Nat$; each orbit $U\subseteq \A^n$ contains all $n$-tuples of the same equality type,
where by the \emph{equality type} of an $n$-tuple $a_1 \ldots a_n\in\A^n$ we mean
the set $\setof{(i,j)}{a_i = a_j}$;
%\item
%$\A^3$ (5 orbits, corresponding to equality types of triples, defined below;
%\piotrek{It is 1/3 page later, some reference would be good});
%%atom permutations fix elements of $H$);
\item non-repeating $n$-tuples of atoms 
$
\otu \A n = \setof{\!\a_1\ldots\a_n\in\A^n\!}{\a_i\neq\a_j \text{ for }i\neq j}
$  (1 orbit);
\item $n$-sets of atoms $\utu \A n = \setof{X\subseteq \A}{\size X = n}$ (1 orbit).
\end{itemize}
All of them are equivariant, except $\A\setminus\set{\a}$.
On the other hand,
the set  $\pow {\text{fin}} \A$ of all finite subsets of atoms is orbit-infinite as cardinality is an invariant of each orbit.
\end{slexample}

We now state few properties to be used in the sequel.
For $T\subseteqfin\A$,
each $T$-orbit $\O\subseteq \otu \A n$ is determined by fixing pairwise distinct atoms
from $T$ on a subset $I\subseteq \setto n$ of positions, while allowing arbitrary atoms from $\A\setminus T$ on
remaining positions $\setto n\setminus I$:
\begin{lemma} \label{lem:tupleorbit}
Let $T\subseteqfin \A$.
$T$-orbits $\O\subseteq \otu \A n$ are exactly sets of the form
\begin{align} \label{eq:Torb}
\begin{aligned}
\setof{a \in \otu \A n}{\proj n I(a) = u, \ \ \proj n {\setto n \setminus I} (a) \in \otu {(\A\setminus T)} {n-\ell}},
\end{aligned}
\end{align}
where $I \subseteq \setto n$, $\size I = \ell$, and $u\in\otu T \ell$.
The projection $\proj n I : \otu \A n \to \otu \A \ell$ is defined in the expected way.
\end{lemma}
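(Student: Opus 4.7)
The plan is a standard double inclusion: show each set of the form~\eqref{eq:Torb} is contained in a single $T$-orbit, and conversely that every $T$-orbit arises this way. The underlying mechanism is that a $T$-automorphism fixes exactly the elements of $T$ pointwise, while acting on $\A \setminus T$ as an arbitrary permutation of the (infinite) complement.

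First I would define, for any tuple $a = \a_1 \ldots \a_n \in \otu \A n$, the set $I(a) = \setof{i \in \setto n}{\a_i \in T}$ and the restriction $u(a) = \proj n {I(a)}(a) \in \otu T {|I(a)|}$. The key observation is that $I$ and $u$ are $T$-invariants: for any $\pi \in \Aut T$ and any position $i$, we have $\a_i \in T$ iff $\pi(\a_i) = \a_i \in T$, and on those positions $\pi$ acts as the identity. Hence any two tuples in the same $T$-orbit share the same pair $(I, u)$, which means each $T$-orbit is contained in a set of the form~\eqref{eq:Torb} for some $I, u$.

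For the reverse direction, I would show that any two tuples $a, b$ in the set~\eqref{eq:Torb}, for a fixed pair $(I, u)$, lie in the same $T$-orbit. By construction $a$ and $b$ agree on $I$ (both equal $u$ there), and their restrictions $\proj n {\setto n \setminus I}(a)$ and $\proj n {\setto n \setminus I}(b)$ are both non-repeating tuples in $\otu{(\A\setminus T)}{n-\ell}$. Because $\A\setminus T$ is countably infinite, the group $\Aut{T}$ acts transitively on $\otu{(\A\setminus T)}{n-\ell}$: one can construct a bijection $\A\setminus T \to \A\setminus T$ sending one tuple to the other (matching the specified atoms in order and extending to a bijection on the rest), and this extends by identity on $T$ to a $T$-automorphism $\pi$ with $\pi(a) = b$.

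The main obstacle — really a minor bookkeeping point rather than a conceptual one — is to verify that the above $\pi$ genuinely lies in $\Aut T$ and maps $a$ to $b$ as required, which amounts to checking that the disjoint union of the identity on $T$ with a bijection on $\A\setminus T$ is a well-defined element of $\Aut T$; this uses the partition $\A = T \uplus (\A\setminus T)$ and the fact that the non-$T$ coordinates of $a$ and $b$ take values in $\A\setminus T$. Combining both inclusions, the sets \eqref{eq:Torb} are precisely the $T$-orbits of $\otu \A n$, completing the proof.
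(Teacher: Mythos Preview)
Your proof is correct and follows essentially the same approach as the paper's own justification, which consists of the single observation that each set~\eqref{eq:Torb} is invariant under all $T$-automorphisms and that any two of its elements are related by some $T$-automorphism. You have simply spelled out both halves of this observation in more detail, including the explicit construction of the $T$-automorphism witnessing transitivity.
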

%
%\noindent
%\begin{proof}
Indeed, the set \eqref{eq:Torb} is invariant under all $T$-automorphisms,
%\red{automorphism in the pointwise stabilizer of $T$}, 
and each two of its elements
are related by some $T$-automorphism.
% \red{automorphism in the pointwise stabilizer of $T$}.
%\end{proof}
%\piotrek{Missing proof.}
%
%
%\smallskip
%
The orbit \eqref{eq:Torb} is in $T$-supported bijection with $\otu {(\A\setminus T)} {n-\ell}$.
This is a special case of a general property of
every $T$-orbit, not necessarily included in $\otu \A n$.
The following lemma is proved exactly as \cite[Theorem 6.3]{atombook}
and provides finite representations of $T$-orbits:
\begin{lemma}\label{lem:repn}
Let $T \subseteqfin \A$.
Every $T$-orbit admits a $T$-supported bijection to a set of the form 
$
\otu {(\A \setminus T)} n/_G,
$
for some $n \in \Nat$ and some subgroup $G$ of $S_n$.
\end{lemma}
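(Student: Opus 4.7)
The plan is to pick a representative $x$ of the orbit $\O$, use the finitely many atoms in its support that lie outside $T$ as ``coordinates'' parameterising $\O$, and identify the ambiguity of this parameterisation with a subgroup of the symmetric group on those coordinates.

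First I would fix any $x \in \O$, set $S = \supp x \setminus T$, and enumerate $S = \set{a_1, \ldots, a_n}$ (with $n = |S|$). Define a map
\[
\phi : \otu{(\A\setminus T)}{n} \ \longrightarrow \ \O, \qquad
\phi(b_1,\ldots,b_n) \ = \ \pi(x),
\]
where $\pi \in \Aut T$ is any automorphism with $\pi(a_i) = b_i$ for all $i$. The main well-definedness check is that if $\pi, \pi' \in \Aut T$ both send the tuple $a_1\ldots a_n$ to $b_1\ldots b_n$, then $\pi^{-1}\pi'$ fixes $T \cup S \supseteq \supp x$ pointwise, and hence fixes $x$; so $\pi(x) = \pi'(x)$. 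Surjectivity of $\phi$ follows from the definition of a $T$-orbit. Equivariance of $\phi$ under $\Aut T$ (and hence $T$-supportedness) is immediate: if $\tau \in \Aut T$ and $\pi$ witnesses $\phi(b_1,\ldots,b_n) = \pi(x)$, then $\tau\pi$ witnesses $\phi(\tau(b_1),\ldots,\tau(b_n)) = \tau\pi(x) = \tau(\phi(b_1,\ldots,b_n))$.

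Next I would define the group $G$. Let $\text{Stab}(x) = \setof{\sigma \in \Aut T}{\sigma(x)=x}$. The key observation is that every $\sigma \in \text{Stab}(x)$ permutes $S$: since $\supp$ commutes with the action of $\Aut T$, we have $S = \supp x \setminus T = \supp(\sigma(x))\setminus T = \sigma(\supp x)\setminus T = \sigma(S)$ (using that $\sigma$ fixes $T$). So each $\sigma \in \text{Stab}(x)$ induces a permutation $g_\sigma \in S_n$ via $\sigma(a_i) = a_{g_\sigma(i)}$, and I set
\[
G \ = \ \setof{g_\sigma}{\sigma \in \text{Stab}(x)} \ \leq \ S_n.
\]
Then I would check that $\phi(b) = \phi(c)$ iff $b$ and $c$ lie in the same $G$-orbit under the coordinate action: unfolding $\phi$, the equation $\phi(b)=\phi(c)$ is equivalent to $\pi_c^{-1}\pi_b \in \text{Stab}(x)$, and evaluating this $T$-automorphism on each $a_i$ translates exactly to $b_i = c_{g(i)}$ for some $g \in G$. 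Hence $\phi$ descends to a bijection $\overline\phi : \otu{(\A\setminus T)}{n}/_G \to \O$, which remains $T$-supported because $\phi$ is and the quotient is by a $T$-definable equivalence.

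The only genuinely delicate point I anticipate is the verification that $\text{Stab}(x)$ really acts on the finite set $S$ (the support argument above), and that the resulting $G$ does not depend on the choice of enumeration of $S$ up to conjugation in $S_n$ — which is enough to give a set of the stated form $\otu{(\A\setminus T)}{n}/_G$. The rest is routine bookkeeping that closely mirrors the argument of \cite[Theorem 6.3]{atombook}.
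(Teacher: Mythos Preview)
Your proof is correct and is precisely the standard argument of \cite[Theorem~6.3]{atombook}, which is exactly what the paper invokes (it gives no proof of its own beyond that citation). The only point worth noting is that you should make explicit that the quotient map $\otu{(\A\setminus T)}{n}\to\otu{(\A\setminus T)}{n}/_G$ is $T$-supported and surjective, so $T$-supportedness of $\overline\phi$ follows from that of $\phi$; otherwise the bookkeeping is complete.
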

%

%\smallskip

%By the \emph{equality type} of an $n$-tuple $a_1 \ldots a_n\in\A^n$ we mean the set $\setof{(i,j)}{a_i = a_j}$.
Recall that each orbit $U\subseteq \A^n$ contains all $n$-tuples of the same equality type.
In particular,  each orbit included in $\otu \A n \times \otu \A m \subseteq \A^{n+m}$
is induced by a partial injection $\iota$ from $\setto n$ to $\setto m$:
\begin{lemma} \label{lem:orbitpair}
Orbits $\O\subseteq \otu \A n \times \otu \A m$
are exactly sets of the form
$
\setof{ (a, b) \in \otu \A n \times \otu \A m
}{\prettyforall{i,j}{a(i) = b(j) \iff \iota(i) = j}},
$
where $\iota$ is a partial injection from $\setto n$ to $\setto m$.
\end{lemma}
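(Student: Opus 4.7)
The plan is to exhibit a complete orbit invariant for pairs in $\otu \A n \times \otu \A m$ and show it is precisely a partial injection from $\setto n$ to $\setto m$. To each pair $(a,b) \in \otu \A n \times \otu \A m$ I would associate the relation $\iota_{(a,b)} \subseteq \setto n \times \setto m$ defined by $(i,j) \in \iota_{(a,b)}$ iff $a(i) = b(j)$. Since $b$ is non-repeating, for every $i$ there is at most one such $j$, so $\iota_{(a,b)}$ is a partial function; since $a$ is non-repeating, it is injective. So $\iota_{(a,b)}$ is a partial injection from $\setto n$ to $\setto m$.

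Next I would show that $\iota_{(\cdot,\cdot)}$ is invariant along every orbit. If $\pi \in \Aut{}$ and $(a',b') = \pi(a,b)$, then for each $i,j$ we have $a(i) = b(j)$ iff $\pi(a(i)) = \pi(b(j))$ (since $\pi$ is a bijection), which is $a'(i) = b'(j)$. Hence $\iota_{(a,b)} = \iota_{(a',b')}$. This shows that all pairs in one orbit share the same partial injection.

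For the converse — the key step — I would show that $\iota_{(\cdot,\cdot)}$ is a \emph{complete} invariant: if $\iota_{(a,b)} = \iota_{(a',b')} = \iota$, then $(a',b') = \pi(a,b)$ for some $\pi \in \Aut{}$. The idea is to define a partial bijection $\sigma$ on $\A$ by setting $\sigma(a(i)) = a'(i)$ for $i \in \setto n$ and $\sigma(b(j)) = b'(j)$ for $j \in \setto m$. Consistency on the overlap is the only nontrivial point: if $a(i) = b(j)$ then $(i,j) \in \iota$, whence $a'(i) = b'(j)$, so the two prescriptions agree. By the same argument (applied to $\iota^{-1}$) the map is injective on its (finite) domain. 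Since the domain and codomain of $\sigma$ are finite subsets of the countably infinite set $\A$, with co-infinite complements of equal cardinality, $\sigma$ extends to a full permutation $\pi \in \Aut{}$; this is the desired automorphism.

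Finally, I would check that every partial injection $\iota : \setto n \rightharpoonup \setto m$ actually arises from some pair $(a,b)$: pick $|\dom\iota|$ atoms for the matched positions, then $n - |\dom\iota|$ and $m - |\image\iota|$ further fresh distinct atoms to fill the remaining entries of $a$ and $b$. This shows the orbits are in bijection with partial injections $\iota$, and each orbit is precisely the set
$
\setof{(a,b) \in \otu \A n \times \otu \A m}{\forall i,j : a(i)=b(j) \iff \iota(i)=j},
$
which is the claimed description. I expect no serious obstacle; the main subtlety is just checking well-definedness of the partial bijection $\sigma$ on overlapping atoms, which follows directly from the assumption that both pairs realise the same $\iota$.
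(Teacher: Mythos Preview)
Your proposal is correct and follows the same underlying idea as the paper: the paper simply remarks that orbits in $\A^{n+m}$ are determined by equality types, and that restricted to $\otu \A n \times \otu \A m$ the equality type reduces to a partial injection $\setto n \rightharpoonup \setto m$, leaving the lemma essentially as an observation without further proof. Your argument unpacks this in full detail---defining the invariant $\iota_{(a,b)}$, verifying it is a partial injection, checking invariance and completeness by explicitly building the permutation $\pi$, and confirming surjectivity onto all partial injections---so it is strictly more explicit than what the paper provides, but conceptually identical.
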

%\piotrek{If we do not add the proof in the appendix, or state that it is trivial. Lemmas without proofs are evil.}
%
%\slawek{Lemma \ref{lem:tupleorbit} to app? Notation used in Sec \ref{sec:weak}}
%\arka{I think it is good to have the lemma here}
%

Atom automorphisms preserve the size of the support: $\size{\supp X} = \size{\supp{\pi(X)}}$ for every set $X$ and
$\pi\in\Aut{}$.
We define \emph{atom dimension} of an orbit as the size of the support of its elements.
For instance, atom dimension of $\otu \A n$ is $n$.

%\slawek{introduce atom dimension}

% !TEX root = lin-prog-main.tex

\section{Orbit-finite (integer) linear programming} 
\label{sec:problems}
%\section{Orbit-finite systems of linear inequalities}
\label{sec: linear eq}

%For the sake of concreteness we work in this paper with vector spaces over the rational field $\R$.

%Furthermore, since we never use division,
%the results are still valid when a field is replaced by a commutative ring,
%e.g.~the integer ring $\Int$, as long as the above effectiveness assumptions are satisfied.
%\slawek{is this remark fine?}
%\slawek{since we can use a ring, do we need to keep the subgroup $\X$ as a parameter?}
%\arka{I don't see why not}\PH{the problem is that in our assumptions we have effective inverse, which is false for rings. It has to be changed to we can solve 
%equations over a ring.} 

%\para{Vectors}

We introduce now the setting of linear inequalities we work with, and formulate our main results. 
We are working in vector spaces over the real%
\footnote{
All the results of the paper still hold if reals $\R$ are replaced by rationals $\Rat$
in all the subsequent definitions and results.
}
field $\R$,
%\footnote{
%Formally, in case when $\R$ is not a field, we should use the term \emph{module}.
%Since modules/vector spaces studied in this paper are of particularly simple kind,
%we prefer to stick to a widely known term \emph{vector space}.
%} 
where vectors are indexed by a fixed orbit-finite set $B$, i.e., are functions $\vr v : B\to\R$.
Observe that such a function $\vr v$, understood as its diagram $\setof{(b, \vr v(b))}{b\in B}$,
is orbit-finite exactly when it is finitely supported
(according to definitions in Section \ref{sec:atoms}).

\begin{definition}
By a \emph{vector} over $B$ we mean any orbit-finite (i.e., finitely-supported) function
%\footnote{
%That is, a function which is finitely-supported as a set of pairs (according to definitions in Section \ref{sec:atoms}), not necessarily a function  
%which is almost everywhere 0.} 
$\vr v$ from $B$ to $\R$, 
written $\vr v : B\tofs\R$ (vectors are written using boldface). 
%\red{Note that by finitely-supported function we mean a finitely supported set of pairs, not 
%function that is equal $0$ for all but finitely many elements.}
Vectors with integer range, $\vr v : B \tofs \Int$, we call \emph{integer} vectors.
\end{definition}

The set of all vectors over $B$ we denote by $\GLin B = B \tofs \R$. 
It is a vector space, 
with pointwise addition and scalar multiplication:
for $\vr v, \vr v'\in\GLin B$, $b\in B$ and $q\in \R$, we have
$(\vr v+ \vr v')(b) = \vr v(b) + \vr v'(b)$ and $(q \cdot \vr v)(b) = q\cdot \vr v(b)$.
These operation preserve the property of being finitely-supported, e.g., 
$\supp{\vr{v} + \vr{v'}} \subseteq \supp{\vr{v}} \cup \supp{\vr{v}'}$.
%
%The generating set $B$ can be considered as \emph{the dimension} of $\GLin B$.
%\TODO{do we want to call B dimension?}
%$\GLin B$ may be considered as  the vector space \emph{generated} by $B$.
%and $B$ as its \emph{dimension}%
%\footnote{
%Not to be confused with \emph{atom dimension} introduced in Section~\ref{sec:atoms}.
%}.
%
%\arka{Probably better to not say $B$ to be the dimension of $\lin{B}$}
%
We define the \emph{domain} of a vector $\vr v\in\GLin B$ as $\dom {\vr v} = \setof{b\in B}{\vr v(b)\neq 0}$.
%
%\para{Finitary vectors}
%
A vector $\vr v$ over $B$ is \emph{finitary}, written $\vr v : B\tofin \R$, 
if $\dom {\vr v}$ is finite, i.e., $\vr v(b) = 0$ for almost all  $b\in B$.

\begin{slexample}
\label{ex:vect}
Let $B = \otu \A 2$. Let $\a, \b\in \A$ be two fixed atoms.
The function $\vr v : B \to \R$ defined, for  $\x,\g\in\A\setminus\set{\a,\b}$, by
\begin{align*}
\vr v(\a \x) &= \vr v(\x \a) = -1 &
\vr v(\a \b) &= \vr v(\b \a) = 3  \\ 
\vr v(\b \x) &= \vr v(\x \b) = -2
&
\vr v(\x \g) &= 0
\end{align*}
is an $\set{\a,\b}$-supported integer vector over $B$.
It is not finitary, as $\dom {\vr v} = \setof{\d\s\in\otu \A 2}{\set{\d,\s}\cap\set{\a,\b} \neq \emptyset}$ is infinite.
%\piotrek{can we use different names for atoms $\x\g$? It is confusing if once they are in $\A\setminus\{\a,\b\}$ and next time they are any elements of $\A$.} .
Finitary $\set{\a,\b}$-supported vectors over $B$ assign $0$ to all elements of $B$ except for $\a\b$ and $\b\a$.
\end{slexample}

A finitary vector $\vr v$ with domain $\dom {\vr v} = \set{b_1, \ldots, b_k}$ such that
$\vr v(b_1) = q_1, \ldots, \vr v(b_k) = q_k$, 
may be identified with a formal linear combination of elements 
of $B$:
\begin{align} \label{eq:formalsum}
\vr v \ = \ q_1 \cdot b_1 + \ldots + q_k \cdot b_k.  %, \qquad  q_1,\ldots,q_k\in\R\setminus\set{0}.
\end{align}
The subspace of $\GLin B$ consisting of all finitary vectors we denote by $\Lin B = B\tofin\R$.
%The space $\GLin B$ can be considered as the dual space of $\Lin B$\PH{Why?, is it a definition or a lemma?}, since we
%restrict our considerations to finitely-supported vectors only.
For finite $B$ of size $\size B = n$, $\GLin B = \Lin B$ is isomorphic to $\R^{n}$.
%
%\toremove{
%\begin{slexample}
%Let $B = \otu \A 2$.
%The domain $\dom v$ of a finitary vector $v\in \Lin B$
%can be seen as edges of a directed graph without self-loops whose nodes are atoms, and the vector
%$v$ is just a $\R$-labelling of edges of this graph.
%%
%%\TODO{finish}
%%\PH{What do we want to present in this example?}
%\end{slexample}
%}
%\para{Characteristic vectors}

For a subset $X\subseteq B$, we denote by $\constvr 1 X\in\GLin B$ the characteristic function of $X$,
i.e., the vector that maps each element of  $X$ to $1$ and all elements of $B\setminus X$ to $0$:
\[
\constvr 1 X : b \mapsto \begin{cases}
1 & \text{ if } b \in X \\
0 & \text{ otherwise.}
\end{cases} 
\]
We write $\constvr 1 b$ instead of $\constvr 1 {\set{b}}$, and $\vr 1$ instead of $\constvr 1 B$.

%We sometimes want to treat $B$ itself as a subset of $\Lin B$, identifying every $b\in B$ with
%the vector $\constvr 1 b$, or equivalently with
%the trivial linear combination $1\cdot b$ as in~\eqref{eq:formalsum}.

%Let $\emb B = \setof{\constvr 1 b}{b\in B} \subseteq \Lin B$.
%Note that the mapping
%$
%b  \mapsto  \constvr 1 b : B \to \emb B
%$
%allows us to treat elements of $B$ as vectors in $\Lin B$.
%

%\para{Finitary spans}

\begin{lemma} \label{lem:1O}
Let $T\subseteqfin \A$ and $\vr v \in \GLin {B}$ such that $\supp {\vr v}\subseteq T$. Then
\begin{enumerate}
\item[(i)] $\vr v$ is constant, when restricted to every $T$-orbit $\O\subseteq B$;
\item[(ii)] $\vr v$ is a linear combination of characteristic vectors $\constvr 1 \O$ of $T$-orbits $\O\subseteq B$.
\end{enumerate}
\end{lemma}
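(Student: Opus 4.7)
The plan is to unfold the defining property of $T$-supportedness for the function $\vr v$ and exploit the fact that atom automorphisms act trivially on reals.

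For part (i), I would translate $\supp{\vr v}\subseteq T$ into the function form recalled in Section~\ref{sec:atoms}: for every $\pi\in\Aut T$ and every $b\in B$, $\vr v(\pi(b)) = \pi(\vr v(b))$. Since $\vr v(b)\in\R$ and reals carry no atoms, $\pi(\vr v(b)) = \vr v(b)$, so the condition simplifies to $\vr v(\pi(b))=\vr v(b)$. Given any two points $b_1,b_2$ of the same $T$-orbit $\O\subseteq B$, by definition there exists $\pi\in\Aut T$ with $\pi(b_1)=b_2$, hence $\vr v(b_1)=\vr v(b_2)$. This is the only substantive step; everything else is bookkeeping.

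For part (ii), I would first note that since $\vr v$ is $T$-supported, its domain $B$ is $T$-supported as well (by the general fact recalled in the preliminaries). Combined with orbit-finiteness of $B$ and the orbit-refinement property, $B$ decomposes as a \emph{finite} disjoint union of $T$-orbits $\O_1,\ldots,\O_k$. Applying (i) on each $\O_i$, let $c_i\in\R$ be the constant value $\vr v$ takes on $\O_i$. Then
\begin{equation*}
\vr v \ = \ \sum_{i=1}^k c_i\cdot \constvr 1{\O_i},
\end{equation*}
and each characteristic vector $\constvr 1{\O_i}$ is itself $T$-supported since $\O_i$ is, so the right-hand side indeed lives in $\GLin B$.

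The only point that could be considered an obstacle is ensuring the finiteness of the sum in~(ii); this is the reason it is worth invoking orbit-refinement explicitly to guarantee that $B$ splits into \emph{finitely many} $T$-orbits rather than arbitrarily many. Once this is secured, the lemma is essentially a one-line consequence of the semantics of support together with the trivial action of $\Aut{}$ on $\R$.
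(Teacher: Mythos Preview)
Your proof is correct and follows essentially the same approach as the paper's: the paper's proof of (i) is the one-liner ``follows immediately as $T$ supports $\vr v$'' and of (ii) writes exactly the sum $\vr v = \sum_\O \vr v(b_\O)\cdot\constvr 1 \O$ over representatives $b_\O\in\O$. You have simply unpacked the details the paper leaves implicit, including the justification for why only finitely many $T$-orbits occur.
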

\begin{proof}
%
%Consider any vector $\vr v \in \GLin {B}$, and let $S = \supp{\vr v}$.
The first part follows immediately as $T$ supports ${\vr v}$.
%This allows us to write $\vr v(\O) \in \R$ in place of $\vr v(x)$ for  $x \in \O$. 
As required in the second part, we have:
\begin{align} \label{eq:1O}
\vr v = \sum_\O \vr v(b_\O) \cdot \constvr 1 \O,
\end{align}
where $\O$ ranges over finitely many %pairwise-disjoint 
$T$-orbits $\O\subseteq B$, and 
$b_{\O}\in\O$ are arbitrarily chosen representatives of $T$-orbits. 
%Therefore $\vr v\in  \Span {} {\setof{\constvr 1 O}{O\in\orbits B}}$, as required.
\end{proof}
%
%\begin{corollary} \label{cor:1O}
%$\GLin B = \Span {} {\setof{\constvr 1 O}{O\subseteq B \text{ an orbit 
%\PH{This notation look strange, maybe write the sentence $O$ are a finitely supported 
%orbits of $B$}}}}$.
%\end{corollary}
%
\begin{notation} \label{not:orbval}
In the sequel, whenever we know that a vector $\vr v : B\tofs \R$ is constant over a $T$-orbit
$\O \subseteq B$,
we may write $\orbval{\vr v}(\O)$ instead of $\vr v(b)$, where $b\in\O$.
In particular, when $\vr v$ is equivariant, we have the \emph{orbit-value} vector
\[
\orbval{\vr v} : \orbits B \to \R,
\]
where $\orbits B$ stands for the set of all equivariant orbits $\O$ included in $B$.
\end{notation}

%\para{Inner product}

We note that the inner product of  vectors $\vr x, \vr y\in\GLin B$, 
\[
\innerprod {\vr x} {\vr y} \ = \ \sum_{b\in B} \vr x(b)\, \vr y(b),
\]
is not always well-defined.
We consider the right-hand side sum as well-defined when there are only finitely many $b\in B$
for which both $\vr x(b)$ and $\vr y(b)$ are non-zero
%, i.e., finitely many $b \in B$ such that $\vr x(b)$ and $\vr y(b)$ are both non-zero
(equivalently, the intersection $\dom {\vr x} \cap \dom {\vr y}$ is finite).%
\footnote{In particular, $\innerprod {\vr x} {\vr y}$ is always well-defined when one of $\vr x, \vr y$ is finitary.}

%\arka{Maybe (in footnotes?) we can mention that the fact that $\vr{x}$ and $\vr{y}$ are finitely-supported implies this.
%As we did in our LICS paper.}
%

%\begin{slremark}
%%Consider $\R = \Rat$. % or $\R = \Int$
%%(the remark remains valid for some other rings $\R$ without nonzero zero-divisors). % \slawek{added}
%%\arka{I think these rings are called \emph{integral domains}. Probably we should mention it in the abstract as well}
%%\slawek{no, as this assumption matters only in this remark}
%Since vectors are finitely supported and hence (c.f.~Lemma \ref{lem:1O}) contain only finitely many different numbers, 
%$\dom {\vr x} \cap \dom {\vr y}$ is finite exactly when 
%%\slawek{should we prove this?}\PH{I think it is trivial, we don't need to}
%%\arka{I don't think that is necessary}
%the right-hand side sum is \emph{unconditionally convergent},
%i.e., convergent to the same value irrespectively of the order in which
%the elements $b\in B$ are enumerated%
%\footnote{We are grateful to Szymon Toru{\'n}czyk for attracting our attention to unconditional convergence.}.
%%\arka{\emph{convergence} is not an well defined concept for rings in general}
%%\slawek{right - I've adapted the remark accordingly}
%\slawek{do we remove this remark?}
%\end{slremark}

\para{Orbit-finite systems of linear inequalities}

Fix an orbit-finite set $C$ (it can be thought of as the set of unknowns).
By a linear inequality over $C$ we mean a pair $e = (\vr a, t)$ where 
$\vr a : C \tofs \Int$ is an integer vector of left-hand side coefficients and $t\in\Z$ is a right-hand side target value%
\footnote{Rational coefficients and target are easily scaled up to integers.}.
An \emph{$\R$-solution} (real solution) of $e$ is any vector $\vr x : C\tofs \R$ such that the inner product
$\innerprod {\vr a} {\vr x}$ is well-defined and
\[
\innerprod {\vr a} {\vr x} \geq t;
\]
$\vr x$ is an \emph{$\Int$-solution} (integer solution) if $\vr x: C\tofs \Int$.
We may also consider constrained solutions, e.g., 
%\emph{integer} ones (where $\vr x(c)\in\Int$ for all $c\in C$),
%\emph{nonnegative} ones (where $\vr x(c)\geq 0$ for all $c\in C$), or 
\emph{finitary} ones.
A linear \emph{equality} $\innerprod {\vr a} {\vr x} = t$ may be encoded by two opposite inequalities:
\[
\innerprod {\vr a} {\vr x} \geq t \qquad
- \innerprod {\vr a} {\vr x} \geq -t.
\] 
%\slawek{should we change the inequalities to the opposite one $A\cdot x \leq b$ in the whole paper from now on?}
%\arka{ok. We should also swap max and min then}

In this paper we investigate sets of inequalities indexed by an orbit-finite set.
Formally, an orbit-finite system of linear inequalities (over $C$)
is the pair $(\vr A, \vr t)$, where $\vr A : B \times C \tofs \Z$ is an integer \emph{matrix} 
with row index $B$ and column index $C$, and $\vr t : B \tofs \Z$ is an integer
\emph{target} vector:
%
%\vspace{-4mm}
\begin{align*}
& \quad\ \ \ \,  \begin{matrix} \ \cdots \quad &  c  & \quad \ \  \cdots\ \end{matrix} \\
& \begin{matrix} \vdots \\ b \\ \vdots \end{matrix} \ \ 
\begin{bmatrix}
\  & \vdots & \ \\
\ \cdots & \vr A(b,c) & \cdots\ \  \\
      & \vdots & \  
\end{bmatrix}
\qquad
\begin{bmatrix}
\vdots  \\
\vr t(b)\\
\vdots   
\end{bmatrix}
\end{align*}
%
%\smallskip

\noindent
For $b\in B$ we denote by $\vr A(b, \_) \in \GLin C$ the corresponding (row) vector.
%and symmetrically, for $c\in C$ we denote by $\vr A(\_, c) \in \GLin B$ the corresponding (column) vector.
A solution of a system $(\vr A, \vr t)$ is any vector $\vr x\in\GLin C$ which is a solution of all inequalities 
$(\vr A(b, \_), \vr t(b))$, $b\in B$.
Equivalently, $\vr x$ is a solution if $\mult{\vr A}{\vr x} \geq \vr t$, where $\geq$ is the pointwise order on vectors,
and the (partial) operation of multiplication of a matrix $\vr A$ by a vector $\vr x$ is defined in an expected way:
\[
(\mult {\vr A}{\vr x})(b) = \innerprod {\vr A(b, \_)} {\vr x}
\]
for every $b\in B$.
The result $\mult {\vr A} {\vr x} \in \GLin B$ is well-defined if $\innerprod {\vr A(b, \_)}{\vr x}$ is well-defined for all $b\in B$.

%\smallskip
By the following examples, restricting to equivariant, finitary or integer solutions only
has impact on solvability:

\begin{slexample}\label{ex:nofin1}
Let columns be indexed by $C = \A$,
and consider the system consisting of just one infinitary inequality
$(\constvr 1 {\A}, 1)$ ($B$ is thus a singleton).
Identifying column indexes $\a \in \A$ with unknowns, the inequality may be written as:
\[
\sum_{\a\in\A} \! \a \ \geq \ 1.
\]
The inequality has an integer (finitary) solution, i.e., $\vr x = \constvr 1 \a$ for any $\a\in\A$, but no 
equivariant one. 
Indeed, equivariant vectors $\vr x : \A \tofs \R$ are necessarily 
constant ones $\vr x = r\cdot \constvr 1 {\A}$ (cf.~Lemma~\ref{lem:1O}), and then
the inner product
\[
\innerprod {\constvr 1 {\A}} {\vr x} 
= \sum_{\a \in \A} \vr x(\a)
= \sum_{\a \in \A} r
\]
is well-defined only if $r=0$, i.e. $\vr x(\a) = 0$ for all $\a \in \A$.
\end{slexample}

\begin{slexample}
Let columns be indexed by $C = \otu \A 2$ and rows by $B = \utu \A 2$.
Consider the system containing, 
for every $\set{\a,\b}\in B$, the inequality
$(\constvr 1 {\a\b} + \constvr 1 {\b\a}, 1)$.
Using the formal-sum notation as in~\eqref{eq:formalsum} it may be written as
$(\a\b + \b\a, 1)$ or,
% (we use the formal-sum notation as in~\eqref{eq:formalsum}).
identifying column indexes $\a \b\in C$ with unknowns, as: % $\vr x(\a \b)$, as:
\[
\a\b + \b\a \ \geq \ 1 \qquad\qquad (\a,\b\in\A, \a \neq \b).
\]
%\piotrek{I think we should write example with equality instead of $\geq$, splitting it into $\geq$ and $\leq$ part makes is pointless.}
All the equations are thus finitary, and the target is $\vr t = \constvr 1 B$.
The constant vector $\vr x = \constvr {\frac 1 2} {} : \a\b \mapsto \frac 1 2$ is a solution,
even if we extend the system with symmetric inequalities 
\[
\a\b + \b\a \ \leq \ 1 \qquad\qquad (\a,\b\in\A, \a \neq \b).
\]
The extended system has no finitary solution.
%as such a solution is in contradiction with the infinitary target $\vr t = \constvr 1 B$.
It has no integer % (infinitary) 
solution either.
%\arka{maybe here we can also say that the assumption that solutions are finitely supported is important?}
Indeed, since we restrict to finitely supported solutions only, any such solution $\vr x$ 
necessarily satisfies, for every distinct atoms $\a,\b \in \A\setminus \supp{\vr x}$, 
the equality $\vr x(\a\b) = \vr x(\b\a)$, 
which is incompatible with $\vr x(\a\b) + \vr x(\b\a) = 1$.
\end{slexample}

\section{Results} \label{sec:results}

\para{Solvability problems}

We investigate decision problems of solvability of orbit-finite systems of inequalities over the ring of reals
%\footnote{Equivalently, one could consider rationals instead of reals.} 
or integers.
Consequently, we use $\F$ to stand either for $\R$ or $\Int$.
We identify a couple of variants. 
In the first one we ask about existence of a finitely-supported solution:

\prob{\ineqsolvname$(\F)$}
{An orbit-finite system of linear inequalities}
{Does it have an \xspace $\F$-solution}

\noindent
We recall that solutions are finitely-supported (equivalently, orbit-finite), by definition.
A closely related variant is solvability of equalities, under the restriction to nonnegative solutions only:

\prob{\nonnegsolvname$(\F)$}
{An orbit-finite system of linear equations}
{Does it have a nonnegative \xspace $\F$-solution}
\noindent
Furthermore, both the problems have "finitary" versions, where one seeks for finitary solutions only,
denoted as \finineqsolvname$(\F)$ and \finnonnegsolvname$(\F)$, respectively.

Three out of the four problems are inter-reducible, and hence equi-decidable,
both for $\F = \R$ and $\F = \Int$:
%(the proof is in Appendix):
%
\begin{theorem} \label{thm:equiv-problems}
Let $\F\in\set{\R,\Int}$. 
The problems \ineqsolvname$(\F)$, \finineqsolvname$(\F)$ and \nonnegsolvname$(\F)$ are inter-reducible. 
All reductions are in \PTIME, except the one from \ineqsolvname$(\F)$ to \finineqsolvname$(\F)$ 
which is in \EXPTIME, but in \PTIME for fixed atom dimension.
%\piotrek{complexity of reductions, are they polynomial? or linear?}
\end{theorem}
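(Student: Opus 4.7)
The plan is to split the proof into routine \PTIME reductions that mirror classical linear-programming transformations while respecting orbit-finiteness, and the single nontrivial \EXPTIME reduction \ineqsolvname$(\F) \to$ \finineqsolvname$(\F)$, which captures the specifically orbit-finite content.

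For the \PTIME reductions: to reduce \nonnegsolvname$(\F)$ to \ineqsolvname$(\F)$, I would replace each equation by a pair of opposite inequalities and add the orbit of nonnegativity constraints $\constvr 1 c \cdot \vr x \geq 0$ indexed by $c \in C$. Conversely, to reduce \ineqsolvname$(\F)$ to \nonnegsolvname$(\F)$, introduce a slack orbit indexed by the row index $B$ and split each unknown as $\vr x = \vr x^+ - \vr x^-$, obtaining $\vr A \vr x^+ - \vr A \vr x^- - \vr s = \vr t$ on the column index $C \uplus C \uplus B$ with all variables constrained to be nonnegative. Both constructions clearly preserve support and orbit-finiteness. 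The reductions between \finineqsolvname$(\F)$ and the other two problems follow the same slack-and-sign templates, with some extra bookkeeping to check that a finitary witness on one side corresponds to a finitary witness on the other (obtained, when going back from nonnegative equations, by trimming the finitely-supported witness to its genuine finite domain).

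The principal reduction \ineqsolvname$(\F) \to$ \finineqsolvname$(\F)$ proceeds in three steps. Let $(\vr A, \vr t)$ be the input system, supported by $T \subseteqfin \A$. \emph{Step~1:} invoke the weakening property developed in Section~\ref{sec:weak}, which states that if the system admits any finitely-supported solution, then it admits a $T$-supported one; this is exactly the principle used contrapositively in Example~\ref{ex:Kirchoff-cont}. \emph{Step~2:} by Lemma~\ref{lem:1O}, a $T$-supported $\vr x$ is constant on each $T$-orbit of $C$ and is thus determined by the finite tuple of orbit-values $(\orbval{\vr x}(\O))_\O$, where $\O$ ranges over the $T$-orbits of $C$; Lemma~\ref{lem:tupleorbit} bounds the number of such orbits polynomially in the orbit count of $C$ and exponentially in its atom dimension. \emph{Step~3:} symmetrically refine the row index $B$, picking one representative $b$ per $T$-orbit of $B$ and rewriting each inner product $\innerprod{\vr A(b, \_)}{\vr x}$ as the finite sum $\sum_\O \bigl(\sum_{c \in \O} \vr A(b, c)\bigr)\, \orbval{\vr x}(\O)$; the inner coefficient sum is well-defined precisely because the original inner product was required to be. The output is a finite linear inequality system, i.e.\ a \finineqsolvname instance whose solutions correspond bijectively with $T$-supported solutions of the original.

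The main obstacle is Step~1, the weakening argument. One must show that invariance of $(\vr A, \vr t)$ under $\Aut T$ allows collapsing any finitely-supported solution $\vr x$ to a $T$-supported one. Over $\R$, the natural strategy is to average $\pi(\vr x)$ over a suitable finite set of automorphisms fixing $T$, exploiting convexity of the non-strict-inequality solution set; the combinatorial argument for choosing the right averaging set (and verifying that the average is a solution supported by $T$) is what Section~\ref{sec:weak} supplies. Combined with Steps~2--3, this yields the claimed \EXPTIME complexity, falling to \PTIME when the atom dimension is fixed.
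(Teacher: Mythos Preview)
Your \PTIME reductions between \ineqsolvname$(\F)$ and \nonnegsolvname$(\F)$ match the paper. For \finineqsolvname$(\F)\to$ \ineqsolvname$(\F)$ the paper uses a concrete trick you do not mention: one adjoins a single fresh unknown $y$ together with the inequality $\sum_{c\in C}\vr x(c)\geq y$, which forces every finitely-supported solution of the new system to be finitary (the inner product on the left must be well-defined).

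Your principal reduction \ineqsolvname$(\F)\to$ \finineqsolvname$(\F)$ has a genuine gap. Step~1 asserts that a $T$-supported system with some finitely-supported solution must admit a $T$-supported solution. This is false, and Example~\ref{ex:nofin1} is a direct counterexample: the single equivariant inequality $\sum_{\a\in\A}\a\geq 1$ has finitary solutions (e.g.\ $\constvr 1 \a$) but no equivariant one, since for any nonzero constant vector the inner product is undefined. Section~\ref{sec:weak} does not prove what you attribute to it: Lemma~\ref{lem:weaksol} starts from a \emph{finitary} solution supported by some set $T$ and returns a \emph{setwise}-$T$-supported finitary solution; it neither shrinks the support down to that of the system nor handles non-finitary solutions. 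Averaging over $\Aut T$ cannot work, as that group is infinite; and any finite averaging would in general destroy integrality, so the argument would also fail for $\F=\Int$.

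The paper's route is entirely different and does not pass through Section~\ref{sec:weak}. It invokes a structural result from~\cite{GHL22} (Claim~20 there): for every orbit-finite matrix $\vr M$ one can effectively compute another orbit-finite matrix $\spread{\vr M}$, with the same row index, such that $\GSpan{\F}{\vr M}=\Span{\F}{\spread{\vr M}}$. Applied with $\vr M=\vr A$, this gives a new system over a new column index whose \emph{finitary} solutions hit exactly the same vectors $\vr A\cdot\vr x$ as arbitrary finitely-supported solutions of the original; hence the two solvability questions coincide. The \EXPTIME bound (\PTIME for fixed atom dimension) is the cost of this construction, not of any orbit-enumeration as in your Steps~2--3.
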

%
%\noindent
%The proofs of Theorem~\ref{thm:equiv-problems}, to be found in Appendix,
%amounts to lifting of standard arguments from finite to orbit-finite systems, 
%and checking that all constructed objects are finitely supported. 
%\piotrek{It can not be like this as fin-solv vs solve is not present in standard case.}
%
(The proof is in Section \ref{sec:problems-proofs-one}.)
The three problems listed in Theorem~\ref{thm:equiv-problems} deserve a shared name \emph{orbit-finite linear programming} 
(in case of $\F=\R$) and \emph{orbit-finite integer linear programming} (in case of $\F=\Int$).
Figure \ref{fig:diagred} shows the reductions of Theorem~\ref{thm:equiv-problems} using dashed arrows.
% shown in ,
%with  (integer) linear programming placed in frames.
%Dashed arrows depict reductions given in the proof .
% while other reductions are shown similarly:
%
% as those in the proof of Theorem~\ref{thm:equiv-problems}:
% except for the reduction from \solvname$(\F)$ to \finsolvname$(\F)$ which is shown in~\cite{GHL22}:
%
%\[
%\xymatrix@R=12pt@C=8pt{
%*+[F]{\text{\small \finineqsolvname}(\F)} \ar@{-->}@/^/[rrr]_{\text{Thm~\ref{thm:equiv-problems}}} &&& *+[F]{\text{\small \ineqsolvname}(\F)} \ar@{-->}@/^/[lll] \ar@{-->}@/^1pc/[dd] \\
%&&& \\
%\text{\small \finnonnegsolvname}(\F) \ar[rrr] \ar[uu] &&& *+[F]{\text{\small \nonnegsolvname}(\F)} \ar@{-->}@/^1pc/[uu]_{\text{Thm~\ref{thm:equiv-problems}}} 
%}
%\]

As our two main results we prove that the linear programming is decidable, while the integer one is not.
Furthermore, the complexity of the decision procedure is exponential in atom dimension of the input system,
but polynomial in the number of orbits. 
This yields \EXPTIME complexity in general, and \PTIME complexity for any fixed atom dimension
of input. 
\begin{theorem} \label{thm:lp-decid}
\finineqsolvname$(\R)$
%Orbit-finite linear programming 
is decidable in \EXPTIME.
For fixed atom dimension, it is decidable in \PTIME.
\end{theorem}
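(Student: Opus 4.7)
The plan is to reduce \finineqsolvname$(\R)$ to solvability of a single finite system of linear inequalities whose coefficients are univariate polynomials in an integer parameter $n$, and then solve this polynomially-parametrised system.

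First, I would establish a \emph{symmetrization principle}: if a system $(\vr A, \vr t)$ supported by $S\subseteqfin \A$ admits a finitary solution $\vr x$ with support $S\cup T$, then by averaging the permuted solutions $\pi(\vr x)$ over the (finite) group of $S$-automorphisms permuting $T$, one obtains another solution that is constant on each $(S\cup T)$-orbit of the column index $C$. This is exactly the observation illustrated in Example \ref{ex:Kirchoff-cont}, and it is what Section \ref{sec:weak} is designed to deliver. Consequently, it suffices to search for $(S\cup T)$-supported solutions that, by Lemma \ref{lem:1O}, are linear combinations $\vr x = \sum_\O x_\O \cdot \constvr 1 \O$ of characteristic vectors of the finitely many $(S\cup T)$-orbits $\O\subseteq C$.

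Second, I would guess $T$ up to equality type with $S$. Since there are only finitely many equality types of finite tuples from $\A\setminus S$, the residual information in $T$ is the parameter $n=|T|$, which is unbounded but enters only through orbit-counting. Using Lemma~\ref{lem:tupleorbit}, the inner product $\innerprod{\vr A(b,\_)}{\vr x}$ for a representative $b$ of any $(S\cup T)$-orbit $\O'\subseteq B$ becomes a sum $\sum_\O p_{\O',\O}(n)\cdot x_\O$, where each $p_{\O',\O}(n)$ counts, as a polynomial in $n$ of degree at most the atom dimension, the $(S\cup T)$-orbits of $C$ that contribute. The translation to this shape is the job of Section \ref{sec:poly} (and is illustrated by Examples \ref{ex:poly} and \ref{ex:cont}). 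The result is a finite, polynomially-parametrised system $\vr P(n)\cdot \vr x \geq \vr q(n)$, whose size is exponential in atom dimension but polynomial in the number of orbits.

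Third, the original system is solvable iff for \emph{some} $n\in\Nat$ (equivalently, all sufficiently large $n$) the parametrised system $\vr P(n)\cdot \vr x \geq \vr q(n)$ has a real solution $\vr x$. This is an $\exists n\in\Nat,\,\exists\vr x\in\R^{d}$ statement in first-order arithmetic with polynomial coefficients, which I would encode into the theory of real-closed fields and decide using Tarski's procedure \cite{Tar51,ra-book}; together with the size bound from the previous step this yields \EXPTIME. For the \PTIME bound at fixed atom dimension, I would invoke the dedicated subclass of polynomially-parametrised linear programs identified in Section \ref{sec:poly}, which is solvable by polynomially many calls to classical linear programming once the number of unknowns and the polynomial degrees are bounded.

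The main obstacle I anticipate is the second step: verifying that, after the symmetrization, the row-coefficients in $\vr P(n)$ really are polynomials in $n$ (and not merely rational or piecewise expressions), with degree controlled by the atom dimension. This requires a careful uniform description, for rows $b$ with a fixed equality type relative to $S\cup T$, of which $(S\cup T)$-orbits of $C$ the matrix $\vr A$ couples to $b$; the count of such orbits must be shown to equal $\binom{n-c_1}{d_1}\cdots\binom{n-c_k}{d_k}$ for appropriate $c_i,d_i$, hence a polynomial in $n$ of degree $\sum d_i$. Getting this bookkeeping right, and ensuring that the resulting polynomially-parametrised instance falls in the \PTIME subclass of Section \ref{sec:poly} when the atom dimension is fixed, is where the technical work lies.
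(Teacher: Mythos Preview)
Your overall architecture---symmetrize, reduce to a finite polynomially-parametrised system, then solve---matches the paper's proof in Section~\ref{sec:lp-decid-proof}. But the proposal has a gap at the decisive point.

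You write that after averaging over permutations of $T$, the solution becomes ``constant on each $(S\cup T)$-orbit of $C$'', and you then index the unknowns $x_\O$ by $(S\cup T)$-orbits. This conclusion is vacuous: any $(S\cup T)$-supported vector is already constant on $(S\cup T)$-orbits (Lemma~\ref{lem:1O}), so as stated the averaging buys nothing. More importantly, the number of finite $(S\cup T)$-orbits inside $\otu{\A}{k}$ is $(|S|+n)^{(k)}$, which grows unboundedly with $n=|T|$; hence your parametrised system would have unboundedly many unknowns and rows as $n$ varies, and there is no single finite system to hand off. What the averaging actually delivers (Lemma~\ref{lem:weaksol}) is a \emph{setwise}-$T$-supported solution, i.e., one invariant under all permutations of $T$. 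Such a vector is constant on the setwise-$T$-orbits of Section~\ref{sec:weak}, and the whole point of Lemma~\ref{lem:STorbit} is that the number of setwise-$T$-orbits inside $\otu{\A}{k}$ is at most $2^k$, independent of $|T|$. This bounded count is what makes the reduction land on a single finite system; you cite Section~\ref{sec:weak} but do not use its content, and your index set for both rows and columns is the wrong one.

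A second, smaller gap: you assert that solvability ``for some $n\in\Nat$'' is equivalent to solvability ``for all sufficiently large $n$''. This is not automatic---the system $P_1$ produced by the direct reduction need not be monotonic in $n$. The paper enforces monotonicity by a further change of variables (Step~2 in Section~\ref{sec:red2poly}, passing from orbit-values $\orbval{\vr x}$ to orbit-sums $\orbsum{\vr x}$; see Lemma~\ref{lem:monoto}, whose proof relies again on Lemma~\ref{lem:weaksol} via Corollary~\ref{cor:weaksol}), and only then reduces to \allpolyineqsolvname, the problem solved in \PTIME by Theorem~\ref{thm:polyptime}. Without monotonicity you are left with \polyineqsolvname and the real-arithmetic route (Theorem~\ref{thm:polydecid}), which gives decidability but not the stated complexity bounds.
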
 
\begin{theorem} \label{thm:ilp-undecid}
\finineqsolvname$(\Z)$
%Orbit-finite integer linear programming 
is undecidable.
\end{theorem}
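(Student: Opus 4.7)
The plan is to reduce from an undecidable word-theoretic problem — for instance the Post Correspondence Problem (PCP), or equivalently the halting problem for a Minsky machine — to \finineqsolvname$(\Z)$. All such reductions exploit the observation highlighted already in the introduction: for integer-valued solutions, the inequalities
\[
0 \ \le \ \sum_{\alpha \ne \beta} \mathrm{succ}(\beta,\alpha) \ \le \ 1 \qquad (\beta \in \A)
\]
force each $\mathrm{succ}(\beta,\alpha)\in\{0,1\}$, so that the successor indicator defines a partial function on atoms; dually, each node has at most one predecessor. With additional orbit-finite constraints pinning down distinguished \emph{head} and \emph{tail} atoms, one obtains the ability to encode a finite chain of atoms, i.e.\ a finite sequence.

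For the PCP encoding, a candidate solution is a common string $w=u_{i_1}\cdots u_{i_n}=v_{i_1}\cdots v_{i_n}$ over a finite alphabet $\Sigma$; I represent $w$ by such a chain of atoms, one per position, and introduce orbit-finite $\{0,1\}$-valued unknowns: letter indicators $l_a(\alpha)$ over $\A$ for each $a\in\Sigma$, and top/bottom tile-start indicators $T_i(\alpha), B_i(\alpha)$ over $\A$ for each tile index $i$. Semantic constraints, again orbit-finite and uniform in $\alpha$, enforce that (a) each chain atom carries a unique letter; (b) each chain atom belongs to exactly one top-block and one bottom-block, determined by the nearest preceding top-start resp.\ bottom-start; (c) whenever $T_i(\alpha)=1$ the next $|u_i|$ atoms along the successor chain spell $u_i$, and symmetrically for $B_i$. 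Condition (c) is written using auxiliary unknowns indexed by tuples in $\otu{\A}{m}$ for $m \le \max_i |u_i|,|v_i|$, each such family yielding only finitely many orbits since the bound is a constant of the PCP instance.

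The main technical obstacle I anticipate is enforcing that the ordered sequences of tile labels induced by the top-starts and by the bottom-starts coincide. This agreement cannot be read off locally from the primary chain, because top and bottom blocks have different lengths and the $j$-th top-start and the $j$-th bottom-start sit at different atoms. My plan is to add a \emph{secondary} chain of ``tile-index'' atoms, subject to its own functional-successor constraints, together with pair-indexed coupling variables over $\otu{\A}{2}$ that match each tile-index atom to its corresponding top-start atom and to its corresponding bottom-start atom on the primary chain, and that force both couplings to carry the same tile label. A further subtlety to handle carefully is ruling out cycles in the successor relation (the functional plus injective constraints alone allow a disjoint union of chains and cycles), which I would address by combining the head/tail indicator equalities $\sum H = \sum L = 1$ with a ``degree-accounting'' identity $\sum_\beta \mathrm{succ}(\beta,\alpha) = A(\alpha) - H(\alpha)$ together with a reachability witness along the secondary tile-index chain. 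All constraints remain uniform in their atom parameters, so the system consists of finitely many orbits of inequalities effectively constructible from the PCP instance. Verifying that finitary integer solutions of this system biject with PCP solutions then transfers undecidability of PCP to \finineqsolvname$(\Z)$.
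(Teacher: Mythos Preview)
Your overall strategy is sound, but the route you chose is considerably harder to complete than necessary, and the sketch has real gaps at the points that matter.

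\textbf{Where it stalls.} The whole difficulty of PCP is that the \emph{order} of top tile-indices equals the \emph{order} of bottom tile-indices. Your coupling via a secondary tile-index chain only says ``same label''; you never write down the constraint that consecutive tile-index atoms couple to \emph{consecutive} top-starts (and likewise bottom-starts) along the primary chain. This is the crux, and it does not come for free. It can be encoded --- by $\{0,1\}$-linearising implications over tuples in $\otu{\A}{m}$ that track ``$\alpha$ is exactly $|u_i|$ primary-steps after $\alpha'$'' --- but it is the heart of the reduction and needs to be spelled out, together with bijectivity of the matching and boundary conditions at the two ends.

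Your proposed ``reachability witness'' to kill cycles is also suspect. Natural witnesses (a distance label $d(\alpha)$ with $d(\text{succ}(\alpha)) = d(\alpha)+1$) require products of unknowns once you condition on which $\beta$ is the predecessor, and there is no fixed bound $M$ available for a big-$M$ linearisation since chain length is unbounded. You may not need to kill cycles at all (see below), but then you must argue why stray cycles cannot fake a solution once the secondary coupling is in place.

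\textbf{How the paper does it.} The paper reduces from reachability in $d$-counter machines rather than PCP, and this sidesteps both issues. Edges $e_{\alpha\beta}\in\{0,1\}$ and degree constraints force a path from a fixed source atom $\iota$ to a fixed target atom $\zeta$, plus possibly disjoint cycles --- and the cycles are simply \emph{left alone}: they encode additional self-returning executions that do not interfere with the path, so no reachability witness is needed. Each inner node $\alpha$ carries a unique instruction via $t_{i\alpha}\in\{0,1\}$. Counter values on edge $\alpha\beta$ are stored in \emph{unary} by a third atom index: $c_{\alpha\beta\gamma k}\in\{0,1\}$ with value $\sum_{\gamma} c_{\alpha\beta\gamma k}$. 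Updates are then the purely linear, local flow equation
\[
\sum_{\beta,\gamma} c_{\beta\alpha\gamma k}\;+\!\!\sum_{i\in\updinstr(k)}\! i(k)\,t_{i\alpha}\;=\;\sum_{\beta,\gamma} c_{\alpha\beta\gamma k},
\]
and zero tests are the single inequality $c_{\alpha\beta\gamma k}+\sum_{i\in\zeroinstr(k)} t_{i\alpha}\le e_{\alpha\beta}$. No secondary chain, no order-preserving coupling, atom dimension $3$.

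\textbf{Recommendation.} Either switch the source problem to counter-machine reachability (then your path-encoding and $\{0,1\}$-forcing are exactly right and the rest becomes local arithmetic), or, if you keep PCP, make the consecutive-start coupling explicit and drop the reachability witness in favour of a cycles-are-harmless argument.
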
 
(The proofs % of Theorems \ref{thm:lp-decid} and \ref{thm:ilp-undecid} 
occupy Sections \ref{sec:lp-decid-proof} and
\ref{sec:ilp-undecid-proof}, respectively.)
Additionally, we settle the status of the last variant, \finnonnegsolvname$(\F)$.
The problem is decidable for $\F=\R$, as it reduces to both \finineqsolvname$(\F)$ and 
\nonnegsolvname$(\F)$ via reductions analogous to those of Theorem 
\ref{thm:equiv-problems}. We derive decidability also in case $\F = \Z$:
%(the proof, relying of the recent result of \cite{HR21}, is in Appendix):

%\arka{I think the result of \cite{HR21} only works for $\Int$.
%For $\Rat$ we neet S\l awek's idea/ simply adding an inequality forcing finiteness.}
%\piotrek{Yes, for $\Rat$ we reduce to linear programming. } 
%
\begin{theorem} \label{thm:finnegsolv-decid}
\finnonnegsolvname$(\F)$ is decidable, for $\F\in\set{\R,\Int}$. 
%\slawek{complexity?}\piotrek{If we want to speak about complexity then the encoding should be first.}
\end{theorem}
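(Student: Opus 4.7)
The $\F = \R$ case is already justified in the paragraph preceding the statement, and I will only sketch it briefly: represent each equation by two opposite inequalities and append, for every column index $c \in C$, the orbit-finite family of nonnegativity inequalities $\vr x(c) \geq 0$. The resulting orbit-finite system's finitary real solutions are exactly the finitary nonnegative real solutions of the input, so decidability follows from Theorem \ref{thm:lp-decid}.

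The substance of the theorem is the $\F = \Z$ case, which cannot mimic this reduction since \finineqsolvname$(\Z)$ is undecidable by Theorem \ref{thm:ilp-undecid}. My plan is to reduce to finitely many instances of classical (finite) integer programming, by bounding the support size of a minimal solution. The first observation is that a finitary nonnegative integer solution $\vr x$ is equivalently a finite multiset of column indices $c_1,\ldots,c_k \in C$ (counted with multiplicity) whose columns sum to the target: $\sum_{i=1}^k \vr A(\_,c_i) = \vr t$. Its support $S = \supp \vr x$ necessarily contains $T := \supp \vr A \cup \supp \vr t$, and every $c_i$ satisfies $\supp c_i \subseteq S$.

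I would then prove the key \emph{support bound}: if a finitary nonnegative integer solution exists, then one exists with $|S \setminus T| \leq N$, where $N$ is effectively computable from the orbit-structure of $B$ and $C$, the coefficients of $\vr A$, and the entries of $\vr t$. The intuition is that, if $|S \setminus T|$ is too large, the multiset contains many atom-isomorphic copies of a single column-orbit representative; a suitable identification of two fresh atoms collapses such copies without changing the resulting sum, yielding a strictly smaller support. Making this argument rigorous---and in particular controlling cross-interactions between different column orbits so that identification of atoms does not break some other row constraint---is what I expect to be the main obstacle. The argument is reminiscent of, but distinct from, the symmetrisation technique developed in Section \ref{sec:weak}, since averaging over a symmetry group is unavailable for integer solutions.

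Once the bound $N$ is in hand, decidability follows routinely. By atom symmetry there are, up to renaming, only $N + 1$ candidate supports $S$, one per value of $|S \setminus T| \in \setfromto 0 N$. For each candidate, the set $C_S = \set{c \in C : \supp c \subseteq S}$ is finite, the orbit-finite equation system reduces to one finite equation per representative of each $S$-orbit of $B$ (finitely many), and the resulting finite integer-linear system with nonnegativity constraints is decidable by classical integer programming. The algorithm enumerates the candidate supports, tests each, and accepts iff some instance is solvable.
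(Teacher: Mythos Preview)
Your treatment of the real case is fine and matches one of the two reductions the paper already records just before the theorem.

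For the integer case, however, you have not given a proof. The entire argument rests on the \emph{support bound}---that a solvable instance admits a solution whose support exceeds $T$ by at most some computable $N$---and you explicitly leave this unproved, flagging it as ``the main obstacle.'' It is the obstacle, and the attack you sketch via atom identification does not work as stated. Identifying two fresh atoms $\alpha,\beta$ is not an atom automorphism, so it need not carry solutions to solutions: a column index $c$ containing both $\alpha$ and $\beta$ may be sent outside $C$ altogether (e.g.\ outside $\otu\A 2$), and even when all indices survive, the row constraints indexed by $b$'s mentioning $\alpha$ or $\beta$ change in uncontrolled ways. The phrase ``collapses such copies without changing the resulting sum'' hides exactly the cross-interactions you yourself worry about, and nothing in the sketch addresses them. (Minor point: the claim that $\supp{\vr x}$ must contain $T$ is false---take $\vr x = \vr 0$ when $\vr t = \vr 0$---though this is easily repaired by working with $\supp{\vr x}\cup T$.)

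The paper's proof for $\F = \Z$ takes a completely different route and avoids any support-bound argument. It rephrases the question as membership $\vr t \in \Span{\Nat}{P}$ for the column set $P = \setof{\vr A(\_,c)}{c\in C}$, then invokes two external results: an orbit-finite basis theorem for $\GLin B$ from \cite{GHL22}, giving a linear isomorphism $\varphi:\GLin B \to \Lin{\widehat B}$ into a space of finitary vectors, and a structural result from \cite{HR21} (Remark~11.16) that reduces the $\Nat$-span question for $\varphi(P)$ to finitely many $\Z$-span questions $\vr t'_i \in \Span{\Z}{P'}$. Each of the latter is an instance of \finsolvname$(\Z)$, already known decidable from \cite{GHL22}. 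So the paper's argument is short but leans on substantial prior machinery; your proposed direct combinatorial route would be more self-contained if completed, but as written the key step is missing.
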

\noindent
(The proof is in Section \ref{sec:problems-proofs-two}.)
In consequence of  Theorems~\ref{thm:ilp-undecid} and \ref{thm:finnegsolv-decid}, in case $\F = \Int$
the two arrows outgoing from \finnonnegsolvname$(\Int)$ in Figure \ref{fig:diagred} can
not be completed by the reverse arrows.

%\vspace{1mm}
\begin{figure}
\begin{align*} 
%\label{eq:diagred}
%\begin{aligned}
\xymatrix@R=12pt@C=4pt{
*+[F]{\text{\small \finineqsolvname}(\F)} \ar@{-->}@/^/[rrr]_{\text{Thm~\ref{thm:equiv-problems}}} &&& *+[F]{\text{\small \ineqsolvname}(\F)} \ar@{-->}@/^/[lll] \ar@{-->}@/^1pc/[dd] \\
&&& \\
\text{\small \finnonnegsolvname}(\F) \ar[rrr] \ar[uu] 
&&& *+[F]{\text{\small \nonnegsolvname}(\F)} \ar@{-->}@/^1pc/[uu]_{\hspace{-0.6mm}\text{Thm~\ref{thm:equiv-problems}}} \\
&&& \\
\text{\small \finsolvname}(\F) \ar@/^/[rrr] \ar[uu] &&& \text{\small \solvname}(\F) \ar@/^/[lll]_{\text{\cite{GHL22}}}%^{\text{\cite{GHL22} Thm~7.1}} 
\ar[uu] 
}
%\end{aligned}
\end{align*}
\caption{Diagram of reductions between solvability problems.}
\label{fig:diagred}
\end{figure}
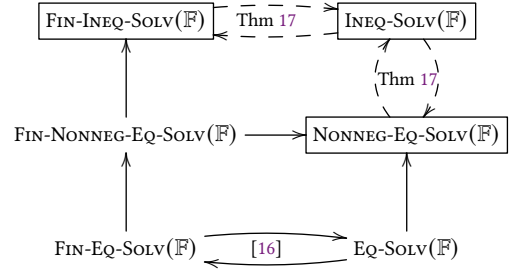
%\vspace{0.5mm}

\para{Linear equations vs inequalities}

Solvability of orbit-finite systems of \emph{equations} 
(\solvname$(\F)$)
easily reduces to \ineqsolvname$(\F)$,
by replacing each equation with two opposite inequalities, 
but also to \nonnegsolvname$(\F)$, by replacing each unknown with a difference of two unknowns.
Likewise does the variant \finsolvname$(\F)$, where one only seeks for finitary solutions.
%Here is the diagram of all reductions: 
%

\begin{theorem}[\cite{GHL22} Thms 4.4 and 6.1]
\solvname$(\F)$ and \finsolvname$(\F)$ are inter-reducible and decidable%
% in \EXPTIME%
\footnote{The results of~\cite{GHL22} apply to systems of equations where coefficients and solutions are from
any fixed commutative and effective ring $\F$. 
This includes integers $\Int$ or rationals $\Rat$ (and hence applies also to real solutions).}.
\end{theorem}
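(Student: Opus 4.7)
The plan is to reduce both decision problems to finite systems of linear equations via the structural analysis of finitely-supported vectors from Section \ref{sec:atoms}. The core observation is that once we fix a finite support $T \supseteq \supp{\vr A} \cup \supp{\vr t}$ of size $n$, any $T$-supported vector $\vr x \in \GLin C$ is completely determined by its orbit-values $\orbval{\vr x}(\O)$ over the finitely many $T$-orbits $\O \subseteq C$; moreover, each defining inner product $\innerprod{\vr A(b, \_)}{\vr x}$ decomposes as $\sum_{\O} \size{\O} \cdot \orbval{\vr A}(b,\O) \cdot \orbval{\vr x}(\O)$, so well-definedness forces $\orbval{\vr x}(\O) = 0$ on every infinite $T$-orbit $\O$ where $\orbval{\vr A}(b,\O) \neq 0$ for some $b$. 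Hence the search for a $T$-supported solution reduces to a finite system of equations in the orbit-value unknowns, whose coefficients are polynomials in $n$ (reflecting the sizes $\size{\O}$ of $T$-orbits, described in Lemma \ref{lem:tupleorbit}).

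For inter-reducibility, the direction \finsolvname$(\F) \to \solvname$(\F)$ is easy: append, for each equivariant orbit $\O \subseteq C$ of positive atom dimension, the single orbit of equations $\vr x(c) = 0$ indexed by $c \in \O$; these force any finitely-supported solution to be zero outside finitely many elements of $C$, hence to be finitary. For the reverse direction \solvname$(\F) \to \finsolvname$(\F)$, one represents a candidate $T$-supported solution by the finitary vector of its orbit-values $\orbval{\vr x}(\O)$ over $T$-orbits of $C$, setting up an auxiliary system indexed by a finite set of orbit-names whose finitary solutions correspond exactly to $T$-supported solutions of the original system.

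For decidability, I would enumerate the support parameter $n$ and, for each $n$, decide solvability of the finite polynomially-parametrised system over $\F$ using classical algorithms (Gaussian elimination over $\R$ or $\Rat$, Smith normal form over $\Int$, and more generally any effective procedure for a commutative ring $\F$). The \emph{main obstacle} is to terminate the enumeration: one must establish a computable bound $N$ such that the existence of any finitely-supported solution implies the existence of one with $\size{T} \leq N$. The natural tool is the stabilisation of ascending chains of orbit-finitely generated subspaces of $\GLin C$ \cite{BKM21}, applied to the chain of $T$-supported solution spaces as $T$ grows; equivalently, one observes that enlarging $T$ by a fresh atom induces a uniform polynomial-in-$n$ transformation of the finite system, so the family is eventually periodic modulo a computable threshold and checking solvability for $n$ up to that threshold suffices.
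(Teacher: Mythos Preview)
The paper does not prove this theorem; it is quoted verbatim from \cite{GHL22} with no argument supplied. So there is no ``paper's proof'' to compare against, only the techniques of \cite{GHL22} (and the analogous reductions for inequalities in the appendix). Judged on its own merits, your proposal has a genuine error and several gaps.

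\textbf{The reduction \finsolvname$(\F) \to$ \solvname$(\F)$ is wrong.} Appending the equations $\vr x(c) = 0$ for all $c$ in every infinite orbit $\O \subseteq C$ forces $\vr x$ to vanish on \emph{all} of $\O$, not merely on all but finitely many elements. A finitary solution of the original system may well be non-zero on finitely many elements of an infinite orbit: take $B = \set{*}$, $C = \A$, and the single equation $\sum_{\a \in \A} \vr x(\a) = 1$; the finitary vector $\constvr 1 {\a_0}$ solves it, but your modified system has only the zero vector as solution and is therefore unsolvable. The correct idea (used in the appendix for inequalities, and adaptable to equations) is to add a single fresh unknown $y$ and a single equation $\sum_{c \in C} \vr x(c) = y$: any finitely-supported solution must make this inner product well-defined, which forces $\dom{\vr x}$ to be finite.

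\textbf{The other direction and decidability are only sketched, and the sketches do not close.} For \solvname$(\F) \to$ \finsolvname$(\F)$, representing a $T$-supported candidate by its orbit-values is fine for a \emph{fixed} $T$, but you need a single orbit-finite instance whose finitary solvability is equivalent to solvability over \emph{all} $T$; this is exactly what \cite[Claim~20]{GHL22} (quoted in the appendix) provides, namely an effective $\spread{\vr M}$ with $\GSpan{\F}{\vr M} = \Span{\F}{\spread{\vr M}}$, and it is not a triviality. For decidability, you correctly identify the termination bound as the main obstacle, but the chain-stabilisation argument from \cite{BKM21} is stated for vector spaces over a field and does not directly cover $\F = \Z$ (or a general commutative ring as in the footnote); \cite{GHL22} handles this via an orbit-finite basis theorem for $\GLin B$ and a module-theoretic argument, which is a different route from your polynomial-in-$n$ enumeration.
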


In summary, for each choice of $\F$ one may distinguish three different decision problems:
solving of systems of linear equations (two bottom nodes in Figure \ref{fig:diagred}),
solving of system of linear inequalities (three upper nodes in Figure \ref{fig:diagred}),
and the intermediate problem \finnonnegsolvname$(\F)$.

\para{Optimisation problems}

We consider $\F = \R$, due to the undecidability of Theorem~\ref{thm:ilp-undecid}.
All variants of linear programming mentioned above have corresponding \emph{maximisation} problems.
In each variant the input contains, except for a system $(\vr A, \vr t)$, 
an integer vector $\vr s : C \tofs \Z$ that represents
a (partial) linear \emph{objective} function $S : \GLin C \tofs \R$, defined by
\[
S(\vr x) = \innerprod{\vr s}{\vr x}.
\]
%
%\arka{We should write $\vr{s} \cdot \vr{x}$ to maintain consistency of representation}
%
The maximisation problem asks to compute the supremum of the objective function over all (finitary, nonnegative)
solutions of $(\vr A, \vr t)$.
A symmetrical \emph{minimisation} problem is easily transformed to a maximisation one by replacing 
$\vr s$ with $-\vr s$.
This yields three optimisation problems 
\ineqmaxname$(\R)$, \finineqmaxname$(\R)$ and \nonnegmaxname$(\R)$ which are,
as before, inter-reducible:
%(the proof is in Appendix):
%
%
\begin{theorem} \label{thm:equiv-max}
The problems \ineqmaxname$(\R)$, \finineqmaxname$(\R)$ and \nonnegmaxname$(\R)$
are inter-reducible, with the same complexity as in Theorem \ref{thm:equiv-problems}. 
\end{theorem}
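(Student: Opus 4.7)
The plan is to lift each of the three reductions from Theorem~\ref{thm:equiv-problems} to the optimisation setting by carrying the objective function along. The two easy reductions are the standard LP transformations:

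\textbf{\nonnegmaxname$(\R)$ to \ineqmaxname$(\R)$.} Given an equality system $\vr A \vr x = \vr t$ with $\vr x\ge \vr 0$ and objective $\innerprod {\vr s}{\vr x}$, I would replace each equality with two opposite inequalities and add the constraints $\constvr 1 c \cdot \vr x \ge 0$ for $c \in C$ (which form an orbit-finite family), keeping the same $\vr s$. Solutions are preserved one-to-one and the objective value is unchanged, so the supremum transports verbatim.

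\textbf{\ineqmaxname$(\R)$ to \nonnegmaxname$(\R)$.} Here I split each unknown $c\in C$ into a difference $c^+ - c^-$ of two nonnegative copies, add one nonnegative slack variable $y_b$ per row $b\in B$, and replace $\mult {\vr A} {\vr x}\ge \vr t$ by $\mult {\vr A}{(\vr x^+ - \vr x^-)} - \vr y = \vr t$. The orbit-finiteness is preserved because everything is obtained by equivariant constructions (disjoint union of $C$ with two copies of itself and with $B$). The objective becomes $\innerprod {\vr s}{\vr x^+} - \innerprod {\vr s}{\vr x^-}$, which has the same supremum as the original since any decomposition $\vr x = \vr x^+ - \vr x^-$ with both nonnegative yields the same value.

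\textbf{\ineqmaxname$(\R)$ to \finineqmaxname$(\R)$.} One direction, finitary $\le$ finitely-supported, is immediate since $\Lin C \subseteq \GLin C$. For the reverse inequality on the suprema, I would apply exactly the same construction as in Theorem~\ref{thm:equiv-problems}: the reduction there produces an enlarged but still orbit-finite system whose finitary solutions correspond to the finitely-supported solutions of the original, with a size blowup that is exponential in atom dimension and polynomial in the number of orbits. The objective vector $\vr s$ is transported through the same construction (it is a finitely-supported vector, so it pushes forward to a finitely-supported vector over the new column index), and the value of the objective is preserved by the correspondence. Hence the suprema coincide, and complexity follows directly from Theorem~\ref{thm:equiv-problems}.

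\textbf{Main obstacle.} The delicate point is that a supremum over finitely-supported solutions need not be attained (recall Example~\ref{ex:lpmin}), so ``preserving the optimum'' means preserving the supremum rather than the maximum. What I must verify is that the correspondence between solutions given by the solvability reductions is value-preserving on the objective \emph{for every solution}, not just at optima; once that is checked, the suprema agree automatically. For the two standard LP-style reductions this is transparent, but for the reduction \ineqmaxname$(\R)$ to \finineqmaxname$(\R)$ one must verify that the bijection between finitely-supported solutions of the original system and finitary solutions of the reduced one is indeed linear and sends $\innerprod{\vr s}{\vr x}$ to $\innerprod{\vr s'}{\vr x'}$ where $\vr s'$ is the transported objective. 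This is a careful bookkeeping check on the construction from Section~\ref{sec:problems-proofs-one}, not a new conceptual step.
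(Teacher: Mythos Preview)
Your two standard LP transformations (\nonnegmaxname\ $\leftrightarrow$ \ineqmaxname) are correct and match the paper exactly.

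There is, however, a genuine gap. You treat only three arrows, and the one you leave implicit---\finineqmaxname$(\R)$ to \ineqmaxname$(\R)$---is \emph{not} trivial. Your remark ``one direction, finitary $\le$ finitely-supported, is immediate since $\Lin C \subseteq \GLin C$'' does not give a reduction: feeding a \finineqmaxname\ instance verbatim into \ineqmaxname\ computes the supremum over \emph{all} finitely-supported solutions, which can strictly exceed the supremum over finitary ones. The paper's trick here is to adjoin one fresh unknown $y$ and one extra inequality $\sum_{c\in C} \vr x(c) \ge y$; for a finitely-supported $\vr x$ this infinite sum is well-defined only when $\vr x$ is finitary, so every finitely-supported solution of the enlarged system is forced to be finitary on the original unknowns, and the objective (extended by $0$ on $y$) is unchanged.

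Your reduction \ineqmaxname$(\R)$ to \finineqmaxname$(\R)$ is also not quite right as stated. You speak of a ``bijection between finitely-supported solutions of the original system and finitary solutions of the reduced one'', but no such bijection exists: the span construction from~\cite{GHL22} only guarantees $\GSpan{\F}{\vr A} = \Span{\F}{\vr A'}$, i.e.\ equality of \emph{images}, with no canonical correspondence on preimages. Consequently there is no natural ``push-forward'' of $\vr s$ to the new column index that would preserve the objective. The paper handles this by stacking $\vr s$ as an additional row of the matrix \emph{before} applying the span claim: one forms $\vr M = \left[\begin{smallmatrix}\vr A\\ \vr s\end{smallmatrix}\right]$, obtains $\spread{\vr M} = \left[\begin{smallmatrix}\vr A'\\ \vr s'\end{smallmatrix}\right]$ with $\GSpan{\F}{\vr M} = \Span{\F}{\spread{\vr M}}$, and then the equality of spans says directly that the set of pairs $(\vr A\cdot\vr x,\ \vr s\cdot\vr x)$ over finitely-supported $\vr x$ equals the set of pairs $(\vr A'\cdot\vr x',\ \vr s'\cdot\vr x')$ over finitary $\vr x'$. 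This is what makes the suprema coincide, and it is the step your ``bookkeeping check'' would not survive as written.
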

(The proof is in Section \ref{sec:problems-proofs-one}.)
As our last main result we strengthen Theorem~\ref{thm:lp-decid} to the optimisation setting:
\begin{theorem} \label{thm:lp-max}
\finineqmaxname$(\R)$ is computable in \EXPTIME.
For fixed atom dimension, it is computable in \PTIME.
\end{theorem}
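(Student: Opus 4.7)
The plan is to extend the proof of Theorem \ref{thm:lp-decid} from feasibility to optimisation. The decision procedure for \finineqsolvname$(\R)$ proceeds by reducing an orbit-finite system $(\vr A, \vr t)$ to a polynomially-parametrised finite LP with parameter $n \in \Nat$ (intuitively, the number of atoms of a solution's support lying outside a fixed support of the system), as illustrated in Example \ref{ex:poly}. I would first verify that the same reduction applies to the objective vector $\vr s$: since $\vr s \in \GLin C$ is finitely supported, Lemma \ref{lem:1O} lets us write it as a linear combination of characteristic vectors of supported orbits, and pairing $\vr s$ with a candidate finitary solution yields a linear form in the reduced unknowns whose coefficients are polynomials in $n$. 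The correctness claim to verify is that this polynomially-parametrised objective, evaluated at any solution of the parametrised LP at parameter $n$, agrees with $S(\vr x)$ on the corresponding orbit-finite solution $\vr x$ of the original system.

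After this reduction the task becomes: compute $\sup_{n \in \Nat} \mu(n)$, where $\mu(n)$ denotes the optimal value of the polynomially-parametrised LP at parameter $n$ (with $\mu(n) = -\infty$ if infeasible, and $+\infty$ if unbounded). The key observation is that $\mu(n)$ is, for all sufficiently large $n$, a rational function of $n$: the finite LP has only finitely many candidate optimal bases, and on each basis the optimal value together with the feasibility and optimality conditions are polynomial (respectively rational) in $n$. Hence $\sup_{n \in \Nat} \mu(n)$ either coincides with $\mu(n_0)$ for some $n_0$ in a bounded effectively computable range (checked by finitely many classical LP calls) or equals $\lim_{n \to \infty} \mu(n)$, which is determined by the leading coefficients of the numerator and denominator of the eventual rational form. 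Both alternatives are effectively computable, and the output can be represented exactly together with a flag indicating whether the value is attained. The comparisons involved can also be uniformly encoded as first-order statements over $(\R, +, \cdot, \leq)$ and discharged by Tarski's theorem \cite{Tar51,ra-book}, but the sharper \EXPTIME and fixed-dimension \PTIME bounds are obtained by reusing the efficient algorithm for the subclass of parametrised LPs identified in the proof of Theorem \ref{thm:lp-decid}, since the objective is handled within exactly the same parametrised format.

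The main obstacle is to properly treat suprema that are not attained, as in Example \ref{ex:lpmin}, where $S$ approaches but never reaches its infimum. This forces the output format to distinguish between \emph{maximum} equal to $v$ (attained at some solution) and \emph{supremum} equal to $v$ (never attained), and it requires a careful asymptotic analysis of $\mu(n)$ as $n \to \infty$, together with a matching witness construction showing that the limit is indeed approached. Secondary subtleties include detecting the unbounded case $\sup \mu(n) = +\infty$ and confirming that the partial inner product $\innerprod{\vr s}{\vr x}$ remains well-defined on all candidate finitary solutions produced by the reduction; both are handled uniformly by the fact that in the parametrised LP every quantity is a polynomial in $n$, so its asymptotic sign and magnitude are determined by leading coefficients and are therefore effectively decidable.
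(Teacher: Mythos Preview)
Your plan is correct in outline, but it misses a simplification that the paper exploits and that is what actually delivers the \PTIME bound. You treat the reduced objective as a linear form whose coefficients are \emph{polynomials in $n$}, and then propose to analyse $\sup_n \mu(n)$ via LP bases or real arithmetic. The paper instead observes that, once the reduction to the monotonic system $P_2$ is performed using the \emph{orbit-sum} variables $\orbsum{\vr x}$ (rather than the orbit-value variables of $P_1$), the objective becomes \emph{ordinary} (non-parametrised): for an equivariant $\vr s$ one has
\[
S(\vr x)\ =\ \sum_{j} \orbval{\vr s}(\O_j)\cdot \orbsum{\vr x}(j),
\]
with coefficients $\orbval{\vr s}(\O_j)$ that are just integers, independent of $n$. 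This is precisely why the paper introduced the passage from $P_1$ to $P_2$ in the first place (via \eqref{eq:defP} and Lemma~\ref{lem:givenTT}): orbit-sums are preserved by the averaging of Lemma~\ref{lem:weaksol} and Corollary~\ref{cor:weaksol}, so the objective value is invariant along the monotonicity argument. With a non-parametrised objective, Algorithm~\ref{alg:1} carries over almost verbatim: one runs the same head/tail iteration, and when solvability is reported on the final pair $(P,\G)$, one simply returns $\sup(\li P\cup\G,\,S')$, an ordinary LP optimum. Claims~\ref{claim:proof1}--\ref{claim:proof2} then show this equals $\sup(P_2,S')$, and Lemma~\ref{lem:givenTobj} gives equality with $\sup(\vr A,\vr t,\vr s)$.

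Your alternative route---tracking $\mu(n)$ as a rational function of $n$ through the finitely many optimal bases---would establish decidability, but it does not obviously yield \PTIME: the number of candidate bases is exponential, and your final sentence (``reusing the efficient algorithm \ldots\ since the objective is handled within exactly the same parametrised format'') is not justified, because Algorithm~\ref{alg:1} and its correctness proof (Claims~\ref{claim:li}--\ref{claim:samesol}) concern feasibility only and rely on the objective being absent or ordinary. If you want your approach to reach the stated complexity, you would need either to recognise the orbit-sum simplification above, or to extend the head/tail machinery to a genuinely parametrised objective---which is doable but is extra work the paper avoids entirely.
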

(The proof is in Section \ref{sec:max}.)
Hence, for every fixed atom dimension, orbit-finite linear programming is not more costly
than the classical finite linear programming.

\para{Representation of input}

There are several possible ways of representing input $(\vr A, \vr t, \vr s)$ to our algorithms.
One possibility is to rely on the equivalence between (hereditary) orbit-finite sets and
\emph{definable} sets \cite[Sect.~4]{atombook}.
We choose another standard possibility, as specified in items (1)--(3) below.
First, the representation includes:
\begin{enumerate}
\item[(1)]  a common support $T\subseteqfin\A$ of $\vr A$, $\vr t$ and $\vr s$.
\end{enumerate}
Second, knowing that $B$ and $C$ are disjoint unions of $T$-orbits, 
and relying on Lemma \ref{lem:repn}, the representation includes also:
\begin{enumerate}
\item[(2)] a list of all $T$-orbits included in $B$ and $C$, each one represented by some tuple $a\in\otu \A n$
and $G\leq S_n$; and a list of $T$-orbits included in $B\times C$, each one represented by some its element.
% a pair $(n, G)$ as in Lemma \ref{lem:repn}.  
\end{enumerate}
%We note that representations of $T$-orbits included in $B\times C$ are then computable using Lemmas
%\ref{lem:repn} and \ref{lem:orbitpair}.
Finally, relying on Lemma~\ref{lem:1O}, we assume that the representation includes also:
\begin{enumerate}
\item[(3)] a list of integer values $\orbval{\vr t}(\O)$, $\orbval{\vr s}(\O)$, and $\orbval{\vr A}(\O)$, respectively,  
for all $T$-orbits $\O$ included in $B$, $C$, and $B\times C$, respectively
(we apply Notation \ref{not:orbval}).
Integers are represented in binary.
\end{enumerate}

\para{Strict inequalities}
In this paper we consider system of \emph{non-strict} inequalities, for the sake of presentation.
The decision procedures of 
Theorems~\ref{thm:lp-decid} and \ref{thm:lp-max}, work equally well if both \emph{non-strict and strict} inequalities are
allowed.
Reductions between \finineqsolvname$(\F)$ and \ineqsolvname$(\F)$ work as well, but not the reductions
from (\finname)\nonnegsolvname$(\F)$ to (\finname)\ineqsolvname$(\F)$
as we can not simulate equalities with strict inequalities.
%
%Nevertheless, our method works for \nonnegsolvname$(\F)$ with strict inequalities as well.

%All the reductions of Theorems~\ref{thm:equiv-problems} and \ref{thm:equiv-max}, as well as 
%\piotrek{Agree.}

\para{Proviso}
When investigating different systems of inequalities in the following sections, we
implicitly consider their \emph{real} solutions, unless specified otherwise.

% !TEX root = lin-prog-main.tex

\section{Polynomially-parametrised inequalities}
\label{sec:poly}

%\arka{We can shorten this section by refining the parametrised problem.
%Instead of asking if there exists $n$ for which the system is solvable,
%we can ask if there exists $\vr{x}$ which solves the system for all but finitely many $n$.}
%\slawek{problem \allpolyineqsolvname is added}

We now introduce a core problem that will serve as a target of reductions in the proofs of Theorems \ref{thm:lp-decid} and \ref{thm:lp-max}  in Sections \ref{sec:lp-decid-proof} and \ref{sec:max}.
Consider a finite inequality $\ineqal$ of the form:
\begin{align} \label{eq:polyeq}
p_1(n) \cdot x_1 + \ldots + p_k(n) \cdot x_k \ \geq \ q(n),
\end{align}
where $p_1, \ldots, p_n$ and $q$ are univariate polynomials with integer coefficients, 
and $x_1, \ldots, x_n$ are unknowns.
The special unknown $n$ plays a role of a nonnegative integer parameter, and that is why we call such an inequality
\emph{polynomially-parametrised}.
For every fixed value $n\in\Nat$, by evaluating all polynomials in $n$ we get an ordinary
inequality  $\ineqal(n)$ with integer coefficients.
Also, if $n$ does not appear in $\ineqal$, i.e., all polynomials are constants, $\ineqal$ is an ordinary inequality.

In the sequel we study solvability of a finite system $P$ of such inequalities~\eqref{eq:polyeq} 
with the same unknowns $x_1, \ldots, x_k$.
Again,
by evaluating all polynomials in $n$ we get an ordinary system $P(n)$.
We use the matrix form $P(n) = (\vr A(n), \vr t(n))$ when convenient.
A fundamental problem is to check if for some value $n\in\Nat$, the system $P(n)$ has a real solution:

\prob{\polyineqsolvname}
{A finite system of polynomially-parametrised inequalities $P$} 
{Does $P(n)$ have a real solution for some $n\in \Nat$}
%
%\begin{slremark}
%For every fixed value of $n\in\Nat$, the problem amounts to solvability of a finite system of linear
%equations, and is thus decidable in polynomial time.
%%Also, for every fixed rational values of variables $x_1, \ldots, x_k$, the problem reduces to nonnegative-integer
%%solvability of a system of univariate polynomial Diophantine inequalities, which is also decidable in polynomial time~\cite{CuckerKS99}. \slawek{does this paper solve inequalities too?} 
%\end{slremark}

\begin{theorem} \label{thm:polydecid}
\polyineqsolvname is decidable. %\TODO{\EXPSPACE}
\end{theorem}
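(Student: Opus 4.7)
The plan is to reduce the problem to decidability of first-order arithmetic of the reals, as already hinted in the introduction. Given a finite system $P$ of polynomially-parametrised inequalities~\eqref{eq:polyeq} with parameter $n$ and unknowns $x_1, \ldots, x_k$, I first observe that, if $n$ is treated as a real variable, then solvability of $P(n)$ over the reals is captured by the first-order formula
\begin{align*}
\varphi(n) \ \equiv \ \exists x_1, \ldots, x_k \ \bigwedge_{\ineqal \in P} \bigl( p_1^{\ineqal}(n)\, x_1 + \ldots + p_k^{\ineqal}(n)\, x_k \, \geq \, q^{\ineqal}(n) \bigr)
\end{align*}
in the language of ordered rings, with a single free variable $n$ and integer coefficients.

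By effective quantifier elimination for real closed fields (Tarski--Seidenberg), one computes a quantifier-free formula $\psi(n)$ equivalent to $\varphi(n)$ over $\R$. Such a $\psi(n)$ is a Boolean combination of polynomial inequalities in the single real variable $n$ with integer coefficients. Its solution set in $\R$ is therefore a semialgebraic subset of $\R$, i.e., a finite disjoint union of intervals whose endpoints are real algebraic numbers. An effective procedure (for instance cylindrical algebraic decomposition) produces a description of these intervals, together with isolating rational intervals for each endpoint.

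It remains to decide whether this finite union of intervals contains at least one element of $\Nat$. For each interval unbounded above, some natural number clearly lies inside and can be exhibited once a rational upper bound for the lower endpoint is known; for each bounded interval one computes $\lceil a \rceil$ for the lower endpoint $a$ from its isolating rational interval and tests whether $\lceil a \rceil$ is nonnegative and lies in the interval. Both subtasks reduce to sign determination of polynomials with integer coefficients at rational points, which is effective.

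The main obstacle is conceptual rather than technical: justifying the applicability of effective quantifier elimination and the post-processing above. The resulting algorithm yields decidability but provides no useful complexity bound—it is, in general, very far from elementary. The sharper complexity statements in Theorems~\ref{thm:lp-decid} and~\ref{thm:lp-max} will require a finer analysis of the specific shape of $\varphi(n)$ arising from the reduction from orbit-finite linear programming, in particular identifying a subclass of polynomially-parametrised systems whose feasibility can be determined by a bounded number of calls to classical linear programming.
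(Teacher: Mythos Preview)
Your proof is correct and follows essentially the same approach as the paper: encode solvability of $P(n)$ as a first-order formula $\varphi(n)$ in real arithmetic, use Tarski--Seidenberg to conclude that the set of valid $n$ is a finite union of intervals, and then effectively search this set for a natural number. The only difference is in how the last step is carried out: the paper avoids explicitly computing interval endpoints by first testing the sentence $\exists \tilde n\,\forall n\,(n>\tilde n \Rightarrow \varphi(n))$ and, if it fails, incrementally finding an integer upper bound $m_0$ via sentences $\exists n\,(n>m \wedge \varphi(n))$, whereas you compute the intervals directly via CAD and inspect each one.
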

\noindent
(The proof is in Section \ref{sec:poly-proofs}.)
In the sequel we will not use the decision procedure of Theorem \ref{thm:polydecid}, but rather the algorithm of Theorem \ref{thm:polyptime} stated below, since our
later applications only use \emph{monotonic} instances of \polyineqsolvname.

\subsection{Monotonic polynomially-parametrised inequalities}
\label{sec:poly-mon}

A system $P$ is \emph{monotonic} if there is some $n_0\in\Nat$ such that
every solution of $P(n)$, for an integer $n\geq n_0$, 
is also a solution of $P(n+1)$.
%\slawek{Does a monotonic system $P$ contain only monotonic inequalities?}
Note that monotonicity vacuously holds (with any value of $n_0$) 
if $n$ does not appear in $P$, i.e., when $P$ is an ordinary (non-parametrised) system.
When $P$ is monotonic, 
a solution of $P(n)$ for $n\geq n_0$ is also a solution of $P(n')$ for all integers $n'\geq n$.
%, i.e., is a solution of $P(n)$ for \emph{almost all} $n\in\Nat$ (all sufficiently large $n\in\Nat$).
%
%sufficiently large $n\in\Nat$, namely for some $n_0\in\Nat$, it is a solution of $P(n)$ for all $n > n_0$.
%When $P$ is monotonic and $P(n_0)$ has a solution for some $n_0$, 
%then there is $n_0\in\Nat$, such that every solution of $P(n_0)$ is a solution of $P(n)$ for all $n > n_0$.
%
%Any such $\vr x$ we call simple a \emph{solution of $P$}.
%Likewise we define a solution of a single polynomially-parametrised inequality $\ineqal$.
Therefore, instead of \polyineqsolvname we prefer to use in the sequel
the following core problem, where we do not assume monotonicity but
seek for a solution of $P(n)$ for \emph{almost all} 
(all sufficiently large) values of the parameter $n\in\Nat$:

\prob{\allpolyineqsolvname}
{A finite system of polynomially-parametrised inequalities $P$} 
{Is there $n_0 \in \Nat$ and a vector $(x_1, \ldots, x_k)$ which is a solution of $P(n)$ for every integer $n\geq n_0$}
%
%\arka{$\exists$solution which works $\forall n\geq n_0$.
%Not $\forall n \geq n_0 \exists$ a solution.}\piotrek{Arka , i think this is what is written every solution for $n$ is also a solution for $n+1$}
%
\noindent
From now on, a vector $(x_1, \ldots, x_k)$ which is a solution of an inequality $\ineqal(n)$ 
(resp.~a system $P(n)$) for almost all $n\in\Nat$ we call \emph{\aasol} of 
%a polynomially-parametrised inequality 
$\ineqal$ (resp.~$P$).
The rest of this section is devoted to designing a \PTIME algorithm for \allpolyineqsolvname:
% and thus also for monotonic \polyineqsolvname:

\begin{theorem} \label{thm:polyptime}
\allpolyineqsolvname is %decidable 
in \PTIME.
%(and hence so is monotonic \polyineqsolvname).
\end{theorem}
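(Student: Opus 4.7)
Plan for Theorem~\ref{thm:polyptime}:

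The plan is to reduce \allpolyineqsolvname to polynomially many classical LP queries by exploiting the lexicographic structure of ``eventually non-negative''. For each inequality $\ineqal$ of $P$, expanding $\sum_i p_i(n) x_i - q(n)$ in powers of $n$ yields a polynomial $\sum_{\ell=0}^d c_{\ineqal,\ell}(x)\cdot n^\ell$ whose $\ell$-th coefficient $c_{\ineqal,\ell}(x)$ is a linear function of $x$ with integer coefficients. The inequality is eventually satisfied for all large $n$ iff the coefficient tuple $(c_{\ineqal,d}(x),\ldots,c_{\ineqal,0}(x))$ is lexicographically non-negative -- either all entries vanish or the topmost non-zero entry is positive -- so \allpolyineqsolvname reduces to finding $x \in \R^k$ that makes every such tuple lex $\geq 0$.

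The algorithm performs a top-down pass through levels $\ell = d, d-1, \ldots, 0$, maintaining a cumulative LP system $C$ and a set $A$ of unresolved inequalities (initially $A = P$, $C = \emptyset$). At level $\ell$ it appends $c_{\ineqal,\ell}(x) \geq 0$ for each $\ineqal \in A$ to $C$, rejects with NO if $C$ becomes infeasible, and then splits $A$, via one LP query per inequality, into the forced-zero set $B_\ell = \{\ineqal \in A : \text{the maximum of } c_{\ineqal,\ell}(x) \text{ over } C \text{ is } 0\}$ and its complement $S_\ell$. The forced equalities $c_{\ineqal,\ell}(x) = 0$ for $\ineqal \in B_\ell$ are added to $C$; $B_\ell$ is retained in $A$; every $\ineqal \in S_\ell$ is provisionally marked as resolved at level $\ell$ with a recorded strict constraint $c_{\ineqal,\ell}(x) > 0$. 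After the pass, inequalities left in $A$ have had all coefficients forced to zero and so automatically satisfy the lex condition; a consistency phase then verifies, for each provisionally resolved $\ineqal$, that $c_{\ineqal,\ell_\ineqal}(x) > 0$ is feasible in the final $C$. By a convex-averaging argument, joint feasibility of all strict constraints follows as soon as each holds individually, so if the test passes for every resolved inequality then YES is returned with an averaged witness.

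If some strict constraint fails because later equalities have jointly forced the leading coefficient of $\ineqal$ to vanish, the provisional resolution of $\ineqal$ is retracted, the implied equality $c_{\ineqal,\ell_\ineqal}(x) = 0$ is appended to $C$, and $\ineqal$ is re-inserted into $A$ so that its lower levels $\ell_\ineqal - 1, \ldots, 0$ are processed in the same way. The top-down pass and the consistency phase are iterated until either infeasibility is declared or all provisional resolutions are jointly realisable. Each iteration strictly enlarges the equality part of $C$ or strictly decreases the number of provisional resolutions, and processes each level with only $O(|P|)$ LP queries of polynomial size, so at most $O(d\cdot|P|)$ iterations suffice and the whole procedure runs in \PTIME. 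The delicate point, and the main obstacle in writing out a full proof, is the correctness of this commitment-and-retraction scheme: a greedy rule that commits every $\ineqal$ the first time $c_{\ineqal,\ell}(x) > 0$ is locally achievable can over-commit when joint constraints of other inequalities later force the resolved coefficient to vanish, and the retraction mechanism is exactly what restores completeness without the exponential enumeration of possible leading-level patterns.
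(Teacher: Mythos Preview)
Your proposal is correct and captures the same core ingredients as the paper's proof: the lexicographic characterisation of eventual non-negativity via the coefficient tuple $(c_{\ineqal,d}(x),\ldots,c_{\ineqal,0}(x))$, the detection of degenerate (forced-to-zero) head coefficients, and the convex-averaging argument (your ``joint feasibility follows from individual feasibility'') for realising all strict leading constraints simultaneously.

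The only real difference is organisational. The paper avoids your commitment-and-retraction mechanism by a simpler loop structure: it keeps all inequalities in a single system $P$ (each contributing its own head $\li{\ineqal}$ at its own current degree), and at every iteration re-tests \emph{every} $\ineqal\in P$ for degeneracy against the full current system $\li P\cup\Gamma$; when some $\ineqal$ is degenerate it is replaced by its tail $\tl{\ineqal}$ and the equality $\lipar{\ineqal}{=}$ is appended to $\Gamma$. Because the degeneracy test is always performed against the up-to-date $\Gamma$, an inequality whose provisional strict head becomes infeasible after later equalities is simply caught as degenerate in a subsequent iteration, so no separate consistency pass or retraction is needed. Termination is then immediate from the strict decrease of the total monomial count $\size{P}$, which sidesteps exactly the ``delicate point'' you flag.
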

\begin{proof}
Consider a polynomially-parametrised 
inequality $\ineqal$ of the form:
\begin{align} \label{eq:polyeqpowt}
p_1(n) \cdot x_1 + \ldots + p_k(n) \cdot x_k \ \geq \ q(n).
\end{align}
Let $d$  be the maximal degree of polynomials $p_1, \ldots, p_k, q$ appearing
in $\ineqal$.
We call $d$ \emph{the degree} of $\ineqal$, and denote it also as $\deg \ineqal$.
Let $a_1, \ldots, a_k, b$ be (integer) coefficients of the monomial $n^{d}$ in 
$p_1,  \ldots, p_k, q$,
respectively.
Therefore
\begin{align} \label{eq:d}
\begin{aligned}
p_1(n) & = a_1 \cdot n^{d} + p'_1(n) \qquad \ldots \qquad 
p_k(n) = a_k \cdot n^{d} + p'_k(n) \qquad\qquad
q(n) & = b \cdot n^{d} + q'(n)
\end{aligned}
\end{align}
%\arka{better to write $p(n)$}
for some polynomials $p'_1, \ldots, p'_k, q$ of degree strictly smaller than $d$.
The ordinary inequality with integer coefficients
\begin{align} \label{eq:head}
a_1 \cdot x_1 + \ldots + a_k \cdot x_k \geq b,
\end{align}
we call \emph{the head inequality} of $\ineqal$, and denote by
$\li{\ineqal}$.
Furthermore, the polynomially-parametrised inequality 
\begin{align} \label{eq:tl}
p'_1(n) \cdot x_1 + \ldots + p'_k(n) \cdot x_k \ \geq \ q'(n),
\end{align}
obtained by removing all appearances of the monomial $n^d$,
we call the \emph{tail} of $\ineqal$, and denote it by $\tl \ineqal$.
We also consider below the strict strengthening of the head inequality \eqref{eq:head}, denoted as $\strictli \ineqal$, and the equality, denoted as $\lipar \ineqal =$:
\begin{align} \label{eq:strictli}
%p_1(n) \cdot x_1 + \ldots + p_k(n) \cdot x_k \ > \ q(n)
%\qquad\qquad
%p_1(n) \cdot x_1 + \ldots + p_k(n) \cdot x_k \ = \ q(n),
%
a_1 \cdot x_1 + \ldots + a_k \cdot x_k > b,
\qquad\qquad\qquad
%\end{align}
%\begin{align} \label{eq:eqli}
a_1 \cdot x_1 + \ldots + a_k \cdot x_k = b.
\end{align}
%
%and their heads $\li {\strict \ineqal}$ and $\li {\eqal \ineqal}$:
%
As $\ineqal$ is equal to the sum of its head $\li \ineqal$ multiplied by $n^d$, 
and its tail $\tl \ineqal$, we immediately deduce:
\begin{claim} \label{claim:hdtl}
For every $n\in\Nat$, every solution of $\lipar {\ineqal} =$ is either a solution of both $\ineqal(n)$ and 
${\tl \ineqal}(n)$, 
or of none of them.
\end{claim}

We now provide under- and over-approximations
of the solution set of $\ineqal$ (in Claims \ref{claim:strictlibase} and \ref{claim:libase}).
%will serve as a core of our algorithm.

\begin{claim} \label{claim:libase}
Every \aasol of $\ineqal$ is also a solution of $\li \ineqal$.
\end{claim}
\begin{proof}
%Let $\ineqal$ be as in \eqref{eq:polyeqpowt}.
%The claim follows immediately by application of Claim \ref{claim:strictlibase} to the reverse 
%$\ineqal^{-1}$ of $\ineqal$,
%\[
%p_1(n) \cdot x_1 + \ldots + p_k(n) \cdot x_k \ \leq \ q(n),
%\]
%obtained by reversing the direction of inequality.
%Indeed, suppose $\vr x$ is not a solution of $\li \ineqal$ \eqref{eq:head}, i.e., it is a solution
%of ${\strictli \ineqal}^{-1}$, which is the same as $\strictli {\ineqal^{-1}}$.
%By Claim \ref{claim:strictlibase}, $\vr x$ satisfies $\ineqal^{-1}$, which means that
%for all sufficiently large it satisfies $\ineqal^{-1}(n)$.
%is does not
%satisfy $\ineqal(n)$. 
%
Consider an inequality $\ineqal$ \eqref{eq:polyeqpowt}
and  its \aasol  $\vr x = (x_1, \ldots, x_k)$.
Let $d = \deg \ineqal$.
We thus have
\[
\frac{p_1(n)}{n^d} \cdot x_1 + \ldots + \frac{p_k(n)}{n^d} \cdot x_k \ \geq \ \frac{q(n)}{n^d}
\]
for all sufficiently large $n\in\Nat$. 
Using the decomposition \eqref{eq:d}, we rewrite the above inequality to
\[
\Big(a_1 + \frac{p'_1(n)}{n^d}\Big) \cdot x_1 + \ldots + 
\Big(a_k + \frac{p'_k(n)}{n^d} \Big) \cdot x_k \ \geq \ 
b + \frac{q'(n)}{n^d}.
\]
As the degrees of all polynomials $p'_1, \ldots, p'_k, q'$ are smaller than $d$, 
all the fractions tend to 0 when $n$ tends to $\infty$, and we may deduce
\[
a_1 \cdot x_1 + \ldots + a_k \cdot x_k \ \geq \  b,
\]
i.e., $\vr x$ is a solution of $\li \ineqal$, as required.
%
%\emph{not} a solution of
%$\li \ineqal$,
%%This means that for some inequality $\ineqal$ \eqref{eq:polyeq} in $P$, 
%%its head inequality \eqref{eq:head} is not satisfied by $\vr x$, 
%namely:
%\begin{align} \label{eq:notsat}
%a_1 \cdot x_1 + \ldots + a_k \cdot x_k < b.
%\end{align}
%\arka{the contradiction is unneccessary}
%We need to show that for all sufficiently large $n\in\Nat$, $\vr x$ is not a solution of $\ineqal(n)$.
%%(In consequence, $\vr x$ is not a solution of $P$.)
%%
%Recall the decomposition \eqref{eq:d} of polynomials appearing in $\ineqal$.
%From \eqref{eq:notsat} we deduce that
%\[
%\Big(a_1 + \frac{p'_1(n)}{n^d}\Big) \cdot x_1 + \ldots + 
%\Big(a_k + \frac{p'_k(n)}{n^d} \Big) \cdot x_k <
%b + \frac{q'(n)}{n^d}
%\]
%holds for all sufficiently large $n\in\Nat$.
%Equivalently,
%\[
%\frac{p_1(n)}{n^d} \cdot x_1 + \ldots + \frac{p_k(n)}{n^d} \cdot x_k \ < \ \frac{q(n)}{n^d}
%\]
%holds for all sufficiently large $n\in\Nat$.
%Multiplying both sides by $n^d$ yields
%\begin{align*} %\label{eq:polyeqtail}
%p_1(n) \cdot x_1 + \ldots + p_k(n) \cdot x_k \ < \ q(n),
%\end{align*}
%i.e., for all sufficiently large $n\in\Nat$,
%$\vr x$ is not a solution of $\ineqal(n)$, as required.
\end{proof}

\begin{claim} \label{claim:strictlibase}
Every solution of $\strictli {\ineqal}$ is also an \aasol of $\ineqal$.
\end{claim}
\begin{proof}
Let $d = \deg \ineqal$. 
Consider any vector $\vr x = (x_1, \ldots, x_k)$  satisfying
the strict inequality $\strictli {\ineqal}$ (in \eqref{eq:strictli} on the left). 
Therefore for any polynomials $p'_1, \ldots, p'_k, q'$ of degree strictly smaller than $d$,
the inequality
\[
\Big(a_1 + \frac{p'_1(n)}{n^d}\Big) \cdot x_1 + \ldots + 
\Big(a_k + \frac{p'_k(n)}{n^d} \Big) \cdot x_k > 
b + \frac{q'(n)}{n^d}
\]
is satisfied for all sufficiently large $n\in\Nat$.
Applying the above inequality to polynomials appearing in \eqref{eq:d}, we obtain:
\[
\frac{p_1(n)}{n^d} \cdot x_1 + \ldots + \frac{p_k(n)}{n^d} \cdot x_k \ > \ \frac{q(n)}{n^d}
\]
for all sufficiently large $n\in\Nat$.
We multiply both sides by $n^d$ in order to derive that 
$\vr x$ is a solution of $\ineqal(n)$ for all sufficiently large $n\in\Nat$,
as required.
\end{proof}

%\smallskip

Consider an instance $P$ of \polyineqsolvname, i.e., a finite system of 
polynomially-parametrised inequalities of the form \eqref{eq:polyeqpowt}.
%and assume that $P$ is monotonic.
Let $\li{P} := \setof{\li{\ineqal}}{\ineqal\in P}$ be the system of head inequalities
(note that degrees of different inequalities in $P$ may differ), and let
$\strictli{P} := \setof{\strictli{\ineqal}}{\ineqal\in P}$.
Using Claims \ref{claim:strictlibase} and \ref{claim:libase} we derive:

\begin{claim} \label{claim:li}
Every \aasol of $P$ is also a solution of \xspace $\li P$.
\end{claim}

\begin{claim} \label{claim:strictli}
Every solution of $\strictli P$ is also an \aasol of $P$.
\end{claim}

For time estimation, as the size measure $\size \ineqal$ of an inequality $\ineqal$ we take the total number of monomials
appearing in $\ineqal$.
In particular, $\size \ineqal > \size {\tl \ineqal}$.
The size of a system $P$ is the sum of sizes of all its inequalities.
% as the length of representation of all  coefficients and exponents of monomials appearing in $P$.
%Let $\deg P$ denote the sum of degrees of all inequalities in $P$:
%\[
%\deg P \ := \ \sum_{\ineqal \in P} \deg \ineqal.
%\]
For two systems $P'$, $P''$ of inequalities, we denote their union by $P'\conj P''$
(clearly, union of systems corresponds to conjunction of constraints).
We write $P\conj\ineqal$ instead of $P\conj\set{\ineqal}$.
By $P\setminus \ineqal$ we denote the system obtained from $P$ by removing an inequality $\ineqal$.

\para{The algorithm}
A decision procedure for \allpolyineqsolvname iteratively transforms an instance
of the form $P \conj \G$, where $P$ is a system
of polynomially-parametrised inequalities, and $\G$ is a system of ordinary (non-parametrised)
equalities over the same unknowns.
% (each equality can be represented as two opposite inequalities). 
Initially, $\G$ is empty.
We define a transformation step that given such an instance $P \conj\G$, 
either confirms its solvability (existence of an \aasol), 
or confirms its non-solvability (non-existence of an \aasol), or outputs an instance $P' \conj \G'$ 
which has the same {\aasol}s as $P\conj \G$, and 
such that $\size {P'} < \size P$.
\allpolyineqsolvname is solved by iterating the transformation step until it confirms
either solvability or non-solvability.
Termination after a polynomial number of iterations is guaranteed, as
$\size P$, while being nonnegative, strictly decreases in each iteration.
The transformation step invokes a \PTIME procedure 
%\slawek{what should we cite?}
%\piotrek{Why current citation is not good enough? The step trom $\leq$ to $<$ can be done like this, I think. $\leq$ is solvable in $poly$-time so every corner of the convex polyhidron has poly lenght description, so if you want to solve $< x$ it is sufficient to solve $\leq x-\epsilon$ for small epsilon and epsilon has polynomialy many bits. I don't know precise formula for epsilon but it can be calculated, it has to be so small that it will not remove any corner, except ones on the hyperplane.}
%\ordineqsolvname 
for 
%real solvability of ordinary (non-parametric) systems of inequalities with integer coefficients or 
ordinary linear programming
(as detailed in \eqref{eq:>=} and \eqref{eq:=} below).
Here is a pseudo-code of the algorithm:

%\begin{minipage}{#1\linewidth}
\begin{algorithm}[H]
\caption{(\allpolyineqsolvname)}%[H]                                                                                                                                                             
\label{alg:1}
\begin{algorithmic}[1]
\State \textbf{Input:} \ A polynomially-parametrised system $P$.
\smallskip
\State $\G \gets \emptyset$
\Repeat
\smallskip
\If{\ $\li P  \conj  \G$ \ \eqref{eq:>=} \  is non-solvable} \label{c:if1}
\Comment see Claim \ref{claim:li}
%\State answer \noanswer
\State report non-solvability of $P$
\Else
\If{$\strictli \ineqal \conj \li {P\setminus \ineqal} \conj \G$ \ \eqref{eq:=} \ is solvable for all $\ineqal \in P$}
%\If {\ $\strictli P  \conj  \G$ \ \eqref{eq:>} \ is solvable} 
\label{c:if2}
\Comment see Claims \ref{claim:strictli}, \ref{claim:=}
%in consequence, $\strictli P  \conj  \G$  \ is solvable
\State report solvability of $P$
\Else
\State \textbf{choose any} $\ineqal \in P$ \textbf{such that} \ $\strictli \ineqal \conj \li {P\setminus \ineqal} \conj \G$ \ \eqref{eq:=} \ is non-solvable \label{l:lastif}
\Comment see Claim \ref{claim:samesol}
\smallskip
\State $P \, \gets \, (P \setminus \ineqal) \ \conj \ \tl \ineqal$
\State $\G \,\, \gets \, \G \ \conj \ \lipar \ineqal {=}$
\EndIf 
\EndIf
\Until{solvability or non-solvability of $P$ is reported}
%\medskip
%\State \textbf{Output:} \ $\G$
\end{algorithmic}
\end{algorithm}                                                                                                                                                                  
%\end{minipage}

%\medskip

\para{Transformation step}
The step, defined by the body of the 
\textbf{repeat} loop, proceeds as follows.
If the ordinary system
\begin{align} \label{eq:>=}
%Invoke \ordineqsolvname$(\li P\cup\F)$.
\li P \; \conj \; \G
\end{align}
is non-solvable, 
non-solvability of $P\conj\G$ is reported.
This is correct due to Claim \ref{claim:li}.
Otherwise, knowing that \eqref{eq:>=} is solvable,
the algorithm checks, for every $\ineqal\in P$, whether 
the strengthened system 
\begin{align} \label{eq:=}
\strictli \ineqal \; \conj \; \li {P\setminus \ineqal} \; \conj \; \G,
\end{align}
obtained from \eqref{eq:>=} by replacing
the inequality $\li \ineqal$ with $\strictli \ineqal$, is also solvable.
If this is the case, solvability of $P \conj \G$ is reported. 
This is correct due to Claim \ref{claim:strictli} combined with the following one:
\begin{claim} \label{claim:=}
Solvability of \eqref{eq:=} for every inequality $\ineqal$ in $P$, implies solvability of
\begin{align}
\label{eq:>}
\strictli P \, \conj \, \G.
\end{align}
%\eqref{eq:>} 
\end{claim}
\begin{proof}
Let $m$ be the number of inequalities in $P$, and
suppose that for every inequality $\ineqal$ in $P$, the system \eqref{eq:=}
has a solution, $\vr x_\ineqal$.
All $\vr x_\ineqal$ are thus solutions of \eqref{eq:>=}, and since
the solution set of \eqref{eq:>=} is convex,
the average of all these solutions
$
\frac 1 {m} \cdot \sum_{\ineqal\in P} \vr x_\ineqal
$
is then a solution of $\strictli P \, \conj \, \G$.
\end{proof}
Otherwise, we know that some
inequality $\ineqal$ in $P$ is \emph{degenerate}, namely
\eqref{eq:=} is non-solvable.
In other words, the equality $\lipar \ineqal =$
is implied by \eqref{eq:>=}.
The algorithm chooses a degenerate inequality $\ineqal\in P$ 
%
%if the ordinary system
%\begin{align} \label{eq:>}
%\strictli P \; \conj \; \G
%\end{align} 
%is solvable,
%solvability of $P\conj\G$ is reported.
%This is correct due to .
%
and creates a new instance $P' \conj \G'$, where
\begin{align*}
P' \  = \ (P \setminus \ineqal) \; \conj\; \tl \ineqal \qquad\qquad
\G' \  = \ \G \; \conj \; \lipar \ineqal {=}.
\end{align*}
In words, $P'$ is obtained from $P$ by replacing $\ineqal$ with $\tl \ineqal$,
and $\G'$ is obtained from $\G$ by adding $\lipar \ineqal =$.
As $\size {\tl \ineqal} < \size \ineqal$, we have $\size {P'} < \size P$, as required.
This completes description of the transformation step.

\para{Correctness}
By Claim \ref{claim:hdtl} we derive:
\begin{claim}  \label{claim:samesol}
Systems $P\conj\G$ and $P'\conj\G'$ have the same {\aasol}s.
\end{claim}
\begin{proof}
In one direction, 
consider an \aasol $\vr x$ of $P'\conj\G'$.
It is trivially a solution of $\G$. 
Furthermore, being a solution of $\lipar \ineqal =$ and of $\tl \ineqal(n)$ for almost all $n\in\Nat$,
by Claim \ref{claim:hdtl} it is a solution of $\ineqal(n)$ for almost all $n$, and hence an \aasol of $P$.

Conversely, 
consider an \aasol $\vr x$ of $P\conj\G$.
By Claim \ref{claim:li}, it is a solution of $\li P\conj\G$ and hence, 
as $\ineqal$ is degenerate,
%due to Claim \ref{claim:=}, 
also a solution of $\lipar \ineqal =$.
Therefore $\vr x$ is a solution of $\G'$.
Furthermore, being a solution of $\lipar \ineqal =$ and of $\ineqal(n)$ for all sufficiently large
$n\in\Nat$, 
by Claim \ref{claim:hdtl} it is also a solution of $\tl \ineqal(n)$ for all sufficiently large $n\in\Nat$, 
and hence an \aasol of $P'$.
\end{proof}

\para{Complexity}
We note that the main loop of the algorithm always terminates, at latest when $P=\emptyset$, as in this case the system
\eqref{eq:=} is vacuously solvable for all $\ineqal\in P$.
%
%there is no difference between \eqref{eq:>=} and \eqref{eq:>}.
%
Solvability of \eqref{eq:>=} in line \ref{c:if1} 
is checked by one solvability test of an ordinary system of inequalities.
Solvability of \eqref{eq:=} in line \ref{c:if2} is also checkable in polynomial time due to the following claim
applied to $Q = \li {P} \; \conj \; \G$:
\begin{claim} 
Given an ordinary system $Q$ of linear inequalities and $\ineqal \in Q$, one can check,
in \PTIME,
solvability of \xspace $\ineqal_>  \conj \, (Q\setminus\ineqal)$, where $\ineqal_>$ is the strict
strengthening of $\ineqal$.
\end{claim}
\begin{proof}
We invoke ordinary linear programming twice 
%in each case constrained by non-strict inequalities 
(in \PTIME, see e.g.~\cite[Section 8.7]{PSbook82}).
Let $\ineqal$ be of the form $a_1 \cdot x_1 + \ldots + a_k \cdot x_k \geq b$.
If $Q$ is non-solvable, the algorithm reports non-solvability of 
$\ineqal_>  \conj \, (Q\setminus\ineqal)$.
Otherwise, the algorithm computes the supremum $M\in\Rat\cup \set{\infty}$ of the objective function 
\[
S(x_1, \ldots, x_k) = a_1 \cdot x_1 + \ldots + a_k \cdot x_k,
\]
constraint by $Q\setminus\ineqal$, by invoking ordinary linear programming.
By solvability of $Q$ we know that $M\geq b$.
If $M > b$, the algorithm reports solvability, otherwise it reports non-solvability.
\end{proof}

Number of iterations of transformation step is polynomial (as $\size P$ decreases in each iteration) and hence
so is the number of inequalities in $\G$.
In consequence, the number of invocations of ordinary linear programming is polynomial in
each transformation step, and hence polynomial in total, and each its instance
of ordinary linear programming is also polynomial.
Summing up, our decision procedure for \allpolyineqsolvname works in \PTIME.

\smallskip
The proof of Theorem \ref{thm:polyptime} is thus completed.
%
%Let $d$ be the maximal degree of polynomials appearing in $\vr A$ and $\vr t$.
%Therefore $\vr A$ and $\vr t$ can be presented as a sum
%\begin{align*}
%\vr A \ & = \ \vr A_d \cdot n^d + \vr A_{d-1} \cdot n^{d-1} +  \ldots + \vr A_1 \cdot n + \vr A_0 \\
%\vr t \ & = \ \ \vr t_d \cdot n^d  + \ \, \vr t_{d-1} \cdot n^{d-1}  +  \ldots + \ \vr t_1 \cdot n + \ \vr t_0 
%\end{align*}
%where $\vr A_d, \vr A_{d-1}, \ldots, \vr A_1, \vr A_0$ are rational matrices of the same dimension
%as $\vr A$, and likewise $\vr t_d, \vr t_{d-1}, \ldots, \vr t_1, \vr t_0$ are rational vectors of the
%same length as $\vr t$.
\end{proof}

\begin{slremark}
We do not need any explicit bound on the threshold value of $n_0$ guaranteeing that 
every \aasol of $P$ is a solution of $P(n)$ for every integer $n \geq n_0$.
On the other hand, an exponential bound is derivable from our algorithm.
Assuming $P$ has an \aasol, $P$ has also an \aasol $\vr x$ which is at most exponentially large, 
e.g., a solution of an ordinary system \eqref{eq:>}.
Substituting $\vr x$ into $P(n)$ yields a system of univariate polynomial inequalities, and 
one can take as threshold $n_0$ any integer larger than all nonnegative roots of all polynomials appearing
in the system. 
As roots of univariate polynomials are polynomially bounded, we deduce the bound for $n_0$.
\end{slremark}

\begin{slexample}
Recall two polynomially-parametrised inequalities \eqref{eq:ppar} in Example \ref{ex:poly}
in Section \ref{sec:intro}. They have the same head inequality
$
x\geq 1,
$
which is trivially solvable, and hence the algorithm reports solvability after the first iteration.
Both the tail inequalities,
$-x \geq 1$ and
$0 \geq 1$,
are ordinary (non-parametrised).

The following instance $P_0$ admits three iterations of the main loop of the algorithm:
\begin{align*}
n^2 \cdot x \ - \ n^2 \cdot y \  +  \ n \cdot z  & \ \geq \ 0 \\
-n \cdot x \ + \ (n+3) \cdot y \ \ \ \ \ \, & \ \geq \ 0
\end{align*}
The head inequalities of these two inequalities are $x-y \geq 0$ and $-x + y \geq 0$, respectively.
Therefore
the system $\li {P_0}$ is equivalent to $x=y$ and hence solvable, while $\strictli {P_0}$ is not,
and both inequalities in $P_0$ are degenerate.
Supposing the first one is chosen by the algorithm, after the first iteration we get the following systems $P_1$
(left) and $\G_1$ (right): 

\begin{minipage}{0.5\linewidth}
\begin{align*}
n \cdot z  & \ \geq \ 0\\
-n \cdot x \ + \ (n+3) \cdot y  & \ \geq \ 0
\end{align*}
\vspace{-2mm}
\end{minipage}
\begin{minipage}{0.5\linewidth}
\begin{align*}
x \ - \ y  & \ = \ 0 \\
\end{align*}
\vspace{-2mm}
\end{minipage}

\noindent
In the second iteration, 
the system $\li{P_1} \conj \G_1$ (left) is solvable but the system $\strictli{P_1} \conj \G_1$ (right) 
is not:

\begin{minipage}{0.5\linewidth}
\begin{align*}
z  & \ \geq \ 0\\
- x \ + \  y  & \ \geq \ 0 \\
x \ - \ y  & \ = \ 0 
\end{align*}
\vspace{-2mm}
\end{minipage}
\begin{minipage}{0.5\linewidth}
\begin{align*}
z  & \ > \ 0\\
- x \ + \  y  & \ > \ 0 \\
x \ - \ y  & \ = \ 0  
\end{align*}
\vspace{-2mm}
\end{minipage}

\noindent
The algorithm picks up the second inequality in $P_1$, the only degenerate one, and sets
$P_2$ (left) and $\G_2$ (right):

\begin{minipage}{0.5\linewidth}
\begin{align*}
n \cdot z  & \ \geq \ 0\\
3 \cdot y  & \ \geq \ 0
\end{align*}
\vspace{-2mm}
\end{minipage}
\begin{minipage}{0.5\linewidth}
\begin{align*}
x \ - \ y  & \ = \ 0 \\
-x  \ + \ y & \ = \ 0
\end{align*}
\vspace{-2mm}
\end{minipage}

\noindent
In the last third iteration, the system $\strictli {P_2} \conj \G_2$ 
(obtained by replacing the inequality $n\cdot z\geq 0$ by $z > 0$)
is solvable, and hence solvability of $P_0$ is reported.
\end{slexample}

%
%\arka{Is \emph{upward closed} a better name than \emph{monotonic}?}
%\slawek{'monotonic' sounds better in optimisation variant}
%

%One can express \polyineqsolvname in terms of evaluation of an arithmetic formula, as follows.
%Consider real arithmetic, i.e., the first-order theory of $(\R, +, \cdot, 0, 1, \leq)$.
%
%We fix the real arithmetic language: $+, \cdot, \leq, 0, 1$, with standard meanings.
%\probext{}
%{a quantifier-free arithmetic formula}{$\varphi(n, x_1, \ldots, x_k)$}
%{is $\varphi$ true for some $n\in\Nat$ and nonngative}{$x_1, \ldots, x_k\in\R$}

%
%\begin{lemma} \label{lem:1var-iff}
%The following conditions are equivalent for a real arithmetic formula with one free variable $x$:
%\begin{itemize}
%\item The formula is true for all sufficiently large real values of $x$;
%\item The formula is true for all sufficiently large integer values of $x$.
%\end{itemize}
%\end{lemma}
%
%%
%\begin{proof}
%We only need to prove the if direction.
%Due to Lemma~\ref{lem:1var},
%if a real arithmetic formula $\alpha(x)$ is true for all sufficiently large integers, then the set defined by $\alpha$ contains 
%necessarily a right-infinite interval, and hence the formula is true for all sufficiently large reals as well.
%\end{proof}

% !TEX root = lin-prog-main.tex

\section{Finitely setwise-supported sets} \label{sec:weak}

In this section
we introduce the novel concept of \emph{setwise-support}, playing a central role in the proofs of Theorems \ref{thm:lp-decid} and \ref{thm:lp-max}.
In short, we replace \emph{pointwise} stabilisers by \emph{setwise} ones. 
%the reduction of \finineqsolvname$(\R)$ to \polyineqsolvname in Section \ref{sec:lp-decid-proof}.
%
%Below, let $S\subseteqfin\A$ be any fixed finite set of atoms.

For any $T\subseteqfin\A$ consider the set of all atom automorphisms
that preserve $T$ as a set only (called \emph{setwise-$T$-automorphisms}):
\[
\Aut {\set{T}} \ = \ \setof{\pi \in \Aut {}}{\pi(T) = T}.%
\footnote{
$\Aut {\set{T}}$ is often called the \emph{setwise stabilizer} of $T$.
}
\]
%\red{The set} $\Aut {\set{T}}$ we call \emph{\red{setwise stabilizer of $T$}}.
Accordingly, we define \emph{setwise-$T$-orbits} as equivalence classes with respect to the action of $\Aut {\set{T}}$:
%\piotrek{change two elements to ``two set with atoms'', in general when i was reading it it was confusing what is what so we should avoid word elements.} 
two 
sets (elements) $x, y$ are in the same setwise-$T$-orbit if $\pi(x) = y$ for some $\pi \in\Aut {\set{T}}$.
We have
\[
\Aut {T} \subseteq \Aut{\set{T}} \subseteq \Aut {},
\]
and hence every equivariant orbit splits into finitely many setwise-$T$-orbits, 
each of which splits in turn into finitely many $T$-orbits.
A set $X$ is \emph{setwise-$T$-supported} if $\pi(X) = X$ for all $\pi \in \Aut {\set{T}}$.
Equivalently, $X$ is a union of setwise-$T$-orbits.
Note that each setwise-$T$-supported set is $T$-supported, but the opposite implication is not true.
When $T$ is irrelevant, we speak of finitely setwise-supported sets.
Finally notice that a setwise-$T$-supported set is not necessarily setwise-$T'$-supported for $T\subseteq T'$,
which distinguishes setwise-support from  standard support.
%\arka{Maybe we should illustrate this difference between support and setwise-support by an example?}
%\piotrek{I think we need to add simple example here. This is one of central definitions in the paper, so we should be sure that reader understands it.}
%\slawek{added example below}

\begin{slexample}
\label{ex:av}
Let $T=\set{\a,\b}\subseteq\A$.
The vector $\vr v$, defined in Example \ref{ex:vect} in Section \ref{sec:problems}, is not setwise-$T$-supported.
Indeed,
\[
\pi(\vr v)(\a, \x) = \vr v(\b, \x) \neq \vr v(\a, \x) 
\]
for any $\x \notin T$ and $\pi \in \Aut{\set{T}}$ that swaps $\a$ and $\b$ but preserves all other atoms.
The \emph{averaged} vector $\vr v'$ defined by
\begin{align*}
\vr v'(\a \x) &= \vr v'(\x \a) = - 1.5 &
\vr v'(\a \b) &= \vr v'(\b \a) = 3  \\ 
\vr v'(\b \x) &= \vr v'(\x \b) = - 1.5
&
\vr v'(\x \g) &= 0,
\end{align*}
for  $\x,\g\in\A\setminus\set{\a,\b}$, is setwise-$T$-supported.
Notice that $\vr v'$, is not setwise-$(T\cup\set{\g})$-supported, for $\g\notin T$.
%\arka{averaged vector $\vr{v}' = \frac{1}{2}\sum_{\pi \in \Aut{\A \setminus T}} \pi(\vr{v})$}\piotrek{avereged is not yet defined}
\end{slexample}

Clearly, with the size of $T$ increasing towards infinity, the number of $T$-orbits 
included in one equivariant orbit may increase towards infinity as well.
The crucial property of setwise-$T$-supported sets is that they do not suffer from this unbounded growth:
the number of setwise-$T$-orbits included in a fixed equivariant orbit is bounded, no matter how large $T$ is.
We will need this property for setwise-$T$-orbits $U\subseteq\otu {\A} n$,  $n\in\Nat$, and
it follows immediately by Lemma~\ref{lem:STorbit}.
Intuitively speaking, each such setwise-$T$-orbit is determined by 
a subset $I\subseteq \setto n$ of positions which is filled by arbitrary pairwise different atoms from $T$, 
the remaining positions $\setto n\setminus I$ are filled by arbitrary atoms from $\A\setminus T$ 
(cf.~Lemma \ref{lem:tupleorbit} in Section \ref{sec:atoms}).

\begin{lemma} \label{lem:STorbit}
Let $T\subseteqfin\A$ of size $\size T\geq n$.
Each setwise-$T$-orbit $\O \subseteq \otu {\A} n$ is of the form
\begin{align} \label{eq:STorbit}
\begin{aligned}
\O =  \setof{a \in \otu \A n}{\proj n I(a) \in \otu T \ell, \ \  
  \proj n {\setto n \setminus I}(a) \in \otu {\big(\A\setminus T\big)} {n-\ell}},
\end{aligned}
\end{align}
for some $I\subseteq \setto n$ of size $\ell$.
\end{lemma}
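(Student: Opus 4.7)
The plan is to show that for each $a \in \otu \A n$, its setwise-$T$-orbit equals the set \eqref{eq:STorbit} with $I := \setof{i \in \setto n}{a(i) \in T}$ and $\ell := \size I$. The guiding observation is that a permutation $\pi$ belongs to $\Aut{\set T}$ iff $\pi(T) = T$ and $\pi(\A \setminus T) = \A \setminus T$; equivalently, $\Aut{\set T}$ acts on $\A$ as the independent product of a permutation of $T$ and a permutation of $\A \setminus T$. The forward inclusion is then immediate: for any $\pi \in \Aut{\set T}$ and any position $i$, the condition $a(i) \in T$ is preserved, so $I$ is invariant along the orbit; moreover, the entries of $\pi(a)$ at positions in $I$ stay pairwise distinct inside $T$ while those outside $I$ stay pairwise distinct inside $\A \setminus T$, which places the orbit of $a$ inside \eqref{eq:STorbit}.

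For the reverse inclusion, given any two tuples $a, a'$ of the form \eqref{eq:STorbit} with the same index set $I$, I would build a witness $\pi \in \Aut{\set T}$ with $\pi(a) = a'$ by prescribing $\pi(a(i)) := a'(i)$ on all positions. Since $a$ and $a'$ are non-repeating, this is well-defined and injective, and it splits into two partial injections: one between $\ell$-element subsets of $T$, and one between $(n-\ell)$-element subsets of $\A \setminus T$. The first extends to a bijection $T \to T$ because $T$ is finite and source and target have the same cardinality; the second extends to a bijection of $\A \setminus T$ because $\A \setminus T$ is countably infinite, so any injection between finite subsets extends to a bijection of the whole (by a routine back-and-forth). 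Gluing the two bijections yields the required element of $\Aut{\set T}$.

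The hypothesis $\size T \geq n$ is used only to guarantee non-vacuousness: one needs at least $\ell \leq n$ distinct atoms of $T$ to populate the positions in $I$, and without $\size T \geq n$ the set \eqref{eq:STorbit} could be empty for some $I \subseteq \setto n$. I do not anticipate any real obstacle; the entire argument reduces to the elementary decomposition of $\Aut{\set T}$ as the product of the permutation groups of $T$ and of $\A \setminus T$, from which it follows that a setwise-$T$-orbit of $\otu \A n$ is determined precisely by the set of positions carrying $T$-atoms.
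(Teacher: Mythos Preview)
Your proposal is correct and follows the same approach as the paper's proof: define $I$ as the set of positions where the tuple has atoms from $T$, observe that $\Aut{\set T}$ acts independently on $T$ and on $\A\setminus T$, and extend the obvious partial bijections to a full setwise-$T$-automorphism. The paper's proof is considerably terser (three sentences) but the underlying argument is identical; your remark on the hypothesis $\size T \geq n$ is also apt, since for the direction stated the proof goes through regardless, and the hypothesis only matters if one wants every $I\subseteq\setto n$ to yield a nonempty orbit.
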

\begin{proof}%[Proof of Lemma \ref{lem:STorbit}]
Consider any tuple $t = (\a_1, \ldots, \a_n) \in \otu {\A} n$.
Let $I = \setof{i \in \setto n}{\a_i \in T}$ denote the positions in $t$ filled by atoms from $T$.
By applying all setwise-$T$-automorphisms to $t$, we obtain all tuples, where positions from $I$ are arbitrarily 
filled by elements of $T$, %(note that $\size T \geq k$), 
and positions outside of $I$ are arbitrarily
filled by elements of $\A\setminus T$.
\end{proof}
%
%The number $\ell$ of positions occupied by elements of $T$ we call
%\emph{$T$-order} of $O$ of the form \eqref{eq:STorbit},
%and denote as $\ord T {O}$.
%\slawek{do we need this?}

%As a direct corollary we have:
%\begin{lemma} \label{lem:STorbits}
%Let $k \in \Nat$ and $T\subseteqfin\A$ be such that $\size T\geq k$ and $S\cap T = \emptyset$.
%% be of size at least $k$, and disjoint from $S$.
%Then $\otu {(\A\setminus S)} k$ splits into $2^k$ many $S,T$-orbits.
%\slawek{no more needed}
%\end{lemma}
%
%The subset $\size I\subseteq \setto k$ of positions corresponding to an $S,T$-orbit $O \subseteq \otu {(\A\setminus S)} k$
%we call $T$-\emph{index} of $O$.
%
%Another crucial property is that the number of $(S\cup T)$-orbits inside an $S,T$-orbit
%grows polynomially with respect to $n = \size T$:
%
%\begin{lemma}
%Let $k, \ell \in \Nat$, let $T\subseteqfin\A$ be of size $n\geq \ell$, and disjoint from $S$,
%and let $O\subseteq \otu {(\A\setminus S)} k$ be an $S,T$-orbit whose $T$-index has size $\ell$.
%Then $O$ splits into $p_\ell(n)$ many $(S\cup T)$-orbits, where
%\[
%p_\ell(n) \ = \ n \cdot (n-1) \cdot \ldots \cdot (n-\ell+1).
%\]
%\end{lemma}
%%
%\begin{proof}
%An $(S\cup T)$-orbit $M\subseteq O$ differs from $O$ only by fixing specific chosen $\ell$ 
%elements of $T$ on positions from $I$.
%There are exactly $p_\ell(n)$ different ways of fixing these elements.
%\end{proof}
%%
%Importantly, the number of $(S\cup T)$-orbits depends on the size $n$ of $T$, but not on $T$ itself.
%Note that $\ell$ is bounded by $k$, and 
%for every fixed $\ell$, the function $p_\ell(n)$ is an univariate polynomial in $n$.

\begin{notation} \label{not:orbsum}
%Recall that $\orbits C$ stands for the set of equivariant orbits included in $C$.
Given a finitary vector $\vr x : C\tofin\R$ and an equivariant orbit $\O\subseteq C$, 
we write
\[
\orbsum {\vr x}(\O) \ = \ \sum_{c\in \O} \vr x(c)
\]
to denote for the sum of $\vr x(c)$ ranging over all $c\in \O$.
This yields the finite \emph{orbit-sum} vector
\[
\orbsum {\vr x} : \orbits C \to \R
\]
mapping the equivariant orbits included in $C$ to $\R$.
\end{notation}

A key observation is that a solvable equivariant system necessarily has a finitely setwise-supported solution:
\begin{lemma} \label{lem:weaksol}
If an equivariant system of inequalities $(\vr A, \vr t)$ has a finitary $T$-supported solution 
$\vr x$ then it also has a finitary
setwise-$T$-supported one $\vr y$ such that
$\orbsum{\vr x} = \orbsum{\vr y}$.
% and
%$L(\vr x) = L(\vr y)$ for every equivariant partial linear map $L : \GLin C \to \R$. %\piotrek{why $h$ and not $s$?}
\end{lemma}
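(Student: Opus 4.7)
The plan is to construct $\vr y$ from $\vr x$ by averaging over the finite group of setwise-$T$-automorphisms modulo pointwise ones. First I would observe that the group $\Aut{\set T}$ acts on $T$ by permutations, with kernel exactly $\Aut T$, so the quotient $\Aut{\set T}/\Aut T$ is isomorphic to the symmetric group on $T$, which is finite (of size $|T|!$). I would therefore pick a finite set of representatives $G \subseteq \Aut{\set T}$ of these cosets.

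Next I would define
\[
\vr y \ = \ \frac{1}{|G|} \sum_{\pi \in G} \pi(\vr x),
\]
where $\pi$ acts on a vector by $\pi(\vr x)(c) = \vr x(\pi^{-1}(c))$. Since $\vr x$ is finitary and each $\pi$ preserves finiteness of the domain, $\vr y$ is again finitary. To check that $\vr y$ is setwise-$T$-supported, fix any $\sigma \in \Aut{\set T}$; for each $\pi \in G$ the composition $\sigma\pi$ decomposes uniquely as $\tau \pi'$ with $\pi' \in G$ and $\tau \in \Aut T$, and $T$-support of $\vr x$ gives $\tau(\vr x) = \vr x$. The map $\pi\mapsto\pi'$ is a bijection on $G$, so $\sigma(\vr y) = \vr y$, as required.

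I would then verify that $\vr y$ is a solution of $(\vr A, \vr t)$. Equivariance of the system means $\pi(\vr A) = \vr A$ and $\pi(\vr t) = \vr t$ for every atom automorphism $\pi$, so each $\pi(\vr x)$ satisfies $\mult{\vr A}{\pi(\vr x)} \geq \vr t$. The solution set of a finite-valued linear inequality system is convex, and all inner products involved are well-defined since each $\pi(\vr x)$ is finitary, so the convex combination $\vr y$ is also a solution.

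Finally, for orbit-sum preservation, for any equivariant orbit $\O \subseteq C$ and any $\pi \in G$ we have $\pi^{-1}(\O) = \O$ (since $\O$ is equivariant and $\pi$ acts as a bijection on it), hence
\[
\sum_{c \in \O} \pi(\vr x)(c) \ = \ \sum_{c \in \O} \vr x(\pi^{-1}(c)) \ = \ \sum_{c \in \O} \vr x(c) \ = \ \orbsum{\vr x}(\O).
\]
Averaging over $\pi \in G$ yields $\orbsum{\vr y}(\O) = \orbsum{\vr x}(\O)$. The only subtle point is the group-theoretic bookkeeping (decomposing $\sigma\pi \in \Aut{\set T}$ uniquely as $\tau\pi'$ with $\tau \in \Aut T$), but this is immediate from the fact that $\Aut T$ is a normal subgroup of $\Aut{\set T}$ with finite quotient equal to $\mathrm{Sym}(T)$; no real obstacle arises.
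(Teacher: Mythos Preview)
Your proof is correct and follows essentially the same averaging argument as the paper. The only cosmetic difference is that the paper averages over the concrete finite subgroup $\Aut{\A\setminus T}\subseteq\Aut{\set T}$ (permutations that move only atoms in $T$), which is a canonical set of coset representatives, whereas you work with an arbitrary transversal $G$ of $\Aut{\set T}/\Aut T$; the verification that $\vr y$ is setwise-$T$-supported, a solution, and has the same orbit-sums is the same in both cases.
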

%
%\arka{$S$ also has to be linear} 
%\slawek{no, it does not have to!}
%
\begin{proof}
Let $\vr x : C\tofs \R$ be a solution of the system, namely $\innerprod{\vr A}{\vr x}\geq \vr t.$
Let $T = \supp {\vr x}$ and $n= \size T$.
As $(\vr A, \vr t)$ is equivariant,
atom automorphisms 
%that only permute atoms inside $T$ necessarily 
preserve being a solution, namely
for every $\rho \in \Aut {}$, the vector $\rho(\vr x)$ is also a solution:
$
\innerprod {\vr A} \rho(\vr x) \ \geq \  \vr t.
$
Consider $\Aut{\A\setminus T}$, the subgroup of atom automorphisms that only permute $T$ and preserve
all other atoms.
Knowing that the size of $\Aut{\A\setminus T}$ is $n!$, we have
\[
\innerprod {\vr A} \Big(\sum_{\rho \in \Aut{\A\setminus T}}\rho(\vr x)\Big) \ \geq \ n! \cdot \vr t,
\]
and hence the vector $\vr y$ defined by averaging (cf.~Example \ref{ex:av})
\begin{align} \label{eq:defy}
\vr y \ = \ \frac{1}{n!} \ \cdot \ \sum_{\rho \in \Aut{\A\setminus T}} \rho(\vr x)
\end{align}
is also a solution of the system, namely $\innerprod {\vr A} \vr y  \geq  \vr t$.
We notice that for finitary $\vr x$, the vector $\vr y$ is finitary as well.
By the very definition, the averaging \eqref{eq:defy} preserves the orbit-sum:
$\orbsum{\vr x} = \orbsum{\vr y}$.
Furthermore, we claim that the vector $\vr y$ is setwise-$T$-supported.
To prove this, we fix an arbitrary $\pi \in \Aut {\set{T}}$, aiming at showing
that $\pi(\vr y) = \vr y$.
It factors through
$
\pi = \sigma \circ \rho
$
for some $\rho \in \Aut{\A\setminus T}$ and $\sigma \in \Aut T$.
Indeed, $\rho$ acts as $\pi$ on $T$ but is identity elsewhere, while
$\sigma$ acts as $\pi$ outside of $T$ but is identity on $T$.
%
%\arka{identically on $T$ $\mapsto$ identity permutation inside $T$}
%
A crucial but simple observation is that, 
by the very construction of $\vr y$, we have 
%\slawek{emphasized}
\begin{align} \label{eq:1}
\rho(\vr y) = \vr y.
\end{align}
Indeed, as $\vr y$ is defined by averaging over all $\rho'\in\Aut{\A\setminus T}$, 
\[
\rho \Big(\sum_{\rho' \in \Aut{\A\setminus T}}\rho'(\vr x)\Big) \ =
\sum_{\rho' \in \Aut{\A\setminus T}}\rho\circ\rho'(\vr x) \ =
\sum_{\rho' \in \Aut{\A\setminus T}}\rho'(\vr x)
\]
which implies $\rho(\vr y) = \vr y$.
Moreover, as action of atom automorphisms commutes with support, we have
\[
\supp {\rho'(\vr x)} = \rho'(\supp{\vr x})
\]
%\arka{$\supp {\rho'(\vr x)} = \rho'(\supp{\vr x})$?}
for every $\rho' \in \Aut{}$, and therefore 
\[
\supp {\rho'(\vr x)} = \supp{\vr x}
\]
for every $\rho' \in \Aut{\A\setminus T}$. 
Therefore $T$ supports the right-hand side of \eqref{eq:defy}, which means that
$\supp {\vr y} \subseteq T$ and implies
%
%\arka{$\supp{\vr{y}} =
%\cup_{\pi \in \Aut{\A \setminus T}} \supp{\pi(\vr{x})} =
%\cup_{\pi \in \Aut{\A \setminus T}} \pi(\supp{\vr{x}}) =
%\cup_{\pi \in \Aut{\A \setminus T}} \pi(T) = T
%$?}
%
\begin{align} \label{eq:2}
\sigma(\vr y) = \vr y.
%\piotrek{\text{I think this is incorrect, you are not using how $y$ is constructed.}}
\end{align}
By \eqref{eq:1} and \eqref{eq:2} we obtain $\pi(\vr y) = \vr y$, as required.

%\slawek{added}

Finally, 
the equality $\orbsum {\vr x} = \orbsum {\vr y}$ follows directly by \eqref{eq:defy}.
%
%let $L : \GLin C \to \R$. % be an equivariant (partial) linear map.
%By  equivariance of $L$ we get $L(\rho(\vr x)) = L(\vr x)$
%for every $\rho \in \Aut{}$.
%By \eqref{eq:defy} and linearity of $L$ we obtain $L(\vr y) = L(\vr x)$.
\end{proof}

\begin{slexample}
\label{ex:Kirchoff-cont-cont}
Recall the system of inequalities from Examples \ref{ex:Kirchoff} and \ref{ex:Kirchoff-cont}.
Its finitary solutions correspond to finite directed graphs, whose vertices and edges are labeled
by real numbers satisfying constraints \eqref{eq:K1} and \eqref{eq:K2}.
According to Lemma \ref{lem:weaksol}, if such a directed graph existed,  
there would also exist a directed clique, where labels of all vertices are pairwise equal, 
and labels of all edges are pairwise equal as well, which satisfying constraints \eqref{eq:K1} and \eqref{eq:K2}.
In particular, all edges incoming to a vertex would carry the same value as all outgoing edges.
This requirement is clearly contradictory with constraints \eqref{eq:K1} and \eqref{eq:K2},
and hence the system has no finitary solutions.
\end{slexample}

%\arka{I like the description. Maybe we can also add some equations to facilitate reading}
%
%\piotrek{I dont like fragment below. It looks like something not related. I would like to write that later we will need a fact that existence of a single setwise-supported solution implies exsitence of semisupported solutions for all supports bigger than $S$. We capture this fact in the following lemma. I agree with Arka that more general statement of the corollary is better.}
%
%We may apply Lemma \ref{lem:weaksol} to a special case, when
%the solution $\vr x$ is setwise-supported by a set $S$
%obtained by removing one atom from $T$
%($\vr x$ is clearly $T$-supported, but not setwise-$T$-supported),
%in order to derive the following fact useful later:

In the next section we rely on the fact that
existence of a setwise-$S$-supported solution implies existence 
of such a solution for any support larger than $S$.
The fact follows immediately from
Lemma \ref{lem:weaksol}, since every setwise-$S$-supported vector is trivially $T$-supported, for
every superset $T$ of $S$:
\begin{corollary} \label{cor:weaksol}
If an equivariant system of inequalities $(\vr A, \vr t)$ has a finitary setwise-$S$-supported solution $\vr x$,
then  for every superset $T$ of $S$ of size $\size{T} = \size S +1$, 
the system $(\vr A, \vr t)$ has a finitary setwise-$T$-supported solution $\vr y$ 
such that $\orbsum {\vr x} = \orbsum {\vr y}$.
\end{corollary}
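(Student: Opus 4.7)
The plan is to derive this corollary directly from Lemma~\ref{lem:weaksol} by exploiting the already-noted fact that every setwise-supported vector is also pointwise-supported.

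First, I would recall that $\Aut{S} \subseteq \Aut{\set{S}}$, since every permutation that fixes $S$ pointwise in particular preserves $S$ as a set. Consequently, any setwise-$S$-supported vector is also $S$-supported in the standard (pointwise) sense, and therefore $\supp{\vr x} \subseteq S$. Since by hypothesis $S \subseteq T$, this yields $\supp{\vr x} \subseteq T$, so $\vr x$ is a finitary $T$-supported solution of the equivariant system $(\vr A, \vr t)$.

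I would then invoke Lemma~\ref{lem:weaksol} on $\vr x$ viewed as a $T$-supported solution: it directly produces a finitary setwise-$T$-supported solution $\vr y$ of $(\vr A, \vr t)$ satisfying $\orbsum{\vr x} = \orbsum{\vr y}$, which is precisely the conclusion required. Note that the size condition $\size{T} = \size{S} + 1$ plays no role in the argument—any superset $T$ of $S$ would work equally well; the restriction to increments of one atom is presumably tailored to the inductive applications of the corollary in later sections. There is no substantive obstacle here: the entire content is carried by Lemma~\ref{lem:weaksol}, with only the trivial observation ``setwise-$S$-supported implies (pointwise) $S$-supported'' needed to bridge the hypothesis of the corollary to the hypothesis of the lemma.
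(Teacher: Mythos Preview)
Your proposal is correct and matches the paper's own argument essentially verbatim: the paper also derives the corollary immediately from Lemma~\ref{lem:weaksol} via the observation that a setwise-$S$-supported vector is trivially (pointwise) $T$-supported for any $T\supseteq S$. Your remark that the size condition $\size T = \size S + 1$ is inessential to the proof is also accurate.
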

%
%\begin{proof}
%The averaging construction \eqref{eq:defy} in 
%the proof of Lemma \ref{lem:weaksol},
%applied to $T = S \cup \{\a\}$ for any fresh atom $\a\in\A$, 
%yields a finitary solution $\vr y$ setwise-supported by $T$, and satisfying
%$
%\sum \vr x = \sum \vr y,
%$
%as required.
%\end{proof}

% !TEX root = lin-prog-main.tex

\section{Decidability of real solvability}% orbit-finite linear programming} 
\label{sec:lp-decid-proof}

In this section
we prove Theorem~\ref{thm:lp-decid} by a reduction of \finineqsolvname$(\R)$ to
\polyineqsolvname
(cf.~Example \ref{ex:poly} in Section \ref{sec:intro}).

\subsection{Preliminaries}
\label{sec:prelimproof}

Consider an orbit-finite system of inequalities given by a matrix $\vr A :  B\times C \tofs \Z$ and
a target vector $\vr t : B\tofs \Z$. 
%
%\arka{Replace $\F$ with $\R$?} \slawek{$\Z$}
%

\begin{lemma} \label{lem:wlog}
W.l.o.g.~we can assume that $B$ and $C$ are disjoint unions of equivariant orbits $\otu \A k$, $k\in\Nat$:
\begin{align} \label{eq:eqorbits}
\begin{aligned}
B =  
\otu {\A} {n_1}  \uplus  \ldots \uplus  \otu {\A} {n_s} \qquad\qquad 
C =  \otu {\A} {m_1} \uplus \ldots \uplus \otu {\A} {m_r}
\end{aligned}
\end{align}
(see the figure below),
and that $\vr A$ and $\vr t$ are equivariant. The size blow-up is exponential in
atom dimension, but polynomial when atom dimension is fixed.
\end{lemma}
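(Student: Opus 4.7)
The plan is to perform two successive reductions: first, transforming a $T$-supported system into an equivariant one whose index sets $B, C$ are disjoint unions of quotient orbits $\otu \A k / G$ with $G \leq S_k$; and second, unfolding these quotients to obtain index sets of the required form $\otu \A k$. For the first step, let $T \subseteqfin \A$ be a common support of $\vr A$ and $\vr t$ (and hence also of $B$ and $C$). By Lemma~\ref{lem:repn}, each $T$-orbit of $B$ (and of $C$) admits a $T$-supported bijection with $\otu {(\A \setminus T)} k / G$ for some $k \in \Nat$ and $G \leq S_k$. Composing with any bijection $f : \A \setminus T \to \A$ between these two countably infinite sets identifies each $T$-orbit with an equivariant orbit $\otu \A k / G$, yielding new index sets $B_1 = \biguplus_i \otu \A {k_i}/G_i$ and $C_1 = \biguplus_j \otu \A {m_j}/G_j'$. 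Transport the matrix and target through these bijections. Because $T$-automorphisms of $\A$ correspond under $f$ to arbitrary permutations of the new atom space, and because $\vr A$ and $\vr t$ are constant on $T$-orbits of their domains, the transferred pair $(\vr A_1, \vr t_1)$ is equivariant. Solvability is clearly preserved since bijections transport solutions directly.

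For the second step, set $B_2 := \biguplus_i \otu \A {k_i}$, $C_2 := \biguplus_j \otu \A {m_j}$, $\vr A_2(b, c) := \vr A_1([b], [c])$, and $\vr t_2(b) := \vr t_1([b])$, where $[b], [c]$ denote the classes of $b, c$ in the corresponding quotient orbits. Equivariance of $(\vr A_2, \vr t_2)$ is immediate from that of $(\vr A_1, \vr t_1)$. For the solution correspondence, given a real solution $\vr x$ of the original system, define $\vr y \in \GLin {C_2}$ by $\vr y(c) := \vr x([c]) / \size{[c]}$; grouping terms by $G$-orbits gives
\[
\sum_{c \in C_2} \vr A_2(b, c) \vr y(c) \ = \ \sum_{[c]} \vr A_1([b], [c]) \vr x([c]) \ \geq \ \vr t_1([b]) \ = \ \vr t_2(b).
\]
Conversely, given a real solution $\vr y$ of the unfolded system, the vector defined by $\vr x([c]) := \sum_{c \in [c]} \vr y(c)$ is a solution of the original. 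Both directions preserve finitariness of solutions.

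For complexity, the representation size of the unfolded system is dominated by the list of matrix values, indexed by equivariant orbits of $B_2 \times C_2$. By Lemma~\ref{lem:orbitpair} the number of such orbits in $\otu \A k \times \otu \A m$ equals the number of partial injections between $\setto k$ and $\setto m$, which is $O((k+m)!)$ in the atom dimension, and polynomial once the atom dimension is fixed, matching the claim. The chief conceptual obstacle is carefully tracking how $T$-supportedness is converted into full equivariance by the bijection in Step 1, and verifying the two-way solution correspondence in Step 2; in particular, the division by $\size{[c]}$ is the sole point that restricts the argument to real solvability and would fail over the integers.
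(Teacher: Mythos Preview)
Your proof is correct and follows essentially the same two-step strategy as the paper, just with the order of the steps reversed: the paper first unfolds the quotients $\otu{(\A\setminus S)}{k}/G$ to $\otu{(\A\setminus S)}{k}$ (while staying $S$-supported) and only then applies the bijection $\A \to \A\setminus S$ to obtain equivariance, whereas you first pass to equivariant quotient orbits via the bijection and then unfold. The solution-transfer maps you use (division by $\lvert[c]\rvert$ in one direction, summation over the class in the other) coincide with the paper's maps $F$ and $G$, and your remark that the division step is what restricts the argument to $\R$ is exactly right.
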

%
%\TODO{$o_i \mapsto n_i$}
%\TODO{$O \mapsto U$}
%\TODO{$q\mapsto s$}
%\arka{$q$ is also used to denote the polynomials. Maybe we should use some other letter here?}
\vspace{-4mm}
\begin{align*}
& \qquad\  \otu \A {m_1} \, \otu \A {m_2} \cdots \, \otu \A {m_r} \\
\vr A \ = \ \ \ &
\begin{matrix}
\otu \A {n_1} \\ \otu \A {n_2} \\ \cdots \\ \otu \A {n_s}
\end{matrix}
\left[
\begin{matrix}
\ \ \ \ \ \ \ \\ \hline
\\ \hline
\\ \hline
\ \ 
\end{matrix}
\right\rvert
\hspace{-1.25mm}
\left\rvert
\begin{matrix}
\ \ \ \ \ \ \  \\ \hline
\\ \hline
\\ \hline
\ \ 
\end{matrix}
\right\rvert
\hspace{-1.25mm}
\left\rvert
\begin{matrix}
\ \ \ \ \ \ \ \\ \hline
\\ \hline
\\ \hline
\ \ 
\end{matrix}
\right\rvert
\hspace{-1.25mm}
\left\rvert
\begin{matrix}
\ \ \ \ \ \ \ \\ \hline
\\ \hline
\\ \hline
\ \ 
\end{matrix}
\right]
\qquad
\vr t \ = \ \begin{bmatrix}
\ \ \\ \hline
\\ \hline
\\ \hline
\ \ 
\end{bmatrix}
\end{align*}
\vspace{1mm}

\noindent
(The proof is in Section \ref{sec:lp-decid-proof-proofs}.)
Note that this includes the case of finite systems, namely
$n_1 = \ldots = n_s = m_1 = \ldots = m_r = 0$.

\subsection{Idea of the reduction}
%
%Before proceeding with the reduction, we outline its main underlying ideas.
Suppose only finitary \emph{$T$-supported} solutions are sought, for a fixed $T\subseteqfin\A$.
%Tuples $a\in T^*$ we briefly call \emph{$T$-tuples}.
%\arka{Finite $T$-orbits are singleton.}
\finineqsolvname$(\R)$ reduces then to a finite system of inequalities $(\vr A', \vr t')$ obtained
from $(\vr A, \vr t)$ as follows:
\begin{enumerate}
\item[(1)] Keep only columns indexed by $T$-tuples (= elements of \emph{finite} $T$-orbits) $c\in C$, discarding all other columns.
\item[(2)]
Pick arbitrary representatives of \emph{all} $T$-orbits included in $B$, and  keep only rows of $\vr A$
and entries of $\vr t$ indexed by the representatives, discarding all others.
% rows.
%\item[(3)] Likewise proceed with the target $\vr t$.
\end{enumerate}
The system $(\vr A', \vr t')$ is solvable if and only if the original one $(\vr A, \vr t)$ has
a finitary $T$-supported solution.
Indeed, discarding unknowns as in (1) is justified as a finitary $T$-supported solution of $(\vr A, \vr t)$ 
assigns $0$ to each non-$T$-tuple.
Discarding inequalities as in (2) is also justified.
Indeed, each inequality in the original system is obtained by applying some atom $T$-automorphism to an inequality
in $(\vr A', \vr t')$, while atom $T$-automorphisms preserve $T$-supported solutions of $(\vr A', \vr t')$,
which implies that every $T$-supported solutions of $(\vr A', \vr t')$ 
is also a solution of all inequalities in the original system.
%
%
% defined by the \emph{aggretate} finite $q\times r$ matrix
%$\vr A'$, and the aggregate finite target vector $\vr t' \in \R^q$, defined as follows.
%The value of $\vr A'(i, j)$ is obtained by summing the values $\vr A(b, c)$, where $c$ ranges over all elements
%of $\otu \A {m_j}$,
%and $b\in \otu \A {o_i}$ is chosen arbitrarily:
%\[
%\vr A' (i, j) = \sum_{c\in \otu \A {m_j}} \vr A(b, c).
%\]
%Likewise one defines $\vr t'$.
%The result of aggregation does not dependent on the choice of $b\in \otu \A {o_i}$, since $\vr A$ is equivariant,
%and is well-defined as long as 
%one restricts to \emph{finite}%
%\footnote{We consider finite orbits $\otu \A {m_j}$ for the sake of intuitive explanation, 
%knowing that each such finite orbit is necessarily a singleton ($m_j = 0)$.}
%orbits $\otu \A {m_j}$. 
%%the sum has only finitely-many non-zero elements. 
%%The latter is assured by dropping $S$-orbits $M_j\subseteq C$ not satisfying the condition for some $b\in B$.
%

The above reduction yields no algorithm yet, as
%The first main difficulty is that 
we do not know a priori any bound on size of $T$,
%(cf.~Example~\ref{ex:nofin1}), 
and
% and~\ref{ex:nofin}), and 
the size of $(\vr A', \vr t')$ depends on the number of $T$-orbits and hence grows unboundedly when  $T$ grows.
We overcome this difficulty by using setwise-$T$-orbits instead of $T$-orbits, and relying on Lemmas \ref{lem:weaksol}
and \ref{lem:STorbit}.
The latter one guarantees that the number of setwise-$T$-orbits is constant - independent of $T$.
Once we additionally merge (sum up) all columns indexed by elements of the same setwise-$T$-orbit, we get $\vr A'$ of size independent of $T$.

%\arka{merge $\mapsto$ sum up?}

This still does not yield an algorithm, as entries of $\vr A'$ change
%A second  difficulty is that 
when  $T$ grows.
We however crucially discover %(presented  in Section~\ref{sec:red2poly})
%that allows us to alleviate this problem is 
that the growth of the entries of $\vr A'$ is \emph{polynomial}
in $n = \size T$, for sufficiently large $n$.
Therefore, $\vr A'$ is a matrix of polynomials in one unknown $n$, and solvability of $(\vr A, \vr t)$ is equivalent to solvability
of $(\vr A', \vr t')$ for some value $n\in \Nat$.
As argued in Section \ref{sec:poly}, the latter solvability is decidable.
% expressible in real arithmetic even if $n$ ranges over integers and not reals.

%\arka{Maybe we should explain the above with an example}

\subsection{Reduction of \finineqsolvname$(\R)$ to \allpolyineqsolvname}
%Proof of Lemma~\ref{lem:red2poly}: reduction} 
\label{sec:red2poly}

Let us fix an equivariant system $(\vr A, \vr t)$.
We construct a finite system $P_2$ of polynomially-parametrised 
inequalities such that $(\vr A, \vr t)$ has a finitary solution
 if and only if $P_2(n)$ has a solution for almost all $n\in\Nat$.

Let us denote by $d = \max\set{n_1, \ldots, n_s, m_1, \ldots, m_r}$ the maximal atom dimension of
orbits included in $B$ and $C$.
%
%\piotrek{I think we don't need to assume $n\geq 2d$ here. It is used in Lemma~\ref{lem:poly} for the first time, so on the next page.} 
%\piotrek{Something is in a wrong order below. First we say that we construct a system $P(T) = (\vr A(T), \vr t(T))$ then we describe $B,C$ then we go back to 
%the definition of a system, then we make a comment on orbits of $B$ and $C$. I think if we reorder and first describe semi T orbits of B, C fully, and then we 
%start to define $P(T)$ it will be simpler to read.}
%We construct a finite system $P(T) = (\vr A(T), \vr t(T))$ of inequalities of polynomial size, as a function of $T$.
%Later we will argue that the system depends only on the size $n$ of $T$, and that the dependence
%is expressed by univariate polynomials.
%

Let $T\subseteqfin \A$ be an arbitrary finite subset of atoms. % of size $n = \size T \geq 2d$. 
Both $B$ and $C$ split into setwise-$T$-orbits, refining~\eqref{eq:eqorbits}:
\begin{align} \label{eq:BC}
B = B_1 \uplus \ldots \uplus B_N \qquad
C = C_1 \uplus \ldots \uplus C_{M'}.
\end{align}
Let $C_1, \ldots, C_M$ be the \emph{finite} setwise-$T$-orbits among $C_1, \ldots, C_{M'}$
(clearly, $M$ and $N$ may depend on $T$).
Importantly, by Lemma \ref{lem:STorbit}, $N$ and $M$ do not depend on $T$ as long as $\size T\geq d$.
In fact $M = r$, the number of orbits included in $C$, as
by Lemma \ref{lem:STorbit} we deduce:
\begin{lemma} \label{lem:Tk}
Assuming $\size T \geq \ell$, the equivariant orbit $\otu {\A} \ell$ includes exactly one finite setwise-$T$-orbit, namely
$\otu T \ell$.
\end{lemma}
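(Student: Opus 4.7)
The plan is to obtain the statement as a direct consequence of Lemma \ref{lem:STorbit}, by sieving through the described parametrisation to identify exactly which setwise-$T$-orbits can be finite.

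First I would invoke Lemma \ref{lem:STorbit} (with $n$ replaced by $\ell$, using the hypothesis $\size T \geq \ell$) to enumerate the setwise-$T$-orbits included in $\otu \A \ell$. Each such orbit $\O$ is indexed by a subset $I \subseteq \setto \ell$ of some size $k \leq \ell$, and consists of all tuples $a \in \otu \A \ell$ whose $I$-positions form a tuple in $\otu T k$ and whose $(\setto \ell \setminus I)$-positions form a tuple in $\otu {(\A \setminus T)} {\ell - k}$.

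Next I would determine which of these orbits are finite. Since $T$ is finite, the factor $\otu T k$ is always finite, so the cardinality of $\O$ is governed entirely by the factor $\otu {(\A \setminus T)} {\ell - k}$. As $\A \setminus T$ is infinite, this factor is finite if and only if $\ell - k = 0$, i.e.\ exactly when $I = \setto \ell$. Hence the only finite setwise-$T$-orbit in $\otu \A \ell$ is the one for which all $\ell$ positions are constrained to carry pairwise distinct atoms from $T$, which is by definition $\otu T \ell$.

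Finally I would note that this orbit is genuinely present (i.e.\ non-empty), precisely because $\size T \geq \ell$ guarantees $\otu T \ell \neq \emptyset$. There is no real obstacle here; the whole argument is just a careful reading of the parametrisation of setwise-$T$-orbits given by Lemma \ref{lem:STorbit}, together with the standard observation that among the cartesian factors only the one drawn from the infinite complement $\A \setminus T$ can produce infinitely many elements.
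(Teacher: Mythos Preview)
Your proposal is correct and follows exactly the approach the paper intends: the paper simply states that the lemma is deduced from Lemma~\ref{lem:STorbit}, and you have spelled out that deduction by examining the parametrisation and observing that the only finite orbit arises when $I = \setto \ell$. There is nothing to add.
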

Our reduction proceeds in two steps:  first, %, the most involved one, 
we derive a finite
polynomially-parametrised system $P_1$, and then
we transform it further to a monotonic system $P_2$.
Monotonicity of $P_2$ guarantees correctness of reduction.

\para{Step 1 (finite polynomially-parametrised system)}
Our construction is parametric in $T$.
Let $b_1, \ldots, b_N$ be arbitrarily chosen representatives of setwise-$T$-orbits included in $B$.
Given $\vr A$ and $\vr t$,
we define an $N{\times}M$ matrix $\vr A_1(T)$ and a vector $\vr t_1(T) \in \Int^N$ as follows:
\begin{enumerate}
\item[(1)] 
Pick columns of $\vr A(T)$ indexed
by elements of all finite setwise-$T$-orbits included in $C$, and discard other columns; this yields a matrix 
$\vr A'(T)$ with finitely many columns (number thereof depending on $T$).
\item[(2)] 
Merge (sum up) columns of $\vr A'(T)$ indexed by elements of the same setwise-$T$-orbit; this yields a matrix $\vr A''(T)$ with $M$ columns ($M$ independent of $T$). 
\item [(3)]
Pick $N$ rows of $\vr A''$, indexed by $b_1, \ldots, b_N$,
and discard other rows; this yields an $N\times M$ matrix $\vr A_1(T)$.
\item[(4)] Likewise pick the corresponding entries of $\vr t$ and discard others, thus yielding a finite vector $\vr t_1(T)\in\Z^N$.
\end{enumerate}

%\begin{enumerate}
%\item [(1)]
%Pick $N$ rows of $\vr A$, indexed by $b_1, \ldots, b_N$,
%and discard other rows; this yields a matrix $\vr A'(T)$ with $N$ rows.
%\item[(2)] Likewise pick the corresponding entries of $\vr t$ and discard others, thus yielding a finite vector $\vr t(T)\in\Z^N$.
%\item[(3)] 
%Pick columns of $\vr A'(T)$ indexed
%by elements of all finite setwise-$T$-orbits included in $C$, and discard other columns; this yields a finite matrix $\vr A''(T)$.
%\item[(4)] 
%Merge (sum up) columns of $\vr A''(T)$ indexed by elements of the same setwise-$T$-orbit; this yields an $N\times M$ matrix
%$\vr A(T)$. 
%\end{enumerate}

\noindent
For $b\in B$ and $C_j \subseteq C$, $j \in \setto M$, 
we write $\orbsum {\vr A}(b, C_j)$ for the finite sum ranging over elements of $C_j$:
\[
\orbsum {\vr A}(b, C_j) = \sum_{c\in C_j} \vr A(b,c),
\]
%The value of $\vr A(b, C_j)$ is well-defined exactly when $\vr A(b, c)\neq 0$ for only finitely many $c\in C_j$,
%and hence may be ill-defined.  
%An $S,T$-orbit $C_j$ we call \emph{useful} if $\sum \vr A(b, C_j)$ is well-defined for all $b\in B$.
%Let $C_1, \ldots, C_M$ be the useful $S,T$-orbits among $C_1, \ldots, C_{M'}$, for some $M\leq M'$. 
%
which allows us to formally define the $B{\times} M$ matrix $\vr A''(T)$, 
the $N{\times}M$ matrix $\vr A_1(T)$ and the vector $\vr t_1(T) \in \Int^N$:
\begin{align} \label{eq:defAt}
\vr A''(T) (b, j) = \orbsum {\vr A}(b, C_j) 
\qquad\qquad
\vr A_1(T) (i, j) = \vr A''(T)(b_i, j) = \orbsum {\vr A}(b_i, C_j) 
\qquad\qquad
\vr t_1(T) (i) = \vr t(b_i).
\end{align}

\begin{slexample} \label{ex:cont}
We explain how the system \eqref{eq:ppar} in Example \ref{ex:poly} in Section \ref{sec:intro}
is obtained from
the system \eqref{eq:matr} in Example \ref{ex:lp}.
Fix a non-empty $T \subseteqfin \A$. 
The set $\A$ includes just one finite setwise-$T$-orbit, namely $T$.
Therefore the matrix $\vr A'(T)$ has $\size T$ columns, $\vr A''(T)$ has just one column,
and the system $(\vr A_1(T), \vr t_1(T))$ has just one unknown.
Furthermore, the set $\A$ includes two setwise-$T$-orbits, the finite one $T$ plus the
infinite one $\A \setminus T$,
and therefore the system $(\vr A_1(T), \vr t_1(T))$ has two inequalities.
Pick arbitrary representatives of the setwise-$T$-orbits, $b_1 \in T$ and $b_2 \in (\A\setminus T)$.
We have
\begin{align*}
\vr A_1(T)(1,1) \  = \ \sum_{c\in T }\vr{A}(b_1, c) = |T| - 1 
\qquad\qquad
\vr A_1(T)(2,1) \  = \  \sum_{c\in T}\vr{A}(b_2, c) = |T|.
\end{align*}
Replacing $|T|$ with $n$ yields the system $(\vr A_1(T), \vr t_1(T))$: % \eqref{eq:ppar}.
\begin{align} \label{eq:pparnmon}
\begin{bmatrix}
n-1 \\
n
\end{bmatrix}
\cdot
x
\ \geq \ 
\begin{bmatrix}
1 \\
1
\end{bmatrix}
\end{align}
which happens to be monotonic.
In general, the system obtained so far needs not be monotonic,
but we will ensure monotonicity in the subsequent step.
\end{slexample}

The choice of representatives $b_i$ is irrelevant, and hence
$\vr A_1(T)$ and  $\vr t_1(T)$ are well defined, 
since rows of $\vr A''$ indexed by any two elements of $B$ belonging the same setwise-$T$-orbit
are equal, and likewise the corresponding entries of $\vr t$:
%due to: % the following lemma
%(proof in Appendix):
%
\begin{lemma}\label{lem:equalised row}
If $b, b' \in B$ are in the same setwise-$T$-orbit, then 
$\vr t(b) = \vr t(b')$ and
$\orbsum {\vr A}(b, C_j) = \orbsum {\vr A}(b', C_j)$
for every $j\in \setto M$.
\end{lemma}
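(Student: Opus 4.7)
The plan is to use equivariance of $\vr A$ and $\vr t$ (guaranteed by Lemma~\ref{lem:wlog}) together with the defining property of a setwise-$T$-orbit. Since $b, b'$ are in the same setwise-$T$-orbit, by definition there exists an atom automorphism $\pi \in \Aut{\set T}$ (i.e.\ one preserving $T$ as a set) such that $\pi(b) = b'$. I will exploit this single $\pi$ to handle both claims.

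For the target, I would unpack what equivariance of the function $\vr t : B \to \Z$ means: viewing $\vr t$ as its graph, for every atom automorphism $\rho$ and every $x$ we have $\vr t(\rho(x)) = \vr t(x)$ (atoms act trivially on $\Z$). Applying this with $\rho = \pi$ and $x = b$ immediately yields $\vr t(b') = \vr t(\pi(b)) = \vr t(b)$.

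For the row sum, the key observation is that $\pi \in \Aut{\set T}$ permutes the setwise-$T$-orbit $C_j$ bijectively, i.e.\ $\pi(C_j) = C_j$. Combined with equivariance of $\vr A$, which gives $\vr A(\pi(b),\pi(c)) = \vr A(b,c)$ for all $b,c$, I re-index the sum by the substitution $c' = \pi(c)$:
\begin{align*}
\orbsum{\vr A}(b', C_j) \ = \ \sum_{c' \in C_j} \vr A(\pi(b), c') \ = \ \sum_{c \in C_j} \vr A(\pi(b), \pi(c)) \ = \ \sum_{c \in C_j} \vr A(b, c) \ = \ \orbsum{\vr A}(b, C_j).
\end{align*}
No step here is a real obstacle; the only subtle point worth stating explicitly is that $\pi$, being a setwise-$T$-automorphism, is automatically an atom automorphism, so it acts on the equivariant objects $\vr A$ and $\vr t$ trivially, while simultaneously stabilising $C_j$ setwise because $C_j$ was defined precisely as an orbit of the setwise stabiliser $\Aut{\set T}$. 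This dual role of $\pi$ is what makes the two sums coincide.
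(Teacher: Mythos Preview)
Your proof is correct and follows essentially the same approach as the paper's: pick $\pi \in \Aut{\set T}$ with $\pi(b)=b'$, use equivariance of $\vr t$ and $\vr A$, and the fact that $\pi$ permutes $C_j$ to re-index the sum. The paper's argument for $\vr t(b)=\vr t(b')$ is phrased slightly differently (noting that $\vr t$ is constant on the full equivariant orbit containing $b,b'$), but this is the same content as your direct application of equivariance.
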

\begin{proof}%[Proof of Lemma \ref{lem:equalised row}]
Let $\pi \in \Aut{(T)}$ be such that $\pi(b) = b'$.
As $\vr t$ is equivariant, it is necessarily constant on the whole equivariant orbit to which $b$ and $b'$ belong
(cf.~Lemma \ref{lem:1O}), 
and hence $\vr t(b') = \vr{t}(b)$.

For the second point fix $j\in \setto M$.
As $\vr A$ is equivariant, it is constant over the orbit included in $B\times C$ to which $(b,c)$ belongs, 
for every $c\in C$, and hence
$\vr A(b,c) = \vr A(\pi(b), \pi(c))$. This implies
\[
\sum_{c \in C_j} \vr{A}(b,c) = 
\sum_{c \in C_j} \vr A(\pi(b),\pi(c)) = 
\sum_{c \in C_j} \vr A(b',\pi(c)).
\]
Since $\pi$ is a setwise-$T$-automorphism, when restricted to
the setwise-$T$-orbit $C_j$ it is a bijection $C_j \to C_j$,
and hence the two sums below differ only by the order of summation and are thus equal:
\[
\sum_{c \in C_j} \vr A(b',\pi(c)) = 
\sum_{c \in C_j} \vr A(b',c).
\]
The two above equalities imply the claim, namely
$
\sum_{c \in C_j} \vr A(b,c) = 
\sum_{c \in C_j} \vr A(b',c).
$ 
\end{proof}

\begin{notation} \label{not:M}
Let $T\subseteqfin \A$.
Due to Lemma \ref{lem:Tk},
the set of finite setwise-$T$-orbits $\set{C_1, \ldots, C_M}$ included in $C$
is in bijection with 
the set $\orbits C = \set{\O_1, \ldots, \O_M}$ of equivariant orbits included in $C$.
W.l.o.g.~assume $C_j\subseteq \O_j$ for $j = 1\ldots M$.
Take any finitary setwise-$T$-supported vector $\vr x : C \tofin \R$.
It is non-zero only inside finite setwise-$T$-orbits $C_j$, which implies
\[
\orbsum {\vr x}(C_j) \ = \ \orbsum {\vr x}(\O_j)
\]
for $j=1\ldots M$  (cf.~Notation \ref{not:orbsum}).
Furthermore, $\vr x$ is constant inside each $C_j$, which allows us to write
$\orbval {\vr x}(C_j)$ (cf.~Notation \ref{not:orbval}).
For notational convenience we slightly relax Notations \ref{not:orbval} and \ref{not:orbsum}
from now on, and treat the orbit-value and orbit-sum vectors as $M$-tuples,
$
\orbval{\vr x}, \ \orbsum{\vr x} \ \in \ \R^M,
$
with obvious meaning $\orbval{\vr x}(j) = \orbval{\vr x}(C_j)$ and $\orbsum{\vr x}(j) = \orbsum{\vr x}(C_j)$.
We note the (obvious) relation between $\orbval{\vr x}$ and $\orbsum{\vr x}$:
\begin{align} \label{eq:overbefore}
\orbsum{\vr x}(j) \ = \ \size{C_j} \cdot \orbval{\vr x}(j).
\end{align}
\end{notation}

The following lemma, being a cornerstone of correctness of the whole reduction,
is now not difficult to prove:
\begin{lemma} \label{lem:givenT}
Let $\size T \geq d$ and $\vr x : C\tofin \R$ a finitary setwise-$T$-supported vector.
The following conditions are equivalent:
\begin{itemize}
\item 
$\vr x$ is solution of $(\vr A, \vr t)$;
\item
$\orbval {\vr x}$ is a solution of $P_1(T) = (\vr A_1(T), \vr t_1(T))$.
\end{itemize}
\end{lemma}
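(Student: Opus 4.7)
The plan is to prove both directions simultaneously by directly unfolding the inner product $\innerprod{\vr A(b,\_)}{\vr x}$ for each $b\in B$ and showing that the infinite system $\vr A \vr x \geq \vr t$ collapses, under the two assumptions on $\vr x$, to the finite system $\vr A_1(T)\cdot \orbval{\vr x} \geq \vr t_1(T)$.

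First I would reduce the row-sum to a finite expression. Since $\vr x$ is setwise-$T$-supported, it is constant on every setwise-$T$-orbit included in $C$ (cf.\ Lemma~\ref{lem:1O} applied with the setwise-$T$ action); since it is also finitary, it must vanish on every \emph{infinite} setwise-$T$-orbit. Hence $\dom{\vr x} \subseteq C_1 \cup \ldots \cup C_M$, and on each $C_j$ the vector $\vr x$ takes the constant value $\orbval{\vr x}(j)$. The assumption $\size T \geq d$ is used precisely here, via Lemma~\ref{lem:Tk}, to guarantee that the enumeration $C_1,\ldots,C_M$ of finite setwise-$T$-orbits is well-defined (each equivariant orbit $\otu{\A}{m_j}\subseteq C$ contributes exactly one finite setwise-$T$-orbit, namely $\otu{T}{m_j}$). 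With this in hand, for every $b \in B$ the inner product becomes a finite sum
\[
(\vr A \cdot \vr x)(b) \ =\ \sum_{j=1}^M \orbval{\vr x}(j) \cdot \!\!\sum_{c\in C_j}\!\vr A(b,c) \ =\ \sum_{j=1}^M \orbval{\vr x}(j) \cdot \orbsum{\vr A}(b, C_j),
\]
which by the definition~\eqref{eq:defAt} is exactly $(\vr A''(T)\cdot \orbval{\vr x})(b)$.

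Second, I would invoke Lemma~\ref{lem:equalised row} to reduce the infinite family of inequalities indexed by $b\in B$ to the $N$ representatives. Both the target $\vr t(b)$ and the quantity $(\vr A''(T)\cdot \orbval{\vr x})(b) = \sum_j \orbval{\vr x}(j)\cdot \orbsum{\vr A}(b,C_j)$ are, by that lemma, constant on each setwise-$T$-orbit $B_i$. Therefore the constraint $\vr A \vr x \geq \vr t$ is equivalent to its restriction to $b_1,\ldots,b_N$, which unfolds to $\vr A_1(T)\cdot \orbval{\vr x} \geq \vr t_1(T)$ by the definition of $\vr A_1(T)$ and $\vr t_1(T)$.

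Combining these two observations yields both implications at once, since each step is a chain of equalities. There is no real obstacle: the heart of the argument is the tandem use of setwise-$T$-support (collapsing infinite row-sums into the column-merge operation of step~(2) of the construction) and the row-equalisation Lemma~\ref{lem:equalised row} (collapsing the infinite row-enumeration into step~(3)), with Lemma~\ref{lem:Tk} providing the one-to-one correspondence between finite setwise-$T$-orbits in $C$ and equivariant orbits in $C$ that justifies the indexing convention of Notation~\ref{not:M}.
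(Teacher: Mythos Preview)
Your argument is correct and follows the same route as the paper's proof: the paper phrases it as a four-step chain of equivalences through the intermediate matrices $\vr A'(T)$ and $\vr A''(T)$, whereas you collapse the first three steps into the single computation $(\vr A\cdot\vr x)(b)=(\vr A''(T)\cdot\orbval{\vr x})(b)$, but both rely on exactly the same ingredients (finitary plus setwise-$T$-support to kill infinite orbits and factor out $\orbval{\vr x}(j)$, and Lemma~\ref{lem:equalised row} to pass to representatives).
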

%
%\begin{lemma} \label{lem:someT}
%$(\vr A, \vr t)$ has a weakly $S$-supported nonnegative rational solution if, and only if
%$P(T) = (\vr A(T), \vr t(T))$ has a nonnegative rational solution for some $T\subseteqfin \A\setminus S$.
%\end{lemma}
%
\begin{proof}
Take any setwise-$T$-supported vector $\vr x : C\tofin \R$, and
let $\vr x'$ be the restriction of $\vr x$ to $C' = C_1\uplus\ldots\uplus C_M$.
We argue that the following four conditions are equivalent, which implies the claim:

\begin{enumerate}
\item 
$\vr x$ is solution of $(\vr A, \vr t)$;
\item 
$\vr x'$ is solution of $(\vr A'(T), \vr t)$;
\item 
$\orbval{\vr x}$ is solution of $(\vr A''(T), \vr t)$;
\item
$\orbval {\vr x}$ is a solution of $(\vr A_1(T), \vr t_1(T))$.
\end{enumerate}

First, as $\vr x$ is finitary,
we have $\vr x(c)=0$ for all $c\notin C'$, and hence 
$\vr{A}'(T) \cdot \vr{x}' = \vr{A} \cdot \vr{x}$.  
This implies equivalence of (1) and (2).
Second, as $\vr A''$ is obtained from $\vr A'$ by summing columns over a setwise-$T$-orbit 
where the vector $\vr x$, being setwise-$T$-supported, is constant,
we have $\vr{A}''(T) \cdot \orbval{\vr{x}} = \vr{A}'(T) \cdot \vr{x}'$.
This implies equivalence of (2) and (3).
Finally, (3) implies (4) as $(\vr A_1(T), \vr t_1(T)$ is obtained from $(\vr A''(T), \vr t)$ by removing 
inequalities.
For the reverse implication, we recall that Lemma \ref{lem:equalised row} shows that
$\vr A''(b, j) = \vr A''(b_i, j)$ and $\vr t(b) = \vr t(b_i)$ for every $i\in\setto N$ and $b\in B_i$, 
and therefore $\vr A''(T)$ contains the same inequalities as $\vr A_1(T)$.
In consequence, (4) implies (3).
\end{proof}
The function $T\mapsto P_1(T)$ is equivariant, i.e., invariant under action of atom automorphisms.
In consequence, the entries of $\vr A_1(T)$ and $\vr t_1(T)$ do not depend on the set $T$ itself, but only
on its size $\size T$. 
Indeed, if $\size T = \size{T'}$ then $\pi(T) = T'$ for some atom automorphism $\pi$, and hence
$\pi(P_1(T)) = P_1(T')$.
Since the system $P_1(T)$ is atom-less 
%(\arka{Maybe here we should recall the atom-less representation of semi $T$-orbits?})
we have also $\pi(P_1(T)) = P_1(T)$,
which implies 
$ P_1(T) = P_1(T'). $
%\[ (\vr A(T), \vr t(T)) = (\vr A(T'), \vr t(T')). \]
We may thus meaningfully write $P_1(\size T) = (\vr A(\size T), \vr t(\size T))$, i.e.,
$P_1(n) = (\vr A_1(n), \vr t_1(n))$ for $n\in\Nat$
 (cf.~Example \ref{ex:cont}).
% instead of $P(T)$, where $n=\size T$.
%

%The proof of Lemma \ref{lem:givenT} actually demonstrates the 
%more informative version thereof, formulated here for future reference:
%\begin{corollary} \label{cor:givenT}
%Assume $\size T = n \geq d$.
%Let $\vr x : C\tofin \R$ be setwise-$T$-supported, and $\vr y \in \R^M$ such that
%\begin{align} \label{eq:givenT}
%\vr y(j) \ = \ \vr x(C_j)
%\end{align}
%for $j = 1\ldots M$.
%Then $\vr x$ is a solution of $(\vr A, \vr t)$ if and only if $\vr y$ is a solution of $P(n)$.
%\end{corollary}

\smallskip

We argue that the dependence on $\size T$ is polynomial, as long as $\size T\geq 2d$:
\begin{lemma} \label{lem:poly}
There are univariate polynomials $p_{ij}(n) \in \Int[n]$ 
%with integer coefficients 
such that
$\vr A_1(n)(i,j) = p_{ij}(n)$ for $n \geq 2d$.
\end{lemma}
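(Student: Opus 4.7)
The plan is to fix indices $i,j$ and explicitly compute $\vr A_1(n)(i,j) = \orbsum{\vr A}(b_i, C_j)$ as a polynomial in $n = |T|$, by partitioning the sum according to equality types of pairs $(b_i, c)$ (Lemma \ref{lem:orbitpair}) and counting how many $c \in C_j$ realise each such type.

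First, I would describe the data attached to the indices for $|T| \geq d$. By Lemma \ref{lem:Tk}, $C_j$ is the finite setwise-$T$-orbit inside a single equivariant orbit $\otu \A {m_j}$, and $C_j = \otu T {m_j}$. By Lemma \ref{lem:STorbit}, the representative $b_i$ of a setwise-$T$-orbit inside $\otu \A {n_i}$ can be chosen so that it has atoms from $T$ at a fixed set of positions $I_i \subseteq \setto{n_i}$ of size $\ell_i$, and atoms from $\A \setminus T$ at the remaining $n_i - \ell_i$ positions. Crucially, $I_i$ depends only on the orbit $B_i$, not on $T$ itself.

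Next, I would partition $C_j$ by the equality type of the pair $(b_i, c) \in \otu \A {n_i} \times \otu \A {m_j}$. By Lemma \ref{lem:orbitpair}, these orbits are indexed by partial injections $\iota : \setto{n_i} \to \setto{m_j}$; since $\vr A$ is equivariant, it is constant on each such orbit (Lemma \ref{lem:1O}), taking some integer value $v_\iota \in \Z$ independent of $T$. Hence
\[
\vr A_1(n)(i,j) \ = \ \sum_{\iota} v_\iota \cdot N_\iota(n),
\]
where $N_\iota(n)$ is the number of $c \in \otu T {m_j}$ whose equality type with $b_i$ is exactly $\iota$.

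The key step, which I expect to be the main technical point, is showing that each $N_\iota(n)$ is a polynomial in $n$ for $n \geq 2d$. If $\iota(i_0) = j_0$ forces $c(j_0) = b_i(i_0)$, but $i_0 \notin I_i$ (so $b_i(i_0) \notin T$), then no $c \in \otu T {m_j}$ can realise $\iota$, and $N_\iota(n) = 0$. Otherwise $\dom(\iota) \subseteq I_i$: the $|\iota|$ coordinates of $c$ in $\mathrm{range}(\iota)$ are determined by $b_i \circ \iota^{-1}$, while the remaining $m_j - |\iota|$ coordinates must be pairwise distinct atoms of $T$, avoiding the $\ell_i$ atoms of $\mathrm{supp}(b_i) \cap T$ (to enforce $c(j) \neq b_i(i)$ for $i \notin \dom(\iota)$, $j \notin \mathrm{range}(\iota)$, recalling that atoms of $b_i$ outside $T$ are automatically avoided because $c$ takes values in $T$). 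Hence
\[
N_\iota(n) \ = \ (n - \ell_i)(n - \ell_i - 1) \cdots (n - \ell_i - (m_j - |\iota|) + 1),
\]
a polynomial of degree $m_j - |\iota| \leq d$. The assumption $n \geq 2d \geq \ell_i + m_j$ ensures all factors are non-negative, so the polynomial value coincides with the combinatorial count.

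Finally, summing gives an explicit polynomial
\[
p_{ij}(n) \ = \ \sum_{\iota :\, \dom(\iota) \subseteq I_i} v_\iota \cdot (n - \ell_i)(n - \ell_i - 1) \cdots (n - \ell_i - (m_j - |\iota|) + 1) \ \in \ \Int[n]
\]
of degree at most $m_j$, such that $\vr A_1(n)(i,j) = p_{ij}(n)$ whenever $n \geq 2d$, as required.
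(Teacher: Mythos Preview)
Your proof is correct and follows essentially the same approach as the paper's: both partition the sum $\orbsum{\vr A}(b_i,C_j)$ according to the equivariant orbits of pairs $(b_i,c)$, indexed by partial injections $\iota$ via Lemma~\ref{lem:orbitpair}, observe that only $\iota$ with $\dom(\iota)\subseteq I_i$ contribute, and count the tuples realising each such $\iota$ as a falling factorial $(n-\ell_i)^{(m_j-|\iota|)}$. Your notation differs (the paper writes $p,\ell,m,k$ for your $n_i,m_j,\ell_i,|\iota|$ and packages the falling factorial as $\otun{(n-m)}{\ell-k}$), but the argument is identical.
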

%
%\arka{Replace $\Int$ with $\Rat$ or $\R$?} 
%
\begin{proof} 
Let $n=\size T$.
Fix a setwise-$T$-orbits $B_i\subseteq B$ and a finite setwise-$T$-orbit $C_j\subseteq C$.
Each of them is included in a unique equivariant orbit, say:
\[
B_i \subseteq B' =  \otu {\A} p \qquad
C_j \subseteq C' = \otu {\A} \ell
\] 
(cf.~the partitions~\eqref{eq:eqorbits}).
Recall Lemma~\ref{lem:STorbit}:  $B_i$ is determined by the subset 
$I \subseteq \setto p$ of positions where atoms of $T$ appear in tuples belonging to $B_i$.
Let $m = \size{I}$.
On the other hand $C_j = \otu T \ell$ (cf.~Lemma~\ref{lem:Tk}).
% is induced by the subset $J\subseteq\setto \ell$ of positions where atoms of $T$
%appear in tuples belonging to $C_j$.
%
%Fix an arbitrary tuple $b\in B_i$.
%
%We know that $C_j$ is useful and hence the sum $\sum \vr A(b, C_j)$ is well-defined,
%i.e., 
%\begin{claim} \label{claim:nzeroC}
%The set $\setof{c\in C_j}{\vr A(b, c)\neq 0}$ is finite.
%\end{claim}
%
%Let $R = \supp {b_i}$ be the set of atoms appearing in $b_i$.
Note that $m = \size{T\cap \supp{b_i}}$.

\smallskip

We are going to demonstrate that the value $\orbsum{\vr A}(b_i, C_j)$ is
polynomially depending on $n=\size T$.
%\piotrek{$q_{v,w}$ are in wikipedia 
%''In a similar manner, the number of arrangements of k items from n objects is sometimes called a partial permutation or a k-permutation. It can be written as 
%$nPk$ (which reads "n permute k"), and is equal to the number $n ( n - 1 )\ldots ( n - k + 1 )$ (also written as $n ! / ( n - k ) !$.[11][12]''
%}
%\piotrek{I am not sure if we should use the name $k$-permutations or not. To think about it.}
We will use the polynomials 
$\otun n w$
of degree $w$, for $w\leq d$, defined by 
\begin{align} \label{eq:poly2}
\begin{aligned}
\otun n w & = n \cdot (n-1) \cdot \ldots \cdot (n-w+1).
%q_{v,w}(n) & = (n{-}v) \cdot (n{-}v{-}1) \cdot \ldots \cdot (n{-}v{-}w{+}1).
%\\
%& = \frac {(n-v)!} {(n-v-w)!} % = p(n).
\end{aligned}
\end{align}
In the special case of $w=0$, we put $\otun n w = 1$.
The value $\otun n w$ can be interpreted as follows:
\begin{claim} \label{claim:polyint}
For $n\geq w$, $\otun n w$ is equal to the number of 
arrangements of $w$ items chosen from  $n$ objects into a sequence.
%possible ways of filling $w$ slots with pairwise distinct element chosen from a collection of size $n-v$.
\end{claim}

Denote by $\mathcal D$ the set of equivariant orbits $\O \subseteq B' \times C'$.
% that contain at least one pair $(b_i, c)$, for some $c\in C_j$, i.e., for which $\O(b_i)$ is nonempty.
For $\O\in \mathcal D$, we put $\O(b_i, C_j) := \setof{c\in C_j}{(b_i,c)\in \O}$.
%
%\piotrek{Do we need that in $\mathcal D$ are only orbits for which $\O(b)$ is nonempty? Maybe we can 
%use all orbits then it is simpler to parse i think.}
%\slawek{I suppose it is simpler as it is now}
%
As $\vr A$ is equivariant, the value $\vr A(b_i, c)$ depends only on the orbit to which $(b_i, c)$ belongs.
We write $\vr A(\O)$, for $\O\in\mathcal D$, and get:
\begin{claim} \label{claim:sum}
$\orbsum {\vr A}(b_i, C_j) = \sum_{\O\in \mathcal D} \vr A(\O) \cdot \size {\O(b_i, C_j)}$.
\end{claim}

By Lemma \ref{lem:orbitpair} in Section \ref{sec:atoms}, orbits $\O\subseteq B' \times C'$ are in 
one-to-one correspondence
with partial injections $\iota : \setto p \to \setto \ell$.
We write $\O_\iota$ for the orbit corresponding to $\iota$.
%Consequently, orbits $\O\in\mathcal D$ are in
%one-to-one correspondence
%with partial injections $\iota$ satisfying
%$\dom {\iota} \subseteq I$, where 
Let $\dom \iota = \setof{x}{\iota(x) \text{ is defined}}$ denote the domain of $\iota$.
\begin{claim} \label{claim:Uiotaempty}
%$\mathcal D = \setof{\O_\iota}{\dom{\iota}\subseteq I}$.
$U_\iota(b_i, C_j)\neq \emptyset$ if and only if $\dom{\iota}\subseteq I$. 
\end{claim}
\noindent
Indeed, recall again Lemma \ref{lem:orbitpair}  which yields
$
\O_\iota(b_i, C_j) = \setof{ c \in C_j }{\prettyforall{x, y}{b_i(x) = c(y) \iff \iota(x) = y}}.
$
If $\dom{\iota}\subseteq I$, the set $\O_\iota(b_i, C_j)$ contains tuples $c\in C_j$ with fixed values
on positions $J = \setof{\iota(x)}{x\in\dom \iota}$, namely
\begin{align} \label{eq:iotam}
b_i(x) = c(\iota(x)),
\end{align}
and arbitrary other atoms from $T$ elsewhere, and therefore is nonempty.
If there is $x \in \dom{\iota}\setminus I$ then $b_i(x)\notin T$ and therefore
no $c\in C_j$ satisfies \eqref{eq:iotam}.
Claim \ref{claim:Uiotaempty} is thus proved.
\begin{claim} \label{claim:polOC}
Let $k = \size{\dom{\iota}}$ be
the number of pairs related by $\iota$.
If $\O_\iota(b_i, C_j)\neq \emptyset$ then $\size {\O_\iota(b_i, C_j)}  = \otun {(n-m)} {\ell - k}$. %  $q_{m, \ell-k}(n)$.
\end{claim}
\noindent
According to \eqref{eq:iotam}, tuples $c\in \O_\iota(b_i)$ have fixed values on $k$ positions in $J$.
The remaining $\ell - k$ positions in tuples $c\in \O_\iota(b_i,C_j)$ are filled arbitrarily
using $n-m$ atoms from $T\setminus \supp{b_i}$.
Due to the assumption that $n\geq 2d$, we have $n-m \geq d \geq  \ell - k$, and therefore
using Claim \ref{claim:polyint} (for $w=\ell-k$) we deduce 
$\size {\O_\iota(b_i,C_j)} = \otun {(n-m)} {\ell-k}$,  % $q_{m, \ell-k}(n)$, 
thus proving Claim \ref{claim:polOC}.

\smallskip

Once $b_i\in B_i$  and $\O\in\mathcal D$  
are fixed, the values $k, \ell$ and $m$ are fixed too, and
the formula of Claim~\ref{claim:polOC}
%~\eqref{eq:poly}
is an univariate polynomial of degree $\ell-k$.
%
%We observe, finally, that the definition of $p(n)$ does not depend on the choice of $O$ - we obtain
%the same polynomial $p(n)$ for each $(S\cup R)$-orbit $O$ such that $O\cap C_i$ is finite and nonempty.
%This yields the formula:
%\[
%p_{ij}(n) = \size {\setof{O}{O\cap C_i \text{ is finite and nonempty}}} \cdot p(n).
%\]
%
%we have thus shown:
%%
%\begin{claim}
%$\size{O \cap C_j}$ is polynomial in $n = \size T$ but is independent from $T$ itself.
%\end{claim}
%
%
The formula of  Claim~\ref{claim:sum} yields the required polynomial%
\footnote{
This confirms, in particular,  that $\vr A(T)(i,j)$ is independent from the actual set $T$, 
and only depend on its size $n=\size T$.}
 $\vr A(n)(i,j) = p_{ij}(n)$ and hence the proof of Lemma~\ref{lem:poly} is completed.
\end{proof}

Relying on Lemma \ref{lem:poly} we get a polynomially-parametrised system 
$P_1(n) = (\vr A_1(n), \vr t_1(n))$.

\para{Step 2 (monotonicity)}
The system $P_1$ constructed so far, does \emph{not} have to be monotonic in general.
As an immediate corollary of Lemma~\ref{lem:givenT} and Corollary \ref{cor:weaksol}, we only know that
If $P_1(n)$ has a solution for $n\geq d$, then
$P_1(n+1)$ has a (potentially different) solution.
We slightly modify the system $P_1$
in order to achieve monotonicity.

Before defining formally the new system 
$P_2(n) = ({\vr A_2}(n), {\vr t_2}(n))$,
we point to our objective: we aim at replacing the orbit-value vector $\orbval {\vr x}$ 
in Lemma \ref{lem:givenT} by the orbit-sum vector
$\orbsum {\vr x}$, as in Lemma \ref{lem:givenTT} below.
In other terms, we want the solutions $\vr y_1$ and ${\vr y_2}$ of $P_1(n)$ and
$P_2(n)$, respectively, 
%
%For $j = 1 \ldots M$ let $E_j$ be the equivariant orbit including $C_j$.
%Let $T\subseteqfin \A$ and $n=\size T$.
%Recall Corollary \ref{cor:givenT} determining a solution $\vr y$ of $P(n)$
%from a solution $\vr x$ of $(\vr A, \vr t)$.
%%that, 
%%according to the proof of Lemma \ref{lem:givenT}, given a finitary setwise-$T$-supported 
%%solution $\vr x : C\tofin \R$ of $(\vr A, \vr t)$ 
%%(which is by definition constant over every setwise-$T$-orbit $C_j$),
%%one obtains a solution $\vr y$ of $P(n)$ by 
%%%
%%\begin{align} \label{eq:y}
%%\vr y(j) = \vr x(C_j) = \vr x(c) \quad (\text{for any } c\in C_j)
%%\end{align}
%%%
%%for $j = 1 \ldots M$.
%Instead, a solution of the modified system $\overline P$ will be obtained by
%summing $\vr x(c)$ for all $c\in C_j$:
%%
%\begin{align} \label{eq:overy}
%\overline{\vr y}(j) = \sum \vr x(C_j) = \sum_{c\in C_j} \vr x(c)
%\end{align}
%%
%(recall the orbit-summation mapping $\sum \vr x$ from Section \ref{sec:weak}).
%Therefore, the values
%$\vr y(j)$ and $\overline{\vr y}(j)$ 
%%
differ on position $j$ by the multiplicative factor of $\size{C_j}$, 
namely
\begin{align} \label{eq:over}
{\vr y_2}(j) \ = \ \size{C_j} \cdot \vr y_1(j)
\end{align}
for $j = 1, \ldots, M$ (cf.~\eqref{eq:overbefore}).
The size $\size {C_j}$ of the setwise-$T$-orbit $C_j$, where $\size T = n$, is equal to
\begin{align} \label{eq:orbitsize}
\size{C_j} \ = \ \otun n {e_j},   % q_{0, e_j}(n),
\end{align}
where $C_j \subseteq \otu \A {e_j}$, i.e., 
$e_j$ is the atom dimension of the equivariant orbit including $C_j$, assuming
$\size T \geq e_j$.
These considerations lead to the following formal definition of $P_2$:
\begin{align} \label{eq:defP}
\begin{aligned}
{\vr A_2}(n)(i,j) \ & = \  \vr A_1(n)(i,j) \cdot
\frac{\otun n d}{\otun n {e_j}} 
% \frac{q_{0,d}\,(n)}{q_{0,e_j}(n)}
\qquad\qquad\qquad
{\vr t_2}(i) \ & = \ \vr t_1(i) \cdot \otun n d  % q_{0,d}(n)
\end{aligned}
\end{align}
where $\vr A_1(n)(i,j) = p_{ij}(n)$.
We rely on the following fact:
\begin{claim} \label{claim:qeq}
$\otun n w \cdot \otun {(n-w)} u \ = \ \otun n {w+u}$. 
%$q_{0, w}(n) \cdot q_{w, u}(n) \ = \ q_{0, w+u}(n)$.
\end{claim}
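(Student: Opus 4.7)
The plan is to establish the identity by direct expansion of the definition of the falling factorial $\otun{n}{w}$ from equation~\eqref{eq:poly2}. By definition, $\otun{n}{w}$ is a product of $w$ consecutive integers descending from $n$, namely $n,\, n{-}1,\, \ldots,\, n{-}w{+}1$, and symmetrically $\otun{(n-w)}{u}$ is a product of $u$ consecutive integers descending from $n{-}w$, namely $n{-}w,\, n{-}w{-}1,\, \ldots,\, n{-}w{-}u{+}1$. I would simply observe that the second sequence continues exactly where the first one ends, so concatenating the two products gives the single descending product
\[
n \cdot (n-1) \cdots (n-w+1) \cdot (n-w) \cdot (n-w-1) \cdots (n-w-u+1),
\]
consisting of $w+u$ consecutive integers descending from $n$, whose last factor is $n-(w+u)+1$. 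This is exactly $\otun{n}{w+u}$ by definition.

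The only subtlety worth a sentence is the handling of the degenerate cases $w=0$ or $u=0$, where the convention $\otun{n}{0} = 1$ fixed after \eqref{eq:poly2} is needed: if $w=0$ the identity reduces to $1 \cdot \otun{n}{u} = \otun{n}{u}$, and symmetrically for $u=0$. There is no real obstacle here — the claim is essentially a notational tautology about falling factorials, and its role in the paper is simply to feed into the subsequent verification that the system $P_2$ defined by \eqref{eq:defP} has the desired monotonicity and correctness properties relating solutions via the orbit-sum rescaling~\eqref{eq:over}.
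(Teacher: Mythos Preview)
Your proof is correct; the identity is indeed an immediate consequence of expanding the definition of the falling factorial from~\eqref{eq:poly2}, and your handling of the degenerate cases $w=0$ or $u=0$ is appropriate. The paper itself does not supply a proof of this claim---it is simply stated as a fact and used to verify that the coefficients in~\eqref{eq:defP} are polynomials---so your argument is exactly the routine verification the paper leaves implicit.
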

\noindent
By the claim, all coefficients in \eqref{eq:defP} are polynomials, namely:
$
{\vr A_2}(n)(i,j) \ = \  p_{ij}(n) \cdot \otun {(n-e_j)} {d-e_j},  % q_{e_j,d-e_j}(n),
$
since $e_j \leq d$.
It remains to conclude that the systems $P_1(n)$ and $P_2(n)$ have the same 
solutions modulo \eqref{eq:over}:
\begin{lemma} \label{lem:modulo}
Let $n\geq d$ and let 
$\vr y_1, {\vr y_2}\in\R^M$ satisfy 
$
{\vr y_2}(j) \ = \  \otun n {e_j} % q_{0, e_j}(n) 
\cdot \vr y_1(j)
$
for $j = 1, \ldots, M$.
%\eqref{eq:over}.
Then
$\vr y_1$ is a solution of $P_1(n)$ if and only if \xspace ${\vr y_2}$ is a solution of $P_2(n)$.
\end{lemma}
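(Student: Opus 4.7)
The plan is to verify the equivalence by a direct algebraic manipulation, using that the scaling factors relating $P_1(n)$ and $P_2(n)$ are strictly positive when $n \geq d$.

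First I would observe that for $n \geq d$, and for every $j$, both factors $\otun n d = n(n-1)\cdots(n-d+1)$ and $\otun n {e_j}$ are strictly positive integers, because $e_j \leq d \leq n$ ensures that all terms in the products defined by \eqref{eq:poly2} are at least $1$ (with the convention $\otun n 0 = 1$). This positivity is what allows us to multiply or divide an inequality by such a factor without flipping its direction, and it also guarantees that the relation $\vr y_2(j) = \otun n {e_j} \cdot \vr y_1(j)$ is an invertible change of variable.

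Next I would fix an index $i \in \setto N$ and rewrite the $i$-th inequality of $P_2(n)$ using the definitions in \eqref{eq:defP}. Substituting $\vr y_2(j) = \otun n {e_j} \cdot \vr y_1(j)$ into the left-hand side gives
\[
\sum_{j=1}^M p_{ij}(n) \cdot \frac{\otun n d}{\otun n {e_j}} \cdot \otun n {e_j} \cdot \vr y_1(j) \ = \ \otun n d \cdot \sum_{j=1}^M p_{ij}(n) \cdot \vr y_1(j),
\]
while the right-hand side of the same inequality is $\vr t_1(i) \cdot \otun n d$. Dividing both sides by the strictly positive quantity $\otun n d$ yields exactly the $i$-th inequality of $P_1(n)$, namely $\sum_j p_{ij}(n) \cdot \vr y_1(j) \geq \vr t_1(i)$. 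Running this argument in reverse (multiplying by $\otun n d > 0$) shows the other direction, so the equivalence holds inequality-by-inequality and hence for the whole systems.

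There is no real obstacle in this proof; it is a routine substitution, whose only subtle point is making sure that the rescaling factors are positive so the direction of the inequalities is preserved, which is precisely the role of the hypothesis $n \geq d$.
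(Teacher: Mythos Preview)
Your proof is correct and is exactly the direct algebraic verification the paper has in mind; the paper itself omits the proof as routine, and your argument---substituting the relation between $\vr y_1$ and $\vr y_2$ into the inequalities of $P_2(n)$ and dividing by the positive factor $\otun n d$---is the natural way to fill it in. The only point worth noting is your observation that $n \geq d \geq e_j$ makes all the scaling factors strictly positive, which is indeed what justifies preserving the direction of the inequalities.
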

Combination of \eqref{eq:overbefore}, \eqref{eq:orbitsize},
and Lemmas \ref{lem:givenT} and \ref{lem:modulo} yields:
\begin{lemma} \label{lem:givenTT}
Let $\size T = n \geq d$ and $\vr x : C\tofin \R$ a finitary setwise-$T$-supported vector.
The following conditions are equivalent:
\begin{itemize}
\item 
$\vr x$ is solution of $(\vr A, \vr t)$;
\item
$\orbsum {\vr x}$ is a solution of $P_2(n)$.
\end{itemize}
\end{lemma}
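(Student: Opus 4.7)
The plan is to chain together the three ingredients cited in the statement and use the hypothesis $n \geq d$ to ensure that all the quoted lemmas apply.

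First I would invoke Lemma \ref{lem:givenT}, which is applicable since $n = \size T \geq d$ and $\vr x$ is finitary and setwise-$T$-supported. This gives the equivalence between $\vr x$ being a solution of $(\vr A, \vr t)$ and $\orbval{\vr x}$ being a solution of $P_1(n) = (\vr A_1(n), \vr t_1(n))$. So it suffices to translate the condition ``$\orbval{\vr x}$ solves $P_1(n)$'' into ``$\orbsum{\vr x}$ solves $P_2(n)$''.

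Next I would set $\vr y_1 := \orbval{\vr x}$ and $\vr y_2 := \orbsum{\vr x}$ and verify that these two vectors in $\R^M$ satisfy the multiplicative relation required by Lemma \ref{lem:modulo}, namely $\vr y_2(j) = \otun n {e_j} \cdot \vr y_1(j)$ for each $j \in \setto M$. By \eqref{eq:overbefore} we have $\orbsum{\vr x}(j) = \size{C_j} \cdot \orbval{\vr x}(j)$, and since $n = \size T \geq d \geq e_j$, Lemma \ref{lem:Tk} together with \eqref{eq:orbitsize} gives $\size{C_j} = \otun n {e_j}$. Combining these two equalities yields exactly the desired relation between $\vr y_1$ and $\vr y_2$.

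Finally I would apply Lemma \ref{lem:modulo}, whose hypothesis $n \geq d$ is met, to conclude that $\vr y_1$ is a solution of $P_1(n)$ if and only if $\vr y_2$ is a solution of $P_2(n)$. Composing this equivalence with the one provided by Lemma \ref{lem:givenT} yields the required statement.

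Since the lemma is purely a bookkeeping step gluing together previously established results, I do not anticipate any real obstacle; the only point that deserves a sentence of care is ensuring that $n \geq d$ is sufficient to apply both Lemma \ref{lem:Tk} (which needs $\size T \geq e_j$ for every $j$) and Lemma \ref{lem:modulo} (which needs $n \geq d$), both of which are automatic from the hypothesis since $e_j \leq d$ by definition of $d$ as the maximal atom dimension.
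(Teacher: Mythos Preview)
Your proposal is correct and matches the paper's approach exactly: the paper simply states that the lemma follows from the combination of \eqref{eq:overbefore}, \eqref{eq:orbitsize}, and Lemmas \ref{lem:givenT} and \ref{lem:modulo}, and you have spelled out precisely how these pieces fit together. Your extra care about the hypothesis $n\geq d$ ensuring $n\geq e_j$ for every $j$ is the only subtlety, and you handle it correctly.
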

\begin{slexample} \label{ex:contcont}
The system \eqref{eq:ppar} in Example \ref{ex:poly} is obtained from \eqref{eq:pparnmon}
in Example \ref{ex:cont} by applying the definition \eqref{eq:defP}.
Indeed, $M = d = e_1 = 1$ and hence $\vr A$ stays unchanged, while
the right-hand side vector $\vr t$ gets multiplied by $\otun n 1  = n$.
\end{slexample}

%By Lemmas \ref{lem:weaksol}, \ref{lem:givenT} and \ref{lem:modulo},
%for every $T\subseteqfin \A$ of size $\size T = n \geq 2d$, the following conditions are equivalent:
%\begin{enumerate}
%\item
%$(\vr A, \vr t)$ has a  finitary $T$-supported solution;
%\item
%$(\vr A, \vr t)$ has a  finitary setwise-$T$-supported solution;
%\item 
%$\overline P(n)$ has a solution.
%\end{enumerate}
%%
%(The constraint $n\geq 2d$ is inherited from assumption in Lemma \ref{lem:poly}.)
%%
%Relying on equivalence of conditions 2) and 3) above, 
%together with Corollary \ref{cor:weaksol},
%we deduce:

\begin{lemma}[Monotonicity] \label{lem:monoto}
Let $n\geq d$. Every solution of $P_2(n)$ is also a solution 
of $P_2(n+1)$.
\end{lemma}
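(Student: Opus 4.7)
The plan is to derive monotonicity as a direct corollary of Lemma~\ref{lem:givenTT} combined with the setwise-support lifting result, Corollary~\ref{cor:weaksol}. The essential idea is that a solution of $P_2(n)$ can be ``lifted'' back to a concrete finitary setwise-$T$-supported solution of the original orbit-finite system, to which Corollary~\ref{cor:weaksol} can then be applied, producing a setwise-$T'$-supported solution for any $T'\supseteq T$ of size $n+1$; projecting back via $\orbsum{\cdot}$ yields a solution of $P_2(n+1)$ whose orbit-sums agree with the original $\vr y$.

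More concretely, fix $n \geq d$ and let $\vr y \in \R^M$ be a solution of $P_2(n)$. Pick any $T \subseteqfin \A$ with $\size T = n$, so that the finite setwise-$T$-orbits inside $C$ are exactly $C_1, \ldots, C_M$ with $\size{C_j} = \otun n {e_j} > 0$ (since $n \geq d \geq e_j$). Define a finitary setwise-$T$-supported vector $\vr x : C \tofin \R$ by
\[
\vr x(c) \ = \ \begin{cases} \vr y(j)/\size{C_j} & \text{if } c \in C_j \text{ for some } j\in\setto M, \\ 0 & \text{otherwise.} \end{cases}
\]
By construction $\vr x$ is constant on each $C_j$ and zero outside the finite set $C_1\cup\ldots\cup C_M$, hence it is finitary and setwise-$T$-supported, and moreover $\orbsum{\vr x}(j) = \size{C_j}\cdot(\vr y(j)/\size{C_j}) = \vr y(j)$, so $\orbsum{\vr x} = \vr y$.

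Since $\orbsum{\vr x} = \vr y$ is a solution of $P_2(n)$, Lemma~\ref{lem:givenTT} (applied with support $T$, $\size T = n \geq d$) guarantees that $\vr x$ is a solution of the original equivariant system $(\vr A, \vr t)$. Now invoke Corollary~\ref{cor:weaksol} with $S = T$: there exists $T' \supseteq T$ with $\size{T'} = n+1$ and a finitary setwise-$T'$-supported solution $\vr x'$ of $(\vr A, \vr t)$ satisfying $\orbsum{\vr x'} = \orbsum{\vr x} = \vr y$. Since $n+1 \geq d$ as well, Lemma~\ref{lem:givenTT} applied in the other direction with support $T'$ shows that $\orbsum{\vr x'} = \vr y$ is a solution of $P_2(n+1)$, which is precisely what we wanted.

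There is no real obstacle; the only thing to verify is that the auxiliary vector $\vr x$ defined above is genuinely setwise-$T$-supported and finitary, which follows from the identification $C_j = \otu T {e_j}$ (Lemma~\ref{lem:Tk}) and the fact that $\vr x$ is constant on each setwise-$T$-orbit $C_j$ and vanishes on all remaining (infinite) setwise-$T$-orbits of $C$. Everything else is then a mechanical chaining of Lemma~\ref{lem:givenTT} and Corollary~\ref{cor:weaksol}.
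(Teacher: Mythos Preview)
Your proof is correct and follows essentially the same approach as the paper: lift $\vr y$ to a finitary setwise-$T$-supported solution $\vr x$ of $(\vr A,\vr t)$ via Lemma~\ref{lem:givenTT}, apply Corollary~\ref{cor:weaksol} to obtain a setwise-$T'$-supported solution $\vr x'$ with the same orbit-sum, and project back via Lemma~\ref{lem:givenTT}. The only difference is that you spell out the construction of $\vr x$ explicitly, whereas the paper simply refers to it as the vector uniquely determined by $\orbsum{\vr x}=\vr y$.
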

\begin{proof}
Suppose ${\vr y}$ is a solution of $P_2(n)$, for $n\geq d$.
Let $T\subseteqfin\A$ be any subset of atoms of size $\size T = n$.
Let $\vr x$ be the finitary setwise-$T$-supported vector uniquely determined by
\begin{align} \label{eq:overyone}
\orbsum {\vr x} \ = \  {\vr y}.
\end{align}
By Lemma \ref{lem:givenTT}, $\vr x$ is a solution of $(\vr A, \vr t)$.
We apply Corollary \ref{cor:weaksol} to obtain another finitary solution 
$\vr x'$ of $(\vr A, \vr t)$, 
setwise-$T'$-supported by some $T'$ of size $\size T' = n+1$,
and having the same orbit-summation mapping:
\begin{align}\label{eq:sums}
\orbsum {\vr x} \ = \  \orbsum {(\vr x')}.
\end{align}
Equalities \eqref{eq:overyone} and \eqref{eq:sums} imply $ \orbsum {(\vr x')} = {\vr y}$.
By Lemma \ref{lem:givenTT} again, $ \orbsum {(\vr x')} = {\vr y}$
is a solution of $P_2(n+1)$, as required.
\end{proof}

%
%(restricting to $n\geq 2d$ is necessary as this is assumed in Lemma \ref{lem:poly}, but it is void due to monotonicity).
%
%\arka{We only get monotonicity for $n \geq 2d$. So the above statement should be omitted.}
%

%\para{Step 3 (monotonicity)}

Combining Lemmas \ref{lem:weaksol}, \ref{lem:poly}, \ref{lem:givenTT} and \ref{lem:monoto}
we derive correctness of reduction
(the constraint $n\geq 2d$ is inherited from the assumption in Lemma \ref{lem:poly}):
\begin{corollary} \label{cor:iff}
The following conditions are equivalent:
\begin{itemize}
\item
$(\vr A, \vr t)$ has a finitary solution,
\item
$P_2(n)$ has a solution for some integer $n \geq 2d$,
\item
$P_2(n)$ has a solution for almost all $n \in \Nat$.
\end{itemize}
\end{corollary}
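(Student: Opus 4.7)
The plan is to prove the corollary as a cycle of three implications, each of which is essentially bookkeeping on top of the lemmas already established.

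For the implication from a finitary solution of $(\vr A, \vr t)$ to solvability of $P_2(n)$ at some $n \geq 2d$, I would start with a finitary solution $\vr x_0$, apply Lemma~\ref{lem:weaksol} to obtain a finitary setwise-$T$-supported solution $\vr x$ with the same orbit-sum vector, where $T = \supp{\vr x_0}$. If $\size T < 2d$, I would iteratively invoke Corollary~\ref{cor:weaksol} one atom at a time: at each step I get a solution supported setwise on a one-larger set, preserving $\orbsum{\vr x}$. Stopping as soon as $\size T = n \geq 2d$, Lemma~\ref{lem:givenTT} (whose hypothesis $n\geq d$ is met) yields that $\orbsum{\vr x}$ solves $P_2(n)$. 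Note that Lemma~\ref{lem:poly} guarantees the coefficients of $P_2(n)$ are already the polynomial ones for $n \geq 2d$, so the statement "$P_2(n)$ has a solution" is unambiguous.

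For the second implication, from solvability of $P_2(n_0)$ for some $n_0 \geq 2d$ to solvability of $P_2(n)$ for almost all $n \in \Nat$, I would simply iterate Lemma~\ref{lem:monoto}: the lemma gives that any solution of $P_2(n)$ with $n\geq d$ remains a solution of $P_2(n+1)$, so one fixed tuple works for every $n \geq n_0$.

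For the third implication, from solvability of $P_2(n)$ for almost all $n$ back to a finitary solution of $(\vr A, \vr t)$, I would pick any $n \geq 2d$ for which $P_2(n)$ has a solution $\vr y \in \R^M$, fix any $T \subseteqfin \A$ with $\size T = n$ and define the finitary setwise-$T$-supported vector $\vr x : C \tofin \R$ that vanishes outside the finite setwise-$T$-orbits $C_1, \ldots, C_M$ and takes the constant value $\vr y(j)/\otun n {e_j}$ on each $C_j$. This makes sense because $\otun n {e_j} > 0$ when $n \geq e_j$, which holds as $n \geq 2d \geq e_j$. By construction together with \eqref{eq:overbefore} and \eqref{eq:orbitsize} we get $\orbsum{\vr x} = \vr y$, so Lemma~\ref{lem:givenTT} (in the reverse direction) delivers that $\vr x$ solves $(\vr A, \vr t)$.

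I do not foresee a real obstacle: each step is a direct application of one of the prior lemmas. The only point that needs care is the bookkeeping of thresholds, namely making sure that the two distinct bounds $n \geq d$ (required by Lemma~\ref{lem:givenTT} and Lemma~\ref{lem:monoto}) and $n \geq 2d$ (required by Lemma~\ref{lem:poly} for the polynomial form of $P_2$) are both satisfied when they are needed; this is why $2d$ appears in the middle statement of the corollary, and why enlarging $T$ in the first implication is necessary even though Lemma~\ref{lem:weaksol} would by itself yield only a $T$ of unknown size.
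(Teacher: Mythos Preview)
Your proposal is correct and follows exactly the approach the paper intends: the paper states only that the corollary follows by ``combining Lemmas \ref{lem:weaksol}, \ref{lem:poly}, \ref{lem:givenTT} and \ref{lem:monoto}'' (with the threshold $2d$ inherited from Lemma~\ref{lem:poly}), and your cycle of implications spells out precisely this combination, including the use of Corollary~\ref{cor:weaksol} to enlarge the support up to size $2d$.
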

%
%
%In order to drop the constraint, 
%we substitute $n+2d$ in place of $n$ to all polynomials appearing in 
%$P_2$.
%This substitution yields
%the final system $P_3$ such that:
%\begin{claim} \label{cor:QP}
%$P_3(n)$ has the same solutions as $P_2(n+2d)$  for  every $n\in\Nat$. 
%\end{claim}
%%
%Therefore Corollary \ref{cor:fin} and Lemma \ref{lem:overP}, respectively, rewrite to:
%%
%\begin{corollary} \label{cor:iff}
%$(\vr A, \vr t)$ has a  finitary solution if and only if
%$P_3(n)$ has a solution for some $n\in\Nat$.
%\end{corollary}
%
%\noindent
%Furthermore,
%Lemma \ref{lem:overP} implies:
% equivalence of conditions 2) and 3) above, together with Corollary \ref{cor:weaksol}, we deduce:
%
%\begin{corollary} \label{cor:monotonic}
%$P_3$ is monotonic.
%\end{corollary}
%

Reduction of \finineqsolvname$(\R)$ to \allpolyineqsolvname is thus completed.

\para{Complexity}
It remains to argue that $P_2$ is computable from $(\vr A, \vr t)$, and estimate the 
computational complexity.
Computability of $P_2$ follows immediately from computability of $P_1$, which we focus now on:
\begin{lemma} \label{lem:comp}
The system $P_1$ is computable from $(\vr A, \vr t)$. % in polynomial time.
\end{lemma}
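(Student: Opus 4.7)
The plan is to turn the existence proof of Lemma \ref{lem:poly} into an effective algorithm operating on the input representation described in items (1)--(3) of Section \ref{sec:results}. By Lemma \ref{lem:wlog} I may assume that $B$ and $C$ are disjoint unions of equivariant orbits $\otu \A k$, so the input provides (i) the list of these orbits, (ii) the list of equivariant orbits included in $B \times C$, and (iii) the integer values $\orbval{\vr A}(\O)$ and $\orbval{\vr t}(\O')$ for each such orbit $\O \subseteq B \times C$ and $\O' \subseteq B$. All the ingredients appearing in the formula derived in the proof of Lemma \ref{lem:poly} can be read off from this data.

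First, I would symbolically enumerate the setwise-$T$-orbits of $B$ and the finite setwise-$T$-orbits of $C$. By Lemma \ref{lem:STorbit}, for each equivariant orbit $B' = \otu \A p$ included in $B$ and each subset $I \subseteq \setto p$, there is exactly one setwise-$T$-orbit $B_i \subseteq B'$ labelled by the pair $(B', I)$; a representative $b_i$ is any tuple whose $I$-positions are filled with distinct atoms from $T$ and whose remaining positions are filled with distinct atoms from $\A \setminus T$ (only the equality type of $b_i$ matters in the sequel). By Lemma \ref{lem:Tk}, each equivariant orbit $C' = \otu \A \ell$ included in $C$ contributes a single finite setwise-$T$-orbit $C_j = \otu T \ell$, labelled by $C'$.

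Second, for each pair $(B_i, C_j)$ labelled by $(B', I)$ and $C'$ I would assemble the entry of $\vr A_1$ using Claims \ref{claim:sum}, \ref{claim:Uiotaempty} and \ref{claim:polOC}:
\[
\vr A_1(n)(i,j) \ = \ \sum_{\iota} \orbval{\vr A}(\O_\iota) \cdot \otun{(n-m)}{\ell - k},
\]
where the sum ranges over partial injections $\iota : \setto p \to \setto \ell$ with $\dom \iota \subseteq I$, and where $m = |I|$, $k = |\dom \iota|$, and $\O_\iota \subseteq B' \times C'$ is the orbit associated with $\iota$ via Lemma \ref{lem:orbitpair}. The target vector is obtained directly as $\vr t_1(n)(i) = \vr t(b_i) = \orbval{\vr t}(B')$. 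Expanding $\otun{(n-m)}{\ell-k}$ into the monomial basis yields an element of $\Int[n]$, so each entry of $\vr A_1$ is a concrete polynomial computable by a finite sum.

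The only step that requires genuine care is the lookup of $\orbval{\vr A}(\O_\iota)$: one must match the abstract description of $\O_\iota$ (given as a partial injection $\iota : \setto p \to \setto \ell$) with the concrete orbit representative supplied by the input. Since both descriptions amount to the equality type of a tuple in $\otu \A {p+\ell}$, this matching is routine and can be done, for each $\iota$, in time polynomial in $p + \ell$. As the enumeration of $(B', I, C', \iota)$ is finite, all entries of $\vr A_1$ and $\vr t_1$ are obtained in finitely many steps, which establishes computability of $P_1$.
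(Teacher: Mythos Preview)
Your proposal is correct and follows essentially the same approach as the paper: enumerate the setwise-$T$-orbits of $B$ and $C$ via Lemma~\ref{lem:STorbit} (and Lemma~\ref{lem:Tk}), and for each pair invoke the explicit formula from the proof of Lemma~\ref{lem:poly} (Claims~\ref{claim:sum}, \ref{claim:Uiotaempty}, \ref{claim:polOC}) to compute the polynomial entry. The paper's own proof is a one-sentence sketch of exactly this procedure; you have simply spelled out the details, including the bookkeeping of matching $\O_\iota$ against the input's orbit representatives.
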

\begin{proof}
Indeed, it is enough to range over representations of setwise-$T$-orbits $B_i$ and $C_j$ of $B$ and $C$, respectively
(such representations are given by Lemma~\ref{lem:STorbit}), and for each
pair of such orbits proceed with computations outlined in the proof of Lemma~\ref{lem:poly}, applied
to an arbitrarily chosen representative $b_i\in B_i$. 
\end{proof}

By Corollary \ref{cor:iff} and Lemma \ref{lem:comp},
\finineqsolvname$(\R)$  reduces to \allpolyineqsolvname.

\smallskip

%

%\smallskip

%\begin{slremark}
%As a corollary of Lemma~\ref{lem:givenT}, together with the effective quantifier elimination of
%real arithmetic \cite{?}, we deduce the following bound on the smallest size of support of a solution: 
%whenever an orbit-finite system of linear inequalities of atom dimension $d$, supported by $S$,
%has a solution then it necessarily has a solution, whose support is 
%larger than $\size S$ by a quantity at most doubly exponential in $d$. 
%\end{slremark}

%\arka{I think this example should either be later, or should have more explanation.}
%\slawek{Move it to the end of sect?}

Concerning computational complexity,
the number of setwise-$T$-orbits included in an equivariant orbit $\otu \A \ell$ is exponential in $\ell$
(cf.~Lemma \ref{lem:STorbit}).
That is why 
the size of  $P_2$ may be exponential in atom dimension $d$ of $(\vr A, \vr t)$.
On the other hand, the size of $P_2$ is only polynomial
(actually, linear) in the number of orbits included in $B\times C$.
In consequence, for fixed atom dimension we get a polynomial-time reduction and hence,
relying on Theorem \ref{thm:polyptime},
the decision procedure for \finineqsolvname$(\R)$ in \PTIME.
Without fixing atom dimension, we get an exponential-time reduction and hence
the decision procedure is in \EXPTIME.
%\smallskip

The same complexity bounds apply to the algorithm for the optimisation problem presented in Section \ref{sec:max}.

%\input{lin_prog_complicated_example}

% !TEX root = lin-prog-main.tex

\section{Optimisation problems}
\label{sec:max}

In this section we prove Theorem~\ref{thm:lp-max}:
we introduce a maximisation variant of \polyineqsolvname and routinely adapt
the decision procedure of Section \ref{sec:poly}, as well as
the reduction of Section \ref{sec:red2poly},
to the maximisation setting.

%\compprobext{\finineqsolvname$(\F)$}
%{an orbit-finite system of linear inequalities $(\vr A, \vr t)$,}{and a score function $S$}{}
%{the supremum of $\setofbeg{S(\vr x)}{\vr x \text{ solves } (\vr A, \vr t)}$}

\subsection{Polynomially-parametrised maximisation problem}

We consider a maximisation problem, whose instance 
$(P,S)$ consists of a finite system of polynomially-parametrised 
inequalities $P$ as in \eqref{eq:polyeq}, and an \emph{ordinary} (non-parametrised)
objective function $S$ given by a linear map
\begin{align*} %\label{eq:score}
S(x_1, \ldots, x_k) = 
a_1 \cdot x_1 + \ldots + a_k \cdot x_k.
\end{align*}
%
%$S(n,\vec x) \leq S(n+1, \vec x)$, for every $n\in\Nat$ and $\vec x\in\R^k$.
%
%The instance $(P, S)$ is \emph{monotonic} if the system $P$ is so.
%In this section we exclusively concentrate on such instances.
%
%
As in Section \ref{sec:poly-mon}, % (c.f.~Notation \ref{not:sol}),
by an \emph{\aasol} of a system $P$ we mean in this section a solution of $P(n)$ 
for almost all $n\in\Nat$.  % (all sufficiently large $n\in\Nat$).
We define the \emph{supremum} of a monotonic instance $(P, S)$ 
as
\[
\sup (P, S) \ := \ \sup \setof{S(\vr x)}{\vr x \text{ is an \aasol of } P},
\]
under a proviso that $\sup (P,S) = -\infty$ if $P$ has no {\aasol}s.
Referring to the standard terminology, we can say that the system is \emph{infeasible} if
$sup (P,S) = -\infty$, and it is \emph{unbounded} if $\sup (P, S) = \infty$.
%
%CONVENTION not needed; In the following we apply the convention that 
%$-\infty < r < +\infty$ for all $r\in\R$.
%
Interestingly, the supremum can not be irrational
(see Corollary \ref{cor:limQ} below).
% (the proof is in Appendix):
%

%For the sake of formulating a computation problem, we use the following representation of real numbers:
%a number $r\in\R$ is \emph{represented} by a real arithmetic formula $\varphi(x)$ with one free variable if
%the formula defines the singleton set $\set{r}$.%
%\footnote{
%Using quantifier elimination, the formula can be further transformed to the standard  representation
%as a pair $(p, i)$, where $p$ is an univariate polynomial and $i$ a positive integer, such that $r$ is the $i$th 
%real root of $p$.
%%\slawek{remove it?}\piotrek{Yes I think it is irrelevant for the content of the paper.}
%}

In this section we study the problem of computing the supremum of monotonic instances:

\compprobext{\allpolyineqmaxname} %\polyineqmaxname}
{An instance $(P, S)$}
%{A finite system of polynomially-parametrised}{inequalities $P$, a polynomially parametrized}{linear map $S$} 
{The supremum of $(P, S)$}

\noindent
The problem generalises ordinary (non-parametrised) linear programming, and can be solved
similarly to \allpolyineqsolvname (of which it is a strengthening):
%In the sequel we concentrate on monotonic instances, and argue that on such instances
%the problem is still in \PTIME, thus strengthening \PTIME complexity of monotonic
%\polyineqsolvname (cf.~Theorem \ref{thm:polyptime}):
%
\begin{theorem} \label{thm:lp-max-poly}
%For monotonic instances, 
\allpolyineqmaxname is in \PTIME.
\end{theorem}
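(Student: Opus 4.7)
The plan is to lift the iterative algorithm underlying Theorem \ref{thm:polyptime} from the pure feasibility problem \allpolyineqsolvname to the optimisation problem \allpolyineqmaxname, maintaining throughout a pair $(P, \G)$, where $\G$ is a system of ordinary (non-parametrised) equalities, whose \aasol set equals that of the original input (so that the supremum of $S$ is preserved). Initially $\G = \emptyset$, and the algorithm either returns a value in $\Rat \cup \{-\infty, +\infty\}$ or transforms the pair $(P, \G)$ while strictly decreasing $\size P$.

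The key ingredient is the sandwich obtained from Claims \ref{claim:li} and \ref{claim:strictli},
\begin{align*}
\sup_{\strictli P \conj \G} S \ \leq \ \sup(P \conj \G, S) \ \leq \ \sup_{\li P \conj \G} S,
\end{align*}
whose outer quantities concern ordinary linear programs solvable in \PTIME. I will show that the two outer quantities coincide whenever $\strictli P \conj \G$ is feasible. Indeed, if $\sup_{\li P \conj \G} S = M < \infty$, the optimum is attained at a vertex $\vr x^*$, and then for any strictly feasible $\vr y$ the points $t \vr y + (1-t) \vr x^*$ remain strictly feasible for every $t \in (0,1)$, since strict linear inequalities are preserved under convex combinations involving a strictly feasible component, while $S(t \vr y + (1-t) \vr x^*) \to M$ as $t \to 0$; the unbounded case is handled analogously by noting that any recession direction of $\li P \conj \G$ witnessing unboundedness still yields unboundedness when added to a strictly feasible point. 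Hence, once $\strictli P \conj \G$ is feasible, the algorithm simply returns $\sup_{\li P \conj \G} S$ via an ordinary LP oracle.

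If $\li P \conj \G$ is itself infeasible, the algorithm returns $-\infty$ (justified by Claim \ref{claim:li}). If $\li P \conj \G$ is feasible but $\strictli P \conj \G$ is not, then by a standard convex-covering argument (a non-empty convex set contained in finitely many affine hyperplanes must be contained in at least one of them) the feasible set of $\li P \conj \G$ lies entirely inside some equality hyperplane $\lipar \ineqal =$, giving a \emph{degenerate} $\ineqal$ for which $\strictli \ineqal \conj \li{P \setminus \ineqal} \conj \G$ is infeasible; algorithmically such an $\ineqal$ is found by testing each inequality separately, and at least one test must fail by the contrapositive of Claim \ref{claim:=}. The algorithm then performs the same transformation step as in the proof of Theorem \ref{thm:polyptime}: replace $\ineqal$ with $\tl \ineqal$ in $P$ and add $\lipar \ineqal =$ to $\G$. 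Claim \ref{claim:samesol} guarantees that the \aasol set of $P \conj \G$, and hence the supremum, is preserved, while $\size P$ strictly decreases.

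Termination in polynomially many iterations follows from the strict decrease of $\size P$, and each iteration performs only polynomially many ordinary LP queries (feasibility of $\li P \conj \G$, feasibility of $\strictli \ineqal \conj \li{P \setminus \ineqal} \conj \G$ for each $\ineqal \in P$, and one optimisation of $S$ over $\li P \conj \G$), so the overall procedure runs in \PTIME. The main subtlety is the correctness of the return step when $\strictli P \conj \G$ is feasible, namely proving that the supremum over {\aasol}s truly coincides with the ordinary LP supremum $\sup_{\li P \conj \G} S$; the strict-convex-combination argument sketched above is the crux. As the returned quantity is always $\pm\infty$ or the optimum of an ordinary rational LP, the supremum of \allpolyineqmaxname is always in $\Rat \cup \{-\infty, +\infty\}$, matching the promise of Corollary \ref{cor:limQ}.
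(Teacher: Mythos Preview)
Your proposal is correct and follows essentially the same approach as the paper: iterate the transformation step of Theorem~\ref{thm:polyptime}, return $-\infty$ on infeasibility of $\li P \conj \G$, and when solvability is reported (i.e., $\strictli P \conj \G$ is feasible) return the ordinary LP value $\sup_{\li P \conj \G} S$, with correctness established via the sandwich $\sup_{\strictli P \conj \G} S \leq \sup(P \conj \G, S) \leq \sup_{\li P \conj \G} S$ and a convex-combination argument showing the two ends coincide. One small slip: the phrase ``the optimum is attained at a vertex $\vr x^*$'' is not literally true in general (the polyhedron may contain a line and have no vertices); what you actually need, and what holds, is that a bounded feasible LP attains its optimum at \emph{some} point, and your convex-combination argument only uses this. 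The paper sidesteps even that by taking an arbitrary feasible $\vr y$ of $\li P \conj \G$ rather than the optimiser, which makes the bounded and unbounded cases uniform. Your convex-covering detour (a nonempty convex set covered by finitely many hyperplanes lies in one of them) is correct but unnecessary, since, as you yourself note, the contrapositive of Claim~\ref{claim:=} already produces a degenerate $\ineqal$ directly.
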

\begin{proof}
Let $(P_0, S)$ be an instance.
The algorithm is essentially the same as Algorithm \ref{alg:1} for \allpolyineqsolvname 
%(that subsumes monotonic \polyineqsolvname) 
in the proof of Theorem \ref{thm:polyptime}, and proceeds
by iterating the transformation step until either unsolvability or solvability is reported.
Recall that solution set is preserved by the transformation step (Claim \ref{claim:samesol}).
If unsolvability is reported, the algorithm returns $-\infty$.
If solvability is reported --- let $P\cup \G$ be the system examined in the last iteration
---
the decision procedure computes and returns $\sup (\li P \conj \G, S)$, 
the supremum of $S$ constrained by the ordinary system of inequalities $\li  P \conj \G$,
by invoking any \PTIME procedure for ordinary linear programming.
%\cite{CohenLS19} \slawek{what should we cite?}). 

Correctness follows by the two claims formulated below. 
First, since solution set is preserved by the transformation step, we have:
\begin{claim} \label{claim:proof1}
$\sup (P_0, S) = \sup (P \cup \G, S)$.
\end{claim}
\noindent
Second, the supremum does not change if
the polynomially-parametrised constraints $P$ are replaced by the 
overapproximation $\li P$:
\begin{claim} \label{claim:proof2}
$\sup (P\cup\G, S) =  \sup (\li P \cup \G, S)$. 
\end{claim}
\noindent
%We prove both inequalities, using Claims \ref{claim:li} and \ref{claim:strictli}.
%
For the claim it is enough to prove the inequality
$
\sup (\strictli P\cup\G, S) \geq  \sup (\li P \cup \G, S)
$
as, according to Claims \ref{claim:li} and \ref{claim:strictli}, we have
$\sup (\strictli P \cup \G, S) \leq \sup (P\cup\G, S) \leq  \sup (\li P \cup \G, S)$.
Take any solution $\vr y$ of $\li P\cup\G$, and any solution 
$\vr x$ of $\strictli P\cup\G$ (we rely here on solvability of the latter system).
For every $k\in\Nat$, the vector
$\vr x_k = \frac{\vr x + k \vr y}{k+1}$ is a solution of $\strictli P\cup\G$, and $S(\vr x_k)$
tends to $S(\vr y)$ when $k$ tends to $\infty$.
Hence 
$
\sup (\strictli P\cup\G, S) \geq  \sup (\li P \cup \G, S),
$
as required.
%
%Indeed, one inequality $\sup (P\cup\G, S) \leq  \sup (\li P \cup \G, S)$ follows immediately by 
%Claim \ref{claim:li} from Section \ref{sec:poly}.
%For proving the opposite inequality we first deduce 
%$\sup (\strictli P\cup\G, S) \leq  \sup (P \cup \G, S)$, using Claim \ref{claim:strictli}.
%Furthermore, relying on solvability of $\strictli P\cup\G$ (as solvability of $P_0$ has been reported
%by the algorithm) we observe the equality
%$\sup (\strictli P\cup\G, S) =  \sup (\li P \cup \G, S)$.
%Therefore $\sup (\li P\cup\G, S) \leq  \sup (P \cup \G, S)$, as required.
%
%\slawek{is this clear enough?}
%\arka{No. We have to say that this is safe since $\strictli P \cup \G$ is solvable. Hence we are guranteed that the supremum of both systems are the same. Moreover, we can check if the supremum is actually achieved by enlarging the system by one equation $S(\vr{x}) = \sup$}
%\piotrek{Maybe we can write as follows: Let Set(Eq) be the set of solutions of Eq. Then
%$Set(\lipar{P}{>} \cup \G)\subseteq Set( P \cup \G)\subseteq Set(\li{P} \cup \G)$ the first containment is because of Lemma~\ref{claim:strictli} the second because of Lemma~\ref{claim:li}. Because of the choice for $P\cup \G$ we have that $Set(\lipar{P}{>} \cup \G)$ is not empty so $sup(\lipar{P}{>} \cup \G,S)= sup(\li{P}  \cup \G,S)$ for any $S$, and this  proves the claim.}
\end{proof}

By Claims \ref{claim:proof1} and \ref{claim:proof2} in the proof of Theorem \ref{thm:lp-max-poly},
the supremum of a monotonic instance, if not  $-\infty$ nor $\infty$,
is equal to the supremum of an ordinary linear program and hence is rational:
\begin{corollary} \label{cor:limQ}
The supremum of a monotonic instance $(P,S)$ belongs to $\Rat\cup\set{-\infty,+\infty}$. 
\end{corollary}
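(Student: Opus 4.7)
The plan is to read off the corollary directly from the structure of the proof of Theorem \ref{thm:lp-max-poly}, so no new argument is really needed beyond chaining together the two claims established there with a standard fact about ordinary linear programming.

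First I would run Algorithm \ref{alg:1} (as used in the proof of Theorem \ref{thm:lp-max-poly}) on the instance $(P,S)$. There are two cases. If the algorithm reports non-solvability, then by definition $\sup(P,S) = -\infty$, and we are done since $-\infty \in \Rat \cup \set{-\infty,+\infty}$.

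Otherwise, let $P \cup \G$ denote the system present in the last iteration of the main loop, when solvability is reported. Combining Claim \ref{claim:proof1} and Claim \ref{claim:proof2} from the proof of Theorem \ref{thm:lp-max-poly}, we obtain
\[
\sup (P,S) \ = \ \sup (P \cup \G, S) \ = \ \sup (\li P \cup \G, S).
\]
Crucially, the right-hand side is the supremum of the linear objective function $S$ over the solution set of $\li P \cup \G$, which is an \emph{ordinary} (non-parametrised) linear program with integer coefficients. By the classical theory of linear programming (e.g.\ via the simplex method or LP duality, see \cite{PSbook82}), the supremum of such an ordinary linear program is either $-\infty$ (if infeasible), $+\infty$ (if unbounded), or attained at a vertex of the feasible polyhedron, whose coordinates are rational and yield a rational value of $S$. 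Hence $\sup (\li P \cup \G, S) \in \Rat \cup \set{-\infty,+\infty}$, and transitively so is $\sup(P,S)$.

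There is no real obstacle here; the work has already been done in Theorem \ref{thm:lp-max-poly}, which reduced the polynomially-parametrised supremum problem to an ordinary LP instance with integer data. The only point to verify is that when solvability is reported in the last iteration, $\li P \cup \G$ is genuinely an ordinary system (which it is, since head inequalities are non-parametrised by construction, and $\G$ consists of ordinary equalities by construction of the transformation step).
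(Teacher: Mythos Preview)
Your proposal is correct and follows essentially the same approach as the paper: the paper simply observes that by Claims \ref{claim:proof1} and \ref{claim:proof2}, the supremum (when not $\pm\infty$) equals the supremum of an ordinary linear program with integer data, and hence is rational. Your only slight slip is notational---you use $P$ both for the original instance and for the system in the last iteration---but the argument is identical.
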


\begin{slremark}
As illustrated in Example \ref{ex:poly} in Section \ref{sec:intro}, 
the objective function $S$ may not achieve its supremum over the {\aasol}s of $P$.
Once the supremum $s\in \Rat$ is computed, 
one can easily check if $S$ achieves its supremum, by adding to the system an equation
$S(x_1, \ldots, x_k) = s$ and checking if the system is still solvable.
\end{slremark}

\subsection{Reduction of \finineqmaxname$(\R)$ to  \allpolyineqmaxname}

%\slawek{check this section, it is new}

We only sketch the reduction as it amounts to a slight adaptation of the reduction
 of Section \ref{sec:red2poly}.
The input of \finineqmaxname$(\R)$ consists of a system $(\vr A, \vr t)$ and 
an integer vector $\vr s : C \tofs \Z$ representing
the objective function
\[
S(\vr x) = \innerprod{\vr s}{\vr x},
\]
and we ask for the supremum of values $S(\vr x)$, for $\vr x$ ranging over finitary solutions of $(\vr A, \vr t)$.
This value we denote as $\sup(\vr A, \vr t, \vr s)$.
%
%\piotrek{This looks odd, maybe we can introduce this fact in the text, or claim it and then say that additionally to to Lemma \ref{lem:wlog} we have to prove following Lemma.}
In addition to Lemma \ref{lem:wlog} we show (the proof is in Section \ref{sec:lp-decid-proof-proofs}):
% (in Appendix):
%
\begin{lemma} \label{lem:wlogmax}
W.l.o.g.~we may assume that $\vr s$ is equivariant.
\end{lemma}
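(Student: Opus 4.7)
The plan is to extend the construction of Lemma~\ref{lem:wlog} so that it acts simultaneously on the triple $(\vr A, \vr t, \vr s)$ rather than only on the pair $(\vr A, \vr t)$. After invoking Lemma~\ref{lem:wlog} we already have $\vr A$ and $\vr t$ equivariant; let $T \subseteqfin \A$ be a (finite) support of $\vr s$ with $k = \size T$. Since equivariant objects are $T$-supported for every $T$, the whole triple $(\vr A, \vr t, \vr s)$ is $T$-supported, and the same orbit-absorption technique that makes a $T$-supported system equivariant can be applied uniformly to the three objects at once.

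Concretely, I would enlarge the column set by replacing each equivariant orbit $\otu \A{m_i}$ in $C$ by $\otu \A{m_i+k}$, with the extra $k$ coordinates playing the role of a fresh, equivariant tagging of the atoms of $T$. The matrix $\vr A$ and the target $\vr t$ are transported to the enlarged index sets in the canonical equivariant way (this is essentially a duplication across the tag fiber, since $\vr A,\vr t$ were already equivariant); the objective vector $\vr s$ is transported by letting the tag coordinates take over the role previously played by $T$ inside each tuple. The result $(\vr A', \vr t', \vr s')$ has all three components equivariant.

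The second step is to verify that the transformation preserves the supremum. In one direction, every finitary solution $\vr x$ of $(\vr A, \vr t)$ lifts to a finitary solution $\vr x'$ of $(\vr A', \vr t')$ concentrated on the canonical fiber over the reference tag $(\alpha_1,\dots,\alpha_k)$ enumerating $T$, and a direct computation gives $\innerprod{\vr s'}{\vr x'} = \innerprod{\vr s}{\vr x}$. In the reverse direction, given any finitary solution $\vr x'$ of $(\vr A', \vr t')$, the averaging procedure of Lemma~\ref{lem:weaksol} (applied with respect to $T$) symmetrises it to a setwise-$T$-supported solution whose orbit-sums coincide with those of $\vr x'$; since $\vr s'$ is equivariant, Notation~\ref{not:orbsum} implies that $\innerprod{\vr s'}{\cdot}$ depends only on orbit-sums, and so the averaged solution has the same objective value as $\vr x'$. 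This shows $\sup(\vr A',\vr t',\vr s') = \sup(\vr A,\vr t,\vr s)$.

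The main obstacle is simply to check that the orbit-absorption machinery of Lemma~\ref{lem:wlog}, which is formulated as a transformation of row and column index sets together with their attached data, can accommodate an extra piece of data (the objective $\vr s$) without any structural change. Because the construction is entirely orbit-theoretic and treats every indexed datum uniformly, incorporating $\vr s$ adds no new conceptual difficulty, and the size blow-up remains exponential in atom dimension (polynomial for fixed atom dimension), exactly as in Lemma~\ref{lem:wlog}.
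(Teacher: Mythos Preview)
Your opening sentence and closing paragraph both have the right idea, and this is exactly what the paper does: the proof of Lemma~\ref{lem:wlog} in the appendix is carried out for the whole triple $(\vr A,\vr t,\vr s)$ at once, using the common support $S=\supp{\vr A,\vr t,\vr s}$. The construction produces a column-reindexing map $g$ (a quotient map, then a bijection induced by some $h:\A\to\A\setminus S$), and one simply sets $\spread{\vr s}=\vr s\circ g$ alongside $\spread{\vr A}=\vr A\circ(f,g)$ and $\spread{\vr t}=\vr t\circ f$; since $\vr s$ is $S$-supported, $\spread{\vr s}$ is equivariant for the same reason $\spread{\vr A}$ is, and the supremum is preserved by the same solution-correspondence argument (Lemmas~\ref{lem:canon S supp} and~\ref{lem:equivariant sys}). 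No tag coordinates are needed.

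Your middle paragraphs, however, describe a different construction, and that version has a gap. You first invoke Lemma~\ref{lem:wlog} as a black box (so that each column orbit is already some $\otu\A{m_i}$) and only afterwards append $k=\size T$ tag coordinates, passing to $\otu\A{m_i+k}$. But now $C$ contains tuples that meet $T$, and such a tuple $c$ cannot be extended by the reference tag $(\alpha_1,\dots,\alpha_k)$ inside $\otu\A{m_i+k}$, since the combined tuple would repeat an atom. Your lift $\vr x\mapsto\vr x'$ ``concentrated on the canonical fiber'' therefore silently drops $\vr x(c)$ whenever $c$ meets $T$, so neither $\innerprod{\vr s'}{\vr x'}=\innerprod{\vr s}{\vr x}$ nor $\vr A'\cdot\vr x'=\vr A\cdot\vr x$ holds in general; and you cannot first move $\vr x$ off $T$ by an automorphism without changing $\innerprod{\vr s}{\vr x}$, because only $T$-automorphisms preserve the latter. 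The reverse direction has the symmetric problem: Lemma~\ref{lem:weaksol} requires its input solution to be $T$-supported, which an arbitrary finitary $\vr x'$ need not be. The clean fix is to abandon the two-stage plan and run the construction of Lemma~\ref{lem:wlog} once on all three of $\vr A,\vr t,\vr s$ with their common support, exactly as your first sentence suggests.
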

We proceed by adapting the reduction of Section \ref{sec:red2poly}:
given an instance $(\vr A, \vr t, \vr s)$ of \finineqmaxname$(\R)$ we compute a monotonic instance
 $(P_2, S')$ of \polyineqmaxname, where
the finite system $P_2(n) = (\vr A_2(n), \vr t_2(n))$ of polynomially-parametrised inequalities 
is exactly as in Section \ref{sec:red2poly}, and
the objective function is
\begin{align} \label{eq:newobj}
S'(x_1, \ldots, x_k) = a_1 \cdot x_1 + \ldots + a_M \cdot x_M,
\end{align}
where $a_j = \orbval{\vr s}(C_j)$ for $j=1\ldots M$
(recall Notations \ref{not:orbval} and \ref{not:M}).
More concisely, the vector $\vr a = a_1 \ldots a_M$ is defined as
$\vr a = \orbval {\vr s}$.
%\piotrek{$\orbval {\vr s}$ is computed using division i think so it will be rational function not a polynomial here?, am i right?, but then i am not sure if we can speak about monotonicity.}
%Monotonicity of the instance $(P_2, S')$ is immediate due to Corollary \ref{cor:monotonic}.
%
%In other terms,
%as follows.
%As $\vr s$ is equivariant, it necessarily assigns the same value $\vr s(C_j)$
%to all elements of $C_j$.
%Assuming $\size T = n\geq 2d$ ($n\geq d$ is sufficient here),
%the finite setwise-$T$-orbit  $C_j= \otu T \ell$ contains $q_{0, \ell}(n)$ elements, 
%which allows us to define $p_j(n)$ as
%\[
%p_j(n) =  \vr s(C_j) \cdot q_{0, \ell}(n)
%\]
%(cf.~Claims \ref{claim:sum} and \ref{claim:polOC}, and Example \ref{ex:cont} in Section \ref{sec:red2poly}).
%\begin{claim} %\label{lem:instmon}
%The instance $(Q, S')$ is monotonic.
%\end{claim}
%
%\begin{proof}%[Proof of Lemma \ref{lem:instmon}]
%%Indeed, due to Lemmas \ref{lem:weakobj} and Lemma \ref{lem:givenTobj}.
%%\piotrek{This is its proof, there is nothing more to be said.}
%%
%Indeed, suppose $T \subseteq T'$ differ by one element, 
%and let $\size T = n$ (therefore $\size {T'} = n+1$).
%%\piotrek{we don't need $n$ it is better to write $|T|$ i think}.
%By  Lemma \ref{lem:givenTobj} the supremum $s_n$ is achieved on a finitary $T$-supported solution, which is also
%$T'$-supported. By Lemma \ref{lem:givenTobj} again, $s_{n+1}$ is the maximal value achieved on a finitary $T'$-supported
%solution, which implies 
%%\piotrek{$s_n \leq s_{\size{T'}}$}
%$s_n \leq s_{n+1}$.
%\end{proof}
%
We apply Lemmas \ref{lem:weaksol} and \ref{lem:givenTT}
%and Corollary \ref{cor:QP}
to obtain:
\begin{lemma}  \label{lem:givenTobj}
$\sup (\vr A, \vr t, \vr s) = \sup (P_2, S')$.
\end{lemma}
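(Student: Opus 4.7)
The plan is to establish the equality by showing both inequalities, relying on the correspondence $\vr x \leftrightarrow \orbsum{\vr x}$ between finitary setwise-supported solutions of $(\vr A, \vr t)$ and \aasol{}s of $P_2$ established in Lemmas \ref{lem:weaksol} and \ref{lem:givenTT}. The linchpin is the pointwise identity $S(\vr x) = S'(\orbsum{\vr x})$ for every finitary $\vr x : C \tofin \R$, which I would prove first. Since $\vr s$ is equivariant (Lemma \ref{lem:wlogmax}), by Lemma \ref{lem:1O}(i) it is constant on each equivariant orbit $\O_j \subseteq C$ with value $a_j = \orbval{\vr s}(\O_j)$, so
\[
S(\vr x) \ = \ \sum_{c \in C} \vr s(c) \, \vr x(c) \ = \ \sum_{j=1}^M a_j \sum_{c \in \O_j} \vr x(c) \ = \ \sum_{j=1}^M a_j \cdot \orbsum{\vr x}(j) \ = \ S'(\orbsum{\vr x}).
\]

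For the inequality $\sup(\vr A, \vr t, \vr s) \leq \sup(P_2, S')$, I would take any finitary solution $\vr x$ of $(\vr A, \vr t)$, let $T_0 = \supp{\vr x}$, and iteratively apply Corollary \ref{cor:weaksol} to produce a finitary setwise-$T$-supported solution $\vr x'$ with $\orbsum{\vr x'} = \orbsum{\vr x}$ for some $T \supseteq T_0$ with $|T| = n \geq 2d$. By Lemma \ref{lem:givenTT}, $\orbsum{\vr x'}$ is a solution of $P_2(n)$; by monotonicity (Lemma \ref{lem:monoto}), it is therefore an \aasol of $P_2$. Combining with the identity above yields
\[
S(\vr x) \ = \ S'(\orbsum{\vr x}) \ = \ S'(\orbsum{\vr x'}) \ \leq \ \sup(P_2, S'),
\]
and taking the supremum over $\vr x$ gives the desired inequality.

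For the reverse inequality $\sup(\vr A, \vr t, \vr s) \geq \sup(P_2, S')$, I would take any \aasol $\vr y$ of $P_2$, i.e.\ $\vr y$ is a solution of $P_2(n)$ for every $n \geq n_0$ with (w.l.o.g.)~$n_0 \geq d$. Fix $T \subseteqfin \A$ with $|T| = n_0$. By Lemma \ref{lem:Tk}, the finite setwise-$T$-orbits in $C$ are exactly $C_1, \ldots, C_M$, so I would define a finitary setwise-$T$-supported vector $\vr x : C \tofin \R$ by setting $\vr x(c) = \vr y(j)/|C_j|$ for $c \in C_j$ and $\vr x(c) = 0$ on all infinite setwise-$T$-orbits. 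Relation \eqref{eq:overbefore} guarantees $\orbsum{\vr x} = \vr y$, Lemma \ref{lem:givenTT} then gives that $\vr x$ is a solution of $(\vr A, \vr t)$, and the identity above yields $S(\vr x) = S'(\vr y)$. Hence $S'(\vr y) \leq \sup(\vr A, \vr t, \vr s)$, and taking the supremum over $\vr y$ finishes the proof.

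The main technical point is ensuring the construction of $\vr x$ from $\vr y$ is bona fide (finitary, setwise-$T$-supported, with the correct orbit-sum), and that the two boundary cases $\pm\infty$ are handled uniformly: if no finitary solution exists then no \aasol of $P_2$ exists either (via Lemma \ref{lem:givenTT} applied to any hypothetical $\vr y$), so both suprema equal $-\infty$; the unbounded case $+\infty$ is automatic from the bijection on objective values.
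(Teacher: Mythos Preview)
Your proof is correct and follows essentially the same route as the paper: establish the identity $S(\vr x) = S'(\orbsum{\vr x})$, then use Lemmas \ref{lem:weaksol}, \ref{lem:givenTT} and \ref{lem:monoto} to match finitary solutions with \aasol{}s of $P_2$ while preserving the objective value. One minor slip: in the $\leq$ direction you iterate Corollary \ref{cor:weaksol} starting from $\vr x$, but that corollary requires a \emph{setwise}-supported solution as input, which $\vr x$ need not be; the clean fix (and what the paper does) is to pick any $T \supseteq \supp{\vr x}$ with $|T| \geq 2d$ and apply Lemma \ref{lem:weaksol} once to obtain a setwise-$T$-supported solution with the same orbit-sum.
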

\begin{proof}%[Proof of Lemma \ref{lem:givenTobj}]
Let $\vr x : C\tofin\R$. By equivariance of $S$ and the definition of $S'$ we have the equality
$
S(\vr x) \ = \ S'(\orbsum{\vr x}),
$
that is,
the value of the objective function $S(\vr x)$ depends only on the orbit-sum vector 
$\orbsum{\vr x} : \orbits C \to \R$.
As Lemmas \ref{lem:weaksol} and \ref{lem:givenTT} preserve orbit-sum, we deduce that
for every $T\subseteqfin\A$ of size $\size T =n \geq 2d$,
the values of $S$ on finitary $T$-supported solutions of $(\vr A, \vr t)$ are the same as
the values of $S'$ on solutions of $P_2(n)$.
By Lemma \ref{lem:monoto}, the solutions of $P_2(n)$ for some $n\geq 2d$ are exactly the same as
the {\aasol}s of $P_2$.
%Finally, by Corollary \ref{cor:QP} the values of $S'$ on solutions of $P_2(n+2d)$ are the same
%as the values of $S'$ on solutions of $P_3(n)$.
In consequence, the two suprema are equal.
%the supremum of all values of $S$ on finitary $T$-supported real solutions of 
%$(\vr A, \vr t)$ is the same as the supremum of values of $S'$ on solutions of $Q(n)$
%for $n\in\Nat$.
%
% $(\vr A, \vr t, \vr s)$ is the same as the supremum of $(Q, S')$.
%
%we get the claim.
%the value 
%
%each value of $S$ on a finitary $T$-supported solution of $(\vr A, \vr t)$
%is also achieved on some finitary setwise-$T$-supported solution.
%%\piotrek{We have changed Lemma \ref{lem:weaksol}  so it covers this.}
%%%\end{claim}
%%%
%%Indeed, the averaging construction of $\vr y$ from $\vr x$ in \eqref{eq:defy} in the proof of Lemma \ref{lem:weaksol} 
%%preserves the value of the objective function,
%%$S(\vr y) = S(\vr x)$, due to equivariance of $S$.
%This allows us to restrict to finitely setwise-supported solutions only.
%
%We observe that the 
%the mutual transformations between finitary setwise-$T$-supported real solutions $\vr x$ of $(\vr A, \vr t)$ 
%and real solutions $\vr x'$ of $P'(n)$, used in the proof of Lemma \ref{lem:givenT},
%preserve the value of the objective functions:
%$S(\vr x) = S'(\vr x')$.
\end{proof}
%
%\noindent
%%(Proofs in Appendix.)
%This completes the reduction of \finineqmaxname$(\R)$ to monotonic  \polyineqmaxname.

%\arka{I think we can write shortly here how this wraps up the proof of theorem \ref{thm:lp-max}} 

\begin{slexample}
To illustrate the reduction, consider the modification of the system in Example \ref{ex:Kirchoff}:
% where the constraint \eqref{eq:K2} is changed to:
%
\begin{align}  \label{eq:K22}
\sum_{\a \in \A} \a \ \geq \  1
\qquad\qquad\qquad\qquad
\sum_{\b \in \A} \a\b \ - \ \a \ - \ 2 \cdot \sum_{\b \in \A} \b\a  \ \geq \ 0 \qquad\qquad (\a \in\A).
\end{align}
It enforces, for each vertex $\a\in\A$, the sum of values assigned to all outgoing edges
to be larger than \emph{double} the sum of values assigned to all ingoing edges, plus the value assigned
to the vertex $\a$.
The indexing sets $B=\A\cup\set{*}$ and $C=\A\cup\otu \A 2$ %\eqref{eq:BCex} 
and the shape of the matrix \eqref{eq:Kirchoff-matr} are the same.
We identify the singletons $\set{*} = \otu \A 0$.
We consider maximisation of triple the sum of values assigned to edges:
$S(\vr x) = \innerprod {\vr s}{\vr x}$, where $\vr s = 3\cdot \constvr 1 {\otu \A 2}$, or
$
S(\vr x) \ = \ 3 \cdot
\sum_{\a\b\in\otu\A 2} \vr x (\a\b).
$ 

According to Lemma \ref{lem:Tk},
the set $C$ includes exactly 2 finite setwise-$T$-orbits, namely
$T\subseteq \A$ and $\otu T 2 \subseteq \otu \A 2$, and therefore
the system computed by the reduction has 2 unknowns, $x_1$ and $x_2$.
By Lemma \ref{lem:STorbit}, 
for any nonempty $T \subseteqfin \A$,
the set $B$ includes 3 setwise-$T$-orbits,
namely $T$, $\A\setminus T$ and $\set{*}$,
and therefore the system $P_1$ computed in the first step has 3 inequalities:
\begin{align}
\label{eq:ukl}
\begin{aligned}
-x_1 \ - \, (n-1) \cdot x_2 & \ \geq \  0 \qquad\qquad && (T) \\
0 & \ \geq \  0 && (\A\setminus T) \\
n\cdot x_1  & \ \geq \ 1 && (\set{*})
\end{aligned}
\end{align}
%
% as shown in \eqref{eq:ukl} and \eqref{eq:ukltwo} below.
%
%
For instance, the coefficient $-(n-1)$ in the first inequality arises as:
\begin{align*}
 \vr A(U_\text{out}) \cdot \size{U_\text{out}(\a, \otu T 2)} \ + \
\vr A(U_\text{in}) \cdot  \size{U_\text{in}(\a, \otu T 2)}  \  = \  
 1 \cdot (n-1) -2 \cdot (n-1) \ = \ -(n-1)
\end{align*}
(cf.~Claim \ref{claim:sum}),
for some arbitrary $\a\in T$  and the following two orbits included in $\A \times \otu \A 2$:
\begin{align*}
& U_\text{out} = \setof{\!(\a, \a\b)\!}{\!\b\neq\a\!},
\ \ 
U_\text{in} = \setof{\!(\a, \b\a)\!}{\!\b\neq\a\!}.
\end{align*}
%
%We omit $\vr A(U_3) \cdot \size{U_3(\a\b)}$ above, as $A(U_3) = 0$ for the orbit
%%
%\begin{align*}
%& U_3 = \setof{\!(\a\b,\c)\!}{\!\a\b\in\otu \A 2, \ \c\notin\set{\a,\b}\!}.
%\end{align*}
%
Likewise, the coefficient $n$ in the last inequality arises as
$
\vr A(U) \cdot \size{O(*, T)} = 1 \cdot n = n,
$
for the orbit
$
U = \set{*}\times\A.
$
According to \eqref{eq:defP},
the system $P_2$ is obtained from \eqref{eq:ukl} by multiplying all occurrences of
$x_1$ by $\otun {(n-1)} 1  = n-1$, and by multiplying all right-hand sides by
$\otun n 2  = n(n-1)$ (the trivial second inequality is omitted):
\begin{align}
\label{eq:ukltwo}
\begin{aligned}
-(n-1)\cdot x_1 \ - \, (n-1) \cdot x_2 & \ \geq \  0  \\
n(n-1) \cdot x_1  & \ \geq \ n(n-1)
\end{aligned}
\end{align}
Finally, the objective function produced by the reduction, as in \eqref{eq:newobj}, is
$
S'(x_1, x_2) \ = \ \vr s({\otu \A 2}) \cdot x_2 \ = \   3 \cdot x_2.
$
%\[
%S'(x_2, x_1) \ = \ S({\otu \A 2}) \cdot q_{0,2}(n) \cdot x_2 \ = \  3\cdot n(n-1) \cdot x_2
%\]
It achieves $-3$ as its supremum,
%  large values, for every $n> 1$ grows, 
as the system \eqref{eq:ukltwo} is equivalent to the ordinary system (its head):
\[
x_1 \geq 1 \qquad\qquad x_2 \leq - x_1.
\]
%are only restricted by $x_2 \leq 4(n-1) x_1 \leq 4$.
%
%where $n = |T|$.
%Notice that the objective function becomes $\frac{n(n-1)x}{2}$.
%For every $n \geq 1$ clearly we can achieve the value $\frac{(n-1)}{2}$ of the objective function.
%This implies that we can achieve arbitrarily high value of the objective function.
For every $n\geq 2$, 
the optimal solution $x_1 = 1$, $x_2 = -1$ corresponds, via the constructions of Section \ref{sec:red2poly},
to a clique of $n$ vertices where each vertex is assigned $\frac 1 n$, 
and each edge is assigned 
$-\frac 1 {n(n-1)}$.
\end{slexample}

% !TEX root = lin-prog-main.tex

\section{Undecidability of integer solvability} \label{sec:ilp-undecid-proof}

\newcommand{\zeroinstr}{\text{\sc zero}}
\newcommand{\updinstr}{\text{\sc upd}}
We prove Theorem~\ref{thm:ilp-undecid} by showing undecidability of \finineqsolvname$(\Int)$.
We proceed by reduction from the reachability problem of counter machines.

We conveniently define a \emph{$d$-counter machine}
as a finite set of instructions $I$, where each instruction is a function
\[
i : \set{1 \ldots d} \to \Int \cup \set{\zeroinstr}
\]
that specifies, for each counter $k\in\set{1,\ldots, d}$, either the additive update of $k$ (if $i(k) \in \Int$)
or the zero-test of $k$ (if $i(k) = \zeroinstr$).
Configurations of $M$ are nonnegative vectors $c\in \Nat^d$, and
each instruction induces steps between configurations: $c \trans {i} c'$ if
$c'(k) = c(k) + i(k)$ whenever $i(k)\in\Nat$, and $c'(k) = c(k) = 0$ whenever $i(k)=\zeroinstr$.
A run of $M$ is defined as a finite sequence of steps
\begin{align} \label{eq:run}
c_0 \trans{i_1} c_1 \trans{i_2} \ldots \trans{i_n} c_n .
\end{align}
The reachability problem asks, given a machine $M$ and two its configurations, a source $c_0$ and
a target $c_f$, if $M$ admits a run from $c_0$ to $c_f$.
The problem is undecidable, as counter machines can easily simulate classical Minsky machines.%
\footnote{A $d$-counter machine resembles a vector addition system with zero tests.
A Minsky machine with $n$ states and $k$ counters can be simulated by an $(n+k)$-counter machine,
by encoding control states into additional counters.
%and has undecidable reachability already for $d=5$.
} 

For $k\in\set{1,\ldots,d}$ we denote by $\zeroinstr(k) = \setof{i\in I}{i(k)=\zeroinstr}$ 
the set of instructions that zero-test counter $k$, and
$\updinstr(k) = \setof{i\in I}{i(k)\in\Int}$ 
the set of instructions that update counter $k$.

Given a $d$-counter machine $M$ and two configurations $c_0, c_f$, 
we construct an orbit-finite system of inequalities $S = (\vr A, \vr t)$ 
such that $M$ admits a run from $c_0$ to $c_f$ if and only if 
$S$ has a finitary nonnegative integer solution.
(Nonegativeness is \emph{enforced} by adding inequalities  $x\geq 0$ for all unknowns $x$.)
We describe construction of $S$ gradually, on the way giving intuitive explanations 
and sketching the proof of the if direction.

The system $S$ has unknowns $e_{\a \b}$ indexed by pairs of distinct atoms $\a\b\in\otu {\A} 2$, and
contains the following inequalities:
\begin{align}\label{eq:graph constraint}
e_{\a \b} &\leq 1 \qquad\qquad (\a\b\in\otu \A 2).
\end{align}
Therefore, in every solution the unknowns $e_{\a\b}$ define a directed graph $G$, where atoms are vertices, 
$e_{\a\b}=1$ encodes an edge from $\a$ to $\b$ and $e_{\a\b}=0$ encodes a non-edge.
In case of a finitary solution, the graph $G$ is finite (when atoms with no adjacent edges are dropped).
Let us fix two distinct atoms $\iota, \zeta\in \A$.
The system $S$ contains the following further equations and inequalities:
\begin{align}\label{eq:degree constraint one}
\sum_{\b\neq \a} e_{\b\a} \ =  \ \sum_{\b\neq \a} e_{\a\b} \leq 1 \qquad\qquad
(\a\in\A\setminus\set{\iota,\zeta}) 
\end{align}
enforcing that in-degree of every vertex, except for $\iota$ and $\zeta$, is the same as its out-degree, and equal 0 or 1, and also
\begin{align}\label{eq:degree constraint two}
\sum_{\b\neq \iota} e_{\b\iota} = 0 \qquad\qquad \sum_{\b\neq \iota} e_{\iota\b} = 1 \qquad\qquad
\sum_{\b\neq \zeta} e_{\b\zeta} = 1 \qquad\qquad \sum_{\b\neq \zeta} e_{\zeta\b} = 0 
\end{align}
enforcing that in-degree of $\iota$ and out-degree of $\zeta$ are 0, 
while out-degree of $\iota$ and in-degree of $\zeta$ are 1.
Thus atoms split into three categories: inner nodes (with in- and out-degree equal 1), 
end nodes ($\iota$ and $\zeta$) and non-nodes (with in- and out-degree equal 0).
Therefore, the graph $G$ defined by a finitary solution consists of a directed path from $\iota$ to $\zeta$ 
plus a number of vertex disjoint directed cycles.
The path will be used below to encode a run of $M$:
each edge, intuitively speaking, will be assigned a configuration of $M$, while each inner node will be assigned an instruction of $M$.

The system $S$ has also unknowns $t_{i \alpha}$ indexed by instructions $i\in I$ of $M$ and atoms
$\a\in\A$, and the following equations:
\begin{align}\label{eq:instr constraint}
\sum_{i \in I} t_{i \a} &= \sum_{\b\neq \a} e_{\a\b} \qquad\qquad (\a\in\A\setminus\set{\iota, \zeta}).
\end{align}
Therefore in every finitary solution, for each inner node $\a$ of the above-defined graph $G$,
there is exactly one instruction $i\in I$ such that $t_{i \a}$ equals 1
(intuitively, this instruction $i$ is \emph{assigned} to node $\a$), 
and $t_{i\a}$ equals 0 for all other instructions.
(This applies to \emph{all} inner nodes of $G$, both those on the path as well as those on cycles.)
For non-nodes $\a$, all $t_{i\a}$ are necessarily equal 0.
Note that the values of unknowns $t_{i\iota}$ and $t_{i\zeta}$ are unrestricted, as they are irrelevant.

Finally, the system $S$ contains unknowns $c_{\a\b\c k}$ indexed by $\a\b\c\in\otu \A 3$ and 
$k\in\setto d$.
The following inequalities:
\begin{align}\label{eq:cleqe}
c_{\a\b\c k} & \leq e_{\a\b} \qquad\qquad (\a\b\c\in\otu \A 3, k \in\set{1\ldots d})
\end{align}
enforce that, whatever atom $\c$ is, the value of unknown $c_{\a\b\c k}$ may be 0 or 1 when $\a\b$ is an edge
(i.e., when $e_{\a\b}=1$), but $c_{\a\b\c k}$ is forcedly 0 when $\a\b$ is a non-edge (i.e., when $e_{\a\b}=0$).
The underlying intuition is that for each $k\in\setto d$, we represent the $k$th coordinate of the configuration \emph{assigned} to the edge $\a\b$ by the 
(necessarily finite) sum
\begin{align} \label{eq:encconf}
\sum_{\c\notin\set{\a, \b}} c_{\a\b\c k}.
\end{align}
(In particular, configurations assigned to non-edges are necessarily zero on all coordinates.)
In agreement with this intuition,
we add to $S$ the requirement that the configuration assigned to the edge 
outgoing from $\iota$ is the source $c_0$,
and the configuration assigned to the edge incoming to $\zeta$ is the target $c_f$:
\begin{align*}%\label{eq:init final}
\sum_{\b,\c\neq \iota} c_{\iota\b\c k}  = c_0(k)  \qquad\qquad  %(k \in \set{1,\ldots, d}) \\ % \notag \\
\sum_{\b,\c\neq \zeta} c_{\b\zeta\c k}  = c_f(k)  \qquad\qquad (k \in \set{1,\ldots, d}).
\end{align*}
%\arka{Since $k$ is not an atom, I think we should separate $\a,\b,\c$ and $k$ and write something like $c_{(\a\b\c,k)}$.}
%
Furthermore, in order to enforce correctness of encoding of a run of $M$,
we add to $S$ equations that relate, intuitively speaking, two consecutive configurations.
Recall that, due to \eqref{eq:degree constraint one}--\eqref{eq:degree constraint two} 
and \eqref{eq:cleqe}, for every $\a\in\A\setminus\set{\iota,\zeta}$, 
unknowns $c_{\b\a\c k}$ may be positive for at most one 
$\b\in\A$; likewise unknowns $c_{\a\b \c k}$.
We add to $S$ the following equations:
\begin{align}\label{eq:update constraint}
\sum_{\b,\c\neq \a} c_{\b\a\c k} \ + \sum_{i \in \updinstr(k)} i(k) \cdot t_{i \a}
\ = \sum_{\b,\c\neq \a} c_{\a\b\c k} %\notag
\qquad\qquad
(\a\in\A\setminus\set{\iota, \zeta}, k \in \set{1\ldots d}).
\end{align}
%\piotrek{I would add here some explanation, because it can be confusing that we sum all configurations with $\a$ on the second component on the left hand side.
%Recall that as variables $e_{\d\a}$ defines a graph with in-degree $1$ there is only one $\b$ for which variables $c_{\b\a\c k}$ can be not equal $0$ and only 
%one $\b$ for which variables $c_{\a\b\c k}$ can be not equal $0$.} 
%
These equalities say that for every inner node or non-node $\a$ (i.e., every atom except for the end nodes $\iota$ and $\zeta$),
on every coordinate $k$, the configuration incoming to $\a$ differs from the configuration
outgoing from $\a$ exactly by the sum
\[
\sum_{i \in \updinstr(k)} i(k) \cdot t_{i \a}
\]
ranging over those instructions $i$ of $M$ that update counter $k$.
%We can safely further restrict the range of $i$ to those instructions for which $t_{i \a} \neq 0$.
Remembering that for each $\a$ there is at most one instruction $i$ satisfying $t_{i \a} \neq 0$,
we get that the configurations differ on coordinate $k$ by exactly $i(k)$ 
(if $i$ updates counter $k$) or the configuration are equal on coordinate $k$
(if $i$ zero-tests counter $k$, or there is no instruction $i$ such that $t_{i \a} \neq 0$).

In order to deal with zero tests, we add to $S$ not just the inequalities~\eqref{eq:cleqe},
but the following strengthening thereof:
\begin{align}\label{eq:zero test}
c_{\a\b\c k} + \!\!\!\! \sum_{i\in\zeroinstr(k)} \!\!\! t_{i\a} & \leq e_{\a\b} 
\qquad\qquad (\a\b\c\in\otu \A 3, k \in\set{1\ldots d}).
\end{align}
In consequence, for every edge $\a\b$, if the instruction $i$ assigned to $\a$ updates counter $k$,
%or no instruction is executed at $\a$, 
\eqref{eq:zero test} does not restrict further the  $k$th coordinate of the configuration assigned to $\a\b$.
But if the instruction $i$ assigned to $\a$ zero-tests counter $k$, the sum
\[
\sum_{i\in\zeroinstr(k)} t_{i\a}
\]
equals $1$ and therefore the $k$th coordinate of the configuration assigned to $\a\b$, 
encoded by~\eqref{eq:encconf}, is necessarily 0
(the same applies also to the configuration incoming to $\a$, due to inequalities \eqref{eq:update constraint} below). 
The above considerations apply to \emph{all} edges of $G$, both those on the path as well as those on cycles.
As a further consequence, for a  non-edge $\a\b$, the configuration assigned at $\a\b$, encoded by~\eqref{eq:encconf}, 
is necessarily the zero configuration.

The construction of $S$ is thus completed, and it remains to argue towards its correctness:
\begin{lemma} \label{lem:undecid-corr}
$M$ admits a run from $c_0$ to $c_f$ if and only if 
$S$ has a finitary nonnegative integer solution.
\end{lemma}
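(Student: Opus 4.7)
My plan is to prove both directions of the biconditional by a direct translation between runs of $M$ and finitary nonnegative integer solutions of $S$, following the intuitive recipe laid out in the construction.

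For the forward direction ($\Rightarrow$), starting from a run $c_0 \trans{i_1} c_1 \trans{i_2} \cdots \trans{i_n} c_n$ with $c_n = c_f$, I first pick $n$ pairwise distinct atoms $\alpha_1, \ldots, \alpha_n \in \A \setminus \set{\iota, \zeta}$ to serve as inner nodes of a directed path $\iota \to \alpha_1 \to \cdots \to \alpha_n \to \zeta$. I set $e_{\a\b} = 1$ on the $n+1$ edges of this path and $0$ elsewhere; set $t_{i_j, \alpha_j} = 1$ with all other $t_{i\a}$ equal to $0$; and for each path-edge $\a\b$ carrying the intended configuration $c$ (so $\iota\alpha_1 \mapsto c_0$, $\alpha_j\alpha_{j+1} \mapsto c_j$, $\alpha_n\zeta \mapsto c_n$) I select, for each counter $k$, exactly $c(k)$ fresh atoms $\c$ disjoint from the endpoints and from previously used atoms, and set $c_{\a\b\c k} = 1$ on those choices, all other $c_{\a\b\c k} = 0$. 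Since the whole encoding needs only finitely many atoms, this fits into $\A$. Verifying constraints~\eqref{eq:graph constraint}--\eqref{eq:zero test} is then a routine case analysis: the degree constraints follow from the path shape, \eqref{eq:instr constraint} is immediate, \eqref{eq:update constraint} encodes precisely the definition of a step of $M$ (using that $\alpha_j$ has a unique in-edge and a unique out-edge), and \eqref{eq:zero test} reflects the fact that any counter $k$ zero-tested by $i_j$ is indeed $0$ in the run's configurations.

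For the reverse direction ($\Leftarrow$), I start with a finitary nonnegative integer solution and unpack it as the paper text suggests. Constraints \eqref{eq:graph constraint}--\eqref{eq:degree constraint two} force the $e_{\a\b}$ to encode a finite directed graph $G$ in which every vertex has in- and out-degree $1$ except $\iota$ (out-degree $1$, in-degree $0$) and $\zeta$ (in-degree $1$, out-degree $0$); a standard graph-theoretic argument then decomposes $G$ into a unique $\iota$-to-$\zeta$ directed path $P = (\iota, \alpha_1, \ldots, \alpha_m, \zeta)$ plus vertex-disjoint directed cycles, and the cycles can be discarded. Equation \eqref{eq:instr constraint} then picks, for each inner node $\alpha_j$ of $P$, a unique instruction $i_j \in I$ with $t_{i_j, \alpha_j} = 1$. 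Reading off configurations from the edges of $P$ by the sum \eqref{eq:encconf} yields nonnegative integer vectors $d_0, d_1, \ldots, d_{m+1}$ with $d_0 = c_0$ and $d_{m+1} = c_f$ from the boundary equations. Equation \eqref{eq:update constraint} at $\alpha_j$, using that $\alpha_j$ has a unique in-edge and a unique out-edge, gives $d_j(k) - d_{j-1}(k) = i_j(k)$ whenever $i_j$ updates counter $k$; inequality \eqref{eq:zero test} applied to the outgoing edge $\alpha_j\alpha_{j+1}$ forces $d_j(k) = 0$ whenever $i_j$ zero-tests $k$, and then \eqref{eq:update constraint} propagates $d_{j-1}(k) = 0$. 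These are exactly the defining conditions of the step $d_{j-1} \trans{i_j} d_j$, so $d_0 \trans{i_1} d_1 \cdots \trans{i_{m+1}} d_{m+1} = c_f$ is a valid run of $M$.

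The step I expect to be the main obstacle is the combinatorial bookkeeping in the reverse direction: showing that the degree constraints together with finitariness really force $G$ to split cleanly into a single $\iota$-to-$\zeta$ path plus disjoint cycles requires careful use of the fact that almost all atoms are non-nodes (so that the in/out-degree equations hold trivially off a finite set), and one must verify that the non-path content of $G$, the cycle edges, and the $c_{\a\b\c k}$ values on non-edges do not perturb the configurations read off along $P$. The crucial lever here is that \eqref{eq:cleqe} (embedded in \eqref{eq:zero test}) forces $c_{\a\b\c k} = 0$ whenever $e_{\a\b} = 0$, making \eqref{eq:update constraint} degenerate to $0=0$ at non-nodes and making the sum \eqref{eq:encconf} well-defined and supported only on actual edges. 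Once this bookkeeping is pinned down, the equivalence is a direct translation, and undecidability of reachability for counter machines with zero tests transfers to \finineqsolvname$(\Int)$.
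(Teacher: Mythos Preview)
Your proposal is correct and follows essentially the same approach as the paper's own proof: in both directions the translation is exactly the one suggested by the construction of $S$ (path encodes a run, inner nodes carry instructions, edge-sums~\eqref{eq:encconf} encode configurations), and the reverse direction discards the cycle components of $G$ just as the paper does. Your write-up is in fact more detailed than the paper's sketch, making explicit the role of~\eqref{eq:cleqe} in localising the sums of~\eqref{eq:update constraint} to the unique incident edges; the only blemish is a harmless off-by-one in the final run, which should read $d_0 \trans{i_1} d_1 \cdots \trans{i_m} d_m = c_f$ (there are $m$ inner nodes and $m{+}1$ edges, hence $m{+}1$ configurations $d_0,\ldots,d_m$).
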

\begin{proof}%[Proof of Lemma \ref{lem:undecid-corr}]
For the `if' direction, given a finitary nonnegative integer solution of $S$, we consider
the graph $G$ determined by values of unknowns $e_{\a\b}$, as discussed in the course of construction, 
consisting of inner nodes and two end nodes,
and having the form of a finite directed path plus (possibly) a number of directed cycles. 
By the construction of $S$, each edge of $G$ has assigned a configuration of $M$,
and each inner node has assigned an instruction of $M$, so that the configuration on the edge outgoing
from an inner node is exactly the result of executing its instruction on the configuration assigned to the incoming edge.
(As above, this applies to \emph{all} inner nodes and edges of $G$, both those on the path as well as those on cycles.)
Ignoring the cycles of $G$, we conclude that the sequence of configurations and
instructions along the path of $G$ is a run of $M$ from $c_0$ to $c_f$.

For the `only if' direction, given a run of $M$ from $c_0$ to $c_n$ as in~\eqref{eq:run},
one constructs a solution of $S$ in the form of a sole path involving end nodes $\a_0 = \iota, \a_{n+1} = \zeta$,
$n$ inner nodes $\a_1, \ldots, \a_n$, and $n+1$ edges $\a_j \a_{j+1}$. 
Thus unknowns $e_{\a_j, \a_{j+1}}$ are equal 1.
The values of unknowns $t_{i \a_j}$ are determined by instructions $i_j$ used in the run,
and the values of the unknowns $c_{\a_j \a_{j+1} \c k}$, for sufficiently many fresh atoms $\c$, are determined
by configurations $c_j$. 
All other unknowns are equal 0.
\end{proof}

\begin{slremark}
The proof does not adapt to \finnonnegsolvname$(\Z)$.
Indeed, the standard way of transforming inequalities into equations involves 
adding an infinite set of additional unknowns, that might be all non-zero.
\end{slremark}

% !TEX root = lin-prog-main.tex

\section{Conclusions} \label{sec:conc}

As two main contributions, we show two contrasting results:
decidability of orbit-finite linear programming, and 
undecidability of orbit-finite integer linear programming.
For decidability, we invent a novel concept of setwise-$T$-orbit, and provide a reduction
to a finite but polynomially-parametrised linear programming.
In addition to the decidability of the latter problem,
%relying on encoding into real arithmetic\piotrek{Do we rely on this encoding?}, 
we show that it can be solved in \EXPTIME, and even in \PTIME for every fixed atom dimension. 
We thus match, in case of fixed atom dimension, the complexity of classical linear programming.

We consider non-strict inequalities for presentation only, 
and our decision procedures
may be straightforwardly adapted to to mixed systems of strict and non-strict inequalities.

We leave a number of intriguing open questions, all of them except the last one referring to
linear programming:

\begin{question}
In this paper we only consider finitely supported solutions.
We do not know the decidability status of %orbit-finite (integer) 
linear programming when this restriction is dropped
(like in \cite{KKOT15}).
It is decidable for finitary inequalities, where existence of a solution implies existence of an equivariant one \cite{szymon-oral}.
\end{question}

\begin{question}
We exclusively consider equality atoms, and
extension to richer structures seems highly non-trivial.
In the important case of \emph{ordered atoms},
we are currently only able to prove decidability of 
\solvname$(\F)$, for any commutative ring $\F$.
\end{question}

\begin{question}
It is very natural to ask if the classical duality of linear programs extends to the orbit-finite setting.
According to our initial observations this is indeed the case, under the restriction that either 
(v) vertical vectors of a matrix and the target vector are finitary, or
(h) horizontal vectors of a matrix and objective function are finitary. 
Whenever the primary program satisfies one of the conditions (v), (h), the dual one satisfies the other one.
\end{question}

\begin{question}
Solution sets of orbit-finite systems are not always finitely generated.
Therefore an interesting question arises if one can compute a representation of solution sets
that would enable testing for equality or inclusion of such sets?
For instance, the solution set of the system of inequalities \,
$\sum_{\a \in \A\setminus\set{\b}} \a  \ \geq \ \b$ \ \ $(\b \in \A)$, 
in matrix form
\begin{align*} 
%& \qquad \begin{matrix} \ \cdots &  \a  & \cdots\ \end{matrix} \\
& 
%\begin{matrix}
%\vdots \\
%\b \\
%\vdots \\
%\end{matrix}
%\
\begin{bmatrix}
\ -1 & 1 & 1 & \cdots \ \\
\ 1 & -1 & 1 & \cdots \ \\
\ 1 & 1 & -1 & \cdots \ \\
 \vdots     & \vdots & \vdots & \ddots \   
\end{bmatrix}
\cdot
%\begin{bmatrix}
%\vdots \\
%\a \\
%\vdots \\
%\\
\ \vr x %\\
%\\
%\end{bmatrix}
\quad
\geq
\quad
\begin{bmatrix}
\, 0 \, \\
\, 0 \, \\
\, 0 \, \\
 \vdots 
\end{bmatrix}
\end{align*}
is not equal to the cone generated by (=non-negative linear combinations of) an orbit-finite set.
%
%Indeed, it is easy to see that solutions are exactly all those finitary vectors $\vr x$ where 
%$\sum_{\a\in\A} \vr x(a) \geq 2 \cdot \max_{a\in\A} \vr x$.
%Therefore the values 
%
\end{question}

\begin{question}
We would be happy to know if our general \EXPTIME upper complexity bound is tight.
\end{question}

\begin{question}
Concerning integer linear programming,
an intriguing research task is to identify the decidability borderline.
For instance, we suspect decidability in case when all inequalities are finitary.
The reduction proving undecidability produces a system of atom dimension $3$, and it is unclear if the dimension can be lowered
to $2$.
In case of atom dimension 1 we suspect decidability (along the lines of \cite{HLT17}).
\end{question}

%%
%% The acknowledgments section is defined using the "acks" environment
%% (and NOT an unnumbered section). This ensures the proper
%% identification of the section in the article metadata, and the
%% consistent spelling of the heading.

\begin{acks}
We are grateful to Miko{\l}aj Boja{\'n}czyk, Asia Fijalkow, Lorenzo Clemente and Damian Niwi{\'n}ski 
for fruitful discussions.

We acknowledge partial support of 
NCN Opus grant 2019/35/B/ST6/02322 (Arka Ghosh and S{\l}awomir Lasota),
of NCN Preludium grant 2022/45/N/ST6/03242 (Arka Ghosh),
and of
ERC Starting grant INFSYS, agreement no.~950398 (Piotr Hofman and S{\l}awomir Lasota).
\end{acks}

%%
%% The next two lines define the bibliography style to be used, and
%% the bibliography file.
\bibliographystyle{ACM-Reference-Format}
\bibliography{acmart,bib}

%%% -*-BibTeX-*-
%%% Do NOT edit. File created by BibTeX with style
%%% ACM-Reference-Format-Journals [18-Jan-2012].

\begin{thebibliography}{35}

%%% ====================================================================
%%% NOTE TO THE USER: you can override these defaults by providing
%%% customized versions of any of these macros before the \bibliography
%%% command.  Each of them MUST provide its own final punctuation,
%%% except for \shownote{}, \showDOI{}, and \showURL{}.  The latter two
%%% do not use final punctuation, in order to avoid confusing it with
%%% the Web address.
%%%
%%% To suppress output of a particular field, define its macro to expand
%%% to an empty string, or better, \unskip, like this:
%%%
%%% \newcommand{\showDOI}[1]{\unskip}   % LaTeX syntax
%%%
%%% \def \showDOI #1{\unskip}           % plain TeX syntax
%%%
%%% ====================================================================

\ifx \showCODEN    \undefined \def \showCODEN     #1{\unskip}     \fi
\ifx \showDOI      \undefined \def \showDOI       #1{#1}\fi
\ifx \showISBNx    \undefined \def \showISBNx     #1{\unskip}     \fi
\ifx \showISBNxiii \undefined \def \showISBNxiii  #1{\unskip}     \fi
\ifx \showISSN     \undefined \def \showISSN      #1{\unskip}     \fi
\ifx \showLCCN     \undefined \def \showLCCN      #1{\unskip}     \fi
\ifx \shownote     \undefined \def \shownote      #1{#1}          \fi
\ifx \showarticletitle \undefined \def \showarticletitle #1{#1}   \fi
\ifx \showURL      \undefined \def \showURL       {\relax}        \fi
% The following commands are used for tagged output and should be
% invisible to TeX
\providecommand\bibfield[2]{#2}
\providecommand\bibinfo[2]{#2}
\providecommand\natexlab[1]{#1}
\providecommand\showeprint[2][]{arXiv:#2}

\bibitem[Atserias et~al\mbox{.}(2021)]%
        {Asia}
\bibfield{author}{\bibinfo{person}{Albert Atserias}, \bibinfo{person}{Anuj
  Dawar}, {and} \bibinfo{person}{Joanna Fijalkow}.}
  \bibinfo{year}{2021}\natexlab{}.
\newblock \showarticletitle{On the Power of Symmetric Linear Programs}.
\newblock \bibinfo{journal}{\emph{J. {ACM}}} \bibinfo{volume}{68},
  \bibinfo{number}{4} (\bibinfo{year}{2021}), \bibinfo{pages}{26:1--26:35}.
\newblock


\bibitem[Basu et~al\mbox{.}(2006)]%
        {ra-book}
\bibfield{author}{\bibinfo{person}{Saugata Basu}, \bibinfo{person}{Richard
  Pollack}, {and} \bibinfo{person}{Marie-Francoise Roy}.}
  \bibinfo{year}{2006}\natexlab{}.
\newblock \bibinfo{booktitle}{\emph{Algorithms in Real Algebraic Geometry}}.
\newblock \bibinfo{publisher}{Springer-Verlag}.
\newblock


\bibitem[Ben{-}Or et~al\mbox{.}(1986)]%
        {Ben-OrKR86}
\bibfield{author}{\bibinfo{person}{Michael Ben{-}Or}, \bibinfo{person}{Dexter
  Kozen}, {and} \bibinfo{person}{John~H. Reif}.}
  \bibinfo{year}{1986}\natexlab{}.
\newblock \showarticletitle{The Complexity of Elementary Algebra and Geometry}.
\newblock \bibinfo{journal}{\emph{J. Comput. Syst. Sci.}} \bibinfo{volume}{32},
  \bibinfo{number}{2} (\bibinfo{year}{1986}), \bibinfo{pages}{251--264}.
\newblock


\bibitem[Boja{\'n}czyk(2011)]%
        {datamonoids}
\bibfield{author}{\bibinfo{person}{Miko{\l}aj Boja{\'n}czyk}.}
  \bibinfo{year}{2011}\natexlab{}.
\newblock \showarticletitle{Data Monoids}. In
  \bibinfo{booktitle}{\emph{Proc.~{STACS} 2011}}
  \emph{(\bibinfo{series}{LIPIcs}, Vol.~\bibinfo{volume}{9})}.
  \bibinfo{pages}{105--116}.
\newblock


\bibitem[Boja{\'n}czyk(2019)]%
        {atombook}
\bibfield{author}{\bibinfo{person}{Miko{\l}aj Boja{\'n}czyk}.}
  \bibinfo{year}{2019}\natexlab{}.
\newblock \bibinfo{title}{Slightly Infinite Sets}.  (\bibinfo{year}{2019}).
\newblock
\urldef\tempurl%
\url{https://www.mimuw.edu.pl/~bojan/paper/atom-book}
\showURL{%
\tempurl}


\bibitem[Boja{\'n}czyk et~al\mbox{.}(2011)]%
        {lics11}
\bibfield{author}{\bibinfo{person}{Miko{\l}aj Boja{\'n}czyk},
  \bibinfo{person}{Bartek Klin}, {and} \bibinfo{person}{S{\l}awomir Lasota}.}
  \bibinfo{year}{2011}\natexlab{}.
\newblock \showarticletitle{Automata with Group Actions}. In
  \bibinfo{booktitle}{\emph{Proc.~{LICS} 2011}}. \bibinfo{pages}{355--364}.
\newblock


\bibitem[Boja{\'n}czyk et~al\mbox{.}(2014)]%
        {lmcs14}
\bibfield{author}{\bibinfo{person}{Miko{\l}aj Boja{\'n}czyk},
  \bibinfo{person}{Bartek Klin}, {and} \bibinfo{person}{S{\l}awomir Lasota}.}
  \bibinfo{year}{2014}\natexlab{}.
\newblock \showarticletitle{Automata theory in nominal sets}.
\newblock \bibinfo{journal}{\emph{Log. Methods Comput. Sci.}}
  \bibinfo{volume}{10}, \bibinfo{number}{3} (\bibinfo{year}{2014}).
\newblock


\bibitem[Boja{\'n}czyk et~al\mbox{.}(2013)]%
        {BKLT13}
\bibfield{author}{\bibinfo{person}{Miko{\l}aj Boja{\'n}czyk},
  \bibinfo{person}{Bartek Klin}, \bibinfo{person}{S{\l}awomir Lasota}, {and}
  \bibinfo{person}{Szymon Toru{\'n}czyk}.} \bibinfo{year}{2013}\natexlab{}.
\newblock \showarticletitle{Turing Machines with Atoms}. In
  \bibinfo{booktitle}{\emph{Proc. LICS'13}}. \bibinfo{pages}{183--192}.
\newblock


\bibitem[Boja{\'n}czyk et~al\mbox{.}(2021)]%
        {BKM21}
\bibfield{author}{\bibinfo{person}{Miko{\l}aj Boja{\'n}czyk},
  \bibinfo{person}{Bartek Klin}, {and} \bibinfo{person}{Joshua Moerman}.}
  \bibinfo{year}{2021}\natexlab{}.
\newblock \showarticletitle{Orbit-Finite-Dimensional Vector Spaces and Weighted
  Register Automata}. In \bibinfo{booktitle}{\emph{Proc.~{LICS}}}.
  \bibinfo{publisher}{{IEEE}}, \bibinfo{pages}{1--13}.
\newblock


\bibitem[Boja{\'n}czyk and Toru{\'n}czyk(2012)]%
        {program-atoms}
\bibfield{author}{\bibinfo{person}{Miko{\l}aj Boja{\'n}czyk} {and}
  \bibinfo{person}{Szymon Toru{\'n}czyk}.} \bibinfo{year}{2012}\natexlab{}.
\newblock \showarticletitle{{Imperative Programming in Sets with Atoms}}. In
  \bibinfo{booktitle}{\emph{Proc.~FSTTCS 2012}}, Vol.~\bibinfo{volume}{18}.
  \bibinfo{pages}{4--15}.
\newblock


\bibitem[Boja{\'n}czyk and Toru{\'n}czyk(2018)]%
        {BT18}
\bibfield{author}{\bibinfo{person}{Mikolaj Boja{\'n}czyk} {and}
  \bibinfo{person}{Szymon Toru{\'n}czyk}.} \bibinfo{year}{2018}\natexlab{}.
\newblock \showarticletitle{On computability and tractability for infinite
  sets}. In \bibinfo{booktitle}{\emph{Proc.~{LICS} 2018}},
  \bibfield{editor}{\bibinfo{person}{Anuj Dawar} {and} \bibinfo{person}{Erich
  Gr{\"{a}}del}} (Eds.). \bibinfo{publisher}{{ACM}}, \bibinfo{pages}{145--154}.
\newblock


\bibitem[Clemente and Lasota(2015)]%
        {CL15}
\bibfield{author}{\bibinfo{person}{Lorenzo Clemente} {and}
  \bibinfo{person}{S{\l}awomir Lasota}.} \bibinfo{year}{2015}\natexlab{}.
\newblock \showarticletitle{Reachability Analysis of First-order Definable
  Pushdown Systems}. In \bibinfo{booktitle}{\emph{Proc.~{CSL} 2015}}
  \emph{(\bibinfo{series}{LIPIcs}, Vol.~\bibinfo{volume}{41})},
  \bibfield{editor}{\bibinfo{person}{Stephan Kreutzer}} (Ed.).
  \bibinfo{pages}{244--259}.
\newblock


\bibitem[Colcombet(2015)]%
        {Colcombet15}
\bibfield{author}{\bibinfo{person}{Thomas Colcombet}.}
  \bibinfo{year}{2015}\natexlab{}.
\newblock \showarticletitle{Unambiguity in Automata Theory}. In
  \bibinfo{booktitle}{\emph{Proc.~{DCFS} 2015}} \emph{(\bibinfo{series}{LNCS},
  Vol.~\bibinfo{volume}{9118})}. \bibinfo{publisher}{Springer},
  \bibinfo{pages}{3--18}.
\newblock


\bibitem[Cormen et~al\mbox{.}(2009)]%
        {Cormenbook}
\bibfield{author}{\bibinfo{person}{Thomas~H. Cormen},
  \bibinfo{person}{Charles~E. Leiserson}, \bibinfo{person}{Ronald~L. Rivest},
  {and} \bibinfo{person}{Clifford Stein}.} \bibinfo{year}{2009}\natexlab{}.
\newblock \bibinfo{booktitle}{\emph{Introduction to Algorithms, 3rd Edition}}.
\newblock \bibinfo{publisher}{{MIT} Press}.
\newblock


\bibitem[Czerwi{\'n}ski and Orlikowski(2022)]%
        {CO}
\bibfield{author}{\bibinfo{person}{Wojciech Czerwi{\'n}ski} {and}
  \bibinfo{person}{{\L}ukasz Orlikowski}.} \bibinfo{year}{2022}\natexlab{}.
\newblock \showarticletitle{Reachability in Vector Addition Systems is
  {Ackermann}-complete}. In \bibinfo{booktitle}{\emph{Proc. {FOCS} 2021}}.
  \bibinfo{publisher}{{IEEE}}, \bibinfo{pages}{1229--1240}.
\newblock


\bibitem[Ghosh et~al\mbox{.}(2022)]%
        {GHL22}
\bibfield{author}{\bibinfo{person}{Arka Ghosh}, \bibinfo{person}{Piotr Hofman},
  {and} \bibinfo{person}{S{\l}awomir Lasota}.} \bibinfo{year}{2022}\natexlab{}.
\newblock \showarticletitle{Solvability of orbit-finite systems of linear
  equations}. In \bibinfo{booktitle}{\emph{Proc. {LICS}'22}}.
  \bibinfo{publisher}{{ACM}}, \bibinfo{pages}{11:1--11:13}.
\newblock


\bibitem[Ghosh et~al\mbox{.}(2023)]%
        {GHL23}
\bibfield{author}{\bibinfo{person}{Arka Ghosh}, \bibinfo{person}{Piotr Hofman},
  {and} \bibinfo{person}{S{\l}awomir Lasota}.} \bibinfo{year}{2023}\natexlab{}.
\newblock \showarticletitle{Orbit-finite linear programming}. In
  \bibinfo{booktitle}{\emph{{LICS}}}. \bibinfo{pages}{1--14}.
\newblock


\bibitem[Gupta et~al\mbox{.}(2019)]%
        {GSAH19}
\bibfield{author}{\bibinfo{person}{Utkarsh Gupta}, \bibinfo{person}{Preey
  Shah}, \bibinfo{person}{S. Akshay}, {and} \bibinfo{person}{Piotr Hofman}.}
  \bibinfo{year}{2019}\natexlab{}.
\newblock \showarticletitle{Continuous Reachability for Unordered Data Petri
  Nets is in PTime}. In \bibinfo{booktitle}{\emph{Proc.~{FOSSACS} 2019}}
  \emph{(\bibinfo{series}{Lecture Notes in Computer Science},
  Vol.~\bibinfo{volume}{11425})}, \bibfield{editor}{\bibinfo{person}{Mikolaj
  Boja{\'n}czyk} {and} \bibinfo{person}{Alex Simpson}} (Eds.).
  \bibinfo{publisher}{Springer}, \bibinfo{pages}{260--276}.
\newblock


\bibitem[Hofman and Lasota(2018)]%
        {HL18}
\bibfield{author}{\bibinfo{person}{Piotr Hofman} {and}
  \bibinfo{person}{S{\l}awomir Lasota}.} \bibinfo{year}{2018}\natexlab{}.
\newblock \showarticletitle{Linear Equations with Ordered Data}. In
  \bibinfo{booktitle}{\emph{Proc.~ {CONCUR} 2018}}.
  \bibinfo{pages}{24:1--24:17}.
\newblock


\bibitem[Hofman et~al\mbox{.}(2017)]%
        {HLT17}
\bibfield{author}{\bibinfo{person}{Piotr Hofman},
  \bibinfo{person}{J{\'{e}}r{\^{o}}me Leroux}, {and} \bibinfo{person}{Patrick
  Totzke}.} \bibinfo{year}{2017}\natexlab{}.
\newblock \showarticletitle{Linear combinations of unordered data vectors}. In
  \bibinfo{booktitle}{\emph{Proc.~{LICS} 2017}}. \bibinfo{pages}{1--11}.
\newblock


\bibitem[Hofman and R{\'o}{\.z}ycki(2022)]%
        {HR21}
\bibfield{author}{\bibinfo{person}{Piotr Hofman} {and} \bibinfo{person}{Jakub
  R{\'o}{\.z}ycki}.} \bibinfo{year}{2022}\natexlab{}.
\newblock \showarticletitle{{Linear equations for unordered data vectors in
  $[D]^k\to{}Z^d$}}.
\newblock \bibinfo{journal}{\emph{{Logical Methods in Computer Science}}}
  \bibinfo{volume}{{Volume 18, Issue 4}} (\bibinfo{year}{2022}).
\newblock


\bibitem[Keshvardoost et~al\mbox{.}(2019)]%
        {KKLO19}
\bibfield{author}{\bibinfo{person}{Khadijeh Keshvardoost},
  \bibinfo{person}{Bartek Klin}, \bibinfo{person}{S{\l}awomir Lasota},
  \bibinfo{person}{Joanna Fijalkow}, {and} \bibinfo{person}{Szymon
  Toru{\'n}czyk}.} \bibinfo{year}{2019}\natexlab{}.
\newblock \showarticletitle{Definable isomorphism problem}.
\newblock \bibinfo{journal}{\emph{Log. Methods Comput. Sci.}}
  \bibinfo{volume}{15}, \bibinfo{number}{4} (\bibinfo{year}{2019}).
\newblock


\bibitem[Klin et~al\mbox{.}(2015)]%
        {KKOT15}
\bibfield{author}{\bibinfo{person}{Bartek Klin}, \bibinfo{person}{Eryk
  Kopczy{\'n}ski}, \bibinfo{person}{Joanna Fijalkow}, {and}
  \bibinfo{person}{Szymon Toru{\'n}czyk}.} \bibinfo{year}{2015}\natexlab{}.
\newblock \showarticletitle{Locally Finite Constraint Satisfaction Problems}.
  In \bibinfo{booktitle}{\emph{Proc.~{LICS} 2015}}. \bibinfo{pages}{475--486}.
\newblock


\bibitem[Klin et~al\mbox{.}(2014)]%
        {KLOT14}
\bibfield{author}{\bibinfo{person}{Bartek Klin}, \bibinfo{person}{S{\l}awomir
  Lasota}, \bibinfo{person}{Joanna Fijalkow}, {and} \bibinfo{person}{Szymon
  Toru{\'n}czyk}.} \bibinfo{year}{2014}\natexlab{}.
\newblock \showarticletitle{Turing machines with atoms, constraint satisfaction
  problems, and descriptive complexity}. In
  \bibinfo{booktitle}{\emph{Proc.~{CSL-LICS} 2014}},
  \bibfield{editor}{\bibinfo{person}{Thomas~A. Henzinger} {and}
  \bibinfo{person}{Dale Miller}} (Eds.). \bibinfo{publisher}{{ACM}},
  \bibinfo{pages}{58:1--58:10}.
\newblock


\bibitem[Klin et~al\mbox{.}(2016)]%
        {KLOT16}
\bibfield{author}{\bibinfo{person}{Bartek Klin}, \bibinfo{person}{S{\l}awomir
  Lasota}, \bibinfo{person}{Joanna Fijalkow}, {and} \bibinfo{person}{Szymon
  Toru{\'n}czyk}.} \bibinfo{year}{2016}\natexlab{}.
\newblock \showarticletitle{Homomorphism Problems for First-Order Definable
  Structures}. In \bibinfo{booktitle}{\emph{Proc.~{FSTTCS} 2016}}
  \emph{(\bibinfo{series}{LIPIcs}, Vol.~\bibinfo{volume}{65})},
  \bibfield{editor}{\bibinfo{person}{Akash Lal}, \bibinfo{person}{S.~Akshay},
  \bibinfo{person}{Saket Saurabh}, {and} \bibinfo{person}{Sandeep Sen}} (Eds.).
  \bibinfo{pages}{14:1--14:15}.
\newblock


\bibitem[Lasota(2022)]%
        {Lasota22}
\bibfield{author}{\bibinfo{person}{S{\l}awomir Lasota}.}
  \bibinfo{year}{2022}\natexlab{}.
\newblock \showarticletitle{Improved {Ackermannian} lower bound for the {Petri}
  nets reachability problem}. In \bibinfo{booktitle}{\emph{Proc.~STACS 2022}}
  \emph{(\bibinfo{series}{LIPIcs}, Vol.~\bibinfo{volume}{219})}.
  \bibinfo{pages}{46:1--46:15}.
\newblock


\bibitem[Lazic et~al\mbox{.}(2008)]%
        {Nets-with-Tokens-which-Carry-Data}
\bibfield{author}{\bibinfo{person}{Ranko Lazic},
  \bibinfo{person}{Thomas~Christopher Newcomb}, \bibinfo{person}{Jo{\"{e}}l
  Ouaknine}, \bibinfo{person}{A.~W. Roscoe}, {and} \bibinfo{person}{James
  Worrell}.} \bibinfo{year}{2008}\natexlab{}.
\newblock \showarticletitle{Nets with Tokens which Carry Data}.
\newblock \bibinfo{journal}{\emph{Fundam. Inform.}} \bibinfo{volume}{88},
  \bibinfo{number}{3} (\bibinfo{year}{2008}), \bibinfo{pages}{251--274}.
\newblock


\bibitem[Leroux(2022)]%
        {L}
\bibfield{author}{\bibinfo{person}{J{\'{e}}r{\^{o}}me Leroux}.}
  \bibinfo{year}{2022}\natexlab{}.
\newblock \showarticletitle{The Reachability Problem for {Petri} Nets is Not
  Primitive Recursive}. In \bibinfo{booktitle}{\emph{Proc. {FOCS} 2021}}.
  \bibinfo{publisher}{{IEEE}}, \bibinfo{pages}{1241--1252}.
\newblock


\bibitem[Papadimitriou and Steiglitz(1982)]%
        {PSbook82}
\bibfield{author}{\bibinfo{person}{Christos~H. Papadimitriou} {and}
  \bibinfo{person}{Kenneth Steiglitz}.} \bibinfo{year}{1982}\natexlab{}.
\newblock \bibinfo{booktitle}{\emph{Combinatorial Optimization: Algorithms and
  Complexity}}.
\newblock \bibinfo{publisher}{Prentice-Hall}.
\newblock
\showISBNx{0-13-152462-3}


\bibitem[Pitts(2013)]%
        {Pitts:book}
\bibfield{author}{\bibinfo{person}{A.~M. Pitts}.}
  \bibinfo{year}{2013}\natexlab{}.
\newblock \bibinfo{booktitle}{\emph{Nominal Sets: Names and Symmetry in
  Computer Science}}. \bibinfo{series}{Cambridge Tracts in Theoretical Computer
  Science}, Vol.~\bibinfo{volume}{57}.
\newblock \bibinfo{publisher}{Cambridge University Press}.
\newblock
\showISBNx{9781107017788}


\bibitem[Su{\'{a}}rez et~al\mbox{.}(1996)]%
        {SilvaTC96}
\bibfield{author}{\bibinfo{person}{Manuel~Silva Su{\'{a}}rez},
  \bibinfo{person}{Enrique Teruel}, {and} \bibinfo{person}{Jos{\'{e}}~Manuel
  Colom}.} \bibinfo{year}{1996}\natexlab{}.
\newblock \showarticletitle{Linear Algebraic and Linear Programming Techniques
  for the Analysis of Place or Transition Net Systems}. In
  \bibinfo{booktitle}{\emph{Lectures on Petri Nets {I:} Basic Models, Advances
  in Petri Nets}}. \bibinfo{pages}{309--373}.
\newblock


\bibitem[Tarski(1948)]%
        {Tar51}
\bibfield{author}{\bibinfo{person}{Alfred Tarski}.}
  \bibinfo{year}{1948}\natexlab{}.
\newblock \bibinfo{booktitle}{\emph{A Decision Method for Elementary Algebra
  and Geometry}}.
\newblock \bibinfo{publisher}{University of California Press}.
\newblock


\bibitem[Toru{\'n}czyk(2023)]%
        {szymon-oral}
\bibfield{author}{\bibinfo{person}{Szymon Toru{\'n}czyk}.}
  \bibinfo{year}{2023}\natexlab{}.
\newblock \bibinfo{title}{Solvability of orbit-finite systems of finite
  inequalities}.  (\bibinfo{year}{2023}).
\newblock
\newblock
\shownote{Personal communication}.


\bibitem[van~der Werf et~al\mbox{.}(2008)]%
        {WDHS08}
\bibfield{author}{\bibinfo{person}{Jan Martijn E.~M. van~der Werf},
  \bibinfo{person}{Boudewijn~F. van Dongen}, \bibinfo{person}{Cor A.~J.
  Hurkens}, {and} \bibinfo{person}{Alexander Serebrenik}.}
  \bibinfo{year}{2008}\natexlab{}.
\newblock \showarticletitle{Process Discovery Using Integer Linear
  Programming}. In \bibinfo{booktitle}{\emph{Proc.~{Petri} {Nets} 2008}}
  \emph{(\bibinfo{series}{Lecture Notes in Computer Science},
  Vol.~\bibinfo{volume}{5062})}, \bibfield{editor}{\bibinfo{person}{Kees~M. van
  Hee} {and} \bibinfo{person}{R{\"{u}}diger Valk}} (Eds.).
  \bibinfo{publisher}{Springer}, \bibinfo{pages}{368--387}.
\newblock


\bibitem[Van~Le and R{\"o}mer(2021)]%
        {MinkowskiWeyl2021}
\bibfield{author}{\bibinfo{person}{Dinh Van~Le} {and} \bibinfo{person}{Tim
  R{\"o}mer}.} \bibinfo{year}{2021}\natexlab{}.
\newblock \bibinfo{title}{Theorems of {Carath{\'e}odory}, {Minkowski-Weyl}, and
  {Gordan} up to symmetry}.
\newblock
\newblock


\end{thebibliography}

%%
%% If your work has an appendix, this is the place to put it.

\appendix

\section{Missing proofs}

%\TODO{todo}

We start by introducing notation 
useful in proving Theorems \ref{thm:equiv-problems} and \ref{thm:equiv-max}.
For subsets $P\subseteq \GLin B$ and $\F\subseteq \R$, we define $\Span {\F} P\subseteq \GLin B$ as the set of 
all linear $\F$-combinations of vectors from $P$:
%, forming a subspace of $\GLin B$:
\begin{align*} %\label{eq:span}
\begin{aligned}
 \Span {\F} P \ = \ \  \setof{q_1 \cdot \vr p_1 + \ldots + q_k \cdot \vr p_k}{k\geq 0,
\ q_1,\ldots,q_k\in \F, \ \vr p_1, \ldots, \vr p_k\in P}.
\end{aligned}
\end{align*}
Recall that given a matrix $\vr A \in \GLin {B\times C}$ with rows $B$ and columns $C$, we can define a partial operation of 
multiplication of $\vr A$ by a vector $\vr v\in\GLin C$ in an expected way:
\[
(\mult {\vr A}{\vr v})(b) = \innerprod {\vr A(b, \_)} {\vr v}
\]
for every $b\in B$.
The result $\mult {\vr A} {\vr v} \in \GLin B$ is well-defined if $\innerprod {\vr A(b, \_)}{\vr v}$ is well-defined for all $b\in B$.
%Considering the matrix, via its column-indexed representation $\vr A : C \to \GLin B$, as an indexed set of column vectors
%from $\GLin B$, 
For $c\in C$ we denote by $\vr A(\_, c) \in \GLin B$ the corresponding (column) vector.
The multiplication $\mult {\vr A}{\vr v}$ can be also seen as an \emph{orbit-finite} linear combination of 
column vectors $\vr A(\_, c)$, for $c\in C$, with coefficients given by $\vr v$. 
This allows us to define the \emph{span} of $\vr A$ seen as a $C$-indexed orbit-finite set of 
vectors $\vr A(\_, c) \in \GLin B$:
\begin{align*} %\label{eq:Gspan}
\GSpan {\F} {\vr A}  :=  \setof{\mult{\vr A}{\vr v}}{\vr v : C\tofs \F, \ \mult{\vr A}{\vr v} \text{ well-def.}}.
\end{align*}
Therefore, a system of inequalities $(\vr A, \vr t)$ has a solution if 
$\GSpan {\F} {\vr A}$ contains some vector $\vr u \geq \vr t$.
When $\vr v$ is finitary, well-definedness is vacuous, and we may define:
%By restricting to finitary vectors $\vr v$ we do not need to assume well-definedness, as it vacuously holds.
%The finitary span of $\vr A$ is thus defined as:
\[
\Span {\F} {\vr A}  :=  \setof{\mult{\vr A}{\vr v}}{\vr v : C\tofin \F}  =  \Span {\F} P
\]
%We intentionally overload the notation, as  $\Span {} {\vr A} = $ $\Span {} P$, 
for $P = \setof{\vr A(\_, c)}{c\in C}$ the set of column vectors of $\vr A$.
Therefore, a system of inequalities $(\vr A, \vr t)$ has a finitary solution if 
$\Span {\F} {\vr A}$ contains some vector $\vr u \geq \vr t$.

\subsection{Proof of Theorems \ref{thm:equiv-problems} and \ref{thm:equiv-max} 
(Section \ref{sec:problems})}
\label{sec:problems-proofs-one}

Recall that we consider supremum of a maximisation problem to be $-\infty$ if the constraints in the problem are infeasible.
Therefore proving that two maximisation problems have the same supremum also proves that the underlying systems of inequalities are equisolvable.
In consequence, Theorem \ref{thm:equiv-max} implies \ref{thm:equiv-problems}, and hence
we concentrate
in the sequel on proving the former one.

The proof of mutual
reductions between \ineqmaxname$(\F)$ and \nonnegmaxname$(\F)$
amounts to lifting of standard arguments from finite to orbit-finite systems, 
and checking that all constructed objects are finitely supported. 
We include the reductions here mostly in order to get acquainted with orbit-finite systems.
One of the remaining two reductions builds on results of \cite{GHL22}.

\para{Reduction of \ineqmaxname$(\F)$ to \nonnegmaxname$(\F)$}
Consider an instance $(\vr{A},\vr{t},\vr{s})$ of \ineqmaxname$(\F)$, supported by $S$,
where $\vr A : B \times C \tofs \F$, $\vr t : B\tofs \F$ and $\vr{s} : C\tofs \F$.
We construct an instance $(\vr A', \vr t, \vr{s}')$ of \nonnegmaxname$(\F)$,
with the same target vector $\vr t$, and
$
\vr A' : B \times (C \uplus C \uplus B) \tofs \F,
$
$\vr{s}' : (C \uplus C \uplus B) \tofs \F$,
such that
\begin{align*} %\label{eq:supsup}
\textnormal{supremum}(\vr A', \vr t, \vr{s}')
=
\textnormal{supremum}(\vr A, \vr t, \vr{s}).
\end{align*}
In the new system, we double each variable $x$ into $x_+$ and $x_-$, and we add a fresh variable per each equation.
The matrix $\vr A'$ of the new system is a composition of $\vr A$, $-\vr A$, and the diagonal matrix 
$B \times B\tofs \F$ with $-1$ in the diagonal:

\vspace{-2mm}
\begin{align*}
\vr A' \ = \ 
\left[
\begin{matrix}
  & &  \phantom{1} \\  & \ \ \ \vr A & \phantom{\ddots}    \\     &  &    \phantom{1}
\end{matrix}
\right\rvert
\begin{matrix}
  & &   \\  & \ \ - \vr A &  \phantom{\ddots}  \\     &  &   
\end{matrix}
\left\rvert
\begin{matrix}
\; -1 & &   \\  & \ddots &    \\     &  & -1 \;  
\end{matrix}
\right]
\end{align*}
%
%\piotrek{I think it should be $-1$ instead of $1$, as we have to subtract from the real solution to get to $t$}

\noindent
Similarly, $\vr{s}'$ is defined as the composition of $\vr{s}$, $-\vr{s}$ and the zero vector $B \tofs \F$:

\vspace{-4mm}
\begin{align*}
\vr s' \ = \ 
\left[
\begin{matrix}
  & \vr s & 
\end{matrix}
\right\rvert
\begin{matrix}
  & - \vr s &   
\end{matrix}
\left\rvert
\begin{matrix}
\; 0 & \cdots & 0 \;  
\end{matrix}
\right]
\end{align*}
$\vr A'$ and $\vr{s}'$ are thus supported by $S$.

Any vector 
$
\vr{x}' :(C \uplus C \uplus B) \tofs \F
$
can be written as
\[
\vr x' \ = \ (\vr{x}_+ \rvert \vr{x}_- \rvert \vr{y}),
\]
where $\vr{x}_+,\vr{x}_- : C \tofs \F$ and $\vr{y} : B \tofs \F$.
If any such non-negative vector $\vr x'$ satisfies the above constructed system of constraints, i.e. if we have
\begin{equation}\label{eq: non-neg max equiv to ineq}
\vr{A}' \cdot (\vr{x}_+ \rvert \vr{x}_- \rvert \vr{y}) = \vr{t},
\end{equation}
then then vector ${\vr{x}_+ - \vr{x}_-}$, supported by $\supp{\vr{x}'}$,
is a solution of $(\vr A, \vr t)$, namely
\[
\vr{A} \cdot (\vr{x}_+ - \vr{x}_-) \geq
\vr{A} \cdot (\vr{x}_+ - \vr{x}_-) - \vr{y} =
\vr{A}' \cdot (\vr{x}_+ \rvert \vr{x}_- \rvert \vr{y}) = \vr{t}.
\]
Furthermore, by the very definition of $\vr s'$ we have
\begin{align} \label{eq:ss}
\vr{s}' \cdot (\vr{x}_+ \rvert \vr{x}_- \rvert \vr{y}) = \vr{s} \cdot (\vr{x}_+ - \vr{x}_-),
\end{align}
which implies
$
\textnormal{supremum}(\vr A', \vr t, \vr{s}')
\leq
\textnormal{supremum}(\vr A, \vr t, \vr{s}).
$

In the opposite direction, given a finitely supported vector $\vr{x}$ such that 
$\vr{A} \cdot \vr{x} \geq \vr{t}$, we define a non-negative vector 
$\vr x' = (\vr{x}_+ \rvert \vr{x}_- \rvert \vr{y})$ supported by $\supp{\vr{x}}\cup S$ as follows:

\begin{minipage}{0.35\linewidth}
\begin{align*}
\vr{x}_+(c) & =
\begin{cases}
\vr{x}(c) & \text{ if }\vr{x}(c) \geq 0, \\
0         & \text{ otherwise;}
\end{cases} 
\end{align*}
\end{minipage}
\begin{minipage}{0.35\linewidth}
\begin{align*}
\vr{x}_-(c) & =
\begin{cases}
-\vr{x}(c) & \text{ if }\vr{x}(c) < 0, \\
0         & \text{ otherwise;}
\end{cases} 
\end{align*}
\end{minipage}
\begin{minipage}{0.3\linewidth}
\begin{align*}
\vr{y}(c) & = (\vr{A}\cdot \vr{x} - \vr t)(c). \\
\ 
\end{align*}
\end{minipage}
\smallskip

\noindent
Then $\vr{x} = \vr{x}_+ - \vr{x}_-$ and
$
\vr{A}' \cdot (\vr{x}_+ \rvert \vr{x}_- \rvert \vr{y}) =
\vr{A} \cdot \vr{x} - \vr{y} = \vr{t}.
$
The equality \eqref{eq:ss} holds again,
%\[
%\vr{s}' \cdot (\vr{x}_+ \rvert \vr{x}_- \rvert \vr{y}) =
%\vr{s} \cdot (\vr{x}_+ - \vr{x}_-) \ .
%\]
which implies 
$
\textnormal{supremum}(\vr A, \vr t, \vr{s})
\leq
\textnormal{supremum}(\vr A', \vr t, \vr{s}').
$

\para{Reduction of \nonnegmaxname$(\F)$ to \ineqmaxname$(\F)$}
For any orbit-finite system of linear equations supported by $S$:
\begin{align*}%\label{eq: eq}
\vr{A} \cdot \vr{x} = \vr{t},
\end{align*}
its nonnegative solutions are exactly solutions of the following system of linear inequalities, also supported by $S$:
\begin{align*}%\label{eq: ineq equiv to eq}
\vr{A} \cdot \vr{x} \geq \vr{t} \qquad\qquad 
\vr{A} \cdot \vr{x} \leq \vr{t} \qquad\qquad 
\vr{x} \geq 0.
\end{align*}
This implies an easy reduction from \nonnegmaxname$(\F)$ to \ineqmaxname$(\F)$.

\begin{slremark}
The above two reductions preserve row-finiteness, i.e., transform a system of finite equations (inequalities) to a system of finite inequalities (equations), or vice versa.
\end{slremark}

\para{Reduction of \finineqmaxname$(\F)$ to \ineqmaxname$(\F)$}
Consider an instance $(\vr{A},\vr{t},\vr{s})$ of \finineqmaxname$(\F)$ supported by $S$,
where $\vr A : B\times C\tofs \F$.
We construct an instance $(\vr{A}',\vr{t}',\vr{s}')$ of \ineqmaxname$(\F)$ as follows.
The new system of inequalities
$\vr{A}' \cdot \vr{x}' \geq \vr{t}'$ is obtained by
extending the column index $C$ by one additional variable $y$ and extending the system by one inequality:

\vspace{-2mm}
\begin{align*}
\vr A' \ = \ 
\left[
\begin{matrix}
  & & \phantom{0}  \\  & \ \ \vr A \ \ \phantom{\vdots} &     \\     &  & \phantom{0} \\ \hline
  1 & \cdots & 1
\end{matrix}
\right\rvert
\hspace{-1.25mm}
\left\rvert
\begin{matrix}
  0 \\     \vdots \\   0 \\       \hline -1
\end{matrix}
\right]
\qquad\qquad
\vr t' \ = \ \begin{bmatrix}
\phantom{0} \\ \ \vr t \phantom{\vdots} \\ \phantom{0} \\ \hline 0 
\end{bmatrix}
\end{align*}
and the new objective function $\vr s'$ is defined as expected:

\vspace{-4mm}
\begin{align*}
\vr s' \ = \ 
\left[ 
\begin{matrix}
& \;   \vr s  \; & 
\end{matrix}
\rvert
\; 0 \;  
\right].
\end{align*}
The so constructed instance is supported by $S$, and its solutions have the form $\vr x' = (\vr x, y)$, where
\[
\vr{A} \cdot \vr{x} \geq \vr{t} \qquad\qquad \sum_{c \in C} \vr{x}(c) \geq y.
\]
%By Claim~\ref{claim:?}, a
Any such finitely supported solution is necessarily finitary.
This implies
$
\textnormal{supremum}(\vr{A},\vr{t},\vr{s}) =
\textnormal{supremum}(\vr{A}',\vr{t}',\vr{s}).
$
%where $(\vr{A}',\vr{t}',\vr{s})$ is an instance of \ineqmaxname$(\F)$.
%
%\begin{claimproof}
%Let $\vr{x}$ and $y$ be such that $\vr{x}$ is finitely supported and
%\[
%\vr{A} \cdot \vr{x} \geq \vr{t} ,\ \sum_{c \in C} \vr{x}(c) \geq y\ .
%\]
%Let $U$ be an infinite $\supp{\vr{x}}$-orbit of $C$.
%$sum_{c \in C} \vr{x}(c)$ is well-defined only if $sum_{c \in U} \vr{x}(c)$ is well-defined.
%Since $U$ is a $\supp{\vr{x}}$-orbit, there exists $r_U$ such that $\vr{x}(c) = r_U$ for all $c \in U$.
%Then $sum_{c \in U} \vr{x}(c)$ is well-defined only if $r_U = 0$.
%Since $U$ was an arbitrary infinite $\supp{\vr{x}}$-orbit of $C$,
%this implies $\vr{x}$ is non-zero only on finite $\supp{\vr{x}}$-orbits.
%In other words, $\vr{x}$ is finite.
%\end{claimproof}
%%
%Using the above claim we can say that the system $\vr{A} \cdot \vr{x} \geq \vr{t}$ has a finite solution if and only if (\ref{eq: with eq forcing fin}) has a finitely supported solution.

\para{Reduction of \ineqmaxname$(\F)$ to \finineqmaxname$(\F)$}

We rely on the following result of~\cite{GHL22}%
\footnote{The result, as shown in~\cite{GHL22}, holds for any commutative ring $\F$.}:
\begin{claim}[\cite{GHL22} Claim~20]
Let $\F \in\set{\Int,\R}$.
Given an $S$-supported orbit-finite matrix $\vr{M}$ one can effectively construct an $S$-supported orbit-finite matrix $\spread{\vr{M}}$
such that $\GSpan{\F}{\vr{M}} = \Span{\F}{\spread{\vr{M}}}$.
\end{claim}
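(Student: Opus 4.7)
The plan is to construct $\spread{\vr M}$ whose columns are the ``orbit-sum columns'' of $\vr M$, that is, vectors of the form $\vr M \cdot \constvr 1 \O$ for suitable orbits $\O \subseteq C$ (with respect to various supports $T \supseteq S$). The central tool will be Lemma~\ref{lem:1O}, which expresses every $T$-supported orbit-finite vector $\vr v : C \tofs \F$ as a \emph{finite} linear combination $\vr v = \sum_{\O} \vr v(c_\O) \cdot \constvr 1 \O$ indexed by the $T$-orbits $\O \subseteq C$ on which $\vr v$ is non-zero. This identity is the bridge between generalised and finitary spans.

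First I would observe that, for every orbit-finite $\vr v$ for which $\vr M \cdot \vr v$ is well-defined, applying the decomposition above gives
\[
\vr M \cdot \vr v \ = \ \sum_{\O} \vr v(c_\O) \cdot (\vr M \cdot \constvr 1 \O),
\]
so that $\vr M \cdot \vr v$ already belongs to the finitary $\F$-span of the (possibly large) family of orbit-sum vectors $\{\vr M \cdot \constvr 1 \O\}$. This reduces the problem to collecting such orbit-sum columns into a single $S$-supported orbit-finite matrix $\spread{\vr M}$. Formally, I would define $\spread{\vr M}$ to have a column indexed by each pair $(T, \O)$ with $S \subseteq T \subseteqfin \A$ and $\O$ a $T$-orbit of $C$ for which $\vr M \cdot \constvr 1 \O$ is a well-defined element of $\GLin B$, assigning to that column the value $\vr M \cdot \constvr 1 \O$.

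Next I would verify that $\spread{\vr M}$ is $S$-supported and orbit-finite. Using Lemma~\ref{lem:repn}, every $T$-orbit of $C$ admits a standard finite representation; since $\vr M$ is $S$-supported, any $\pi \in \Aut{S}$ satisfies $\pi(\vr M \cdot \constvr 1 \O) = \vr M \cdot \constvr 1 {\pi(\O)}$, so the index set of columns is closed under $\Aut{S}$. The equivariance of $\vr M$, together with the fact that $C$ contains only finitely many equivariant orbits, allows one to classify the admissible pairs $(T, \O)$ into finitely many $\Aut{S}$-orbits, yielding orbit-finiteness. The two set inclusions then follow directly: $\GSpan{\F}{\vr M} \subseteq \Span{\F}{\spread{\vr M}}$ is exactly the identity displayed above, while $\Span{\F}{\spread{\vr M}} \subseteq \GSpan{\F}{\vr M}$ holds since each column of $\spread{\vr M}$ has, by construction, the form $\vr M \cdot \vr w$ for an orbit-finite vector $\vr w = \constvr 1 \O \in \GLin C$.

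The main obstacle will be pinning down precisely which orbits $\O$ yield a well-defined product $\vr M \cdot \constvr 1 \O$ --- equivalently, for which $\O$ every row $\vr M(b, \_)$ meets $\O$ in only finitely many positions --- and then proving that, up to $\Aut{S}$, only finitely many admissible ``orbit shapes'' arise and that the corresponding columns are computable from the finite representation of $\vr M$. This amounts to a combinatorial analysis of the intersection pattern between $\dom{\vr M(b, \_)}$ (controlled by $\supp{b} \cup S$) and the orbits $\O \subseteq C$, exploiting both Lemmas~\ref{lem:tupleorbit} and~\ref{lem:repn}; once this bookkeeping is in place, effectiveness and the asserted equality follow.
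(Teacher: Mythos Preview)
The paper does not prove this claim at all: it is quoted verbatim from \cite{GHL22} (as ``Claim~20'') and used as a black box in the reduction of \ineqmaxname$(\F)$ to \finineqmaxname$(\F)$. So there is no proof in the present paper to compare against; any argument you give is necessarily your own.

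Your proposed construction, however, has a real gap. The column index set you describe --- all pairs $(T,\O)$ with $S\subseteq T\subseteqfin\A$ and $\O$ a $T$-orbit of $C$ for which $\vr M\cdot\constvr 1_\O$ is well defined --- is \emph{not} orbit-finite in general. Take the simplest non-trivial case: $S=\emptyset$, $B=C=\A$, and $\vr M$ the identity matrix. Every row has singleton domain, so $\vr M\cdot\constvr 1_\O$ is well defined for \emph{every} orbit $\O$, and your index set contains in particular all pairs $(T,\A\setminus T)$ with $T\subseteqfin\A$. Under $\Aut{\emptyset}$ the orbit of such a pair consists of all $(T',\A\setminus T')$ with $\size{T'}=\size T$, so there is one orbit for each cardinality of $T$ --- infinitely many. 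The same obstruction persists if you index by the columns themselves rather than by pairs: the vectors $\constvr 1_{\A\setminus T}$ lie in infinitely many $\emptyset$-orbits.

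The decomposition $\vr M\cdot\vr v=\sum_\O\orbval{\vr v}(\O)\cdot(\vr M\cdot\constvr 1_\O)$ via Lemma~\ref{lem:1O} is correct and does show that $\GSpan{\F}{\vr M}$ is contained in the finitary span of your (orbit-infinite) family. But the remaining step --- cutting this family down to an orbit-finite one with the same span --- is not ``bookkeeping''; it is the entire content of the claim. In the identity-matrix example the right $\spread{\vr M}$ has columns $\constvr 1_{\{\alpha\}}$ for $\alpha\in\A$ together with the single column $\constvr 1_\A$, and recognising that these suffice already requires the kind of basis argument for $\GLin B$ that \cite{GHL22} develops (cf.\ the use of an orbit-finite basis of $\GLin B$ elsewhere in the appendix). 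Your outline does not supply this.
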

Consider an instance $(\vr{A},\vr{t},\vr{s})$ of \ineqmaxname$(\F)$, and
apply the above claim to the matrix $\vr M$ (left) 
in order to get the matrix $\spread{\vr{M}}$ (right),
\[
\vr{M} \ = \ 
\left[
\begin{matrix}
            &        &  \\
            & \vr A  &  \\
            &        &  \\
              \hline
\phantom{1} & \vr{s} & \phantom{1}
\end{matrix}
\right]
%\]
\qquad\qquad\qquad\qquad
%\[
\spread{\vr{M}} \ = \ 
\left[
\begin{matrix}
            &                  &  \\
            & \vr {A}' &  \\
            &                  &  \\
              \hline
\phantom{1} & \vr{s}'  & \phantom{1}
\end{matrix}
\right]
\]
such that
\begin{align} \label{eq:spany}
\GSpan{\F}{\vr{M}} =
\Span{\F}{\spread{\vr{M}}}.
\end{align}
This yields 
an instance  $(\vr{A}',\vr{t},\vr{s}')$  of \finineqmaxname$(\F)$, supported 
by $\supp{\vr{A},\vr{t},\vr{s}}$.

By the equality \eqref{eq:spany},
for every $r \in \R$ we have the following:
there exists a finitely supported vector $\vr{x}$ such that 
$\mult{\vr{A}}  {\vr{x}} \geq \vr{t}$ and $\innerprod{\vr{s}}  {\vr{x}} = r$ 
if and only if
there exists a finitary vector $\vr{x}'$ such that $\mult{\vr{A}'}  {\vr{x}'} \geq \vr{t}$ and 
$\innerprod {\vr{s}'}  {\vr{x}'} = r$.
In consequence,
\[
\text{supremum}(\vr{A},\vr{t},\vr{s}) =
\text{supremum}(\vr{A}',\vr{t},\vr{s}').
\]

Theorem \ref{thm:equiv-max} is thus proved.
%\end{proof}

\subsection{Proof of Theorem~\ref{thm:finnegsolv-decid} (Section \ref{sec:problems})}
\label{sec:problems-proofs-two}

We consider cases $\F = \R$ and $\F = \Z$ separately. 

\para{Case $\F = \R$}

Decidability of \finnonnegsolvname$(\R)$ follows by a direct reduction of 
\finnonnegsolvname$(\R)$ to \nonnegsolvname$(\R)$
(similar to the reduction of \finineqmaxname$(\F)$ to \ineqmaxname$(\F)$)
and Theorem~\ref{thm:lp-decid}.

\para{Case $\F = \Z$}

Decidability of \finnonnegsolvname$(\Int)$ follows by results of \cite{GHL22} and \cite{HR21}.

Let $(\vr{A},\vr{t})$ be an instance of \mbox{\finnonnegsolvname$(\Int)$}, where
$\vr A : B\times C \tofs \Z$, and
%Let $B$ and $C$ respectively be the indexing set of rows and columns of $\vr{A}$.
consider the set of column vectors 
\[
P = \setof{\vr{A}(\_,c)}{c \in C} \subseteq \GLin B
\]
of $\vr{A}$.
Then the system of equations $\vr{A} \cdot \vr{x} = \vr{t}$ has a finite 
non-negative integer solution if and only if 
\begin{align}\label{eq:tspan}
\vr{t} \in \Span{\Nat}{P}.
\end{align}
We rely on Theorem 3.3 of \cite{GHL22} which says that $\GLin B$ has an orbit-finite basis.
Let $\widehat{B} \subseteq \GLin B$ be such a basis.
This implies that there exists a linear isomorphism 
$
\varphi : \GLin B \to\Lin{\widehat{B}}.
$
In consequence, \eqref{eq:tspan} is equivalent to
\begin{align} \label{eq:tspan2}
\varphi(\vr{t}) \in \Span{\Nat}{\varphi(P)}.
\end{align}
By Remark 11.16 of \cite{HR21} we can compute a finite set of vectors 
$\set{\vr{t}'_1,\dots,\vr{t}'_k}\subseteq \Lin {\widehat B}$ and an orbit-finite subset 
$P'\subseteq\varphi(P)$ such 
that \eqref{eq:tspan2}
%$\varphi(\vr{t}) \in \Span{\Nat}{\varphi(P)}$ if and only if 
holds if and only if 
\begin{align} \label{eq:tspan3}
\vr{t}'_i \in \Span{\Int}{P'}
\end{align} 
for some $i\in\setto k$.
The question \eqref{eq:tspan3} is nothing but finitary integer solvability of an orbit-finite system of equations,
which is decidable using Theorem 6.1 of \cite{GHL22}.
%\end{proof}

%\subsection{Proofs missing in Section \ref{sec:weak}}
%
%\smallskip
%
%\subsection{Proofs missing in Section \ref{sec:poly}}
%
%\smallskip

\subsection{Proof of Theorem  \ref{thm:polydecid} (Section \ref{sec:poly})}
%{Proofs missing in Section \ref{sec:poly}}
\label{sec:poly-proofs}

%\begin{proof}
%[Proof of Theorem  \ref{thm:polydecid} (Section \ref{sec:poly})]
We show decidability of \polyineqsolvname %(cf.~Theorem~\ref{thm:polydecid}) 
by 
encoding the problem into real arithmetic, i.e., first-order theory of $(\R, +, \cdot, 0, 1, \leq)$.
We say that a real arithmetic formula $\varphi(x_1, \ldots, x_k)$ with free variables $x_1, \ldots, x_k$
\emph{defines} the set of all valuations $\set{x_1, \ldots, x_k}\to\R$ satisfying it. 
When the order of free variables is fixed, we naturally identify the set defined by $\varphi$ with a subset of $\R^k$.
\begin{claim} \label{claim:1var}
Every real arithmetic formula $\varphi(x)$ with one free variable, 
%possibly with constants, 
defines
%the set $\setof{x\in\R}{\psi(x) \text{ holds}}$ defined by $\psi(x)$ is
a finite union of (possibly infinite) disjoint intervals.
\end{claim}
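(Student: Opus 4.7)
The plan is to invoke Tarski's quantifier elimination theorem for real closed fields, which asserts that every first-order formula over $(\R, +, \cdot, 0, 1, \leq)$ is equivalent to a quantifier-free one in the same free variables. Applying this to $\varphi(x)$, we obtain an equivalent quantifier-free formula $\psi(x)$ with the single free variable $x$. Such a $\psi(x)$ is, by definition, a Boolean combination of atomic formulas, each of which has the form $p(x) \geq 0$, $p(x) > 0$, or $p(x) = 0$ for some univariate polynomial $p \in \Int[x]$ (or $\Rat[x]$, which is equivalent after clearing denominators).

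Next, I would observe that each atomic formula defines a finite union of intervals. Indeed, a non-zero polynomial $p \in \Int[x]$ has only finitely many real roots $r_1 < r_2 < \ldots < r_m$, and its sign is constant on each of the open intervals $(-\infty, r_1), (r_1, r_2), \ldots, (r_m, \infty)$. Hence the sets $\{x : p(x) > 0\}$, $\{x : p(x) \geq 0\}$, $\{x : p(x) = 0\}$ are each finite unions of (possibly degenerate, possibly infinite) disjoint intervals. The trivial cases $p \equiv 0$ and the negations of atomic formulas are handled analogously.

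Finally, I would verify that the class of finite unions of (possibly infinite) intervals in $\R$ is closed under finite intersections, unions, and complements: complements are easily seen to preserve the property (the complement of a finite union of disjoint intervals is again such a union), and finite intersections of finite unions of intervals reduce, by distributivity, to finite unions of intersections of intervals, each of which is itself an interval. Unions of such sets are trivially in the class, modulo merging overlapping or adjacent intervals to ensure disjointness. Applying these closure properties to the Boolean structure of $\psi(x)$ yields the desired conclusion.

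The main work is really packaged into the appeal to Tarski--Seidenberg quantifier elimination; once that is invoked, the remaining argument is a routine verification of closure properties of finite unions of intervals. No genuine obstacle is expected here — the claim is essentially a one-variable instance of the o-minimality of the theory of real closed fields.
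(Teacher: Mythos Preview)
Your proposal is correct and follows essentially the same approach as the paper: invoke Tarski's quantifier elimination to reduce to a Boolean combination of univariate polynomial sign conditions, then use that the finitely many polynomials involved have finitely many roots, so the truth value is constant on the induced intervals. The paper's version is terser (it just says validity depends only on the signs of the finitely many polynomials appearing), but your more explicit closure-under-Boolean-operations argument is a faithful unpacking of the same idea.
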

\begin{proof}
%\subsection{Proof of Lemma \ref{lem:1var}}
%\slawek{to appendix}
By quantifier elimination~\cite{Tar51}, the formula $\varphi(x)$ is equivalent to a 
quantifier-free formula $\overline\varphi(x)$ with constants, namely $\varphi(x)$ and $\overline\varphi(x)$ define the same set.
Therefore $\overline\varphi(x)$ is a Boolean combination of inequalities
$
p(x) \geq 0,
$
for univariate polynomials $p\in\polyring \R x$, and validity of $\overline\varphi(x)$
depends only on the sign of $p(x)$, for (finitely many) polynomials that appear in $\overline\varphi(x)$.
This implies the claim.
%\qed
\end{proof}

Consider a fixed system $P$ of polynomially-parametrised inequalities over unknowns $x_1, \ldots, x_k$, and
let $n$ range over \emph{reals}, not just over nonnegative integers.
For each $n\in\R$, we  get the system $P(n)$ of linear inequalities with \emph{real} coefficients.  
Let 
\begin{align} \label{eq:sigmaP}
\sigma_P(n, x_1,\ldots, x_k)
\end{align}
 be the conjunction of inequalities in $P$, each of the form~\eqref{eq:polyeq};
it is thus a quantifier-free real arithmetic formula which says
that a  tuple $\vr x = x_1, \ldots, x_k$ is a solution of  $P(n)$. 
The existential real arithmetic formula $\psi(n) \ \equiv \ \prettyexists{\vr x}{\sigma_P(n, \vr x) }$ 
with one free variable $n$,
%\[
%\psi(n) \ \equiv \ \prettyexists{x_1 \ldots x_k} {%x_1 \geq 0 \land \ldots \land x_k\geq 0 \land 
%\sigma_P(n, x_1,\ldots,x_k) },
%\]
%which 
says that $P(n)$ has a real solution.
Thus 
\polyineqsolvname has positive answer exactly when $\psi(n)$ is true for some $n\in\Nat$.

Evaluating real arithmetic formulas of fixed quantifier alternation depth is doable in \EXPTIME~\cite{Ben-OrKR86}, 
\cite[Theorem 14.16]{ra-book}.
In order to decide \polyineqsolvname, the algorithm evaluates the closed formula
\begin{align*} %\label{eq:unb-form}
\prettyexists{\tilde n}{\prettyforall{n}{n >\tilde n {\implies} \psi(n)}}
%
%\prettyforlargeall{\!}{n}{\psi(n)}.
\end{align*}
%~\eqref{eq:unb-form} 
and answers positively if the formula is true.
Otherwise, we know that the set $D$ defined by $\psi$, being a finite union of intervals (cf.~Claim~\ref{claim:1var}), 
is bounded from above.
The algorithm computes an integer upper bound $m_0$ of $D$, by evaluating closed existential formulas
\[
\varphi_m \ \equiv \ \prettyexists n  {n > m \ \land \ \psi(n)},
\]
for increasing nonnegative integer constants $m = 0, 1, \ldots$,
until $\varphi_m$ eventually evaluates to false.
Finally, the algorithm evaluates the formula $\psi(m)$ for all nonnegative integers $m$ between $0$ and $m_0$,
and answers positively if $\psi(m)$ is true for some such $m$; otherwise the algorithm answers negatively.
%\TODO{what is the complexity in general case? Is there an exponential bound on $m_0$? This would yield \PSPACE}
%\end{proof}

\subsection{Proofs of Lemmas \ref{lem:wlog} and \ref{lem:wlogmax} (Sections \ref{sec:lp-decid-proof} and \ref{sec:max})}
%{Proofs missing in Sections \ref{sec:lp-decid-proof} and \ref{sec:max}}
\label{sec:lp-decid-proof-proofs}

%\begin{proof}
%[Proofs of Lemmas \ref{lem:wlog} and \ref{lem:wlogmax} (Sections \ref{sec:lp-decid-proof} and \ref{sec:max}, resp.)]
%
We sketch the proofs only, as they amount to a slightly tedious but entirely standard exercise in sets with atoms.

Consider an instance $(\vr{A},\vr{t},\vr{s})$ of the maximisation problem \finineqmaxname$(\R)$.
Let $S = \supp{\vr{A},\vr{t},\vr{s}}$, and let
$\vr A : B\times C\tofs \Z$.
Thus the row and column index sets $B$ and $C$ are necessarily supported by $S$.
We want to effectively transform the instance into another one $(\spread{\vr{A}},\spread{\vr{t}},\spread{\vr{s}})$, 
where the row and column index sets are 
disjoint unions of sets of the form $\otu \A \ell$ (non-repeating tuples of atoms of a fixed length), as in
\eqref{eq:eqorbits} in Section \ref{sec:prelimproof}.
%
%We want to show that there is another equivariant orbit-finite maximisation problem $(\spread{\vr{A}},\spread{\vr{t}},\spread{\vr{s}})$,
%where rows and columns of $\spread{\vr{A}}$ are indexed by equivariant straight sets (an equivariant straight orbit-finite set is a finite union of sets of the form $\A^{(k)}$)
Moreover, the transformation should preserve the supremum:
\begin{align} \label{eq:supsupprove}
\text{supremum}(\vr{A},\vr{t},\vr{s}) =
\text{supremum}(\spread{\vr{A}},\spread{\vr{t}},\spread{\vr{s}}).
\end{align}

Recall that we consider supremum of a maximisation problem to be $-\infty$ if the constraints are infeasible.
Therefore proving that two maximisation problems have the same supremum also proves that the underlying systems of inequalities are equisolvable.

%\arka{Should I say supremum or result of an optimisation problem}
%\slawek{supremum}

%We rely on the following result (which can be shown similarly as Lemma 6.2 in \cite{atombook};
%Lemma 6.3 therein for an effective version of this result):
%%
%\begin{lemma}\label{lem:rep}
%Let $S\subseteqfin\A$.
%For every $S$-orbit $U$  there are some $k, n \in \Nat$ and an $S$-supported 
%surjective function $$f : \otu {(\A \setminus S)} k \to U$$ such that
%for all $x \in U$, $|f^{-1}(x)| = n$.
%%%\todo{$f_X$ probably}\arka{Updated}
%%is finite and independent of $x$ (from now on we will denote this number by $n_U$).
%\end{lemma}
%%
%%\arka{Why not $=n$?}
%%
%%\noindent
%In the sequel we write $n(U)$ for the constant $n$, in order to indicate which $S$-orbit $U$ is considered.
%
%\arka{Suggestion starts here}
%

We proceed in two steps.
First we show that the row and column index sets $B$ and $C$
may be assumed to be disjoint unions of sets of the form $\otu {(\A\setminus S)} \ell$.
As mentioned in Section \ref{sec:results},
$B$ and $C$ are assumed to be given as finite union of $S$-orbits of the form $(\A \setminus S)^{(n)}/_G$ where $n \in \Nat$ and $G$ is a subgroup of $S_n$, the group of all permutations of the set $\set{1,\dots,n}$.
Consider the partition of $B$ and $C$ into $S$-orbits:
\[
B = B_1 \uplus \dots \uplus B_k
\qquad\qquad
C = C_1 \uplus \dots \uplus C_{\ell},
\]
where
\[
B_i = (\A \setminus S)^{(p_i)}/_{G_i}
\qquad \text{ and }
\qquad
C_j = (\A \setminus S)^{(q_j)}/_{H_j}
\]
for some $p_i,q_j \in \Nat$ and subgroups $G_i$ and $H_j$ of respectively $S_{p_i}$ and $S_{p_j}$.
Let $f_i$ and $g_j$ be the quotient maps
\[
f_i : (\A \setminus S)^{(p_i)} \to B_i \qquad
g_j : (\A \setminus S)^{(q_j)} \to C_j.
\]
Notice that for every $i,j$ and $x \in B_i$ and $y \in C_j$
\begin{equation}\label{eq:const size preimage}
|f_i^{-1}(x)| = |G_i| \qquad\qquad
|g_j^{-1}(y)| = |H_j|.
\end{equation}
%
%\arka{Suggestion ends here}
%
%
%We proceed in two steps. 
%As the first one we show that the row and column index sets $B$ and $C$
%may be assumed to be disjoint unions of of sets of the form $\otu {(\A\setminus S)} \ell$.
%Towards this, consider the partition of $B$ and $C$ into $S$-orbits:
%\[
%B = B_1 \uplus \dots \uplus B_k
%\qquad
%C = C_1 \uplus \dots \uplus C_{\ell},
%\]
%and apply Lemma \ref{lem:rep} to the $S$-orbits $B_i$ and $C_j$ to get:
%\[
%f_i : (\A \setminus S)^{(p_i)} \to B_i \qquad
%g_j : (\A \setminus S)^{(q_j)} \to C_j
%\]
%for $i \in\setto k$, $j\in \setto \ell$ and $p_i, \ldots, p_i, q_1, \ldots, q_j\in\Nat$.
%
%\arka{From the end of the suggestion till this point can be removed}
%
We put:
\begin{align} \label{eq:BCdef}
\begin{aligned}
B' \  = \ (\A \setminus S)^{(p_1)} \uplus \dots \uplus (\A \setminus S)^{(p_k)} \qquad\qquad
C' \  = \ (\A \setminus S)^{(q_1)} \uplus \dots \uplus (\A \setminus S)^{(q_{\ell})}
\end{aligned}
\end{align}
and define maps $f : B' \to B$ and $g: C' \to C$ by disjoint unions of $f_1, \ldots, f_k$ and
$g_1, \ldots, g_\ell$, respectively:
\[
f = f_1 \uplus \dots \uplus f_k
\qquad\qquad
g = g_1 \uplus \dots \uplus g_{\ell}.
\]
Both the maps are surjective.
We write $(f, g): B'\times C' \to B\times C$ for the product of the two maps.
Finally, we
define a matrix $\vr{A}' : (B' \times C') \tofs \Int$ and vectors $\vr{t}' \in \GLin {B'}$ and
$\vr{s}' \in \GLin{C}'$ by pre-composing with the above-defined maps:
\begin{align} \label{eq:precom}
\vr A' = \vr A \circ (f, g) 
%\vr{A}'(b,c) = \vr{A}(f(b),g(c))
\qquad\qquad
\vr{t}'      = \vr{t} \circ f  \qquad\qquad
\vr{s}'      = \vr{s} \circ g.
\end{align}
%
%We claim:
%
\begin{lemma}\label{lem:canon S supp}
$
\textnormal{supremum}(\vr{A},\vr{t},\vr{s}) =
\textnormal{supremum}(\vr{A}',\vr{t}',\vr{s}').
$
\end{lemma}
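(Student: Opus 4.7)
The plan is to establish the equality of suprema by exhibiting two objective-preserving maps between the sets of finitary solutions of the two instances, one in each direction. The key algebraic fact that makes the reduction clean is the constancy of preimage sizes recorded in~\eqref{eq:const size preimage}: for every $c \in C_j$ the fibre $g^{-1}(c)$ has exactly $|H_j|$ elements, and analogously for $f$. This lets us average along the fibres of $g$ without introducing spurious coefficients.

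First, I would build the map from solutions of $(\vr{A},\vr{t},\vr{s})$ to solutions of $(\vr{A}',\vr{t}',\vr{s}')$. Given a finitary $\vr{x}:C\tofin\R$, define $\vr{x}':C'\tofin\R$ by
\[
\vr{x}'(c') \ = \ \frac{1}{|H_j|}\cdot\vr{x}(g(c')) \qquad \text{for } c'\in\otu{(\A\setminus S)}{q_j}.
\]
Finitariness is immediate because $\dom{\vr{x}'}\subseteq g^{-1}(\dom{\vr{x}})$ and $g$ has finite fibres. A direct computation, grouping the sum $\sum_{c'\in C'}$ by the value of $g(c')$ and using $|g^{-1}(c)|=|H_j|$, yields
\[
(\vr{A}'\cdot\vr{x}')(b') \ = \ \sum_{j}\sum_{c\in C_j}\vr{A}(f(b'),c)\,\vr{x}(c) \ = \ (\vr{A}\cdot\vr{x})(f(b')),
\]
and the same grouping gives $\innerprod{\vr{s}'}{\vr{x}'}=\innerprod{\vr{s}}{\vr{x}}$. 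Since $\vr{t}'=\vr{t}\circ f$, the constraint $\vr{A}\cdot\vr{x}\ge\vr{t}$ translates immediately to $\vr{A}'\cdot\vr{x}'\ge\vr{t}'$. Hence every achievable objective value on the left is also achieved on the right, giving $\text{supremum}(\vr{A},\vr{t},\vr{s})\le\text{supremum}(\vr{A}',\vr{t}',\vr{s}')$.

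For the reverse direction, given a finitary $\vr{x}':C'\tofin\R$, define $\vr{x}:C\tofin\R$ by summing along the fibres,
\[
\vr{x}(c) \ = \ \sum_{c'\in g^{-1}(c)}\vr{x}'(c'),
\]
a sum which is well-defined and finite because $|g^{-1}(c)|=|H_j|$. Swapping the order of summation gives
\[
(\vr{A}\cdot\vr{x})(f(b')) \ = \ \sum_{c'\in C'}\vr{A}(f(b'),g(c'))\,\vr{x}'(c') \ = \ (\vr{A}'\cdot\vr{x}')(b'),
\]
so the inequality $\vr{A}'\cdot\vr{x}'\ge\vr{t}'$ forces $(\vr{A}\cdot\vr{x})(b)\ge\vr{t}(b)$ for every $b$ in the image of $f$; since $f$ is surjective this is the full constraint $\vr{A}\cdot\vr{x}\ge\vr{t}$. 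The same swap gives $\innerprod{\vr{s}}{\vr{x}}=\innerprod{\vr{s}'}{\vr{x}'}$, establishing the opposite inequality on suprema and concluding the proof.

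I do not expect any genuinely hard step: the construction is essentially routine once one notices that the constant-fibre property in~\eqref{eq:const size preimage} is exactly what makes the orbit-sum and orbit-average operations behave correctly with respect to the matrix action. The only thing requiring a small amount of care is the convention on the infeasible case, where the supremum is $-\infty$: the two maps above show that feasibility transfers in both directions, so this convention is respected as well.
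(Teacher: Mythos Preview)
Your proposal is correct and matches the paper's proof essentially line for line: the paper defines the same two maps (there named $F$ and $G$), the averaging map $\vr{x}\mapsto\vr{x}'$ with the $1/|H_j|$ factor and the fibre-sum map $\vr{x}'\mapsto\vr{x}$, and derives the same two facts about preservation of constraints and objective value using~\eqref{eq:const size preimage}. Your explicit invocation of the surjectivity of $f$ to cover all of $B$ is a detail the paper leaves implicit, but otherwise the arguments coincide.
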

\begin{proof}
Define two functions $F : \GLin{C} \to \GLin{C'}$ and $G : \GLin{C'} \to \GLin{C}$ as follows:
\begin{align*}
F(\vr{x}) \  : \  c' \mapsto\  \frac {\vr{x}(g(c'))} {|H_i|}, \text{ where }g(c') \in C_i 
\qquad\qquad
G(\vr{x}') \  : \ c\ \mapsto \sum_{g(c') = c} \vr{x}'(c').
\end{align*}
%
%
%\arka{Replace $n(C_i)$ with $|H_j|$}
%
Both $F$ and $G$ are supported by $S$.
By the very definition of $F$ and $G$, together with \eqref{eq:const size preimage}, we deduce
the following two facts, assuming either
$\vr x' = F(\vr x)$ or $\vr x = G(\vr x')$, where $\vr x\in \GLin C$ and $\vr x' \in \GLin{C'}$.
First,
the value of $\mult {\vr{A}}  {\vr{x}}$ is well-defined if and only if
the value of $\mult{\vr{A}'} {\vr{x}'}$ is so, and in such case
\[
\mult {\vr{A}}  {\vr{x}} \geq \vr{t} \iff
\mult{\vr{A}'} {\vr{x}'} \geq \vr{t}'.
\]
Second,
the value of $\innerprod {\vr{s}}  {\vr{x}}$ is well defined if and only if 
the value of $\innerprod {\vr{s}'} {\vr{x}'}$ is so, and in such case
$
\innerprod {\vr{s}}  {\vr{x}} = \innerprod {\vr{s}'} {\vr{x}'}.
$
The two facts prove the lemma.
%
%\begin{enumerate}
%\item For any $\vr{x} \in \lin{C}$ we have $\vr{A} \cdot \vr{x} \geq \vr{t}$
%if and only if
%$\vr{A}' \cdot \varphi(\vr{x}) \geq \vr{t}'$.
%
%\item For any $\vr{x} \in \lin{C}$, $\vr{s} \cdot \vr{x}$ is well defined if and only if $\vr{s}' \cdot \varphi(\vr{x})$ is well defined and in that case they are equal.
%
%\item For any $\vr{x}' \in \lin{C'}$ we have $\vr{A} \cdot \psi(\vr{x}') \geq \vr{t}$
%if and only if
%$\vr{A}' \cdot \vr{x}' \geq \vr{t}'$.
%
%\item For any $\vr{x}' \in \lin{C'}$, $\vr{s} \cdot \psi(\vr{x})$ is well defined if and only if $\vr{s}' \cdot \vr{x}'$ is well defined and in that case they are equal.
%
%\end{enumerate}
%%
%This implies that $\varphi(\lin{C}) \subseteq \lin{C'}$ and $\psi(\lin{C'}) \subseteq \lin{C}$.
%These, combined with the above observations give us the following.
\end{proof}

%The above-described transformation is itself equivariant (it does not depend on $S$), and therefore
The instance $(\vr{A}',\vr{t}',\vr{s}')$ is supported by $S$.

%\arka{I think the statement above is confusing.
%Probably readers will understand if we simply mention that $(\vr{A}',\vr{t}',\vr{s}')$ is supported by $S$.}

As the second step we transform the instance $(\vr{A}',\vr{t}',\vr{s}')$ further so that
the row and column index sets $B$ and $C$ are disjoint unions of sets of the form $\otu \A \ell$.
Let $h : \A \to \A \setminus S$ be an arbitrarily chosen bijection.
Since atoms from $S$ do not appear in tuples belonging to $B'$ or $C'$, the map $h$
induces two further bijective maps 
\[
f: \spread{B} \to B' \qquad\qquad g : \spread{C} \to C',
\]
where
\[
\spread{B} = \A^{(p_1)} \uplus \dots \uplus \A^{(p_k)}
\qquad\qquad
\spread{C} = \A^{(q_1)} \uplus \dots \uplus \A^{(q_{\ell})}
\]
(cf.~\eqref{eq:BCdef}).
%
%\arka{Should we mention that these are bijections?}
%
%We overload the notation and denote these two maps by $H$ as well:
%\[
%H : \spread{B} \to B'
%\qquad
%H : \spread{C} \to C'.
%\]
We define a matrix $\spread{\vr{A}} : \spread B \times \spread C \tofs \Z$ 
and two vectors $\spread{\vr{t}} : \spread B \tofs \Z$ and 
$\spread{\vr{s}} : \spread C \tofs \Z$ by pre-composing with the two above-defined maps,
similarly as in \eqref{eq:precom}:
\[
\spread {\vr A} = \vr A' \circ (f,g) \qquad\qquad
\spread {\vr t} = \vr t' \circ f \qquad\qquad
\spread {\vr s} = \vr s' \circ g.
%\spread{\vr{A}}(b,c) = \vr{A}'(\cE^{-1}(b),\cE^{-1}(c)),
%\qquad
%\spread{\vr{t}}      = \vr{t}' \circ \cE^{-1} \ ,
%\spread{\vr{s}}      = \vr{s}' \circ \cE^{-1} \ .
\]
Knowing that $\vr A'$, $\vr t'$ and $\vr s'$ are all supported by $S$, we deduce that
the so defined instance $(\spread{\vr{A}},\spread{\vr{t}}, \spread{\vr s})$ is equivariant and independent from
the choice of the bijection $h : \A \to \A\setminus S$.
The size blowup is exponential only in atom dimension of $(\vr A, \vr t, \vr s)$, and hence polynomial
when atom dimension if fixed.

\begin{lemma}\label{lem:equivariant sys}
$
\textnormal{supremum}(\vr{A}',\vr{t}',\vr{s}') =
\textnormal{supremum}(\spread{\vr{A}},\spread{\vr{t}},\spread{\vr{s}}).
$
\end{lemma}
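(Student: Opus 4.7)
The key difference with Lemma~\ref{lem:canon S supp} is that $f:\spread B \to B'$ and $g:\spread C \to C'$ are now \emph{bijections}, obtained by applying the bijection $h:\A\to\A\setminus S$ coordinate-wise to tuples. The plan is therefore to exhibit a direct bijective correspondence between the two instances, showing that composition with $g$ gives a linear isomorphism $\Phi$ between the finitely-supported vectors on $C'$ and on $\spread C$ that restricts to a bijection on solutions and preserves objective values.

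Concretely, I would set $\Phi(\vr x) := \vr x\circ g$, with inverse $\Phi^{-1}(\spread{\vr x}) := \spread{\vr x}\circ g^{-1}$. From the definitions $\spread{\vr A} = \vr A'\circ (f,g)$, $\spread{\vr t} = \vr t'\circ f$, $\spread{\vr s} = \vr s'\circ g$ and the bijectivity of $g$, a change-of-variable in the sum gives
\[
(\spread{\vr A}\cdot \Phi(\vr x))(b) \ = \ \sum_{c\in \spread C}\vr A'(f(b),g(c))\,\vr x(g(c)) \ = \ \sum_{c'\in C'}\vr A'(f(b),c')\,\vr x(c') \ = \ (\vr A'\cdot \vr x)(f(b)),
\]
i.e.\ $\spread{\vr A}\cdot\Phi(\vr x) = (\vr A'\cdot \vr x)\circ f$. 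Combined with $\spread{\vr t} = \vr t'\circ f$ and surjectivity of $f$, this yields $\vr A'\cdot\vr x \geq \vr t' \iff \spread{\vr A}\cdot\Phi(\vr x)\geq \spread{\vr t}$. An analogous computation gives $\innerprod{\spread{\vr s}}{\Phi(\vr x)} = \innerprod{\vr s'}{\vr x}$.

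The delicate point, and the main obstacle, is preservation of finite supportedness in both directions. I would treat it using the observation that the bijection $h$ induces a group isomorphism between $\Aut{}$ (all atom automorphisms of $\A$) and $\Aut S$ (those fixing $S$ pointwise) by sending $\sigma\in\Aut{}$ to $\widetilde\sigma\in\Aut S$ defined as $h\circ\sigma\circ h^{-1}$ on $\A\setminus S$ and identity on $S$. Under this correspondence, $\widetilde\sigma$ fixes a finite set $T\supseteq S$ exactly when $\sigma$ fixes $h^{-1}(T\setminus S)$. Using this, if $\vr x$ is $T$-supported for some $T\subseteqfin\A$ with $S\subseteq T$, then $\Phi(\vr x)$ is supported by the finite set $S\cup h^{-1}(T\setminus S)$ (an automorphism $\sigma$ fixing this set conjugates via $h$ to an element of $\Aut T$ under which $\vr x$ is invariant, and the same identity transported along $g$ yields invariance of $\Phi(\vr x)$ under $\sigma$). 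The converse direction is symmetric, using $h$ in place of $h^{-1}$.

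Combining these facts, $\Phi$ is a bijection between finitely-supported solutions of $(\vr A',\vr t')$ and of $(\spread{\vr A},\spread{\vr t})$ that preserves the value of the objective function. The equality of suprema in \eqref{eq:supsupprove} follows immediately, which, together with Lemma~\ref{lem:canon S supp}, completes the reduction to the form \eqref{eq:eqorbits} assumed in Lemma~\ref{lem:wlog} (and in Lemma~\ref{lem:wlogmax}, since $\spread{\vr s}$ is also equivariant by construction).
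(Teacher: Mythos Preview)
Your proposal is correct and follows essentially the same route as the paper: both arguments use the bijections $f,g$ induced by $h$ to transport solutions via $\Phi(\vr x)=\vr x\circ g$, obtaining $\spread{\vr A}\cdot\Phi(\vr x)=(\vr A'\cdot\vr x)\circ f$ and equality of objective values. Your treatment is in fact more complete than the paper's, which asserts the correspondence between $\GLin{C'}$ and $\GLin{\spread C}$ without justifying that finite supportedness is preserved; your conjugation argument via $\sigma\mapsto h\circ\sigma\circ h^{-1}$ on $\A\setminus S$ (extended by the identity on $S$) is exactly the right way to fill that gap.
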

\begin{proof}
Similarly as before,
assuming $\vr x' = g(\spread{\vr x})$ for some vectors $\vr x' \in \GLin{C'}$ and $\spread{\vr x} \in \GLin{\spread C}$,
we deduce the following two facts.
First,
the value of $\mult {\vr{A}'}  {\vr{x}'}$ is well-defined if and only if
the value of $\mult{\spread{\vr{A}}} {\spread{\vr{x}}}$ is so, and in such case
\[
\mult {\vr{A}'}  {\vr{x}'} \geq \vr{t'} \iff
\mult{\spread{\vr{A}}} \spread{{\vr{x}}} \geq \spread{\vr{t}}.
\]
Second,
the value of $\innerprod {\vr{s}'}  {\vr{x}'}$ is well defined if and only if 
the value of $\innerprod {\spread{\vr{s}}} {\spread{\vr{x}}}$ is so, and in such case
$
\innerprod {\vr{s}'}  {\vr{x}'} = \innerprod {\spread{\vr{s}}} {\spread{\vr{x}}}.
$
The two facts prove the lemma.
%
%%
%\begin{enumerate}
%\item For any $\vr{x}' \in \lin{C'}$ we have $\vr{A}' \cdot \vr{x}' \geq \vr{t}'$
%if and only if
%$\spread{\vr{A}} \cdot \cE(\vr{x}') \geq \spread{\vr{t}}$.
%
%\item For any $\vr{x}' \in \lin{C}$, $\vr{s}' \cdot \vr{x}'$ is well defined if and only if $\spread{\vr{s}} \cdot \cE(\vr{x}')$ is well defined and in that 
%
%\item For any $\spread{\vr{x}} \in \lin{\spread{C}}$ we have $\vr{A}' \cdot \cE^{-1}(\spread{\vr{x}}) \geq \vr{t}'$
%if and only if
%$\spread{\vr{A}} \cdot \spread{\vr{x}} \geq \spread{\vr{t}}$.
%
%\item For any $\spread{\vr{x}} \in \lin{\spread{C}}$, $\vr{s}' \cdot \cE^{-1}(\spread{\vr{x}})$ is well defined if and only if $\spread{\vr{s}} \cdot 
%\spread{\vr{x}}$ is well defined and in that case they are equal.
%\end{enumerate}
%
\end{proof}

The last two lemmas prove Lemmas \ref{lem:wlog} and \ref{lem:wlogmax}.
%
%\end{proof}

\end{document}